\documentclass[
    nonacm,
    acmsmall,
    screen,
]{acmart}

\AtBeginDocument{}

\usepackage{amsmath}
\usepackage{mathpartir}
\usepackage{xcolor}
\usepackage{enumitem}
\usepackage{listings}
\usepackage{tikz}
\usetikzlibrary{
  arrows.meta,
  tikzmark,
  calc
}
\usepackage{forest}
\usepackage{tikz-cd}
\AtBeginDocument{\usepackage{thm-restate}}
\usepackage{wrapfig}
\usepackage{pifont}

\newcommand{\mf}[1]{\mathsf{#1}}
\newcommand{\tsc}[1]{\textsc{\small{#1}}}
\newcommand{\mgrey}[1]{{\setlength{\fboxsep}{0pt}\colorbox{gray!20}{$#1$}}}
\newcommand{\grey}[1]{{\setlength{\fboxsep}{0pt}\colorbox{gray!20}{#1}}}
\newcommand{\blue}[1]{{\color{blue} #1}}

\newcommand{\semi}{\, ;\, }
\newcommand{\bang}{\ !\ }
\newcommand{\concat}{\mathbin{+\mkern-8mu+}}
\newcommand{\Ell}{\mathcal{L}}
\newcommand{\R}{\varrho}
\newcommand{\heap}{\mathcal{H}}
\newcommand{\dom}{\mf{dom}}
\newcommand{\step}{\rightsquigarrow}
\newcommand{\pcase}[1]{\textbf{Case} \textsc{#1}}
\newcommand{\bsq}[1]{\llbracket #1 \rrbracket}
\newcommand{\steprt}{
  \,\mathop{\step}\limits^{
    \vbox to -1ex{\kern-1ex\hbox{$\scriptstyle{\hskip-0.75pt *}$}\vss}
  }{\!}
}

\newcommand{\ba}{\begin{array}}
\newcommand{\ea}{\end{array}}

\newcommand{\bl}{\ba[t]{@{}l@{}}}
\newcommand{\el}{\ea}

\newenvironment{reductions}{\[\ba{@{}r@{~}c@{~}l@{\qquad}l@{}}}{\ea\]\ignorespacesafterend}
\newenvironment{reds}{\ba[t]{@{}r@{~}c@{~}l@{\qquad}l@{}}}{\ea}

\newenvironment{equations}{\[\ba{@{}r@{\;\;}c@{\;\;}l@{}}}{\ea\]\ignorespacesafterend}

\newenvironment{syntax}{\[\ba{@{}r@{\;\;}c@{\;\;}l@{}}}{\ea\]\ignorespacesafterend}
\newenvironment{syn}{\ba[t]{@{}r@{~}c@{~}l@{}}}{\ea}

\newcommand{\trans}{\bsq}

\newcommand{\prefixtrans}[2]{\mathcal{#1}\trans{#2}}

\newcommand{\strans}[1]{\prefixtrans{S}{#1}}
\newcommand{\stransp}[4]{\strans{#1}(#2,#3,#4)}
\newcommand{\itrans}[1]{\prefixtrans{I}{#1}}
\newcommand{\itransp}[5]{\itrans{#1}(#2,#3,#4,#5)}

\newcommand{\stept}{
  \,\mathop{\step}\limits^{
    \vbox to -0.75ex{\kern-1ex\hbox{$\scriptstyle{\hskip-0.75pt +}$}\vss}
  }{\!}
}
\newcommand{\val}[1]{\mf{val}\ #1}
\newcommand{\valp}[1]{\mf{val}\ (#1)}

\lstdefinelanguage{LambdaReact}{
  keywords={val, let, in, if, then, else, true, false, onClick, class, attr,
  tag, text, component, const, return},
  keywordstyle=\color{blue}\bfseries,
  morestring=[b]",
  stringstyle=\color{orange},
  literate=
    {\\}{{$\lambda$}}1
}

\lstdefinestyle{lrstyle}{
  language=LambdaReact,
  basicstyle=\linespread{0.8}\footnotesize\ttfamily,
  basewidth=0.5em,
  aboveskip=2pt,
  belowskip=2pt,
  xleftmargin=0pt,
  xrightmargin=0pt,
  showstringspaces=false,
}

\newcommand{\lampure}{\lambda^\mf{react}_\mf{pure}}
\newcommand{\lamhook}{\lambda^\mf{react}_\mf{hook}}
\newcommand{\lamref}{\lambda^\mf{react}_\mf{ref}}

\newcommand{\If}{\mf{if}}
\newcommand{\Then}{\mf{then}}
\newcommand{\Else}{\mf{else}}
\newcommand{\Let}{\mf{let}}
\newcommand{\In}{\mf{in}}
\newcommand{\nil}{[]}

\newcommand{\setter}{\mf{set}}
\newcommand{\tone}{\mathbf{1}}
\newcommand{\elet}[3]{\Let\ #1 = #2\ \mf{in}\ #3}
\newcommand{\eif}[3]{\If\ #1\ \mf{then}\ #2\ \mf{else}\ #3}
\newcommand{\eval}[4]{\lceil #1 \rceil \downarrow^#2_#3 #4}
\newcommand{\compInst}[3]{\mf{compInst}_{#1}(\lceil #2 \rceil, #3)}
\newcommand{\estate}[4]{
  \mf{let}\ #1, {#1}_\mf{set} = \mf{useState}^{#2}\ #3\ \mf{in}\ #4
}

\newcommand{\header}[1]{
    \begin{flushleft}
        \textbf{#1}\hfill
    \end{flushleft}
}
\newcommand{\subheader}[1]{
    \begin{flushleft}
        \textit{#1}\hfill
    \end{flushleft}
}
\newcommand{\headersig}[2]{
    \begin{flushleft}
        \textbf{#1}\hfill \framebox{#2}
    \end{flushleft}
}

\newcommand{\headertwo}[2]{
    \begin{flushleft}
        \textbf{#1}\hfill #2
    \end{flushleft}
}

\begin{document}

\title{On the Expressive Power of Hook-Indexed State}
\subtitle{Comprehending State in the React UI Library}

\author{Toby Ueno}
\email{toby.ueno@ed.ac.uk}
\orcid{0009-0004-5848-5327}
\affiliation{
  \institution{The University of Edinburgh}
  \city{Edinburgh}
  \country{UK}
}

\author{Ezra Cooper}
\orcid{0009-0003-7700-5547}
\email{ezra.cooper@ed.ac.uk}
\affiliation{%
  \institution{The University of Edinburgh}
  \city{Edinburgh}
  \country{UK}
}

\author{Simon Fowler}
\orcid{0000-0001-5143-5475}
\affiliation{%
  \institution{University of Glasgow}
  \city{Glasgow}
  \country{UK}
}

\author{Sam Lindley}
\orcid{0000-0002-1360-4714}
\email{sam.lindley@ed.ac.uk}
\affiliation{%
  \institution{The University of Edinburgh}
  \city{Edinburgh}
  \country{UK}
}

\begin{abstract}
  React is currently the most popular JavaScript library
  for front-end web development.
  It endows UI components with state
  via the \texttt{useState} construct, through a mechanism
  that we name \emph{hook-indexed state}.
  However, React's implementation of hook-indexed state
  is both fragile and poorly understood.
  Programmers frequently experience unexpected behavior,
  and the benefits of programming in this paradigm are wholly unclear.

  We define $\lamhook$,
  a core calculus that captures the essence of hook-indexed state,
  and give it a type system to statically rule out unexpected behavior.
  We then translate $\lamhook$ into a calculus $\lamref$ based on ML-style
  references, and prove that this translation preserves typability and
  semantics, thus showing that hook-indexed state is expressible in terms of
  standard references.
  The translation is local and admits a compositional definition,
  but involves significant code duplication.
  As well as illuminating the essence of state in React, our
  observations can inform the design of future reactive programming
  systems.


\end{abstract}




\received{20 February 2007}
\received[revised]{12 March 2009}
\received[accepted]{5 June 2009}

\maketitle

\section{Introduction}\label{sec:introduction}

React is a JavaScript library that enables programmers to
declaratively specify dynamic user interfaces (UIs). It is currently
the most popular library for front-end web development
\cite{so:survey,stateofjs:survey}, and has influenced the development
of other widespread libraries and frameworks, such as Preact and
NextJS. Through the React Native project, React-style programming is
also now common in mobile app development.

React programmers structure UIs in terms of \emph{components}, which
are functions that compute and return HTML-like \emph{elements}.
Although basic components are pure functional specifications, React
provides the \texttt{useState} function to persist some internal state
across rerenders (re-evaluations) of components.  The
\texttt{useState} construct is one of a number of primitive
\emph{hooks}, which ``hook into'' React's runtime: in this case,
\texttt{useState} coordinates with the runtime to inject the most
recent state value, and to rerender the necessary components when that
value is updated.

We develop a formal understanding of this widely-used but
under-studied state mechanism, which we name \emph{hook-indexed
state}. In prior work, \citet{lee:25:react} give a semantics for
React, intended as a guide for \emph{React developers}. In contrast,
we study \texttt{useState} from a \emph{language designer's}
perspective. We examine \emph{how} a language designer may integrate a
version of hook-indexed state with static soundness guarantees into a
reactive language. We also explore \emph{why} a designer would
consider hook-indexed state by relating it to standard ML-style state.

%

\paragraph{Soundness}

To the novice React developer,
\texttt{useState} may appear magically simple,
but they will inevitably discover that what underlies
this magic is in fact a fragile contraption,
held together only by duct-tape invariants.
If \texttt{useState} is placed inside a conditional
statement, the program will crash.
Updating state during a render may lead to an infinite rerender.
To avoid such problems as these, programmers must obey
the ``Rules of React'' \cite{react:rules}, a set of React-specific
maxims (e.g. ``only call hooks at the top level'')
that lead to bugs and unexpected behavior when
broken, but which are not statically enforceable on the language
level.\footnote{
  React has a linter plugin \cite{react:linter} that can \emph{help} catch
  Rule violations but which, as a linter, cannot guarantee soundness.
}

We introduce a calculus, $\lamhook$, that captures the essence of
hook-indexed state while abstracting away from the chaotic real-world
semantics of JavaScript \citep{guha:10:js} and complicating orthogonal
features/performance concerns of React
\citep{lee:25:react,madsen:20:react}. We endow $\lamhook$ with a novel
type system, featuring \emph{lightweight effect typing} and
\emph{pseudo-linearity} of pointers.
%
%
Our type system captures and enforces the informal Rules of
React that pertain to state.

\paragraph{Expressivity}
The \texttt{useState} construct has an unorthodox dynamic
context-dependent semantics.
%
%
During the \emph{initial} render of the component, it uses the
provided argument as an initial value, but on \emph{successive}
renders, it must instead access some location on the heap
\cite{react:state}. This observation was first formalized by
\citet{lee:25:react}, and is likewise modeled in our calculus
$\lamhook$.

To elucidate the pros and cons of programming with \texttt{useState},
we consider what it takes to \emph{express}
hook-indexed state in terms of a more well-understood canonical model of
state.
We show by translation that the behavior of
\texttt{useState} is expressible in terms of ML-style
\emph{first-class references}, which have a well-studied formalization
\citep{harper:94:ref, wright:94:sound, pottier:11:store,
  ahmed:04:thesis} and are built-in to ML dialects such as OCaml
\citep{leroy:25:ocaml}, F\# \cite{syme:16:fsharp}, and SML
\citep{milner:97:sml}. We define a calculus of first-class references,
$\lamref$, and an encoding of $\lamhook$ into $\lamref$
which satisfies a tight operational correspondence
\cite{wadler:22:plfa,gorla:10:encode,griesemer:20:go,
  tang:26:metx,minamide:96:closure,plotkin:75:lc}.
The benefits and drawbacks of our translation are somewhat
nuanced. Our encoding admits a \emph{compositional} definition,
and is \emph{local} in that unrelated constructs are unaffected.
However, it leads to code duplication and breaks abstraction,
which shows an advantage of \texttt{useState} as a feature.
%

\paragraph{Contributions}
Our primary contributions are summarized as follows.
\begin{itemize}[leftmargin=*]
  \item We propose three reactive calculi:
    $\lampure$ (Section \ref{sec:lampure}),
    $\lamhook$ (Section \ref{sec:lamhook}), and
    $\lamref$ (Section \ref{sec:lamref}).
%
    Each has a sound type system and a small-step operational
    semantics.
  \begin{itemize}
    \item $\lampure$ is \emph{stateless}, and as such can only represent static
      elements, but is equipped with the machinery necessary to support
      stateful components.
    \item $\lamhook$ extends $\lampure$ with \emph{hook-based}
      stateful components, in the style of React.
      Its type system notably features a lightweight effect system
      and a pseudo-linear pointer context.
    \item $\lamref$ extends $\lampure$ with \emph{reference-based}
      stateful components, in the style of ML.
  \end{itemize}
  \item We define a \emph{translation} from $\lamhook$ to $\lamref$
    (Section \ref{sec:trans}), prove that the translation is
    type-preserving, and show a tight operational
    correspondence between the two calculi.
%
  \item We demonstrate that $\lamhook$ offers a scalable foundation by sketching three
      extensions: reconciliation, the $\mf{useEffect}$ hook, and custom hooks (Section
    \ref{sec:ext}).
\end{itemize}

Section~\ref{sec:overview} contains a more detailed exposition of
React and \texttt{useState} and an overview of our findings.
Section~\ref{sec:relwork} surveys related work and
Section~\ref{sec:conc} concludes.

\section{Overview}
\label{sec:overview}

%


\subsection{Programming with React}
React programmers structure UIs in terms of \emph{components}, which
are ordinary JavaScript functions that return \emph{elements}.
Elements are datatypes that specify HTML: they consist of text
literals, tagged nodes (e.g. \lstinline|<div>|, \lstinline|<p>|), and
components themselves.  A complete program consists of a single
\emph{root} element that is rendered by the React runtime.

The \texttt{Hello} component below takes one parameter, \texttt{name},
and formats it into an element. We write this element as pseudo-HTML
directly in JavaScript via the JSX syntax extension \cite{react:jsx}.
We may then use \texttt{Hello} in other components: the \texttt{Main}
component below uses \texttt{Hello} twice with different parameters.
If we designate \texttt{Main} as the root of our program, React's
runtime will \emph{instantiate} the component by evaluating it and its
subcomponents, yielding the HTML on the right.

\begin{minipage}{0.32\linewidth}
\begin{lstlisting}
const Hello = ({name}) => {
  const s = "Hello, ";
  return (
    <p>{s + name + "!"}</p>
  );
}
\end{lstlisting}
\end{minipage}
\hfill
\begin{minipage}{0.32\linewidth}
\begin{lstlisting}
const Main = () => {
  return (<div class="foo">
    <Hello name="reader" />
    <Hello name="world" />
  </div>);
}
\end{lstlisting}
\end{minipage}
\hfill
\begin{minipage}{0.32\linewidth}
\centering\small\textit{(Rendered HTML)}
\begin{lstlisting}
<div class="foo">
  <p>"Hello, reader!"</p>
  <p>"Hello, world!"</p>
</div>
\end{lstlisting}
\end{minipage}

The above program is entirely \emph{static} as it is unchanged by user
input. The power of React lies in how the system \emph{dynamically}
responds to events.
In this paper, we primarily focus on a single kind of event,
\emph{clicks}, although our system easily extends to others~(following
e.g.~\cite{Fowler20}).

React supports primitive event handlers that can be attached to tags.
We can provide the \texttt{onClick} attribute with an event handler
function, which the runtime executes whenever the \lstinline|<div>| is
clicked.  However, for these event handlers to affect the output of a
React application, components must maintain \emph{state} that event
handlers can update. How do we model this state?
JavaScript features mutable variables, introduced via the
\lstinline|let| keyword, but these cannot concisely capture the
desired semantics.
As a running example, consider
a counter that displays its current value and,
when clicked, increments that value.
If we model this value as a \emph{local} mutable variable, as in
\texttt{CounterLocal} below, then \texttt{x} remains 0 after a
click. This is because React rerenders the component by rerunning the
function, rebinding \texttt{x} to its initial value.
\begin{lstlisting}
  const CounterLocal = () => { let x = 0; return (<div onClick={() => x = x + 1}>{x}</div>); }
\end{lstlisting}
%
%
%
We can persist the state across rerenders by hoisting \texttt{x}
outside the function body, making it global.
\begin{lstlisting}
  let x = 0; const CounterGlobal = () => { return (<div onClick={() => x = x + 1}>{x}</div>); }
\end{lstlisting}
Alas, this style breaks when we instantiate the component multiple
times.
Updating \texttt{x} affects \emph{all instances} of
\texttt{CounterGlobal}, but each counter instance should maintain its
own independent state.

React's solution is the primitive \texttt{useState} hook, which
coordinates with the runtime to maintain a local state cell across
rerenders of the same component instance. The hook is parameterized
by the initial state and returns both a value and a setter, the
latter of which triggers a rerender when invoked.
Upon rendering, \texttt{useState} returns the most recent state value.
A correct implementation of a counter using \texttt{useState}
is given by component \texttt{Counter} below:
we can instantiate \texttt{Counter} in multiple places,
and each will have its own local state value that persists across rerenders.

\begin{lstlisting}
  const Counter = () => { const [x, setX] = useState(0);
                          return (<div onClick={() => setX (x + 1)}>{x}</div>); }
\end{lstlisting}

\subsection{Local State, by Hook or by Crook}
\label{sec:overview-problems}

Hooks such as \texttt{useState} are \emph{superficially} simple, but
beneath the veneer lie complex and fragile internals. Programmers must
adhere to the statically unenforceable Rules of React
\cite{react:rules}, lest they encounter one of many program-breaking
pitfalls.  We examine two of these pitfalls---\emph{inconsistent hook
evaluation order} and \emph{top-level setter calls}---to see how
\texttt{useState} leads to unexpected behavior.

\paragraph{Evaluation Order}
Consider the following \texttt{OutOfOrder} code.
\begin{center}
\begin{lstlisting}[numbers=left,numbersep=4pt]
  const OutOfOrder = () => { const [x, setX] = useState(true);
                             if (x) { const [y, setY] = useState(0); }
                             const [z, setZ] = useState("foo"); ...
\end{lstlisting}
\end{center}
On the first render, all three \texttt{useState} calls bind initial values
\lstinline|true|, \lstinline|0|, and \lstinline|"foo"| to \texttt{x},
\texttt{y}, and \texttt{z}, respectively.
However, if \texttt{x} is set to \lstinline|false| and the component
is rerendered, then \texttt{z} will be bound to \lstinline|0|.
%
%
%
%
%
In React, \texttt{useState} calls are identified \emph{by their evaluation
order} with respect to a component instance, e.g. the first \texttt{useState}
call sets/retrieves from the same place on the heap \cite{react:state}.
We refer to React's state mechanism as \emph{hook-indexed state}
because the location of the state cell is \emph{indexed} by the
evaluation order of the corresponding hook.
In \texttt{OutOfOrder}, the \emph{second} such call is made at line 2
on the first render, but at line 3 on the second render, leading to
the described bug.

To avoid this bug, React programmers must ensure that their hooks
always respect the \emph{same evaluation order} during each run of
their components.
This evaluation-order invariant is captured by the \emph{top-level
restriction} of the Rules of React, which states that hooks should
only be called at the top level of components, i.e. not nested inside
of conditionals, loops, and so on\footnote{ Hooks may also be called
at the top-level of ``custom hooks'', which are programmer-defined
compositions of primitive hooks. We defer discussion of custom hooks
to Section \ref{sec:custom}. }. This restriction ensures that the
execution order of all \texttt{useState} calls is consistent, thereby
avoiding the above bug.

\paragraph{Top-Level Setter Calls}
The state values returned from \texttt{useState} are immutable, and so
must be set using the provided setter. This setter, however, does not
work like a standard reference update, as evidenced by the following component.
\begin{center}
\begin{lstlisting}
  const CounterTwo = () => { const [x, setX] = useState(0); setX (x + 2);
                             return (<div onClick=() => setX (x + 1)> {x} </div>); }
\end{lstlisting}
\end{center}
\texttt{CounterTwo} only differs from \texttt{Counter} by the addition
of the call to \lstinline{setX (x + 2)}. One might quite naturally
assume that \texttt{x} is rendered with value 2, whereas in fact
\texttt{CounterTwo} \emph{will not terminate}.

If React detects that a setter is invoked during rendering, it will
immediately attempt to rerender the component. As the setter call in
\lstinline{CounterTwo} is evaluated unconditionally, React will
attempt this rerender indefinitely. Intuitively, React does this
because it wants to \emph{avoid rendering stale state}, but this
behavior may easily catch the novice React programmer off guard.

Another Rule of React states that components should be \emph{pure}, in
that they should not perform side effects (but may perform, for
instance, local mutation).  Interestingly, React's notion of purity
does \emph{not} preclude calling a state setter: the guidelines only
caution against doing so \emph{unconditionally} \cite{react:set}. As
with the other Rules, this invariant must be manually verified by the
programmer.

\subsection{A Core Calculus of Hook-Indexed State}
\label{sec:overview-lamhook}

In Section \ref{sec:lamhook}, we propose a calculus, $\lamhook$, which captures
the essence of hook-indexed state.
We draw inspiration from the work of \citet{lee:25:react},
but where their system prioritizes semantic faithfulness to React
in all its glory,
we design our language around safety and (relative) simplicity.

We introduce \emph{component bodies} $D$, which are wrapped into
values as $\mf{component}\ \{D\}$.  The grammar of component bodies is
as follows, where $V$ are values and $M, N$ are computations:
{\small
\[D ::= \mf{val}\ V \mid \elet{x}{M}{D} \mid \estate{x}{}{V}{D}\]
}
Bodies extend computations with the $\mf{useState}$ construct. This
body-computation distinction syntactically enforces the top-level
restriction: the prior \texttt{OutOfOrder} example, for instance, is
\emph{syntactically invalid}, as $\mf{if}$-statements can only take
the form $\eif{V}{M}{N}$ and \emph{not} $\eif{V}{D_1}{D_2}$.

To give $\mf{useState}$ a small-step operational semantics,
we must provide two key pieces of context.
\begin{itemize}
  \item We provide a \emph{pointer} $p$ to a component instance's memory,
  which contains its state. We need this pointer to distinguish
  the state of \emph{different instantiations} of the \emph{same component}.
  \item We provide a \emph{phase} $\phi$, which is either
  $\mf{Init}$ (initial phase) or $\mf{Succ}$ (successor phase).
  When $\phi=\mf{Init}$, $\mf{useState}$ \emph{initializes} state,
  and when $\phi=\mf{Succ}$, $\mf{useState}$ \emph{retrieves} state.
\end{itemize}
We define the body semantics via step judgment $\step^\phi_p$,
which captures this context.
Consider the first $\mf{useState}$ call in \texttt{Counter}. In
$\lamhook$, that line evaluates as follows in the $\mf{Init}$ phase:
{\small
\[
\heap_0 \semi \estate{x}{\ell}{0}{D} \step^\mf{Init}_p D[0/x, \setter^\ell_p/x_\mf{set}] \semi \heap_1\]
}
where $\heap_0$ is the initial heap and $\setter^\ell_p$ is a
\emph{setter} for the $\ell$th evaluated hook at pointer $p$. (For
convenience, we annotate $\mf{useState}$ constructs with unique labels
$\ell$ in an elaboration pass: our syntax guarantees that these labels
correspond to evaluation order.)
The initial value $0$ is bound to $x$, and is \emph{also} saved on the
updated heap $\heap_1$. We also bind a setter to $x_\mf{set}$.

Suppose later we evaluate a setter, perhaps in the event handler
for \texttt{Counter}. Doing so is an ordinary computation,
and does not require the extra context given to bodies in the step judgment.
{\small
\[\heap_1 \semi \setter^\ell_p\ 1 \step
\mf{val}\ () \semi \heap_2\]
}
Now, say we rerender \texttt{Counter}, which requires us to evaluate
its body in the $\mf{Succ}$ phase. We disregard the initial value and
look up the most recent value in $\heap_2$, which is now $1$, and bind
it to $x$.
{\small
\[\heap_2 \semi \estate{x}{\ell}{0}{D} \step^\mf{Succ}_p
D[1/x, \setter^\ell_p/x_\mf{set}] \semi \heap_2\]
}

\subsection{The Typing Rules of React}
\label{sec:overview-types}

The syntactic restrictions of $\lamhook$ get us only so far towards
safety: for instance, the $\mf{Succ}$ phase becomes stuck
at a $\mf{useState}$ call if the corresponding location on the heap
has not yet been initialized.
To statically eliminate such problems,
we introduce a \emph{type system} to $\lamhook$ (Section
\ref{sec:lamhook-typing}) and prove it sound.
Here we briefly survey its salient
features.

\paragraph{Every Body Has a Type}
The type of a body is a \emph{record} $\Ell$
that maps $\mf{useState}$ labels $\ell$ to value types $A$.
For example, in the typing derivation for \texttt{Counter},
an instance of rule \textsc{TB-State} records that
the state at label $\ell$ will have type $\mf{Int}$ once $\mf{useState}$
is executed (where $\emptyset$ is the empty record):
{\small
\begin{mathpar}
  \inferrule*[right=TB-State] {
    \Gamma \vdash 0: \mf{Int}
    \and
    \Gamma, x: \mf{Int}, x_\mf{set}: (\mf{Int} \xrightarrow{\mf{Mut}} \tone)
    \vdash D: \emptyset
  } {
    \Gamma \vdash \estate{x}{\ell}{0}{D}: (\ell: \mf{Int})
  }
\end{mathpar}
}
When a $\mf{useState}$ construct is evaluated, it disappears from the body,
and so a body's type \emph{changes} during evaluation.
Once a body reduces to a value, it has type $\emptyset$.
(We defer discussion of the $\mf{Mut}$ annotation on the function arrow
until later in this section.)

\paragraph{The Two-Body Problem}
The type of a component instance's memory is also given by $\Ell$.
Each instance memory contains a state store $\sigma$ that maps labels $\ell$
to values $V$. If $\ell: A \in \Ell$, then $V$ has type $A$, as expected.
For a body with type $\Ell_1$ and an instance memory with type $\Ell_2$,
we compare $\Ell_1$ and $\Ell_2$
to obtain an invariant that depends on $\phi$ as follows:
\begin{itemize}[leftmargin=*]
  \item When $\phi=\mf{Init}$, we maintain that
    $\Ell_1 \cup \Ell_2$ is constant. All state labels $\ell$ in
    $\dom(\Ell_1 \cup \Ell_2)$ are either \emph{on the heap} (if
    $\ell \in \dom\ \Ell_2$) or \emph{to be initialized} (if
    $\ell \in \dom\ \Ell_1$).  Once the body has reduced to a value, we have
    $\Ell_1=\emptyset$, and the component instance is \emph{fully initialized}.
  \item When $\phi=\mf{Succ}$, we maintain that $\Ell_1 \subseteq \Ell_2$.
  All labels in the body must already be on the heap
  so that we can safely retrieve them via $\mf{useState}$.
\end{itemize}

We illustrate the $\mf{Init}$-phase invariant by returning to our
\texttt{Counter} example. We zoom in from the entire heap $\heap$ to
just the relevant store $\sigma$, which begins empty, and so has type
$\emptyset$.
\[\emptyset \semi \estate{x}{\ell}{0}{D} \step^\mf{Init}_p D[0/x, \setter^\ell_p/x_\mf{set}] \semi \ell \mapsto 0\]
After a step, the store $\ell \mapsto 0$ now has type $(\ell:
\mf{Int})$, but the body $D$ has type $\emptyset$. Although the types
have changed, their \emph{union}, which was our invariant, has not.
Now that all state cells in \texttt{Counter} have been initialized, it
is safe to rerender \texttt{Counter}, i.e., to evaluate its body in
the $\mf{Succ}$ phase. This also follows from our typing invariants.
To begin $\mf{Succ}$-phase evaluation, we require that $\Ell_1
\subseteq \Ell_2$, which is true in this instance:
both are equal to $(\ell: \mf{Int})$.

\paragraph{Every Instance Has a Phase}
We embed body evaluation into normal computation via the runtime
construct $\eval{D_0}{\phi}{p}{D}$, read as ``initial body $D_0$ has
evaluated to $D$ at phase $\phi$ and pointer $p$''.  When
$\phi=\mf{Init}$, evaluating this computation may set some memory at
$p$, as described above. Now, consider the following computation,
which initializes $D$ \emph{twice} at $p$.
\[
\bl
  \mf{let}\;y = \eval{D_0}{\mf{Init}}{p}{(\estate{x}{\ell}{0}{\ldots})} \;\In \\
  \eval{D_0}{\mf{Init}}{p}{(\estate{x}{\ell}{1}{\ldots})} \\
\el
\]
Once the first initialization is complete, we expect that the state store
contains value $0$ at label $\ell$, and that this
remains unchanged unless a setter is called. However, the second initialization
will \emph{overwrite} the first, leading to unexpected behavior.
We thus choose to forbid reinitialization and
guarantee that every component instance is initialized \emph{exactly once}.

To enforce this restriction, we introduce a \emph{phase context} $\Phi$
that maps pointers $p$ to phases $\phi$.
We define a non-standard context merging relation $\Join$ on $\Phi$:
we treat pointers with $\phi=\mf{Init}$ \emph{linearly},
whereas pointers with $\phi=\mf{Succ}$ are \emph{unrestricted}.
The typing rule \textsc{TC-EvalI} for construct $\eval{D_0}{\phi}{p}{D}$
guarantees that $p: \mf{Init}$ occurs in $\Phi$,
meaning that no other computation in the derivation
may refer to $p$ in the phase context. The above expression is therefore
untypeable (denoted by an \ding{55} right of the conclusion),
since we cannot merge the phase contexts $\Phi_1$ and $\Phi_2$:
{\small\begin{mathpar}
\inferrule*[right=TC-Let] {
  \inferrule*[right=TC-EvalI] {\cdots} {
    \Gamma \semi \Pi \semi \Phi_1 \vdash \eval{D_0}{\phi}{p}{D_1}: \mf{ElemInst}
  }
  \inferrule*[Right=TC-EvalI] {\cdots} {
    \Gamma, y: \mf{ElemInst} \semi \Pi \semi \Phi_2
    \vdash \eval{D_0}{\phi}{p}{D_2}: \mf{ElemInst}
  }
} {
  \Gamma \semi \Pi \semi \textcolor{red}{\Phi_1 \Join \Phi_2} \vdash
  \mf{let}\ y = \eval{D_0}{\mf{Init}}{p}{D_1}\
  \mf{in}\ \eval{D_0}{\mf{Init}}{p}{D_2}: \mf{ElemInst}
  \quad \text{\ding{55}}
}
\end{mathpar}}

\paragraph{Every Body is (Almost) Pure}

Modeling the retrying behavior that results from top-level setter
calls, as mentioned in Section \ref{sec:overview-problems}, would
significantly complicate our semantics for no real gain.
We regard such programs as ill-behaved\footnote{
  The React documentation \cite{react:usestate}
  mentions a case where top-level setter calls
  can be used to store information from previous renders,
  but they admit this pattern is
  ``hard to understand and is usually best avoided.''
  In any case, it can be easily avoided at a performance cost
  by wrapping the relevant code with React's $\mf{useEffect}$ construct,
  which defers its contents until rendering completes and
  setter calls are permitted.
  We discuss $\mf{useEffect}$ as an extension in Section \ref{sec:ext}.
}, and we introduce a small \emph{effect system} to rule them out.
We have only two effects in $\lamhook$: $\mf{Pure}$ for pure
computations and $\mf{Mut}$ for those that may modify the heap. In
particular, setters $\setter^\ell_p$ have type $A
\xrightarrow{\mf{Mut}} \tone$. The typing rule \textsc{TB-Let} for
bodies, which allows us to introduce arbitrary computations, requires
these computations to be $\mf{Pure}$, which prevents us from invoking
the setter inside of a body.
For instance, the following derivation fails:
{\small\begin{mathpar}
  \inferrule*[right=TB-Let] { \inferrule*[right=TC-App] {
      \inferrule*[right=TV-Setter] { } { \Gamma \semi \Pi \semi
        \emptyset \vdash \setter^\ell_p: \mf{Int}
        \xrightarrow{\mf{Mut}} \tone } \cdots} { \Gamma \semi \Pi \semi
      \emptyset \vdash \setter^\ell_p\ x: \tone \bang
      \mf{\textcolor{red}{Mut}} } \cdots} { \Gamma \semi \Pi \vdash
    \elet{()}{\setter^\ell_p\ x}{\cdots}: \Ell
  \quad \text{\ding{55}}}
\end{mathpar}}
Setters may, as expected,
be used in event handlers, which have no such effect restriction.


\subsection{Putting useState in Other Terms}

Having outlined a well-behaved fragment of $\lamhook$, we are now
ready to consider the \emph{expressive power} of hook-indexed
state. To this end, we seek to define a \emph{translation} from
$\lamhook$ into some other calculus that lacks $\mf{useState}$.  We
argue that this problem is non-trivial in that, to the best of our
knowledge, the behavior of $\mf{useState}$ is not well-captured by any
known model of persistent state in UI programming.  We identify two
particular features of $\mf{useState}$ that lead to this difficulty:
\begin{itemize}[leftmargin=*]
    \item React components \emph{interleave} initialization with ordinary
    computation. This prevents us from straightforwardly ``decomposing''
    a component into its state and pure functions that update and render
    the state, in the style of e.g.
    $\lambda_\mf{MVU}$~\citep{Fowler20}
    or the reactive document calculus~\cite{crichton:24:doc}.
  \item During component evaluation, some $\mf{useState}$ constructs
    will have been initialized while others will not. Representing the
    state of a component as a \emph{single value} is therefore awkward
    to do in a type-safe way: we could, for instance, declare every
    field in the value as optional during initialization, and then (somehow)
    guarantee that all fields are non-null once initialization is
    complete. (This also precludes adapting the React implementation,
    which models state as a dynamically typed list and
    only ``enforces'' safety via The Rules~\cite{react:state}.)
\end{itemize}

In light of these considerations, we demonstrate that $\mf{useState}$
can be expressed in terms of a language $\lamref$ with
\emph{first-class references}, which are well-understood and widely
used in dialects of ML \cite{milner:97:sml, leroy:25:ocaml,
  syme:16:fsharp}. Drawing inspiration from the two phases of
evaluation, our translation \emph{duplicates component bodies}, once
under each phase. For example, the component

\begin{minipage}{0.9\linewidth}
\begin{lstlisting}[numbers=left,numbersep=2pt]
  let CounterSquare = component { let x, setX = useState 0 in let y = x * x in V }
\end{lstlisting}
\end{minipage}

\noindent
is translated as follows:

\begin{minipage}{0.465\linewidth}
\begin{lstlisting}[numbers=left,numbersep=2pt,firstnumber=1]
  let Counter = component \().{
    let rr = ref(false) in
    let xLoc = ref(0) in
    let x = !xLoc in
    let setX = \y.(xLoc := y; rr := true) in
    let y = x * x in
\end{lstlisting}
\end{minipage}
\quad
\begin{minipage}{0.465\linewidth}
\begin{lstlisting}[numbers=left,numbersep=2pt,firstnumber=7]
    (\().(
      let x = !xLoc in
      let setX = \y.(xLoc := y; rr := true) in
      let y = x * x in
      W
    ), rr, W) }
\end{lstlisting}
\end{minipage}



\noindent
We leave the return value \lstinline|V|
and its translation \lstinline|W| abstract for conciseness.
We use monospaced font for example $\lamhook$ and $\lamref$ code.

The outermost function corresponds to the \emph{initial}
phase, whereas the inner function (returned by the outer one)
represents the \emph{successor} phase. Lines 3--5 \emph{and}
8--9 of the target are the translations of $\mf{useState}$ in each
phase. In the initial phase, we \emph{initialize} a reference, which
is crucially captured by the successor phase, wherein we \emph{access
that same reference}.  Setters are translated as functions that set
their associated reference, but they also flip a ``rerender flag''
\lstinline|rr|, which is declared at the top of a component and is returned by
the initial phase.

We prove that the translation between $\lamhook$ and $\lamref$
preserves both \emph{types} and \emph{operational semantics}.
Furthermore, the translation is \emph{local} in the sense that it only
performs nontrivial work on components, bodies, and setters.
For readability, we do not formulate it \emph{compositionally},
but it easily admits a compositional definition.

Not all is perfect with the translation.  Writing components in the
style of the translation is cumbersome due to the duplication of code:
in the above example, lines 6 and 10 both mirror the same expression
\lstinline|let y = x * x| from the
source, but must be in both phases to maintain the semantics.
Furthermore, the translation \emph{leaks the abstraction}: for
example, a programmer must now manually control the rerender flag,
whereas in $\lamhook$, the flag is hidden neatly behind the
declarative interface of $\mf{useState}$.

\section[Lambda-React-Pure: A Core Calculus for Stateless Reactivity]
  {$\lampure$: A Core Calculus for Stateless Reactivity}
\label{sec:lampure}

We present $\lampure$, an extension of the simply-typed lambda calculus
with \emph{elements} and \emph{lightweight reactivity}.
$\lampure$ has no state, and so can only model static elements:
nevertheless, we include state-related machinery (such as events and handlers)
as $\lampure$ serves as a basis for $\lamhook$ and $\lamref$.

\subsection{Syntax}
\label{sec:lampure-syntax}

We present $\lampure$'s syntax below.
\emph{Runtime-only} syntax that is not intended to be written directly
by the programmer is \grey{highlighted in gray}:
we call the remainder \emph{programmer-level}.
We adopt Barendregt's variable convention \cite{barendregt:85:lam}
and identify terms up to $\alpha$-equality.
We formulate $\lampure$ in a \emph{fine-grain call-by-value} (FGCBV) style
\cite{levy:03:cbv},
with a syntactic distinction between \emph{values} and \emph{computations},
and where intermediate computations must be explicitly $\mf{let}$-bound:
this style simplifies the metatheory by restricting where nested computations
can occur. We can straightforwardly elaborate a more standard
nested calculus into FGCBV through $\mf{let}$-insertion.

    {\small
\begin{syntax}
\text{Types } A, B
  &::=& \tone \mid \mf{Str} \mid \mf{Bool} \mid A \rightarrow B
         \mid A \times B \mid \mf{List}\ A \mid \mf{Element} \mid \mf{Attr}
         \mid \mgrey{\mf{ElemInst}} \\
\text{Type context } \Gamma
  &::=& \emptyset \mid \Gamma, x: A \\
\text{Values } V, W
  &::=& x \mid () \mid s \mid \mf{true} \mid \mf{false}
        \mid \lambda x.M \mid (V, W) \mid \nil \mid V :: W
        \mid \mf{attr}(V, W)\\
  &\mid& \mf{onClick}\ V
         \mid \mf{text}\ V \mid \mf{tag}(V, W) \mid \mgrey{\mf{textInst}\ V}
         \mid \mgrey{\mf{tagInst}_{id}(V, W)} \\
\text{Computations } M, N
  &::=& \mf{val}\ V \mid \elet{x}{M}{N} \mid V\ W
        \mid \eif{V}{M}{N} \mid \mf{fst}\ V \mid \mf{snd}\ V \\
  &\mid& \mf{fold}\ V_1\ V_2\ V_3 \mid
  \mgrey{\mf{init}\ V} \mid \mgrey{\mf{click}_{id}\ V}
         \mid \mgrey{\mf{runClick}\ V} \mid \mgrey{\mf{update}\ V} \\
\end{syntax}
\begin{center}
$\begin{array}{c@{\qquad}c@{\qquad}c}
\text{Variables }x & \text{Strings }s & \mgrey{\text{Identifiers } {id}}
\end{array}$
\end{center}
}

\paragraph{Types}
Standard types include the unit type $\tone$, strings $\mf{Str}$,
booleans $\mf{Bool}$, functions $A_1 \rightarrow A_2$,
products $A_1 \times A_2$, and lists $\mf{List}\ A$.
The $\mf{Element}$ type denotes decorated trees of strings (e.g. HTML, XML),
and $\mf{Attr}$ is the type of those decorations.
$\mf{ElemInst}$ represents the \emph{realization} of an element,
to be generated by the runtime from an element:
alternatively, we may think of elements as \emph{specifications} of instances.
A single $\mf{Element}$
may be realized into multiple distinguishable $\mf{ElemInst}$s.

\paragraph{Values}
Values include variables (ranged over by $x$), the unit value,
strings (ranged over by $s$), booleans, $\lambda$-abstractions, pairs,
and list constructors, all of which are entirely standard.
$\mf{attr}(V, W)$ denotes a purely cosmetic attribute
(given by a key-value pair of strings),
whereas $\mf{onClick}\ V$ attaches an \emph{event handler} $V$
to its parent element.
The constructors $\mf{text}$ and $\mf{tag}$ produce elements:
$\mf{text}\ V$ embeds a string into an element,
whereas $\mf{tag}(V, W)$ contains a list of subelements $W$ annotated
with some attributes $V$ (the tag \emph{name} can be understood as a special attribute).
The corresponding instance constructors are
$\mf{textInst}$ and $\mf{tagInst}$ respectively, and are analogous
except that $\mf{tagInst}$ carries an \emph{identifier},
ranged over by ${id}$ and generated at runtime.

\paragraph{Computations}
The $\mf{val}$ construct treats a value as a pure computation, and
computations are sequenced using $\mf{let}$-binders.
Constructs for
application, conditionals, and pair projection are standard.
The remaining computations model fragments of the runtime.
$\mf{init}\ V$ realizes an element $V$ into an instance.
$\mf{click}_{id}\ V$ searches an instance $V$ for a node
with identifier ${id}$, and if found, runs any $\mf{onClick}$ event handlers.
$\mf{runClick}\ V$ runs an attribute $V$'s handler if it exists.
$\mf{update}\ V$ reinitializes any instances with outdated state.
(In $\lampure$, no instances have state, and so $\mf{update}\ V$
is a no-op, but including it here makes our lives easier when
considering different implementations of stateful components in sections
\ref{sec:lamhook-opsem} and \ref{sec:lamref-opsem}.)

\paragraph{Canonicity}
These ``runtime constructs'' are \emph{prima facie} ad-hoc,
but they can easily be expressed in terms of more standard constructs.
Elements, attributes, and instances are all inductive datatypes,
so we could define our runtime in terms of folds/destructors on these datatypes.
$\mf{init}$ is a fold on elements, $\mf{runClick}$ destructs an attribute, and
$\mf{click}_{id}$ and $\mf{update}$ are folds on instances.
For simplicity, we choose to explicitly define only the
machinery necessary for our purposes.

\paragraph{Syntactic sugar}
For brevity, we use the following abbreviations in presentation
(for all languages):

\begin{itemize}
  \item We use CBV syntax instead of strict FGCBV,
    i.e. we allow for nested computations and often elide $\mf{val}$,
    assuming a (standard) elaboration pass that performs
    $\mf{let}$-insertion appropriately;
  \item we allow for n-ary tuple-deconstruction
    (e.g. $\elet{x, y, z}{V}{M}$),
    assuming a standard desugaring into multiple pair destructurings,
  \item and we use the following abbreviations:
\end{itemize}
\begin{center}{\small
  \begin{minipage}{0.5\textwidth}
    $\ba{r@{\;\triangleq\;}l@{\quad}l}
      M \semi N & \elet{x}{M}{N} &(\text{ for } x \notin \mf{fv}(N))\\
      \lambda ().M & \lambda x.M &(\text{ for } x \notin \mf{fv}(M))\\
      \mf{map}\ V\ W & \multicolumn{2}{l}{
        \mf{fold}\ (\lambda x.\lambda acc. V x :: acc)\ []\ W}\\
      \mf{iter}\ V\ W & \multicolumn{2}{l}{
        \mf{fold}\ (\lambda x.\lambda (). V x)\ ()\ W}\\
    \ea$
  \end{minipage}
  \begin{minipage}{0.4\textwidth}
    $\ba{r@{\;\triangleq\;}l}
      \mf{inits}\ V & \mf{map}\ (\lambda x.\mf{init}\ x)\ V\\
      \mf{clicks}_{id}\ V & \mf{iter}\ (\lambda x.\mf{click}_{id}\ x)\ V\\
      \mf{runClicks}\ V & \mf{iter}\ (\lambda x.\mf{runClick}\ x)\ V\\
      \mf{updates}\ V & \mf{map}\ (\lambda x.\mf{update}\ x)\ V\\
    \ea$
  \end{minipage}}
\end{center}

Furthermore, following JSX, we allow for direct embedding of HTML-like
structures into source code, and desugar these structures into elements and
attributes.  $\{V\}$ embeds the value $V$ into such a structure.
For instance, the computation on the left desugars to
the computation on the right:

\noindent
\begin{minipage}[t]{0.375\linewidth}
\begin{lstlisting}
  let s = "Hello, world!" in
  <div onClick={\().()} class="foo">
    <b>{x}</b>
    {s}
  </div>
\end{lstlisting}
\end{minipage}
\hfill
\begin{minipage}[t]{0.61\linewidth}
\begin{lstlisting}
  let s = val "Hello, world!" in
  val tag(
    attr("tag", "div") :: onClick (\y.val ())
      :: attr("class", "foo") :: [],
    tag(attr("tag", "b") :: [], x :: []) :: text s :: [])
\end{lstlisting}
\end{minipage}

We elide a formal definition of this desugaring process,
although we note that it is \emph{type-directed},
since embedded strings (\texttt{s} in the above example)
should be wrapped in the $\mf{text}$ constructor
while embedded tags (\texttt{x} above) should not.

\subsection{Type System}

We have separate but mutually inductive typing judgments for values and
computations (Figure \ref{fig:lampure-typing}).
Most rules are entirely unsurprising.
Rule \textsc{TV-OnClick} specifies that an event handler
should have type $\tone \rightarrow \tone$:
since $\lampure$ has no side effects,
the contents of the handler will only become significant
in the presence of effectful extensions.

\begin{figure}[t]
    {\footnotesize
    \headersig{Typing Rules for Values}{$\Gamma \vdash V: A$}
\begin{mathpar}
\inferrule[TV-Var] {
  x: A \in \Gamma
} {
  \Gamma \vdash x: A
}

\inferrule[TV-Unit] { } {
  \Gamma \vdash (): \tone
}

\inferrule[TV-Bool] {V \in \{\mf{true}, \mf{false}\}} {
  \Gamma \vdash V: \mf{Bool}
}

\inferrule[TV-Str] { } {
  \Gamma \vdash s: \mf{Str}
}

\inferrule[TV-Lam] {
  \Gamma, x: A \vdash M: B
} {
  \Gamma \vdash \lambda x.M: A \rightarrow B
}

\inferrule[TV-Nil] { } {
  \Gamma \vdash \nil: \mf{List}\ A
}

\inferrule[TV-Cons] {
  \Gamma \vdash V: A
  \quad
  \Gamma \vdash W: \mf{List}\ A
} {
  \Gamma \vdash V :: W: \mf{List}\ A
}
\quad
\inferrule[TV-Pair] {
  \Gamma \vdash V: A
  \quad
  \Gamma \vdash W: B
} {
  \Gamma \vdash (V, W): A \times B
}
\quad
\inferrule[TV-Attr] {
  \Gamma \vdash V: \mf{Str}
  \quad
  \Gamma \vdash W: \mf{Str}
} {
  \Gamma \vdash \mf{attr}(V, W): \mf{Attr}
}
\quad
\inferrule[TV-OnClick] {
  \Gamma \vdash V: \tone \rightarrow \tone
} {
  \Gamma \vdash \mf{onClick}\ V: \mf{Attr}
}

\inferrule[TV-Text] {
  \Gamma \vdash V: \mf{Str}
} {
  \Gamma \vdash \mf{text}\ V: \mf{Element}
}
\;\;
\inferrule[TV-Tag] {
  \Gamma \vdash V: \mf{List\ Attr}
  \\\\
  \Gamma \vdash W: \mf{List\ Element}
} {
  \Gamma \vdash \mf{tag}(V, W): \mf{Element}
}
\;\;
\inferrule[TV-TextI] {
  \Gamma \vdash V: \mf{Str}
} {
  \Gamma \vdash \mf{textInst}\ V: \mf{ElemInst}
}
\;\;
\inferrule[TV-TagI] {
  \Gamma \vdash V: \mf{List\ Attr}
  \\\\
  \Gamma \vdash W: \mf{List\ ElemInst}
} {
  \Gamma \vdash \mf{tagInst}_{id}(V, W): \mf{ElemInst}
}
\end{mathpar}

\headersig{Typing Rules for Computations}{$\Gamma \vdash M: A$}
\begin{mathpar}
\inferrule[TC-Val] {
  \Gamma \vdash V: A
} {
  \Gamma \vdash \mf{val}\ V: A
}
\quad
\inferrule[TC-Let] {
  \Gamma \vdash M: A
  \\\\
  \Gamma, x: A \vdash N: B
} {
  \Gamma \vdash \elet{x}{M}{N}: B
}
\quad
\inferrule[TC-App] {
  \Gamma \vdash V: A \rightarrow B
  \\\\
  \Gamma \vdash W: A
} {
  \Gamma \vdash V\ W: B
}
\quad
\inferrule[TC-If] {
  \Gamma \vdash V: \mf{Bool}
  \\\\
  \Gamma \vdash M: A
  \and
  \Gamma \vdash N: A
} {
  \Gamma \vdash \eif{V}{M}{N}: A
}
\quad
\inferrule[TC-Fst] {
  \Gamma \vdash V: A \times B
} {
  \Gamma \vdash \mf{fst}\ V: A
}
\quad
\inferrule[TC-Snd] {
  \Gamma \vdash V: A \times B
} {
  \Gamma \vdash \mf{snd}\ V: B
}

\inferrule[TC-Init] {
  \Gamma \vdash V: \mf{Element}
} {
  \Gamma \vdash \mf{init}\ V: \mf{ElemInst}
}
\quad
\inferrule[TC-Click] {
  \Gamma \vdash V: \mf{ElemInst}
} {
  \Gamma \vdash \mf{click}_{id}\ V: \tone
}
\quad
\inferrule[TC-Run] {
  \Gamma \vdash V: \mf{Attr}
} {
  \Gamma \vdash \mf{runClick}\ V: \tone
}
\quad
\inferrule[TC-Update] {
  \Gamma \vdash V: \mf{ElemInst}
} {
  \Gamma \vdash \mf{update}\ V: \mf{ElemInst}
}
\quad
\inferrule[TC-Fold] {
  \Gamma \vdash V_1: A \rightarrow B \rightarrow B
  \\\\
  \Gamma \vdash V_2: B
  \quad
  \Gamma \vdash V_3: \mf{List}\ A
} {
  \Gamma \vdash \mf{fold}\ V_1\ V_2\ V_3: B
}
\end{mathpar}
}
\caption{Typing rules for $\lampure$ values and computations.}
\Description{TODO:DESC}
\label{fig:lampure-typing}
\end{figure}

\paragraph{Programs}
We require that a (programmer-level) program $M$ satisfies
$\emptyset \vdash M: \mf{Element}$.
The initial configuration $M_0$ is then $\elet{x}{M}{\mf{init}\ x}$,
and satisfies $\emptyset \vdash M_0: \mf{ElemInst}$.


\begin{figure}[t]
{\small
    \headersig{Reduction Rules for Computations}{$M \step N$}
    \subheader{$\beta$-reductions}
    \begin{minipage}{0.45\textwidth}
    \[\ba{@{}r@{~}c@{~}l@{\quad}l@{}}
    \elet{x}{M}{N}           &\step& \elet{x}{M'}{N} & \tsc{EC-Let} \\
                            & & \quad (\text{if } M \step M') \\
    \elet{x}{\mf{val}\ V}{M} &\step&   M[V/x]        & \tsc{EC-Val} \\
    (\lambda x.M)\ V         &\step&   M[V/x]        &\tsc{EC-App} \\
    \mf{fold}\ V\ W\ []        &\step&   \mf{val}\ W &\tsc{EC-FdN} \\
    \ea\]\ignorespacesafterend
\end{minipage}
\hfill
\begin{minipage}{0.53\textwidth}
    \[\ba{@{}r@{~}c@{~}l@{\quad}l@{}}
\eif{\mf{true}}{M}{N}    &\step&   M                                   &\tsc{EC-IfT} \\
\eif{\mf{false}}{M}{N}   &\step&   N                                   &\tsc{EC-IfF} \\
\mf{fst}\; (V, W)         &\step&   \mf{val}\ V                         &\tsc{EC-Fst} \\
\mf{snd}\; (V, W)         &\step&   \mf{val}\ W                         &\tsc{EC-Snd} \\
\mf{fold}\; V_f\; V_0\; (V_h :: V_t) &\step&   V_f\; V_h\; (\mf{fold}\; V_f\; V_0\; V_t)  &\tsc{EC-FdC} \\
    \ea\]
\end{minipage}

\subheader{Runtime reductions}
\begin{reductions}
\mf{init}\ (\mf{text}\ V) &\step& \mf{val}\ (\mf{textInst}\ V) &\tsc{EC-InitText} \\
\mf{init}\ (\mf{tag}(V, W))
&\step& \elet{x}{\mf{inits}\ W}{\mf{val}\ (\mf{tagInst}_{id}(V, x))},
  \quad {id}\ \text{fresh}\!\!\!\!
    &\tsc{EC-InitTag} \\
\mf{click}_{id}\ (\mf{textInst}\ V) &\step& \mf{val}\ ()
    &\tsc{EC-ClickText} \\
\mf{click}_{id}\ (\mf{tagInst}_{id}(V, W))
  &\step& \mf{runClicks}\ V \semi \mf{clicks}_{id}\ W
    &\tsc{EC-ClickTagT} \\
\mf{click}_{id}\ (\mf{tagInst}_{{id}'}(V, W))
  &\step& \mf{clicks}_{id}\ W,
    \quad {id} \neq {id}'
    &\tsc{EC-ClickTagF} \\
\mf{runClick}\ \mf{attr}(V, W) &\step& \mf{val}\ () &\tsc{EC-RunAttr} \\
\mf{runClick}\ (\mf{onClick}\ V) &\step& V\ () &\tsc{EC-RunClick} \\
\mf{update}\ (\mf{textInst}\ V) &\step& \mf{val}\ \mf{textInst}\ V
    &\tsc{EC-UpdateText} \\
\mf{update}\ (\mf{tagInst}_{id}(V, W))
  &\step& \mf{let}\ x = \mf{updates}\ W\ \mf{in}\
          \mf{val}\ (\mf{tagInst}_{id}(V, x))
    &\tsc{EC-UpdateTag}
\end{reductions}

\headersig
    {Top-level Eventful Reduction Rules}
    {$M \step_\top N$}
    \[
    \begin{array}{lcl}
    M \step_\top N \:\: (\text{if } M \step N) \quad \tsc{ET-Lift}
    & \qquad &
    \mf{val}\ V \step_\top
  \mf{click}_{id}\ V \semi \mf{update}\ V
      \quad \tsc{ET-Event}
\end{array}
\]
}

\caption{Small-step operational semantics and top-level semantics for $\lampure$.}
\Description{TODO:DESC}
\label{fig:lampure-opsem}
\end{figure}

\subsection{Operational Semantics}
Figure \ref{fig:lampure-opsem} defines a small-step reduction on computations;
substitution is defined on values and computations in the expected way.
The $\beta$-reduction rules on expressions are standard; the remaining rules
show how to evaluate the runtime computations.
Rules \textsc{EC-InitText} and \textsc{EC-InitTag} evaluate
static $\mf{text}$ and $\mf{tag}$ expressions $\mf{Element}$
into their runtime counterparts of type $\mf{ElemInst}$. In the
case of tags, this involves creating a fresh runtime identifier.

Rules \textsc{EC-ClickText}, \textsc{EC-ClickTagT}, and
\textsc{EC-ClickTagF} handle the $\mf{click}_{id}$ construct,
which invokes all of the event handlers contained within the node
with identifier ${id}$.
Specifically in the case that the
identifier of a tag instance $\mf{tagInst}_{id}(V, W)$
matches ${id}$, \textsc{EC-ClickTagT}
evaluates $\mf{runClicks}\ V$ to invoke all handlers contained
within the node's attributes, and recursively searches $W$.
Correspondingly, rule \textsc{EC-RunAttr} ignores key-value
attributes, and \textsc{EC-RunClick} invokes any $\mf{onClick}$
handlers.

To simulate ``reactive behavior''
(i.e. how the system responds to user input),
we introduce an additional step judgment $\step_\top$
with two rules: a congruence rule \textsc{ET-Lift},
and a rule \textsc{ET-Event} that represents event reception.
This latter rule \emph{nondeterministically} chooses a node identifier
whose handlers should be run. The right-hand-side involves finding and invoking
the $\mf{onClick}$ handler, but also performing node updates, which are more
meaningful in $\lamhook$ and $\lamref$.

\subsection{Metatheory}

$\lampure$ enjoys syntactic type soundness, given as usual by progress and
preservation.

\begin{theorem}[Type Soundness for $\step$]
\label{thm:lampure-sound}
\hfill

\begin{itemize}
    \item \textbf{Preservation:} If $\Gamma \vdash M: A$ and $M \step N$,
then $\Gamma \vdash N: A$.
    \item \textbf{Progress:} If $\emptyset \vdash M: A$, then either $\exists V. M=\mf{val}\ V$
or $\exists N. M \step N$.
\end{itemize}
\begin{proof}
  Preservation by induction on $M \step N$,
  progress by induction on $\emptyset \vdash M: A$.
\end{proof}
\end{theorem}

We also have a form of type soundness
for the top-level step judgment $\step_\top$,
wherein the computation type must be $\mf{ElemInst}$.
Since we can always receive an event, $\step_\top$ never terminates.

\begin{theorem}[Type Soundness for $\step_\top$]
\label{thm:lampure-sound-top}
\hfill

\begin{itemize}
    \item \textbf{Preservation:} If $\Gamma \vdash M: \mf{ElemInst}$ and $M \step_\top N$,
then $\Gamma \vdash N: \mf{ElemInst}$.
    \item \textbf{Progress:} If $\emptyset \vdash M: A$, then
$\exists N. M \step_{\top} N$.
\end{itemize}
\end{theorem}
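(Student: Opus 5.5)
The plan is to reduce both parts to Theorem~\ref{thm:lampure-sound}, which already gives soundness for the ordinary reduction $\step$. The top-level judgment $\step_\top$ has only two rules, \textsc{ET-Lift} and \textsc{ET-Event}, so both parts go by case analysis on which rule is used. There is no need for a fresh induction.

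For \textbf{preservation}, suppose $\Gamma \vdash M: \mf{ElemInst}$ and $M \step_\top N$.
\begin{itemize}
  \item In case \textsc{ET-Lift} we have $M \step N$, and preservation in Theorem~\ref{thm:lampure-sound} gives $\Gamma \vdash N: \mf{ElemInst}$ immediately.
  \item In case \textsc{ET-Event} we have $M = \mf{val}\ V$ and $N = \mf{click}_{id}\ V \semi \mf{update}\ V$. That is, $N$ is $\elet{x}{\mf{click}_{id}\ V}{\mf{update}\ V}$ with $x \notin \mf{fv}(\mf{update}\ V)$.
  \item Inversion on \textsc{TC-Val}, the only rule concluding a typing for $\mf{val}\ V$, yields $\Gamma \vdash V: \mf{ElemInst}$.
  \item \textsc{TC-Click} then gives $\Gamma \vdash \mf{click}_{id}\ V: \tone$.
  \item Weakening for values gives $\Gamma, x:\tone \vdash V: \mf{ElemInst}$. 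This is a routine lemma, proved by mutual induction on value and computation typing, using the variable convention so that $x$ is fresh.
  \item \textsc{TC-Update} then gives $\Gamma, x:\tone \vdash \mf{update}\ V: \mf{ElemInst}$.
  \item Finally, \textsc{TC-Let} assembles these into $\Gamma \vdash N: \mf{ElemInst}$.
\end{itemize}
Note that the identifier $id$ is chosen nondeterministically and is not constrained by typing, since \textsc{TC-Click} accepts any $id$.

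For \textbf{progress}, suppose $\emptyset \vdash M: A$. Progress in Theorem~\ref{thm:lampure-sound} gives two possibilities.
\begin{itemize}
  \item If $M \step N$ for some $N$, then $M \step_\top N$ by \textsc{ET-Lift}.
  \item If $M = \mf{val}\ V$, then \textsc{ET-Event} applies for any choice of identifier $id$, producing $\mf{click}_{id}\ V \semi \mf{update}\ V$.
\end{itemize}
Notably, \textsc{ET-Event} has no side condition on $V$, so progress holds at an arbitrary type $A$, exactly as stated. If one also wants the result to be well typed, this requires $A = \mf{ElemInst}$, which is the preservation part.

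There is no real obstacle here. The only step needing care is the weakening lemma used in the \textsc{ET-Event} case, to justify typing $V$ under the extended context in the body of the desugared sequencing. It is standard, and if it has not already been established alongside Theorem~\ref{thm:lampure-sound}, I would prove it first.
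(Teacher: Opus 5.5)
Your proposal is correct and follows essentially the paper's route. For preservation you case-split on the two $\step_\top$ rules, appealing to Theorem~\ref{thm:lampure-sound} for \textsc{ET-Lift} and typing the desugared sequence directly for \textsc{ET-Event}; for progress you apply Theorem~\ref{thm:lampure-sound} and then \textsc{ET-Lift} or \textsc{ET-Event}. Your explicit inversion on \textsc{TC-Val} and use of $\Gamma$-weakening simply fill in details that the paper's one-line proof leaves implicit.
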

\begin{proof}
  By case-analysis on $M \step N$ and $\emptyset \vdash M: A$ respectively,
  with appeal to Theorem~\ref{thm:lampure-sound}.
\end{proof}

\section[Lambda-React-Hook: A Calculus of Hook-Indexed State]
  {$\lamhook$: A Calculus of Hook-Indexed State}
\label{sec:lamhook}
We extend $\lampure$ with components based on \emph{hook-indexed state}
in the style of React hooks.

\subsection{Syntax}

\paragraph{Term-Level}
Below we present the new $\lamhook$ syntax; runtime-only syntax is
again highlighted in \grey{gray}.
To enforce the top-level restriction, we add \emph{component bodies}.
%
%

\newcommand{\slab}[1]{\text{#1 }}

{\small
\[
\ba{@{}c@{}}
\begin{syn}
\slab{Effects} E &::=& \mf{Pure} \mid \mf{Mut} \\
\slab{Types} A, B &::=& \cdots \mid A \xrightarrow{E} B \\
\end{syn}
\qquad
\begin{syn}
\slab{Label record} \Ell &::=& \emptyset \mid \Ell, \ell: A \\
\slab{Heap context} \Pi &::=& \emptyset \mid \Pi, p: \Ell \\
\slab{Phase context} \Phi &::=& \emptyset \mid \Phi, p: \phi \\
\end{syn}
\qquad
\begin{syn}
\slab{Pointers} p \\
\slab{Labels} \ell \\
\end{syn}
\medskip \\
\begin{syn}
\slab{Values} V, W &::= & \cdots \mid \mf{component}\;\{D\}
                    \mid   \mgrey{\compInst{p}{D}{V}}
                           \mid \mgrey{\setter^\ell_p} \smallskip \\
\slab{Computations} M, N &::=& \cdots \mid \mgrey{\eval{D_0}{\phi}{p}{D}} \\
\slab{Bodies} D &::= & \mf{val}\ V \mid \elet{x}{M}{D}
                  \mid  \estate{x}{\ell}{V}{D} \\
\end{syn}
\medskip \\
\begin{syn}
\slab{Phase} \phi &::=& \mf{Init} \mid \mf{Succ} \\
\slab{Heap} \heap &::=& \emptyset \mid \heap, p \mapsto \pi \\
\end{syn}
\qquad
\begin{syn}
\slab{Instance memory} \pi &::=& \{\mf{rerender}: b, \mf{st}: \sigma\} \\
\slab{Boolean flag} b &::=& \mf{tt} \mid \mf{ff} \\
\end{syn}
\qquad
\begin{syn}
\slab{State store} \sigma &::=& \emptyset \mid \sigma, \ell \mapsto V \\
\end{syn}
\\
\ea
\]
}


Components are made \emph{stateful} by the $\mf{useState}$ construct,
written $\estate{x}{\ell}{V}{D}$.
Upon evaluation, the \emph{current} state value is bound to $x$,
and a \emph{setter} for the state cell is bound to $x_\mf{set}$.
$V$ is the \emph{initial} state value.
$\ell$ ranges over a set of \emph{labels},
and identifies the location of the state cell
(with respect to a given component instance).
Instead of the programmer writing these labels,
we expect that they are generated via a straightforward elaboration pass.

Component instances are crucially differentiated
from components: a single component may be instantiated multiple times.
In such cases, $\ell$ alone is insufficient to distinguish
a state cell on the heap, so we use runtime-generated \emph{pointers} $p$ to
disambiguate state cells at runtime.

We introduce three new values, principal among which are components
$\mf{component}\ \{D\}$
that wrap bodies $D$ into elements
and can then be composed with other elements
(e.g. as the child of a $\mf{tag}$).
A \emph{component instance} $\compInst{p}{D}{V}$
has memory at pointer $p$ on the heap,
initial body $D$ (which is used for rerendering),
and child element $V$.
Conceptually, component instances are a stateful
``shell'' around their sub-elements, with no content that would manifest
in e.g. their reification into HTML.
\emph{Setters}, written $\setter^\ell_p$,
support mutation of a particular state cell $(p, \ell)$.

The sole new computation, $\eval{D_0}{\phi}{p}{D}$,
embeds body evaluation into computation evaluation:
as explained in Section \ref{sec:overview-lamhook},
evaluating bodies requires a \emph{phase} $\phi$
and occurs at a pointer $p$.
Evaluation occurs on the right-hand body $D$,
while the left-hand body $D_0$ remains fixed:
in practice, $D$ is a \emph{reduction} of $D_0$.
This computation is read,
``at phase $\phi$ and pointer $p$, $D_0$
has evaluated to $D$''.

The \emph{heap} $\heap$ stores the contents of the state
cells owned by each instance, along with a flag to track whether
the instance should be rerendered.
Specifically, $\heap$ contains some number of pointers $p$
associated with \emph{instance memories} $\pi$.
Each $\pi$ is a record with two fields:
$\mf{rerender}$ contains a meta-boolean flag $b$
tracking whether the instance should be rerendered,
and $\mf{st}$ contains the state $\sigma$ of the instance,
where $\sigma$ is a map from labels $\ell$ to values $V$.
We identify heaps and state stores up to reordering of their entries.

\paragraph{Type-Level}
To prohibit top-level setter calls as explained
in Section \ref{sec:overview-types},
we introduce a simple \emph{effect system} to the language
with only two effects, $\mf{Pure}$ (for pure computations)
and $\mf{Mut}$ for state-altering computations.
We add effect annotations to all function arrows.

Label records $\Ell$ map labels $\ell$ to types $A$:
as mentioned in Section \ref{sec:overview-types},
they serve as the type of both bodies $D$
\emph{and} of state stores $\sigma$.
The heap type $\Pi$ is a map
from pointers $p$ to label records $\Ell$.
We also introduce a phase context $\Phi$
that tracks the current phase of each component instance
to prohibit reinitialization,
as described in Section \ref{sec:overview-types}.
Both $\Pi$ and $\Phi$ are \emph{runtime typing contexts},
in that they ``grow'' as a configuration steps,
and do not grow throughout a single typing derivation (read bottom-to-top).

\paragraph{Notation}

We use array indexing notation to refer to particular values in maps
(such as $\Pi$ and $\heap$),
and we use dot-notation to refer to specific fields of record $\pi$.
Furthermore, we use bar notation, e.g. $\heap \mid x \mapsto V$,
to concisely denote (potentially multi-level) record updates.
For instance, expression $\heap \mid \heap[p].\mf{st}[\ell] \mapsto V$
would be read as ``heap $\heap$ with the state cell at pointer $p$
and label $\ell$ set to $V$.''

\subsection{Type System}
\label{sec:lamhook-typing}
We introduce three significant changes from $\lampure$
throughout the type system, which is shown in Figure~\ref{fig:lamhook-typing}.
First, $\Pi$ and $\Phi$ serve as additional contexts in all judgments.
Second, we extend the computation typing judgment to capture
the computational \emph{effect}.
For brevity, we re-present only the programmer-level $\lampure$ rules
in Figure \ref{fig:lamhook-typing} with their effects:
all runtime-only computations in $\lampure$ have the $\mf{Mut}$ effect.
We also introduce a new typing judgment for bodies.

\paragraph{Phase Contexts}
A component instance is initialized \emph{once}, but may be rerun
at will after initialization. Reflecting this,
in $\Phi$, we treat pointers in phase $\mf{Init}$ \emph{linearly},
whereas we allow pointers in phase $\mf{Succ}$ to be used multiple times.
We define disjoint context merging accordingly as a partial operation
$\Phi = \Phi_1 \Join \Phi_2$.
When rules are read bottom-to-top, this acts as a split.
In particular, we have that if $p: \mf{Init}$ in $\Phi_1$,
then for the result to be well-defined, we must have $p \notin \dom(\Phi_2)$.
We write $\Phi_1 \Join \Phi_2$ in the conclusion of a typing rule to
abbreviate a premise $\Phi = \Phi_1 \Join \Phi_2$.
{\footnotesize
\begin{mathpar}
    \inferrule
    { }
    { \cdot = \cdot \Join \cdot }
    \quad
    \inferrule
    { \Phi = \Phi_1 \Join \Phi_2 \\ p \notin \dom(\Phi)}
    { \Phi, p {:}\; \mf{Succ} = (\Phi_1, p {:}\; \mf{Succ}) \Join (\Phi_2, p {:}\; \mf{Succ}) }
    \quad
    \inferrule
    { \Phi = \Phi_1 \Join \Phi_2 \\ p \notin \dom(\Phi)}
    { \Phi, p {:}\; \phi = (\Phi_1, p {:}\; \phi) \Join \Phi_2 }
    \quad
    \inferrule
    { \Phi = \Phi_1 \Join \Phi_2 \\ p \notin \dom(\Phi)}
    { \Phi, p {:}\; \phi = \Phi_1 \Join (\Phi_2, p {:}\; \phi) }
\end{mathpar}
}

\begin{definition}
  Say $\mf{Succ}(\Phi)$ iff all $p \in \dom(\Phi)$ have $\Phi[p] = \mf{Succ}$.
\end{definition}

To respect linearity, we always have $\mf{Succ}(\Phi)$
for values, bodies, and \emph{pure} computations.
This fact is captured in the following lemma;
we defer discussions of the typing judgments to below.

\begin{restatable}{lemma}{lamhookunrest}
\label{thm:lamhook-unrest}
  If $\Gamma \semi \Pi \semi \Phi \vdash V: A$
  or $\Gamma \semi \Pi \semi \Phi \vdash D: \Ell$
  or $\Gamma \semi \Pi \semi \Phi \vdash M: A \bang \mf{Pure}$,
  then $\mf{Succ}(\Phi)$.
\end{restatable}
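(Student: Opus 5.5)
We prove the three claims simultaneously by mutual induction on the typing derivation, with one case per rule whose conclusion is a value judgment, a body judgment, or a computation judgment with effect $\mf{Pure}$. The key observation is how $\Join$ behaves on $\mf{Init}$ entries. By inspection of the merge rules, if $\Phi = \Phi_1 \Join \Phi_2$, then $\mf{Succ}(\Phi)$ holds iff both $\mf{Succ}(\Phi_1)$ and $\mf{Succ}(\Phi_2)$ hold, and every entry of $\Phi$ comes from $\Phi_1$ or from $\Phi_2$ with the same phase. We establish this as a small auxiliary fact by induction on the derivation of $\Phi = \Phi_1 \Join \Phi_2$. Each of the three merge rules with a nonempty conclusion adds the same $p{:}\phi$ to $\Phi$ and to at least one side, so an $\mf{Init}$ entry appears in $\Phi$ exactly when it appears in one of the components.

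With that fact in hand, the cases go as follows.
\begin{itemize}
\item Leaf rules (\textsc{TV-Unit}, \textsc{TV-Str}, \textsc{TV-Var}, setters, and so on) will presumably require an unrestricted phase context, i.e.\ premises or side conditions of the form $\mf{Succ}(\Phi)$. This is what weakening in a linear-style system needs, so for these the conclusion is immediate.
\item Compound value rules (pairs, cons, $\mf{tag}$, $\mf{tagInst}$, $\mf{compInst}$) type their subvalues in contexts that merge to $\Phi$. The induction hypothesis gives $\mf{Succ}$ for each part, and the auxiliary fact transfers it to $\Phi$.
\item \textsc{TV-Lam} and $\mf{component}\ \{D\}$ are the delicate value cases. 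The body is a computation with an arbitrary effect, so the induction hypothesis does not apply to it. I expect these rules to demand $\mf{Succ}(\Phi)$, or to type the body under $\Phi$ while the rule itself imposes $\mf{Succ}(\Phi)$, since a closure capturing an $\mf{Init}$ pointer would violate linearity. So the claim follows directly from the rule's side condition.
\item For bodies, \textsc{TB-Let} requires its computation premise to be $\mf{Pure}$, so the induction hypothesis applies to the computation. The continuation body is handled by the induction hypothesis too, and the two are combined via the auxiliary fact. \textsc{TB-State} and the $\mf{val}$ body rule reduce to the value cases.
\end{itemize}

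For pure computations, I first check that no rule producing $\mf{Pure}$ types an $\eval{D_0}{\phi}{p}{D}$ with $p{:}\mf{Init}$. \textsc{TC-EvalI} and all runtime constructs are $\mf{Mut}$, so those cases are vacuous. This relies on effects being genuinely tracked: a rule like \textsc{TC-Let} with effect $\mf{Pure}$ must have both premises $\mf{Pure}$. If instead the system has subeffecting ($\mf{Pure} \le \mf{Mut}$), that only goes upward and causes no trouble. For \textsc{TC-Let}, \textsc{TC-If}, \textsc{TC-App}, \textsc{TC-Fst}/\textsc{TC-Snd} and \textsc{TC-Fold} at effect $\mf{Pure}$, all subterm premises are values or pure computations. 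The induction hypothesis gives $\mf{Succ}$ on each split part, and the auxiliary fact closes the case. In \textsc{TC-App}, the function's latent effect must be $\mf{Pure}$ for the application to be pure, but we only need the value premises anyway.

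The main obstacle is the cases whose premises fall outside the induction hypothesis: $\lambda$-abstractions and components, whose bodies are typed with arbitrary effect, and any rule where a subcomputation's effect is not forced to be $\mf{Pure}$. The plan handles these by reading off an explicit $\mf{Succ}(\Phi)$ (or an empty phase context) from the rule itself. If some such rule lacks that side condition, the statement would fail, and I would amend that rule rather than the proof.
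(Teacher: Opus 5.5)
Your proposal is correct and follows the paper's proof. Like the paper, you use mutual induction on the derivation: $\mf{Succ}(\Phi)$ is read off directly at leaves, setters and \textsc{TV-Lam}; elsewhere the induction hypothesis is combined via your (correct) observation that $\Join$ preserves and reflects $\mf{Succ}$; and \textsc{TC-EvalI} is excluded because it forces effect $\mf{Mut}$.

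One correction: \textsc{TV-Comp} has no $\mf{Succ}(\Phi)$ side condition, and it needs none. Its premise is a body judgment $\Gamma \semi \Pi \semi \Phi \vdash D : \Ell$, which carries no effect and falls directly under your mutual induction hypothesis. So only \textsc{TV-Lam} actually relies on a side condition, and your fallback of concluding the statement fails and amending the rule would be a mistake there.
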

\begin{proof}
  By mutual induction on the typing derivation.
  The conclusion is either given directly
  (for lambdas and terms with no subterms),
  or by inductive hypothesis for other terms with subterms.
\end{proof}

\begin{figure}[t]
{\small
    \headersig
    {Typing Rules for Values}
    {$\Gamma \semi \Pi \semi \Phi \vdash V: A$}
\begin{mathpar}
\inferrule[TV-Lam] {
  \Gamma, x: A \semi \Pi \semi \Phi \vdash M: B \bang E
  \and \mf{Succ}(\Phi)
} {
  \Gamma \semi \Pi \semi \Phi \vdash \lambda x.M: A \xrightarrow{E} B
}
\quad
\inferrule[TV-Comp] {
  \Gamma \semi \Pi \semi \Phi \vdash D: \Ell
} {
  \Gamma \semi \Pi \semi \Phi \vdash \mf{component}\ \{D\}: \mf{Element}
}
\quad
\inferrule[TV-Setter] {
  \ell: A \in \Pi[p]
  \and \mf{Succ}(\Phi)
} {
  \Gamma \semi \Pi \semi \Phi
  \vdash \setter^\ell_p: A \xrightarrow{\mf{Mut}} \tone
}

\inferrule[TV-CompI] {
  \Gamma \semi \Pi \semi \Phi_1 \vdash D: \Pi[p]
  \\
  (\Phi_1 \Join \Phi_2)[p] = \mf{Succ}
  \\
  \Gamma \semi \Pi \semi \Phi_2 \vdash V: \mf{ElemInst}
} {
  \Gamma \semi \Pi \semi \Phi_1 \Join \Phi_2
  \vdash \compInst{p}{D}{V}: \mf{ElemInst}
}
\end{mathpar}

    \headersig
        {Typing Rules for Computations}
        {$\Gamma \semi \Pi \semi \Phi \vdash M: A \bang E$}
\begin{mathpar}
\inferrule[TC-Val] {
  \Gamma \semi \Pi \semi \Phi \vdash V: A
} {
  \Gamma \semi \Pi \semi \Phi \vdash \mf{val}\ V: A \bang \mf{Pure}
}

\inferrule[TC-Sub] {
  \Gamma \semi \Pi \semi \Phi \vdash M: A \bang \mf{Pure}
} {
  \Gamma \semi \Pi \semi \Phi \vdash M: A \bang \mf{Mut}
}

\inferrule[TC-Let] {
  \Gamma \semi \Pi \semi \Phi_1 \vdash M: A \bang E
  \\\\
  \Gamma, x: A \semi \Pi \semi \Phi_2 \vdash N: B \bang E
} {
  \Gamma \semi \Pi \semi \Phi_1 \Join \Phi_2
  \vdash \elet{x}{M}{N}: B \bang E
}

\inferrule[TC-App] {
  \Gamma \semi \Pi \semi \Phi_1 \vdash V: A \xrightarrow{E} B
  \\\\
  \Gamma \semi \Pi \semi \Phi_2 \vdash W: A
} {
  \Gamma \semi \Pi \semi \Phi_1 \Join \Phi_2 \vdash V\ W: B \bang E
}
\quad
\inferrule[TC-If] {
  \Gamma \semi \Pi \semi \Phi_1 \vdash V: \mf{Bool} \\\\
  \Gamma \semi \Pi \semi \Phi_2 \vdash M: A \bang E \\
  \Gamma \semi \Pi \semi \Phi_2 \vdash N: A \bang E
} {
  \Gamma \semi \Pi \semi \Phi_1 \Join \Phi_2 \vdash \eif{V}{M}{N}: A \bang E
}
\quad
\inferrule[TC-Fst] {
  \Gamma \semi \Pi \semi \Phi \vdash V: A \times B
} {
  \Gamma \semi \Pi \semi \Phi \vdash \mf{fst}\ V: A \bang \mf{Pure}
}


\inferrule[TC-Fold] {
  \Gamma \semi \Pi \semi \Phi_1 \vdash
  V_1: A \xrightarrow{E} B \xrightarrow{E} B
  \\\\
  \Gamma \semi \Pi \semi \Phi_2 \vdash
  V_2: B
  \\
  \Gamma \semi \Pi \semi \Phi_3 \vdash
  V_3: \mf{List}\ A
} {
  \Gamma \semi \Pi \semi \Phi_1 \Join \Phi_2 \Join \Phi_3 \vdash
  \mf{fold}\ V_1\ V_2\ V_3: B \bang \mf{Mut}
}

\inferrule[TC-EvalI] {
  \Gamma \semi \Pi \semi \Phi_1 \vdash D_0: \Ell \uplus \Pi[p]
  \\
  \Gamma \semi \Pi \semi \Phi_2 \vdash D: \Ell
} {
  \Gamma \semi \Pi \semi \Phi_1 \Join \Phi_2,p: \mf{Init} \vdash
  \eval{D_0}{\mf{Init}}{p}{D}: \mf{ElemInst} \bang \mf{Mut}
}

\inferrule[TC-EvalS] {
  \Gamma \semi \Pi \semi \Phi_1 \vdash D_0: \Pi[p]
  \and
  \Gamma \semi \Pi \semi \Phi_2 \vdash D: \Ell
  \and
  \Ell \subseteq \Pi[p]
  \and
  (\Phi_1 \Join \Phi_2)[p] = \mf{Succ}
} {
  \Gamma \semi \Pi \semi \Phi_1 \Join \Phi_2 \vdash
  \eval{D_0}{\mf{Succ}}{p}{D}: \mf{ElemInst} \bang \mf{Mut}
}
\end{mathpar}

  \headersig
      {Typing Rules for Component Bodies}
      {$\Gamma \semi \Pi \semi \Phi \vdash D: \Ell$}
\begin{mathpar}
\inferrule[TB-Val] {
  \Gamma \semi \Pi \semi \Phi \vdash V: \mf{Element}
} {
  \Gamma \semi \Pi \semi \Phi \vdash \mf{val}\ V: \emptyset
}
\quad
\inferrule[TB-Let] {
  \Gamma \semi \Pi \semi \Phi_1 \vdash M: A \bang \mf{Pure}
  \\
  \Gamma, x: A \semi \Pi \semi \Phi_2 \vdash D: \Ell
} {
  \Gamma \semi \Pi \semi \Phi_1 \Join \Phi_2 \vdash \elet{x}{M}{D}: \Ell
}

\inferrule[TB-State] {
  \Gamma \semi \Pi \semi \Phi_1 \vdash V: A
  \and
  \Gamma, x: A, x_\mf{set}: A \xrightarrow{\mf{Mut}} \tone \semi \Pi
  \semi \Phi_2 \vdash D: \Ell
  \and
  \ell \notin \dom(\Ell)
} {
  \Gamma \semi \Pi \semi \Phi_1 \Join \Phi_2
  \vdash \estate{x}{\ell}{V}{D}: \Ell, \ell: A
}
\end{mathpar}
}
\caption{Selected value, computation, and body typing rules in $\lamhook$.}
\Description{TODO:DESC}
\label{fig:lamhook-typing}
\end{figure}

\paragraph{Values}
We elide rules that feature only the additions of $\Pi$ and $\Phi$.
The updated \textsc{TV-Lam} rule uses the updated computation typing judgment,
explained in more detail below. For now, we note that the computation performs
an effect $E$, which is captured by the annotation on the function type arrow.
Rule \textsc{TV-Comp} makes use of the body typing judgement
$\Gamma \semi \Pi \semi \Phi \vdash D: \Ell$. We defer discussion of
this judgement but it ensures that component body $D$ is
well typed with state labels $\Ell$; the labels are only
necessary for runtime typing guarantees, and so $\Ell$ does not
appear in the conclusion.
The body typing judgment guarantees that the
computation type underlying $D$ is $\mf{Element}$.
The \textsc{TV-Setter} rule assigns a type to a setter
based on the type of state cell $(p, \ell)$ according to $\Pi$.
The first premise implies that $p \in \dom(\Pi)$ and $\ell \in \dom(\Pi[p])$.
Rule \textsc{TV-CompI} types a component instantiation.  Its first
premise ensures that all state cells in the instance memory,
represented by $\Pi[p]$, are consistent with the type of the
component instance's initial body $D$. The second premise
$(\Phi_1 \Join \Phi_2)[p] = \mf{Succ}$ guarantees that initialization of $p$
is complete, and thus that the component is safe to rerender. The third premise
ensures that the child of a component instance is an $\mf{ElemInst}$.

\paragraph{Computations}
The $\lamhook$ computation typing judgement has the form
$\Gamma \semi \Pi \semi \Phi \vdash M : A \bang E$, which can be read
``under typing environment $\Gamma$, heap context $\Pi$, and phase context
$\Phi$, computation $M$ has type $A$ and produces effects $E$''.
Our effect system classes expressions as $\mf{Pure}$ (for
expressions that do not change the heap)
or $\mf{Mut}$. The effect system is deliberately simple:
rule \textsc{TC-Sub} allows us to treat $\mf{Pure}$ expressions as
$\mf{Mut}$, and rules like \textsc{TC-Let} and \textsc{TC-If}
require that their subcomputations share the same effect type.
We omit \textsc{TC-Snd} due to its similarity to \textsc{TC-Fst}.


The construct $\eval{D_0}{\phi}{p}{D}$ is typed
by either \textsc{TC-EvalI} and \textsc{TC-EvalS},
depending on the value of the phase $\phi$.
In the former, $\Ell$ can be read as
``the labels left to be initialized'',
and then the first premise asserts that
the initial body $D_0$ contains the labels
(and appropriately associated types)
in the instance memory at $p$
\emph{and} those still to be initialized.
Consequently, when $\Ell=\emptyset$,
then $D_0$ has type $\Pi[p]$, satisfying the first premise
of \textsc{TV-CompI}.
In \textsc{TC-EvalS}, we guarantee
that all labels left in $D$ \emph{are already in}
the instance memory.
Set operations $\cup$ and $\subseteq$ are defined on label records
in the obvious way; $\Ell \uplus \Ell'$ resolves to $\Ell \cup \Ell'$
alongside the premise that $\Ell$ and $\Ell'$ are disjoint.

\paragraph{Bodies}
As previously mentioned, the type of a body is a label record $\Ell$,
which denotes the labels (and their associated types) present in the body.

Rules \textsc{TB-Val} and \textsc{TB-Let} are analogous to
their computation-level counterparts \textsc{TC-Val} and \textsc{TC-Let},
respectively. \textsc{TB-Val} enforces that the ultimate type of any body is
$\mf{Element}$, and assigns as its state-labels the empty record $\emptyset$.
\textsc{TB-Let} allows us to embed computations
into bodies, and leaves the label record of the continuation unchanged.
These embedded computations must be $\mf{Pure}$,
which guarantees that no setters, which have effect $\mf{Mut}$,
are invoked inside of bodies.

Rule \textsc{TB-State} types the $\mf{useState}$ construct.
We require that the initial value has type $A$,
and in the continuation body, we add a binding $x : A$ to $\Gamma$
to record the type of the value. We also add a binding
$x_{\mf{set}} : A \xrightarrow{\mf{Mut}} \tone$ for the setter.
The conclusion of the rule includes $\ell : A$ in the label set to
indicate that state cell $\ell$ has been used.

\subsection{Operational Semantics}
\label{sec:lamhook-opsem}

The semantics of $\lamhook$ extend those of $\lampure$ by including a heap
$\heap$. We elide the augmented $\lampure$ $\beta$-reduction rules as none
except \textsc{EC-Let} and \textsc{ET-Lift} interact with $\heap$.
We also introduce a series of new rules for computations
and a new step judgement $\step^\phi_p$ on \emph{bodies}.

\begin{figure}[ht]
    {\small
\headersig{Reduction Rules for Computations}{$\heap \semi M \step N \semi \heap'$}
\begin{reductions}
\heap \semi \mf{init\ component}\ \{D\}
  &\step& \eval{D}{\mf{Init}}{p}{D} \semi
    \heap, p \mapsto \{\mf{st}: \emptyset, \mf{rerender}: \mf{ff}\},
    & \tsc{EC-InitComp} \\
&&\multicolumn{2}{l}
{\quad \text{if } p\ \text{fresh}} \\

\heap \semi \mf{update}\ (\compInst{p}{D}{V})
  &\step& \eval{D}{\mf{Succ}}{p}{D} \semi
    \heap \mid \heap[p].\mf{rerender} \mapsto \mf{ff},
    & \tsc{EC-UpdateCompT} \\
&&\multicolumn{2}{l}
{\quad \text{if } \heap[p].\mf{rerender} = \mf{tt}} \\

\heap \semi \mf{update}\ (\compInst{p}{D}{V})
  &\step& \compInst{p}{D}{\mf{update}\ V} \semi \heap,
    & \tsc{EC-UpdateCompF} \\
&&\multicolumn{2}{l}
{\quad \text{if } \heap[p].\mf{rerender} = \mf{ff}} \\

\heap \semi \eval{D_0}{\phi}{p}{D}
  &\step& \eval{D_0}{\phi}{p}{D'} \semi \heap',
    & \tsc{EC-Eval} \\
&&\multicolumn{2}{l}
{\quad \text{if } \heap \semi D \step^\phi_p D' \semi \heap'} \\
\heap \semi \eval{D}{\phi}{p}{\mf{val}\ V}
  &\step& \compInst{p}{D}{\mf{init}\ V} \semi \heap
    & \tsc{EC-EvalDone} \\
\heap \semi \mf{click}_{id}\ (\compInst{p}{D}{V})
  &\step& \mf{click}_{id}\ V \semi \heap
    & \tsc{EC-ClickComp} \\
\heap \semi \setter^\ell_p\ V
  &\step& \mf{val}\ () \semi \heap',
   & \tsc{EC-AppSet} \\
&&\multicolumn{2}{l}
{\quad \text{where } \heap' = \heap \mid \heap[p].\mf{st}[\ell] \mapsto V\ \text{and}\ \heap[p].\mf{rerender} \mapsto \mf{tt}} \\
\end{reductions}

\headersig{Reduction Rules for Bodies}{$\heap \semi D \step^\phi_p D' \semi \heap'$}
\[
\renewcommand{\arraystretch}{1.3}
\begin{reds}
    \heap \semi \elet{x}{M}{D}
  &\step^\phi_p& \elet{x}{N}{D} \semi \heap',
    {\quad \text{if } \heap \semi M \step N \semi \heap'}
    & \tsc{EB-Let} \\
  \heap \semi \elet{x}{\mf{val}\ V}{D}
    &\step^\phi_p& D[V/x] \semi \heap
      & \tsc{EB-Val} \\
  \heap \semi \estate{x}{\ell}{V}{D}
    &\step^\mf{Init}_p& D[V/x, \setter^\ell_p/x_\mf{set}] \semi
                       \heap \mid \heap[p].\mf{st}[\ell] \mapsto V
      & \tsc{EB-StateI} \\
  \heap \semi \estate{x}{\ell}{V}{D}
    &\step^\mf{Succ}_p&
      D[\heap[p].\mf{st}[\ell]/x, \setter^\ell_p/x_\mf{set}] \semi \heap
      & \tsc{EB-StateS} \\
\end{reds}
\]
}
\vspace{-1em}
\caption{Operational semantics for $\lamhook$ computations and bodies,
extending $\lampure$.}
\Description{TODO:DESC}
\label{fig:lamhook-opsem}
\end{figure}

\paragraph{Computations}
Figure \ref{fig:lamhook-opsem} details all new reduction rules in $\lamhook$.
\textsc{EC-AppSet} reduces a setter application to a unit value
and updates the heap accordingly.
\textsc{EC-InitComp} creates a fresh pointer $p$ to a new instance memory
on the heap and begins the evaluation of $D$.
\textsc{EC-ClickComp} triggers click handlers in the body of a component
instance.

The purpose of the $\mf{update}$ construct is to rerender any component bodies
whose state has changed, and in $\lamhook$ has a nontrivial semantics.
To update a component instance at pointer $p$, if $\heap[p].\mf{rerender} =
\mf{tt}$, \textsc{EC-UpdateCompT} begins the \emph{re}-evaluation of $D$.
If $\heap[p].\mf{rerender} = \mf{ff}$, \textsc{EC-UpdateCompF} instead
propagates the update trigger to the child of the instance.

\textsc{EC-Eval} is a congruence rule that allows evaluation
in the body $D$ of $\eval{D_0}{\phi}{p}{D}$, and
\textsc{EC-EvalDone} embeds the result of evaluating a component into a
component instance.

\paragraph{Bodies}
The step judgment for a $\lamhook$ body is parameterized by a phase $\phi$ and a
pointer $p$, i.e. we evaluate bodies \emph{in a particular phase} and \emph{with
respect to a particular heap location}.  Rules \textsc{EB-Let} and
\textsc{EB-Val} ignore these extra parameters and use the standard
let-binding semantics.

The semantics of $\mf{useState}$ is determined by $\phi$. If $\phi=\mf{Init}$,
then \textsc{EC-StateI} applies, which binds the provided initial value to $x$
and simultaneously adds it to the heap at $(p, \ell)$.
If instead $\phi=\mf{Succ}$, then \textsc{EC-StateS} binds to $x$ the current
value on the heap at $(p, \ell)$, and leaves the heap unchanged.
In either case, we also bind a setter for location $(p, \ell)$ to $x_\mf{set}$.

\subsection{Metatheory}

\paragraph{Configuration Typing}
To reason about runtime configurations, we must describe
what it means for a heap to be well-typed.
Judgement $\Gamma \semi \Pi \semi \Phi \vDash \heap$
ensures that heap $\heap$ is consistent
with heap context $\Pi$: namely that the pointers in the
heap coincide with the pointers in the heap context, and that
the values contained in the state store for each instance have
types consistent with the heap context. Specifically,
$\Gamma \semi \Pi \semi \Phi \vDash \heap$
abbreviates $\Gamma \semi \Pi \semi \Phi \vDash \heap: \Pi$,
which is defined as follows:
    {\footnotesize
\begin{mathpar}
\inferrule[TS-Nil] {\mf{Succ}(\Phi)} {
  \Gamma \semi \Pi \semi \Phi \vDash \emptyset {:}\; \emptyset
}
\quad
\inferrule[TS-Cons] {
  \Gamma \semi \Pi \semi \Phi_1 \vDash \sigma: \Sigma
  \quad
  \Gamma \semi \Pi \semi \Phi_2 \vdash V: A
} {
  \Gamma \semi \Pi \semi \Phi_1 \Join \Phi_2
  \vDash \sigma, \ell \mapsto V: (\Sigma, \ell: A)
}
\quad
\inferrule[TH-Nil] {\mf{Succ}(\Phi)} {
  \Gamma \semi \Pi \semi \Phi \vDash \emptyset {:}\; \emptyset
}
\quad
\inferrule[TH-Cons] {
  \Gamma \semi \Pi_0 \semi \Phi_1 \vDash \heap: \Pi
  \quad
  \Gamma \semi \Pi_0 \semi \Phi_2 \vDash \pi.\mf{st}: \Ell
} {\Gamma \semi \Pi_0 \semi \Phi_1 \Join \Phi_2
  \vDash \heap, p \mapsto \pi: (\Pi, p: \Ell)
}
\end{mathpar}
}

We then define \emph{configurations} $\heap \semi M$
to be well-typed via
$\Gamma \semi \Pi \semi \Phi_H \semi \Phi_M \vdash \heap \semi M: A \bang E$,
which abbreviates that $\Phi_H \Join \Phi_M$ is defined and we also have
$\Gamma \semi \Pi \semi \Phi_M \vdash M: A \bang E$
and $\Gamma \semi \Pi \semi \Phi_H \vDash \heap$.
We extend a similar notion to bodies via judgment
$\Gamma \semi \Pi \semi \Phi_D \semi \Phi_M \vdash \heap \semi D: \Ell$.

\paragraph{Type Soundness}
$\lamhook$ enjoys syntactic type soundness, although the statements
and proofs of such are more involved than those of $\lampure$.

We prove progress for computations, for each phase of body evaluation,
and for the top-level step judgment.
When $\phi=\mf{Succ}$, we require the premise that all labels in the
body are also in the instance memory: this guarantees that
we only ever access values from the heap that have already been initialized.
Below we include only our top-level result.
As with $\lampure$, we can always receive an event, and so never terminate.
Appendix \ref{appdx:prog} contains all other statements and full proofs.

\begin{restatable}[Progress for $\step_\top$]{theorem}{lamhookprogtop}
\label{thm:lamhook-progtop}
If
$\emptyset \semi \Pi \semi \Phi_H \semi \Phi_M \vdash \heap \semi M: A \bang E$,
then $\heap \semi M \step_\top N \semi \heap'$ for some $N$ and $\heap'$.
\end{restatable}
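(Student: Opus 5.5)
The top-level step judgment for $\lamhook$ inherits the two rules \textsc{ET-Lift} and \textsc{ET-Event} from $\lampure$, now threaded with the heap. The plan is therefore a simple case split on the shape of $M$, with all real work delegated to progress for ordinary computations. That result is the computation-level progress lemma stated in Appendix~\ref{appdx:prog}. It says that if $\emptyset \semi \Pi \semi \Phi_H \semi \Phi_M \vdash \heap \semi M: A \bang E$, then either $M = \mf{val}\ V$ for some $V$, or $\heap \semi M \step N \semi \heap'$ for some $N$ and $\heap'$.

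\textbf{Case $M$ steps.} If $\heap \semi M \step N \semi \heap'$, then \textsc{ET-Lift} immediately gives $\heap \semi M \step_\top N \semi \heap'$, and we are done.

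\textbf{Case $M = \mf{val}\ V$.} Here \textsc{ET-Event} fires with no precondition on $V$. For any identifier ${id}$, it gives
\[\heap \semi \mf{val}\ V \step_\top (\mf{click}_{id}\ V \semi \mf{update}\ V) \semi \heap.\]
As in Theorem~\ref{thm:lampure-sound-top}, the progress statement places no constraint on the type $A$, and this rule is purely syntactic. So a step exists, which is all the theorem requires. Typability of the result is a matter for preservation, not progress.

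\textbf{The real work: computation-level progress.} The main obstacle is therefore the computation-level progress lemma itself. It must be proved by mutual induction with progress for bodies in each phase. Most cases follow the usual canonical-forms reasoning: a closed value of function type is a $\lambda$ or a setter, and a closed $\mf{ElemInst}$ value is a $\mf{textInst}$, $\mf{tagInst}$, or $\mf{compInst}$.

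The subtle cases are these:
\begin{itemize}
\item \textsc{EC-AppSet} and \textsc{EC-UpdateComp}. These need heap typing (\textsc{TH-Cons}) to ensure $p \in \dom(\heap)$ whenever $p \in \dom(\Pi)$.
\item $\eval{D_0}{\mf{Succ}}{p}{D}$ with $D$ a $\mf{useState}$. This uses the premise $\Ell \subseteq \Pi[p]$ of \textsc{TC-EvalS}, together with well-typedness of the store, to guarantee that $\heap[p].\mf{st}[\ell]$ is defined, so that \textsc{EB-StateS} can fire.
\item \textsc{EB-StateI}. This always applies, since the store update is total once $p \in \dom(\heap)$.
\end{itemize}

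Given that lemma as an earlier result, the top-level theorem follows directly by the two cases above.
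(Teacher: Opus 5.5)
Your proof is correct and takes the same route as the paper's. You apply computation-level progress (Theorem~\ref{thm:lamhook-progc}) to the configuration typing, close the value case with \textsc{ET-Event} and the stepping case with \textsc{ET-Lift}, and your remarks on the underlying lemma's subtle cases (setters, $\mf{update}$ on instances, and \textsc{EB-StateS} via $\Ell \subseteq \Pi[p]$) match how the appendix proves it.
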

\begin{proof}
  By case-analysis on the typing derivation
  with appeal to Theorems \ref{thm:lamhook-progc} (computations),
  \ref{thm:lamhook-progbi} (bodies for $\phi=\mf{Init}$),
  and \ref{thm:lamhook-progbs} (bodies for $\phi=\mf{Succ}$),
  which are shown by mutual induction.
\end{proof}

We similarly must prove preservation for all three main modes of evaluation
and the top-level step judgment.
Since the heap and phases evolve as we step, we state preservation with respect
to partial orders on both $\Pi$ and $\Phi$, which are given as follows.

\begin{definition}[Phase and phase context orders]\hfill
    \begin{itemize}
        \item \textbf{Phase order:} $\phi \sqsubseteq \phi'$ is the least partial order
satisfying $\mf{Init} \sqsubseteq \mf{Succ}$.
        \item \textbf{Phase context order:} $\Phi \sqsubseteq \Phi'$ holds iff
        $\Phi(p) \sqsubseteq \Phi'(p)$ for all $p \in \dom(\Phi)$.
        \item \textbf{Phase context order (fresh):} $\Phi \sqsubseteq_\mf{fresh}
            \Phi'$ holds
iff $\Phi \sqsubseteq \Phi'$ and all $p \in \dom(\Phi') \setminus \dom(\Phi)$
are fresh (with respect to a particular step).
    \end{itemize}
\end{definition}

The $\sqsubseteq_\mf{fresh}$ notion is necessary to show the following lemma,
since otherwise, a non-fresh extension to $\Phi_1$
could conflict with pointers in $\Phi_2$.

\begin{restatable}{lemma}{freshphi}
\label{thm:lamhook-freshphi}
If $\Phi = \Phi_1 \Join \Phi_2$ and $\Phi_1 \sqsubseteq_\mf{fresh} \Phi_1'$,
then $\Phi \sqsubseteq_\mf{fresh} \Phi_1' \Join \Phi_2$
(and $\Phi_1' \Join \Phi_2$ is defined).
\end{restatable}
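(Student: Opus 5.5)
Both claims follow by induction on the derivation of $\Phi = \Phi_1 \Join \Phi_2$: first that $\Phi_1' \Join \Phi_2$ is defined, then that $\Phi \sqsubseteq_\mf{fresh} \Phi_1' \Join \Phi_2$. The definitions involved are these. $\Phi_1 \sqsubseteq_\mf{fresh} \Phi_1'$ means $\dom(\Phi_1) \subseteq \dom(\Phi_1')$, with phases only advancing from $\mf{Init}$ to $\mf{Succ}$ on $\dom(\Phi_1)$. Every pointer in $\dom(\Phi_1') \setminus \dom(\Phi_1)$ is fresh for the step, so in particular it is not in $\dom(\Phi)$, and hence not in $\dom(\Phi_2)$.

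The easiest route is to replace the inductive definition of $\Join$ by a pointwise characterisation, proved as an auxiliary fact by induction on the four merge rules. The fact is that $\Phi = \Phi_1 \Join \Phi_2$ holds iff three conditions hold: $\dom(\Phi) = \dom(\Phi_1) \cup \dom(\Phi_2)$; for each $p$ in both domains we have $\Phi_1[p] = \Phi_2[p] = \Phi[p] = \mf{Succ}$; and for each $p$ in exactly one $\Phi_i$ we have $\Phi[p] = \Phi_i[p]$. The "only if" direction is immediate by induction. The "if" direction goes by induction on the size of $\dom(\Phi)$, peeling off one pointer and choosing the second rule (shared, $\mf{Succ}$) or the third or fourth rule (exclusive), as appropriate. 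Since contexts are identified up to reordering, this is unproblematic.

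With the characterisation, define $\Phi' := \Phi_1' \Join \Phi_2$ pointwise and check the conditions, splitting pointers into three classes.
\begin{itemize}
\item A pointer $p \in \dom(\Phi_1') \setminus \dom(\Phi_1)$ is fresh, so $p \notin \dom(\Phi_2)$, and it lies in exactly one side.
\item A pointer $p \in \dom(\Phi_1) \cap \dom(\Phi_2)$ has $\Phi_1[p] = \mf{Succ}$. Since $\mf{Succ}$ is maximal in the phase order, $\Phi_1'[p] = \mf{Succ}$, so the shared condition is preserved.
\item All other pointers are exclusive to one side, and their phases carry over unchanged.
\end{itemize}
Hence $\Phi'$ is defined.

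For the ordering, take any $p \in \dom(\Phi)$. If $p \in \dom(\Phi_1)$, then $\Phi[p] = \Phi_1[p] \sqsubseteq \Phi_1'[p] = \Phi'[p]$. Otherwise $p \in \dom(\Phi_2)$ only, and it is also absent from $\Phi_1'$ because new pointers are fresh, so $\Phi'[p] = \Phi_2[p] = \Phi[p]$. Finally, $\dom(\Phi') \setminus \dom(\Phi) = \dom(\Phi_1') \setminus \dom(\Phi_1)$ consists only of fresh pointers, which gives $\Phi \sqsubseteq_\mf{fresh} \Phi'$.

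The main obstacle is the shared-pointer case, which is where definedness could fail. It is handled by the maximality of $\mf{Succ}$, together with freshness ruling out clashes between new pointers of $\Phi_1'$ and $\dom(\Phi_2)$.
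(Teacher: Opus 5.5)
Your argument is correct and is essentially the paper's proof: a pointwise case analysis over pointers that uses three facts. $\mf{Init}$ pointers are never shared between $\Phi_1$ and $\Phi_2$; $\mf{Succ}$ is maximal, so shared pointers remain mergeable; and pointers of $\Phi_2$ are not fresh, so they cannot newly appear in $\Phi_1'$. You additionally state and prove the pointwise characterisation of $\Join$ and check definedness for the newly added pointers of $\Phi_1'$, both of which the paper's proof uses only implicitly.
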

\begin{proof}
  By case-analysis on an arbitrary $p \in \Phi$.
  We elide a full proof to Appendix \ref{appdx:prez}.
\end{proof}

Intuitively, the \emph{heap context order} only permits $\Pi$ to grow at a
pointer $p$ if $p$ is in initialization phase according to $\Phi$. Otherwise,
we require that $\Pi[p] = \Pi'[p]$.

\begin{definition}[Heap context order]
Define $\Pi \sqsubseteq_\Phi \Pi'$ to hold iff
\begin{enumerate}[label=(\roman*)]
  \item $\dom(\Pi) \subseteq \dom(\Pi')$,
  \item for all $p \in \dom(\Pi')$,
    either $p \in \dom(\Pi)$ or $\Phi[p] = \mf{Init}$, and
  \item for all $p \in \dom(\Pi)$,
    if $\Phi[p] = \mf{Init}$ then $\Pi[p] \subseteq \Pi'[p]$,
    and otherwise, $\Pi[p] = \Pi'[p]$.
\end{enumerate}

The order $\subseteq$ on label records in condition (iii) invokes type equality:
both are defined as expected.
\end{definition}

We can now state preservation.
For computations, we make separate statements for $\mf{Pure}$ and $\mf{Mut}$,
since we also want to show that pure computations do not change the heap,
i.e. typed purity implies \emph{operational} purity.
Here we again include only our top-level result: all other statements
and full proofs can be found in Appendix \ref{appdx:prez}.

\begin{restatable}[Preservation for $\step_\top$]{theorem}{lamhookpreztop}
\label{thm:lamhook-preztop}
If
$\Gamma \semi \Pi \semi \Phi_H \semi \Phi_M \vdash
\heap \semi M: \mf{ElemInst} \bang \mf{Mut}$ and
$\heap \semi M \step_\top N \semi \heap'$,
then for some $\Phi_N$, $\Phi_H'$, and $\Pi'$
such that $\Phi_M \Join \Phi_H \sqsubseteq_\mf{fresh} \Phi_N \Join \Phi_H'$
and $\Pi \sqsubseteq_{\Phi_N \Join \Phi_H'} \Pi'$, we have
$\Gamma \semi \Pi' \semi \Phi_H' \semi \Phi_N \vdash \heap' \semi N:
\mf{ElemInst} \bang \mf{Mut}$.
\end{restatable}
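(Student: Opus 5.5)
The proof is a case analysis on the derivation of $\heap \semi M \step_\top N \semi \heap'$. There are two rules, \textsc{ET-Lift} and \textsc{ET-Event}. Nearly all the work is delegated to the computation-level preservation theorem for $\mf{Mut}$ computations in Appendix~\ref{appdx:prez}. That theorem is proved by mutual induction with preservation for bodies in each phase. It states that if $\Gamma \semi \Pi \semi \Phi_H \semi \Phi_M \vdash \heap \semi M : A \bang \mf{Mut}$ and $\heap \semi M \step N \semi \heap'$, then there exist $\Phi_N$, $\Phi_H'$ and $\Pi'$, related to the originals by exactly the orders in the statement, which type the new configuration.

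\textbf{Case \textsc{ET-Lift}.} Here $\heap \semi M \step N \semi \heap'$. Instantiating the computation-level preservation result at $A = \mf{ElemInst}$ gives the conclusion directly. The only care needed is that that theorem is stated with the same $\sqsubseteq_\mf{fresh}$ and $\sqsubseteq_{\Phi_N \Join \Phi_H'}$ conditions. Any fresh pointer introduced by \textsc{EC-InitComp} is added to $\Phi_N$ as $p : \mf{Init}$ and to $\Pi'$ as $p : \emptyset$. Lemma~\ref{thm:lamhook-freshphi} then ensures that the merge with $\Phi_H'$ remains defined.

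\textbf{Case \textsc{ET-Event}.} Here $M = \mf{val}\ V$ and $N = \mf{click}_{id}\ V \semi \mf{update}\ V$, with $\heap' = \heap$. I take $\Pi' = \Pi$, $\Phi_H' = \Phi_H$ and $\Phi_N = \Phi_M$. Both orders then hold trivially by reflexivity, since no fresh pointers appear. It remains to type $N$.
\begin{enumerate}
\item Inverting \textsc{TC-Val} (possibly under \textsc{TC-Sub}) gives $\Gamma \semi \Pi \semi \Phi_M \vdash V : \mf{ElemInst}$.
\item By Lemma~\ref{thm:lamhook-unrest}, $\mf{Succ}(\Phi_M)$ holds.
\item Hence $\Phi_M = \Phi_M \Join \Phi_M$, because every pointer is $\mf{Succ}$ and may be duplicated.
\item We can therefore type both $\mf{click}_{id}\ V : \tone \bang \mf{Mut}$ and $\mf{update}\ V : \mf{ElemInst} \bang \mf{Mut}$ under $\Phi_M$, using the $\lamhook$ versions of \textsc{TC-Click} and \textsc{TC-Update}, whose runtime constructs carry $\mf{Mut}$.
\item \textsc{TC-Let}, via the $M \semi N$ sugar with a weakened variable, combines them at $\Phi_M \Join \Phi_M = \Phi_M$.
\item Heap typing is unchanged.
\end{enumerate}
This uses a routine weakening lemma for $\Gamma$.

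The main obstacle is not this top-level argument itself but the computation-level preservation it invokes. The delicate cases there are:
\begin{itemize}
\item \textsc{EC-EvalDone} in the $\mf{Init}$ phase, where $p : \mf{Init}$ must be promoted to $p : \mf{Succ}$ so that \textsc{TV-CompI} applies. This requires the remaining record $\Ell = \emptyset$, so that $D_0 : \Pi[p]$.
\item \textsc{EB-StateI}, where $\Pi[p]$ grows by $\ell : A$ while $p$ is still $\mf{Init}$. This is permitted by clause (iii) of $\sqsubseteq_\Phi$ and by the disjointness in $\Ell \uplus \Pi[p]$.
\end{itemize}
For the top-level theorem I would simply cite that result, checking only that the phase-context split between heap and computation is preserved as in Lemma~\ref{thm:lamhook-freshphi}.
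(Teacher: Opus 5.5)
Your proposal is correct and matches the paper's proof essentially exactly. \textsc{ET-Lift} is a direct appeal to preservation for $\mf{Mut}$ computations (Theorem~\ref{thm:lamhook-prezcm}). For \textsc{ET-Event} you choose $\Phi_N = \Phi_M$, $\Phi_H' = \Phi_H$, $\Pi' = \Pi$, use unrestriction to get $\mf{Succ}(\Phi_M)$ and hence $\Phi_M = \Phi_M \Join \Phi_M$, and combine \textsc{TC-Click} and \textsc{TC-Update} via \textsc{TC-Let}.

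One small inaccuracy in your aside: that computation-level theorem is not proved by mutual induction with body preservation. Body preservation relies only on the separately proved pure-computation result (Theorem~\ref{thm:lamhook-prezcp}). This does not affect your argument.
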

\begin{proof}
  By case-analysis on the step,
  appealing to Lemma \ref{thm:lamhook-prezcm}
  (computations for $E=\mf{Mut}$).
\end{proof}

\subsubsection{Consistency}

Our type system allows us to statically rule out a large swath of ``bad''
behavior present in React. For instance, the infinite rerender cycle
mentioned in Section \ref{sec:overview-problems} can never arise:
due to effect typing,
setters can never be called at the top level of a component body.

To formally characterize the problem, we introduce the notion of
\emph{consistency}.
Intuitively, a component $D$ with instance pointer $p$
is consistent with a heap $\heap$ if the evaluation sequence from $D$
to a value $V$ in $\mf{Succ}$-phase does not modify the heap.

\begin{definition}
A configuration $\heap \semi D$ is consistent with $V$ at $p$ iff
$\heap \semi D \steprt^\mf{Succ}_p\ \mf{val}\ V \semi \heap$.
\end{definition}

Importantly, if $\heap \semi D$ is \emph{inconsistent} but reduces to a value,
then the heap has changed, at which point React will retry evaluation
in order to avoid rendering stale state.
Consistency guarantees that all rendered state is ``consistent'' with the heap,
i.e. we never render stale state.
We need not model React's retry behavior in our semantics because our effect
system statically enforces consistency.
We prove that, if a well-typed configuration reduces to a value,
then it is consistent with that value.

\begin{restatable}[Component consistency]{theorem}{consistency}
\label{thm:lamhook-consistency}
If
$\Gamma \semi \Pi \semi \Phi_H \semi \Phi_D \vdash \heap \semi D: \Ell$ with
$\Ell \cap \Pi[p] = \emptyset$, $p \notin \dom(\Phi_H \Join \Phi_D)$, and
$\heap \semi D \steprt^\phi_p\ \mf{val}\ V \semi \heap'$,
then $\heap' \semi D$ is consistent with $V$ at $p$.
\end{restatable}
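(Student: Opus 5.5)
The plan is to run the original evaluation and a fresh $\mf{Succ}$-phase evaluation from $\heap'$ in lockstep, showing they produce syntactically identical bodies at every step. The key facts are that body evaluation only changes the heap at $\mf{useState}$ in phase $\mf{Init}$, and each such write happens exactly once per label.

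\textbf{Step 1 (purity).} I first collect the fact that bodies never change the heap except through \textsc{EB-StateI}. A \textsc{EB-Let} step reduces an embedded computation $M$. By \textsc{TB-Let} this $M$ is typed at $\mf{Pure}$, so preservation for pure computations (the appendix version of preservation, which states that typed purity implies operational purity) gives $\heap \semi M \step N \semi \heap$.

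I will also note that such a step creates no fresh identifiers or pointers, since $\mf{init}$ and the other runtime constructs are typed $\mf{Mut}$. Consequently the step is deterministic and independent of both $\phi$ and the heap. \textsc{EB-Val} likewise leaves the heap untouched. By preservation for bodies, every intermediate body $D_i$ stays typed with some record $\Ell_i \subseteq \Ell$, and these records shrink exactly at $\mf{useState}$ steps.

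\textbf{Step 2 (labels are written once and never overwritten).} Let $D = D_0, D_1, \ldots, D_n = \mf{val}\ V$ be the given reduction, with heaps $\heap = \heap_0, \ldots, \heap_n = \heap'$. By \textsc{TB-State} with its side condition $\ell \notin \dom(\Ell)$, each label $\ell$ is consumed by at most one $\mf{useState}$ step. Call that step $i_\ell$, and write $V_\ell$ for the value it substitutes for $x$.

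\begin{itemize}
  \item If $\phi = \mf{Init}$, step $i_\ell$ sets $\heap_{i_\ell+1}[p].\mf{st}[\ell] = V_\ell$. No later step can touch $(p,\ell)$: later $\mf{useState}$ steps use other labels, and all other steps are heap-preserving by Step 1. Hence $\heap'[p].\mf{st}[\ell] = V_\ell$. The hypotheses $\Ell \cap \Pi[p] = \emptyset$ and $p \notin \dom(\Phi_H \Join \Phi_D)$ ensure no other part of the configuration can write at $p$ with a label of $D$. This is needed to rule out interference with labels already in $\Pi[p]$.
  \item If $\phi = \mf{Succ}$, no step changes the heap at all, so $\heap' = \heap$, and $V_\ell = \heap[p].\mf{st}[\ell] = \heap'[p].\mf{st}[\ell]$ by \textsc{EB-StateS}.
\end{itemize}

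In both cases the invariant is that for every label $\ell$ occurring in $D_i$, $\heap'[p].\mf{st}[\ell]$ equals the value the original run will bind at that label.

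\textbf{Step 3 (simulation).} I will prove by induction on $i$ that $\heap' \semi D \steprt^{\mf{Succ}}_p D_i \semi \heap'$, case-splitting on the rule used for $D_i \step^\phi_p D_{i+1}$.

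\begin{itemize}
  \item \textsc{EB-Let} and \textsc{EB-Val}: these steps are replayed verbatim, by the determinism and heap-independence from Step 1.
  \item $\mf{useState}$ at label $\ell$: the $\mf{Succ}$ run applies \textsc{EB-StateS} with $\heap'$. It substitutes $\heap'[p].\mf{st}[\ell] = V_\ell$ together with the same setter $\setter^\ell_p$, so it yields exactly $D_{i+1}$, and the heap remains $\heap'$.
\end{itemize}

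At $i = n$ this gives $\heap' \semi D \steprt^{\mf{Succ}}_p \mf{val}\ V \semi \heap'$, which is the required consistency.

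\textbf{Main obstacle.} I expect the main difficulty to lie in Step 1 and Step 2: establishing rigorously that pure computations are heap-preserving and name-free when run under an arbitrary heap, and threading the typing invariant through preservation so that the label-uniqueness and disjointness conditions hold at every intermediate configuration. I would attack this by strengthening the purity preservation lemma to also assert that the reduct is independent of the heap.
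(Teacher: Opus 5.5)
Your argument is correct, but it is organized differently from the paper's. The paper inducts on the length of the run by peeling off the \emph{first} step $\heap_0 \semi D_0 \step^{\mf{Init}}_p D_1 \semi \heap_1$, and combines three one-step lemmas:
\begin{itemize}
\item single-step consistency (Lemma~\ref{thm:lamhook-consistencyss}): that step can be replayed in phase $\mf{Succ}$ at $\heap_1$;
\item initialization monotonicity (Lemma~\ref{thm:lamhook-monoinit}): the remaining run satisfies $\heap_1 \sqsubseteq \heap_2$, i.e.\ it never overwrites an existing cell or flag;
\item heap lifting (Lemma~\ref{thm:lamhook-liftb}): a heap-preserving $\mf{Succ}$ step at $\heap_1$ is still a step at any $\heap_2 \sqsupseteq \heap_1$.
\end{itemize}

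You instead simulate forward against the final heap. You use a write-once/last-write invariant for the cells touched during the run, and you replay pure steps by heap-independence rather than by lifting along $\sqsubseteq$.

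Your route needs no heap order and no guarantee about cells already in $\Pi[p]$: the final heap holds $V_\ell$ at $(p,\ell)$ whether or not step $i_\ell$ overwrote something. So in your proof, $\Ell \cap \Pi[p] = \emptyset$ and $p \notin \dom(\Phi_H \Join \Phi_D)$ serve only to keep Theorem~\ref{thm:lamhook-prezbi} applicable at every intermediate configuration. That is what gives you purity of the embedded computations and distinctness of the remaining labels. The reason you give for these hypotheses is not the operative one: the absence of other writes at $p$ comes from purity.

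The paper's decomposition is more modular in exchange, because its lifting lemma is typing-free. Your heap-independence of pure steps must instead be proved by induction on the step, with inversion on the $\mf{Pure}$ typing. This is routine: only the $\beta$-rules can fire, and none inspects the heap.

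Finally, the paper's inductive case only treats $\phi = \mf{Init}$, so your $\mf{Succ}$ case is an addition. However, it cannot go through ``preservation for bodies'', since Theorem~\ref{thm:lamhook-prezbs} needs $\Ell \subseteq \Pi[p]$. Instead, observe that heap typing gives $\dom(\heap[p].\mf{st}) = \dom(\Pi[p])$. Under $\Ell \cap \Pi[p] = \emptyset$ this means \textsc{EB-StateS} can never fire. Only \textsc{EB-Let} and \textsc{EB-Val} steps then occur, and pure preservation plus substitution keep the body typed.
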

\begin{proof}
  By induction on the length of the step sequence,
  relying on type soundness and the fact that
  subcomputations in bodies are pure.
  Appendix \ref{appdx:consistency} contains a full proof.
\end{proof}

\begin{figure}[b]
    {\small
\header{Syntax}
\begin{minipage}{0.45\textwidth}
\begin{syntax}
\text{Types } A           &::=& \cdots \mid \mf{Ref}\ A \\
\text{Values } V, W       &::=& \cdots \mid \mf{component}\ V \\
                          & & \mid \mgrey{\mf{compInst}(V_1, V_2, V_3)} \mid \mgrey{\ell} \\
\end{syntax}
\end{minipage}
\begin{minipage}{0.45\textwidth}
\begin{syntax}
\text{Computations } e    &::=& \cdots \mid \mf{ref}\ V \mid \bang V \mid V := W \\
\text{Store } \sigma      &::=& \emptyset \mid \sigma, \ell \mapsto V \\
\text{Store type } \Sigma &::=& \emptyset \mid \Sigma, \ell \mapsto A
\end{syntax}
\end{minipage}

\headertwo
    {Typing Rules for Values and Computations}
    {\framebox{$\Gamma \semi \Sigma \vdash V: A$}~\framebox{$\Gamma \semi \Sigma \vdash M: A$}}
\begin{mathpar}
\inferrule[TV-Comp] {
  \Gamma \semi \Sigma \vdash V: \tone \rightarrow A_\mf{comp}
} {
  \Gamma \semi \Sigma \vdash \mf{component}\ V: \mf{Element}
}
\quad
\inferrule[TV-Loc] {
  \ell: A \in \Sigma
} {
  \Gamma \semi \Sigma \vdash \ell: \mf{Ref}\ A
}
\quad
\inferrule[TV-CompI] {
  \Gamma \semi \Sigma \vdash V_1: \tone \rightarrow \mf{Element}
  \\\\
  \Gamma \semi \Sigma \vdash V_2: \mf{Ref\ Bool}
  \and
  \Gamma \semi \Sigma \vdash V_3: \mf{ElemInst}
} {
  \Gamma \semi \Sigma \vdash \mf{compInst}(V_1, V_2, V_3): \mf{ElemInst}
}

\inferrule[TC-Ref] {
  \Gamma \semi \Sigma \vdash V: A
} {
  \Gamma \semi \Sigma \vdash \mf{ref}\ V: \mf{Ref}\ A
}

\inferrule[TC-Get] {
  \Gamma \semi \Sigma \vdash V: \mf{Ref}\ A
} {
  \Gamma \semi \Sigma \vdash \bang V: A
}

\inferrule[TC-Set] {
  \Gamma \semi \Sigma \vdash V: \mf{Ref}\ A
  \and
  \Gamma \semi \Sigma \vdash W: A
} {
  \Gamma \semi \Sigma \vdash V := W: \tone
}
\end{mathpar}

\headersig{Reduction Rules}{$\sigma \semi M \step N \semi \sigma'$}
\begin{reductions}
\sigma \semi \mf{ref}\;V
  &\step& \val{\ell} \semi \sigma, \ell \mapsto V, \quad (\ell\;\text{fresh})
   & \tsc{EC-Ref} \\
\sigma \semi \bang \ell
  &\step& \val{(\sigma[\ell])} \semi \sigma
    & \tsc{EC-Get} \\
\sigma \semi \ell := V
  &\step& \mf{val}\;() \semi \sigma[\ell \mapsto V]
    & \tsc{EC-Set} \\
\sigma \semi \mf{init}\;(\mf{component}\;V)
  &\step& \bl
          \Let\; rr, f, x = V\;() \;\In \\
          (\val{\mf{compInst}(rr, f, \mf{init}\;x)}) \semi \sigma \\
          \el
    & \tsc{EC-InitComp} \\
\sigma \semi \mf{update}\;(\mf{compInst}(V_1, V_2, V_3))
  &\step& \bl
          \Let\; x = \bl
                     \If\; \bang V_2 \\
                     \Then\; V_2 := \mf{false} \semi \mf{init}\;(V_1\;()) \\
                     \Else\; \mf{update}\;V_3 \;\In \\
                     \el \\
          (\val{\mf{compInst}(V_1, V_2, x)}) \semi \sigma \\
          \el
    &\tsc{EC-UpdateComp}\\
\sigma \semi \mf{click}_{id}\;(\mf{compInst}(V_1, V_2, V_3))
  &\step& \mf{click}_{id}\;V_3 \semi \sigma
   & \tsc{EC-ClickComp} \\
\end{reductions}
}
\caption{Additional syntax, typing rules, and reduction rules for $\lamref$.}
\label{fig:lamref}
\Description{TODO:DESC}
\end{figure}
\section[Lambda-React-Ref: A Canonical Stateful Calculus]
  {$\lamref$: A Canonical Stateful Calculus}
\label{sec:lamref}

Our second reactive calculus $\lamref$ extends $\lampure$ with components based
on \emph{ML-style references}.

\subsection{Syntax and Typing Rules}
Figure~\ref{fig:lamref} (top) shows the syntax and typing rules of $\lamref$.
$\mf{Ref}\ A$ is the type of references containing values of type $A$.
All previous typing judgments are straightforwardly extended with
the additional context $\Sigma$, so we concentrate on new constructs.

Stateful components are encoded in $\lamref$ via the $\mf{component}\ V$
constructor (with typing rule \textsc{TV-Comp}), where $V$ is a function of type
$\tone \rightarrow A_\mf{comp}$,
and $A_\mf{comp}$ abbreviates $(\tone \rightarrow \mf{Element})
\times \mf{Ref\ Bool} \times \mf{Element}$.
This function will close over any references describing state required by the
component and will return a triple. The three constituent types are as follows:
\begin{itemize}
    \item $(\tone \rightarrow \mf{Element})$ is a \emph{rendering function}
        that produces an $\mf{Element}$ based on the current
        state.
    \item $\mf{Ref\ Bool}$ is a \emph{rerender flag} that contains
        $\mf{true}$ if the component needs rerendered.
    \item $\mf{Element}$ is the initial element returned by the component.
\end{itemize}

Like in $\lamhook$, instances $\mf{compInst}$ (with typing rule
\textsc{TV-CompI}) are the realizations of components, and their three arguments
correspond to the three constituents of type $A_\mf{comp}$, \emph{except} that
the $\mf{Element}$ has been instantiated into an $\mf{ElemInst}$.
From left to right, the arguments represent the rerendering function,
the rerender flag, and the current sub-instance.

All other syntactic extensions concern references: $\ell$ ranges over
\emph{locations} that are generated fresh at runtime, and our three new
computations, $\mf{ref}\ V$, $\bang V$, and $V := W$, create, retrieve, and
update locations, respectively. Their typing rules are standard.
Rule \textsc{TV-Loc} types a location and is the only place we use $\Sigma$ in
the term typing judgment.
A store $\sigma$ is a map from locations to values,
and a store type $\Sigma$ is a map from locations to types.

\paragraph{Canonicity}
As with $\lampure$ (Section \ref{sec:lampure-syntax}),
elements and their instances---specifically, the $\mf{component}$
and $\mf{compInst}$ constructors---can be expressed as inductive datatypes.
$\mf{init}$ is still a fold on elements, and $\mf{update}$ is still a fold on
instances. Thus, although these new constructs and their rules may appear quite
ad-hoc, we maintain that they are straightforwardly expressible in terms of
more canonical constructs, and that the only \emph{truly} new feature
in $\lamref$ is references.

\subsection{Operational Semantics}
\label{sec:lamref-opsem}

Figure~\ref{fig:lamref} (bottom) shows the reduction rules for $\lamref$.
We extend both previous step judgments $\step$ and $\step_\top$
with a store $\sigma$, and amend all previous rules accordingly.
The first three rules (\textsc{EC-Ref}, \textsc{EC-Get}, and \textsc{EC-Set})
implement the standard semantics for general references.
The remaining rules add support for components and instances to the runtime:
due to the size of the reduction rules, we present them using syntactic sugar.
Rule \textsc{EC-InitComp} initializes a component $\mf{component}\ V$.
It evaluates the component's function to return $rr$ (the rendering function),
$f$ (the render flag), and $x$ (the $\mf{Element}$ returned by the component)
before instantiating $x$ and creating a $\mf{compInst}$.
Rule \textsc{EC-UpdateComp} handles updating a component instance.
Given a component instance $\mf{compInst}(V_1, V_2, V_3)$, the rule firstly
dereferences the render flag $V_2$. If this is set to $\mf{true}$ then the
component must be rerendered, so the rule will reset the flag and
reinitialize the element resulting from reinvoking the rendering function.
If the component does not need rerendering, then the rule will propagate the
update to the current realized element.
Finally, rule \textsc{EC-ClickComp} propagates a click past a component instance
and to its child instance.

\subsection{Metatheory}

$\lamref$ enjoys syntactic type soundness with respect to
both of its step judgments. As usual, we include only the statements
for $\step_\top$ here.
The store typing judgment $\Gamma \semi \Sigma \vDash \sigma$ holds iff
$\dom(\Sigma) = \dom(\sigma)$ and, for all $\ell \in \dom(\sigma)$,
we have $\Gamma \semi \Sigma \vDash \sigma[\ell]: \Sigma[\ell]$.

\begin{theorem}[Type Soundness for $\step_\top$]
\label{thm:lamref-sound-top}
\hfill
\begin{itemize}
    \item \textbf{Preservation:}
    If
      $\Gamma \semi \Sigma \vdash M: \mf{ElemInst}$ with
      $\Gamma \semi \Sigma \vDash \sigma$, and
      $\sigma \semi M \step N \semi \sigma'$,
    then for some $\Sigma' \supseteq \Sigma$, we have that
      $\Gamma \semi \Sigma' \vdash N: \mf{ElemInst}$ and
      $\Gamma \semi \Sigma' \vDash \sigma'$.
    \item \textbf{Progress:} If $\emptyset \semi \Sigma \vdash M : A$ and
    $\emptyset \semi \Sigma \vDash \sigma$,
    then $\sigma \semi M \step N \semi \sigma'$ for some $N$ and $\sigma'$.
\end{itemize}
\end{theorem}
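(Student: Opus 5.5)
Both halves follow the standard Wright--Felleisen template for ML references, layered on the $\lampure$ argument (Theorems~\ref{thm:lampure-sound} and~\ref{thm:lampure-sound-top}). We first prove the corresponding results for the computation step $\step$ and then lift them to $\step_\top$. Note that in the statement as given, preservation is phrased with $\sigma \semi M \step N \semi \sigma'$; we prove it for $\step$ and, separately, for $\step_\top$. For \textsc{ET-Lift} the $\step_\top$ case is immediate from the $\step$ case. For \textsc{ET-Event} we need $\mf{val}\ V$ of type $\mf{ElemInst}$ to step to $\mf{click}_{id}\ V \semi \mf{update}\ V$, which is typed by \textsc{TC-Click}, \textsc{TC-Update} and \textsc{TC-Let}, with $\Sigma' = \Sigma$ and $\sigma' = \sigma$.

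\textbf{Auxiliary lemmas.} We would first establish three standard lemmas.
\begin{itemize}
\item \emph{Weakening of the store type:} if $\Gamma \semi \Sigma \vdash M : A$ and $\Sigma \subseteq \Sigma'$, then $\Gamma \semi \Sigma' \vdash M : A$. The same holds for values and for store typing.
\item \emph{Substitution:} if $\Gamma, x : A \semi \Sigma \vdash M : B$ and $\Gamma \semi \Sigma \vdash V : A$, then $\Gamma \semi \Sigma \vdash M[V/x] : B$.
\item \emph{Canonical forms:} a closed value of type $\mf{Ref}\ A$ is a location $\ell$ with $\ell : A \in \Sigma$. Closed values of type $\mf{Element}$ are $\mf{text}$, $\mf{tag}$ or $\mf{component}$. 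Closed values of type $\mf{ElemInst}$ are $\mf{textInst}$, $\mf{tagInst}$ or $\mf{compInst}$. The cases for functions, booleans, pairs, lists and attributes are as in $\lampure$.
\end{itemize}
All three are by routine induction on the typing derivation.

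\textbf{Preservation for $\step$.} We argue by induction on the step, generalising the type from $\mf{ElemInst}$ to an arbitrary $A$ so that the induction goes through.
\begin{itemize}
\item The $\beta$-rules and the $\lampure$ runtime rules take $\Sigma' = \Sigma$, using substitution.
\item \textsc{EC-Let} uses the inductive hypothesis and then weakening for the continuation.
\item \textsc{EC-Ref} takes $\Sigma' = \Sigma, \ell : A$. Freshness of $\ell$ keeps $\dom(\Sigma') = \dom(\sigma')$, and weakening retypes the old store.
\item \textsc{EC-Get} reads off the store typing at $\ell$.
\item \textsc{EC-Set} keeps $\Sigma$, and the new value has type $\Sigma[\ell]$ by \textsc{TC-Set} and canonical forms.
\item \textsc{EC-InitComp}: from $V : \tone \to A_\mf{comp}$, the triple destructuring (desugared into $\mf{fst}$/$\mf{snd}$ lets) gives the first component type $\tone \to \mf{Element}$ and the second $\mf{Ref\ Bool}$, while $\mf{init}\ x : \mf{ElemInst}$. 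This typing of the result via \textsc{TV-CompI} is where a mismatch could hide. As printed, the rule builds $\mf{compInst}(rr, f, \ldots)$ with $rr$ bound to the first projection, yet the prose calls $rr$ the rendering function and $f$ the flag; we would check the binding order against \textsc{TV-CompI}. If it is off, we would read the rule as intended, with the rendering function first, matching the sugar in the overview.
\item \textsc{EC-UpdateComp}: the conditional has both branches of type $\mf{ElemInst}$, since $V_2 := \mf{false}$ is a unit computation followed by $\mf{init}\ (V_1\ ())$, and $\mf{update}\ V_3$ is typed by \textsc{TC-Update}. Rewrapping by \textsc{TV-CompI} is immediate.
\item \textsc{EC-ClickComp} is trivial.
\end{itemize}
Everything here is direct.

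\textbf{Progress.} For $\step$, we argue by induction on $\emptyset \semi \Sigma \vdash M : A$ that $M$ is either $\mf{val}\ V$ or steps.
\begin{itemize}
\item $\mf{ref}$ always steps.
\item $\bang V$ and $V := W$ step because canonical forms make $V$ a location in $\dom(\Sigma) = \dom(\sigma)$.
\item $\mf{init}$, $\mf{update}$ and $\mf{click}_{id}$ on components and instances step by case on the canonical forms of $\mf{Element}$ and $\mf{ElemInst}$. For $\mf{click}_{id}$ on $\mf{tagInst}$, either \textsc{EC-ClickTagT} or \textsc{EC-ClickTagF} applies.
\end{itemize}
Progress for $\step_\top$ (the stated result) follows: if $M$ steps by $\step$, apply \textsc{ET-Lift}. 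Otherwise $M = \mf{val}\ V$, and \textsc{ET-Event} applies unconditionally for any chosen $id$. As in $\lampure$, $\step_\top$ thus never gets stuck.

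The main obstacle is only the bookkeeping in \textsc{EC-InitComp} and \textsc{EC-UpdateComp}: confirming that the sugared right-hand sides elaborate to well-typed FGCBV terms, and the binding-order check noted above.
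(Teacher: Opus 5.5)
Your proposal is correct and takes essentially the same approach as the paper's very terse proof: you split $\step_\top$ into \textsc{ET-Lift} and \textsc{ET-Event} to reduce it to $\step$, and you reprove soundness of $\step$ with store typing instead of merely citing the $\lampure$ theorem. Your binding-order worry in \textsc{EC-InitComp} is unfounded: $rr$ is bound to the first component, the rendering function of type $\tone \rightarrow \mf{Element}$, exactly as \textsc{TV-CompI} requires; the name only clashes with the overview's use of \texttt{rr} for the rerender flag.
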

\begin{proof}
  By case-analysis on $M \step N$ and $\emptyset \vdash M: A$ respectively,
  with appeal to Theorem~\ref{thm:lampure-sound}.
\end{proof}

\section{A Translation from Hooks to References}
\label{sec:trans}

In this section, we define a translation
from $\lamhook$ (the \emph{source} language)
to $\lamref$ (the \emph{target} language),
thereby demonstrating the expressibility of hook-indexed state in terms
of first-class references.
We first define our metatypes (\ref{sec:metatypes})
and the prerequisite machinery (\ref{sec:machinery})
before presenting the translation (\ref{sec:transmain}),
proving its formal correctness (\ref{sec:transcorrect}),
and then discussing its other properties (\ref{sec:transdisc}).

\subsection{Metatypes}
\label{sec:metatypes}

As a presentation aid, we often provide the \emph{metatypes}
of the functions presented in this section.

We construct metatypes via function arrows $\rightarrow$
and products $\times$, as usual.
We write $f(x)=\bot$ if a function result $f(x)$ is undefined
and we subscript a type with $\bot$ to indicate that it may be undefined.
We use $A \rightharpoonup B$ to abbreviate $A \rightarrow B_\bot$,
which describes a \emph{partial} function from $A$ to $B$.

Syntactic categories of the \emph{source} language
are all prefixed with $\mf{Hook}$.
They include $\mf{HookType}$ for types, $\mf{HookTypeCtx}$ for type contexts,
$\mf{HookValue}$ for values, $\mf{HookComp}$ for computations,
$\mf{HookBody}$ for bodies, $\mf{HookHeap}$ for heaps, $\mf{HookHeapCtx}$
for heap contexts, $\mf{HookPtr}$ for pointers, $\mf{HookLabel}$
for labels, and $\mf{HookID}$ for identifiers.
Syntactic categories of the \emph{target} language
are similarly prefixed with $\mf{Ref}$, and are analogous.
$\mf{RefLoc}$ refers to locations, $\mf{RefStore}$ refers to stores,
and $\mf{RefStoreCtx}$ refers to store contexts.
Notation $x : A$ means that $x$ has metatype $A$, and we use set inclusion
$X \subseteq A$ to say that all $x \in X$ have $x : A$.

\subsection{Machinery}
\label{sec:machinery}

\subsubsection{Runtime Environment}

Both the source and target languages dynamically generate
information at runtime, namely identifiers, pointers, and locations.
To show a semantic correspondence, we must not only be able to
translate this dynamic information: we must be able to assert the equality
of a fresh source-language construct with its fresh target-language counterpart.
We achieve this by parameterizing our translations
by a runtime environment $\R$ that witnesses the relation between dynamic
constructs.
The environment $\R$ consists of three partial functions: $\R_{id}$ to
relate identifiers, $\R_p$ to relate rerender flags, and $\R_\ell$ to
relate state cells.  We overload $\R$ to refer to each of these
functions unambiguously.  It suffices to represent $\R$ as an
inductively defined structure, where each function is defined with
respect to the contents of the structure.  {\small
\[\R ::= \emptyset \mid \R, {id} \mapsto {id}' \mid
\R, p \mapsto \ell \mid \R, (p, \ell) \mapsto \ell'\]

\begin{center}
$\begin{array}{c@{\qquad}c@{\qquad}c}
  \R_{id} : \mf{HookID} \rightharpoonup \mf{RefID} &
  \R_p : \mf{HookPtr} \rightharpoonup \mf{RefLoc} &
  \R_\ell : \mf{HookPtr} \times \mf{HookLabel} \rightharpoonup \mf{RefLoc}
\end{array}$

$\begin{array}{c@{\quad}c@{\quad}c}
  \R_{id}({id}) \triangleq
  \begin{cases}
    {id}' & {id} \mapsto {id}' \in \R\\
    \bot & \text{otherwise}
  \end{cases}
  &
  \R_p(p) \triangleq
  \begin{cases}
    \ell & p \mapsto \ell \in \R\\
    \bot & \text{otherwise}
  \end{cases}
  &
  \R_\ell(p, \ell) \triangleq
  \begin{cases}
    \ell' & (p, \ell) \mapsto \ell' \in \R\\
    \bot & \text{otherwise}
  \end{cases}
\end{array}$
\end{center}
\smallskip
}
\noindent
Environment $\R$ has metatype $\mf{RunEnv}$, an alias for the product
of the metatypes of its three parts.

\paragraph{Sufficiency}
If an invocation of $\R$ within a translation results in $\bot$,
we say the entire translation result is undefined,
i.e. our translation functions are likewise partial.
We therefore require a means of specifying when a runtime environment
is \emph{sufficient} to define a particular translation result.

\begin{definition}
For a set of identifiers $\mathbb{I} \subseteq \mf{HookID}$,
define $\Pi \semi \mathbb{I} \vDash \R$ to hold iff $\R$ is injective and:
\begin{center}
$\begin{array}{c@{\qquad}c@{\qquad}c}
  {id} \in \mathbb{I} \Rightarrow \R({id}) \neq \bot&
  p \in \dom(\Pi) \Rightarrow \R(p) \neq \bot &
  \ell \in \Pi[p] \Leftrightarrow \R(p, \ell) \neq \bot
\end{array}$
\end{center}
\end{definition}

Only the rightmost condition above requires bi-implication,
since the freshness of identifiers and pointers is always given directly
by the operational semantics, whereas that of pointer/label pairs
must be deduced from the typing judgments.

We also define the function $\mf{ids}$ on source-language terms and heaps
to extract the set of identifiers in the term or heap.
We elide the definition, which is standard,
and trivial except at $\mf{tagInst}_{id}(V, W)$.

\subsubsection{Location Variable Environment}
Our body translation frequently \emph{generates} fresh variables $x_\ell$ that
are let-bound to references and occur again later in the translation process.
The translation of a \emph{static} body
can essentially be achieved by a label-variable mapping.
However, this simple strategy is unable to reason about
\emph{intermediate evaluation stages} wherein $x_\ell$ has already been bound
and so should \emph{no longer} occur free in the rest of the translation.

To reason about these variables, we define another environment
$\delta$, which maps labels (and the distinguished rerender flag
$\mf{flag}$) to variables, without mention of a particular pointer
$p$. We use $\delta$ to specify two partial functions
\emph{parameterized by} $\R$, such that if there is no variable in
$\delta$ corresponding to the input, then we invoke $\R$
appropriately.
Intuitively, if a variable is in the range of $\delta$, it was bound
previously in the translation; if not, then we have some raw location
(given by $\R$) instead.
Environment $\delta$ has metatype $\mf{VarEnv}$, an alias for the
product of the metatypes of its two parts.  {\small
\[\delta ::= \emptyset \mid \delta, \mf{flag} \mapsto x
  \mid \delta, \ell \mapsto x_\ell\]

\begin{center}
$\begin{array}{c@{\quad}c}
  \delta(p, \R) : \mf{HookPtr}_\bot \times \mf{RunEnv}
    \rightharpoonup \mf{RefValue} &
  \delta(p, \ell, \R) : \mf{HookPtr}_\bot \times \mf{HookLab}
    \times \mf{RunEnv} \rightharpoonup \mf{RefValue}
\end{array}$

$\begin{array}{c@{\qquad}c}
  \delta(p, \R) \triangleq
  \begin{cases}
    x & \text{if } \mf{flag} \mapsto x \in \delta\\
    \R(p) & \text{otherwise}
  \end{cases}
  &
  \delta(p, \ell, \R) \triangleq
  \begin{cases}
    x & \text{if } \ell \mapsto x \in \delta\\
    \R(p, \ell) & \text{otherwise}
  \end{cases}
\end{array}$
\end{center}
}

\paragraph{Sufficiency}
As with $\R$, we require a notion of a \emph{sufficient} $\delta$
that leads to a well-defined translation.
We also concern ourselves here with (target-language) well-typedness,
and so we define $\delta$-sufficiency with respect to a label record $\Ell$.
We can achieve well-definedness through either the contents of $\delta$ or $\R$:
in the former case, we have a variable
(which should be typed by target-language context $\Gamma$),
and in the latter, we have a location
(which will be typed by the translation of $\Pi$).

\begin{definition}
Define $\Gamma \semi \Pi \vDash_p \delta: \Ell$ to hold iff:
\noindent
\begin{itemize}
\item for all $\ell: A \in \Ell$, \emph{either}
  $\ell \mapsto x_\ell \in \delta$
  and $x_\ell: \mf{Ref}\ \trans{A} \in \Gamma$, \emph{or}
  $\ell \notin \dom(\delta)$ and $\ell: A \in \Pi[p]$,
\item and \emph{either}
  $\mf{flag} \mapsto x_f \in \delta$
  and $x_f: \mf{Ref\ Bool} \in \Gamma$, \emph{or}
  $\mf{flag} \notin \dom(\delta)$ and $p \in \dom(\Pi)$.
\end{itemize}

If $p=\bot$, then the judgment may still hold
if the contents of $\delta$ are sufficient.
\end{definition}

\subsection{Translation}
\label{sec:transmain}
We can now present our translation in full. We
overload the translation brackets $\trans{\cdot}$ on each syntactic category of
$\lamhook$. We omit the standard translation cases.

\paragraph{Types and Contexts}
The translation on types is entirely standard, and is not parameterized by $\R$.
The only non-trivial case strips effect annotations from function arrows,
since $\lamref$ has no effect system.
The type context translation on $\Gamma$ is defined pointwise on type contexts,
and the phase context $\Phi$ has no translation because it has no
analog in $\lamref$.

\paragraph{Values, Computations, Bodies}
The translation on values is given by the homomorphic extension of the
following equations.
{\small
\begin{equations}
  \trans{\cdot} &:& \mf{HookValue} \to \mf{RunEnv}
    \rightharpoonup \mf{RefValue} \\
  \trans{\mf{tagInst}_{id}(V, W)}\R
    &\triangleq& \mf{tagInst}_{\R({id})}(\trans{V}\R, \trans{W}\R) \\
  \trans{\setter^\ell_p}\R
    &\triangleq& \lambda x.(\R(p, \ell) := x \semi \R(p) := \mf{true}) \\
  \trans{\mf{component}\ \{D\}}\R
    &\triangleq& \mf{component}\left(
      \lambda ().\elet{x}{\mf{ref}(\mf{false})}{
        \itransp{D}{D}{\bot}{\R}{\mf{flag} \mapsto x}
      }
      \right) \\
  \trans{\compInst{p}{D}{V}}\R
    &\triangleq& \valp{\mf{compInst}(
                    \lambda ().\stransp{D}{p}{\R}{\emptyset},
                    \R(p), \trans{V}\R}
                 ) \\
\end{equations}
}%
For tags, we lookup ${id}$ in $\R$ to obtain the corresponding
target-language identifier.
Primitive setters become functions that set the appropriate location
to the given value, determined by $\R(p, \ell)$, and also indicate that
a rerender should be performed by setting location $\R(p)$.
Components invoke the \emph{initial}-phase body translation
$\itransp{\cdot}{D}{p}{\R}{\delta}$.
Notably, $D$ is passed as parameter \emph{twice} to this translation,
which results in the code duplication discussed in
Section~\ref{sec:introduction}:
the bracketed $D$ is translated in \emph{initial} phase,
wherein at base case, the parenthesized $D$
is then translated in the \emph{successor} phase.
Beforehand, we also create a reference to hold a rerender flag, bound to
variable $x$, and pass this to the body translation via $\delta$.
Instances, our final notable values, contain the rerender flag $\R(p)$
and invoke the \emph{successor}-phase
body translation $\stransp{\cdot}{p}{\R}{\delta}$
to obtain a rerender function.

For computations, the two most interesting cases concern body evaluation.
{\small
\begin{equations}
\trans{\cdot} &:& \mf{HookComp} \to \mf{RunEnv}
  \rightharpoonup \mf{RefComp} \\
\trans{\eval{D_0}{\mf{Init}}{p}{D}}\R
  &\triangleq&
    \elet{rr, f, x}{\itransp{D}{D_0}{p}{\R}{\emptyset}}{
      \valp{\mf{compInst }(rr, f, \mf{init}\ x)}
    }\\
\trans{\eval{D_0}{\mf{Succ}}{p}{D}}\R
  &\triangleq&
    \valp{\mf{compInst}(
      \lambda ().\stransp{D_0}{p}{\R}{\emptyset},
      \R(p),
      \mf{init}\ (\stransp{D}{p}{\R}{\emptyset})
    )}
\end{equations}
}%
Initial-phase body evaluation ${\eval{D_0}{\mf{Init}}{p}{D}}$
invokes the initial-phase translation on $D$ but \emph{parameterized}
by $D_0$, and plumbs the results into an instance.
Successor-phase body evaluation $\eval{D_0}{\mf{Succ}}{p}{D}$
invokes the successor-phase translation \emph{twice}: once on $D_0$,
representing the original rerender function, and once on $D$,
representing the current evaluation state.

Finally, bodies are translated with two functions, corresponding to the two phases.
{\small
\[
\strans{\cdot} : \mf{HookBody}
  \to (\mf{HookPtr}_\bot \times \mf{RunEnv} \times \mf{VarEnv})
  \rightharpoonup \mf{RefComp}
\]
\begin{equations}
\stransp{\mf{val}\;V}{p}{\R}{\delta} &\triangleq& \mf{val}\;\trans{V}\R\\
\stransp{\elet{x}{M}{D}}{p}{\R}{\delta}
  &\triangleq& \elet{x}{\trans{M}\R}{\stransp{D}{p}{\R}{\delta}}\\
\stransp{\estate{x}{\ell}{V}{D}}{p}{\R}{\delta}
  &\triangleq& \bl
               \Let\; x = \bang \delta(p, \ell, \R) \;\In \\
               \Let\; x_\mf{set} = \valp{\lambda y.(\delta(p, \ell, \R) := y \semi \delta(p, \R) := \mf{true})} \;\In \\
               \stransp{D}{p}{\R}{\delta} \\
               \el \\
\end{equations}

\[
\itrans{\cdot} : \mf{HookBody}
  \to (\mf{HookBody} \times \mf{HookPtr}_\bot
    \times \mf{RunEnv} \times \mf{VarEnv})
  \rightharpoonup \mf{RefComp}
\]
\begin{equations}
\itransp{\mf{val}\;V}{D_0}{p}{\R}{\delta}
  &\triangleq& \mf{val}\;(\lambda ().\stransp{D_0}{p}{\R}{\delta}, \delta(p, \R), \trans{V}\R) \\
\itransp{\elet{x}{M}{D}}{D_0}{p}{\R}{\delta}
  &\triangleq& \elet{x}{\trans{M}\R}{\itransp{D}{D_0}{p}{\R}{\delta}} \\
\itransp{\estate{x}{\ell}{V}{D}}{D_0}{p}{\R}{\delta}
  &\triangleq& \bl
               \Let\; x_\ell = \mf{ref}(\trans{V}\R) \;\In \\
               \Let\;x = \bang x_\ell \;\In \\
               \Let\;x_\mf{set} = \valp{\lambda y.x_\ell := y \semi \delta(p, \R) := \mf{true}} \;\In \\
               \itransp{D}{D_0}{p}{\R}{ (\delta,\ell \mapsto x_\ell) } \\
               \el \\
\end{equations}
}
Let-binding in both phases is standard, as is the value case
in the successor phase. However, in the initial phase, alongside the given value
we return two extra values. To obtain a rerender function, we invoke the
successor-phase translation on the additional parameter $D_0$ and thunk the
result.  We also directly return the rerender flag, given by $\delta(p, \R)$.

For $\mf{useState}$ in the successor phase, we \emph{retrieve} from the
location $\delta(p, \ell, \R)$ and bind it to $x$. By contrast,
in the initial phase, $x$ is (indirectly) bound to the initial value
$\trans{V}\R$ by way of the reference bound to fresh variable $x_\ell$.
As previously explained,
this fresh variable is added to $\delta$ in the continuation
of the body translation for future reference.
In either phase, we also define an appropriate setter function
via the $\delta$-function, and bind it to $x_\mf{set}$.

\subsection{Correctness}
\label{sec:transcorrect}


To demonstrate the correctness of our translation,
we prove that our translation \emph{preserves types}
and that we maintain a \emph{tight operational correspondence}
\cite{wadler:22:plfa,gorla:10:encode,griesemer:20:go,
  tang:26:metx,minamide:96:closure,plotkin:75:lc}.
This operational correspondence is a straightforward yet strong result:
we do not have complications such as administrative redexes to account for,
and so more semantic (and often more involved) notions of translation
correctness
\cite{patrignani:21:rec,ahmed:11:cps, sabry:92:cps,gorla:10:encode}
are unnecessary for our purposes.

\paragraph{Heaps and Heap Contexts.}
To reason about the metatheory of the translation we first need to relate
$\lamhook$ heaps / heap contexts and $\lamref$ store / store contexts.
The translation from $\lamhook$ heaps to $\lamref$ stores
is given below.  The expression $\trans{\heap[p].\mf{rerender}}$ invokes an
obvious translation from source-language meta-booleans to target-language
boolean values.
{\small
\begin{equations}
\trans{\cdot} &:& \mf{HookHeap} \to \mf{RunEnv} \rightharpoonup \mf{RefStore}\\
\trans{\heap}\R &\triangleq& \bigcup_{p \in \heap} \left(
  \{\R(p) \mapsto \trans{\heap[p].\mf{rerender}}\}
  \cup \bigcup_{\ell \in \heap[p].\mf{st}} \{
    \R(p, \ell) \mapsto \trans{\heap[p].\mf{st}[\ell]}\R
  \}
\right)
\end{equations}
}
The translation on heap contexts is parameterized by $\R$,
and is given as follows, wherein locations
are assigned either type $\mf{Bool}$ (for rerender flags)
or type $\trans{\Pi[p][\ell]}$ (for state cells):
{\small
\begin{equations}
\trans{\cdot} &:& \mf{HookHeapCtx} \to \mf{RunEnv}
  \rightharpoonup \mf{RefStoreCtx} \\
\trans{\Pi}\R &\triangleq& \bigcup_{p \in \Pi}\left(
  \{\R(p) \mapsto \mf{Bool}\}
  \cup \bigcup_{\ell \in \Pi[p]}
    \{\R(p, \ell) \mapsto \trans{\Pi[p][\ell]}\}
\right)
\end{equations}
}
\subsubsection{Type Preservation}
Well-typed constructs in $\lamhook$ translate to well-typed constructs (of the
same type) in $\lamref$.  We prove four mutually recursive theorems on
terms, as well as a fifth theorem on heaps.
The statements on values, computations, and heaps are fairly standard:
we require as an extra premise in each that our given $\R$ is sufficient.
For bodies, we prove that their translations have specific types,
independent of the source-language type $\Ell$. Each body theorem uses $\Ell$
to assert the sufficiency of $\delta$ as an additional premise.

Note that all type preservation theorems ignore
phase context $\Phi$ and computational effects $E$,
which both exist for specific source-language guarantees
and have no target-language analog.

We show the most interesting theorem statement here: the case for bodies
when $\phi=\mf{Init}$. All others and all proofs are located in Appendix
\ref{appdx:transprez}.

\begin{restatable}[Type preservation, bodies, initial]{theorem}{transprezbi}
\label{thm:trans-prezbi}
If

\begin{minipage}{0.45\textwidth}
\begin{enumerate}[label=(\roman*)]
\item $\Gamma \semi \Pi \semi \Phi_1 \vdash D: \Ell$,
\item $\Gamma \semi \Pi \semi \Phi_2 \vdash D_0: \Ell_0$
  (where $\Ell \subseteq \Ell_0$),
\end{enumerate}
\end{minipage}
\begin{minipage}{0.45\textwidth}
\begin{enumerate}[label=(\roman*)]
\setcounter{enumi}{2}
\item $\Pi \semi \mf{ids}(D) \cup \mf{ids}(D_0) \vDash \R$, and
\item $\Gamma' \semi \Pi \vDash_p \delta: (\Ell_0 \setminus \Ell)$,
\end{enumerate}
\end{minipage}

then $\trans{\Gamma}, \Gamma' \semi \trans{\Pi}\R \vdash
  \itransp{D}{D_0}{p}{\R}{\delta} : A_\mf{comp}$.
\end{restatable}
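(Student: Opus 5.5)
We prove the statement by induction on the derivation of premise (i), $\Gamma \semi \Pi \semi \Phi_1 \vdash D: \Ell$. This is done simultaneously with the companion theorems for values, computations and successor-phase bodies, since $\itrans{\cdot}$ calls $\trans{\cdot}$ on values and computations and, at the base case, calls $\strans{\cdot}$ on $D_0$. Throughout, $\R$-sufficiency (iii) is inherited by subterms, because $\mf{ids}$ of a subterm is contained in $\mf{ids}$ of the whole. We also rely on two routine weakening lemmas for the target language: one for extending the typing context with fresh variables, and one for $\Sigma$.

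\textbf{Case \textsc{TB-Val}}, $D = \mf{val}\ V$ with $\Ell = \emptyset$. We must type the triple $(\lambda ().\stransp{D_0}{p}{\R}{\delta}, \delta(p,\R), \trans{V}\R)$ at $A_\mf{comp}$.
\begin{itemize}
\item The third component has type $\mf{Element}$ by the value theorem.
\item For the second component, the flag clause of (iv) gives either $x_f : \mf{Ref\ Bool} \in \Gamma'$, or $p \in \dom(\Pi)$. In the latter case $\R(p) \mapsto \mf{Bool}$ lies in $\trans{\Pi}\R$ by sufficiency, so \textsc{TV-Loc} applies.
\item For the first component we use the successor-phase body theorem on $D_0$. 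Its premise is $\delta$-sufficiency for $\Ell_0$, and (iv) gives exactly this, because $\Ell_0 \setminus \emptyset = \Ell_0$. That theorem yields type $\mf{Element}$ after thunking.
\end{itemize}

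\textbf{Case \textsc{TB-Let}.} Apply the computation theorem to $M$, discarding the effect. Then apply the induction hypothesis to $D$ under $\Gamma, x:A$, weakening $\Gamma'$ and $\delta$-sufficiency; this is harmless because $x$ is fresh by the variable convention. Finish with \textsc{TC-Let}.

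\textbf{Case \textsc{TB-State}}, where $\Ell = \Ell_1, \ell{:}A$ and $\ell \notin \dom(\Ell_1)$. Three steps build the context:
\begin{itemize}
\item $\mf{ref}(\trans{V}\R)$ has type $\mf{Ref}\ \trans{A}$, so we extend $\Gamma'$ with $x_\ell : \mf{Ref}\ \trans{A}$.
\item $\bang x_\ell$ has type $\trans{A}$ by \textsc{TC-Get}.
\item The setter $\lambda y. x_\ell := y \semi \delta(p,\R) := \mf{true}$ has type $\trans{A} \rightarrow \tone = \trans{A \xrightarrow{\mf{Mut}} \tone}$. Here $\delta(p,\R)$ is typed at $\mf{Ref\ Bool}$ as in the base case.
\end{itemize}
The translated context $\trans{\Gamma, x{:}A, x_\mf{set}{:}A\xrightarrow{\mf{Mut}}\tone}$ now matches, up to reordering, the context in which the continuation is typed. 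We then apply the induction hypothesis with $\delta, \ell \mapsto x_\ell$ and target record $\Ell_1 \subseteq \Ell_0$. Its sufficiency premise requires $\Gamma', x_\ell{:}\mf{Ref}\ \trans{A} \semi \Pi \vDash_p (\delta, \ell\mapsto x_\ell) : \Ell_0 \setminus \Ell_1$.

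\textbf{Main obstacle.} The delicate step is this sufficiency premise. We have $\Ell_0 \setminus \Ell_1 = (\Ell_0 \setminus \Ell), \ell{:}A$, and the type of $\ell$ in $\Ell_0$ must agree with $A$. This follows from $\Ell \subseteq \Ell_0$ as label records, which carries type equality. The old entries keep their witnesses, since adding the binding $\ell \mapsto x_\ell$ does not disturb them. There is one subtlety: an old entry might rely on the clause $\ell \notin \dom(\delta)$ for this same $\ell$. That cannot happen, because $\ell \in \dom(\Ell)$ and $\Ell_0 \setminus \Ell$ excludes $\ell$. The flag clause is unchanged.

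Finally, the base case types $D_0$ with the \emph{extended} $\delta$. This works because, by the time we reach $\mf{val}$, $\delta$ covers all of $\Ell_0 \setminus \emptyset$, which justifies threading the generalized invariant with $\Ell_0 \setminus \Ell$ in (iv).
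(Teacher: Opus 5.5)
Your proof is correct and follows the paper's own proof essentially step for step. The induction is on (i), mutual with the value, computation and successor-body theorems. In the \textsc{TB-Val} case you apply the successor theorem to $D_0$ using $\Ell_0 \setminus \emptyset = \Ell_0$. In the \textsc{TB-State} case you extend $\delta$ with $\ell \mapsto x_\ell$ and $\Gamma'$ with $x_\ell : \mf{Ref}\ \trans{A}$, then check sufficiency for $\Ell_0 \setminus \Ell_1$.

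One correction concerns the weakening those inductive calls need. It is source-language weakening of premise \emph{(ii)}, i.e.\ retyping $D_0$ under $\Gamma, x{:}A$ (resp.\ $\Gamma, x{:}A, x_\mf{set}{:}A \xrightarrow{\mf{Mut}} \tone$), because the theorem requires $D$ and $D_0$ to share the source context; this is what the paper does. By contrast, $\Gamma'$ and $\delta$-sufficiency need no change in \textsc{TB-Let}, and no $\Sigma$-weakening is used anywhere.
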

\begin{proof}
  By mutual induction on the typing derivation \emph{(i)}
  alongside Theorems \ref{thm:trans-prezv} (values),
  \ref{thm:trans-prezc} (computations),
  and \ref{thm:trans-prezbs} (bodies at successor phase).
\end{proof}

This theorem guarantees that the translation is
well-defined in the following sense.


\begin{definition}[Well-definedness]
Call the result of a translation function $\trans{\cdot}\R$ \emph{well-defined}
iff it is a syntactically well-formed term in $\lamref$.

By definition of our translation functions, for a result to be well-defined,
it suffices that all instances of $\R({id})$, $\R(p)$, and $\R(p, \ell)$
that appear in the result are well-defined, that is, $\neq \bot$.
\end{definition}

Well-definedness is an obvious corollary of
well-typedness in the target language. Therefore, for a translation
to be well-defined on e.g. a value $\trans{V}\R$,
by Theorem \ref{thm:trans-prezv} (type preservation), it suffices to have that
$\Gamma \semi \Pi \semi \Phi \vdash V: A$ and $\Pi \semi \mf{ids}(V) \vDash \R$.

\subsubsection{Simulation}
We must now show that the translation is semantics-preserving:
roughly, the property that if a term (e.g. $M$) takes a step to $N$ in
$\lamhook$, then its translation $\trans{M}$ steps to the translation
$\trans{N}$ in $\lamref$.
However, several technical details serve to obscure the essence of
this idea. As well-definedness of our translation functions is
contingent on a runtime relation $\R$, we must have as premise that
$\R$ is sufficient. Since this sufficiency is defined with respect to
$\Pi$, it must be $\Pi$ types the source-language term
(and heap) in question.
Moreover, we require that $\R$ grows as we step (according to a
partial order given by $\subseteq$), as does $\Pi$.
To apply these theorems transitively, we must ensure that our new
$\R'$ is sufficient with respect to our new $\Pi'$, and in turn that
$\Pi'$ types the new term and heap. Therefore, the structure of our
proofs mirrors that of $\lamhook$ type preservation.

We present only the main statement for $\step_\top$ here.
The premises and conclusions \emph{not} found in the corresponding
preservation statement are \blue{typeset in blue}.
As usual, all other statements and all proofs can be found in
Appendix \ref{appdx:transsim}.

\begin{restatable}[Simulation, computations]{theorem}{transsimcm}
\label{thm:trans-simc}
\;\\[2px]
\begin{minipage}{0.45\linewidth}
\setlength{\parindent}{1em}
If we have
\begin{enumerate}[label=(\roman*)]
\item $\Gamma \semi \Pi \semi \Phi_H \semi \Phi_M \vdash
  \heap \semi M: A \bang \mf{Mut}$,
\item \blue{$\Pi \semi \mf{ids}(M) \cup \mf{ids}(\heap) \vDash \R$}, and
\item $\heap \semi M \step N \semi \heap'$,
\end{enumerate}
then for some $\Phi_H'$, $\Phi_N$,
$\Pi'$, and $\R' \supseteq \R$, we get:
\end{minipage}
\hfill
\begin{minipage}{0.5\linewidth}
\setlength{\parindent}{1em}
\begin{enumerate}[label=(\roman*)]
\item $\Phi_M \Join \Phi_H \sqsubseteq_\mf{fresh} \Phi_N \Join \Phi_H'$,
\item $\Pi \sqsubseteq_{\Phi_N \Join \Phi_H'} \Pi'$,
\item $\Gamma \semi \Pi' \semi \Phi_H' \semi \Phi_N \vdash N: A \bang \mf{Mut}$,
\item \blue{$\Pi' \semi \mf{ids}(N) \cup \mf{ids}(\heap') \vDash \R'$}, and
\item \blue{$\bsq{\heap}\R \semi \bsq{M}\R
\stept\ \bsq{N}\R' \semi \bsq{\heap'}\R' $}.
\end{enumerate}
\end{minipage}
\end{restatable}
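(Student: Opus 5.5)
We prove the statement by induction on the derivation of the step $\heap \semi M \step N \semi \heap'$. We do this simultaneously with the analogous simulation statements for pure computations and for body steps $\step^\phi_p$, one for each phase. Conclusions (i)--(iii) are inherited verbatim from the corresponding $\lamhook$ preservation lemma (Lemma~\ref{thm:lamhook-prezcm}). The work is therefore in (iv) and (v), and we reuse the witnesses $\Phi_N, \Phi_H', \Pi'$ produced by preservation.

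Before the case analysis we establish three lemmas. \emph{Substitution commutes with translation:} $\trans{M[V/x]}\R = \trans{M}\R[\trans{V}\R/x]$, and likewise for bodies under $\strans{\cdot}$ and $\itrans{\cdot}$, provided $x$ is not in the range of $\delta$. \emph{Weakening of $\R$:} if $\R \subseteq \R'$ and $\trans{\cdot}\R$ is well-defined, then $\trans{\cdot}\R' = \trans{\cdot}\R$. \emph{Discharge of $\delta$:} substituting a location $\R(p,\ell)$ for the variable $x_\ell$ in $\itransp{D}{D_0}{p}{\R}{(\delta,\ell\mapsto x_\ell)}$ yields $\itransp{D}{D_0}{p}{\R}{\delta}$ once $(p,\ell)\mapsto\R(p,\ell)$ is in $\R$; the same holds for $\mf{flag}$ and $\R(p)$. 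This last lemma is what reconciles "variables bound earlier in the translation" with "raw locations", so we prove it by induction on $D$ using the definition of $\delta(p,\ell,\R)$.

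For the case analysis, the $\beta$-rules, the runtime rules inherited from $\lampure$, \textsc{EC-Let} and \textsc{EC-ClickComp} go through directly. Each is a single matching target step, using the substitution lemma and taking $\R'=\R$. The one exception is \textsc{EC-InitTag}, where we extend $\R$ with $id\mapsto id'$ for the fresh target identifier. \textsc{EC-AppSet} unfolds to a $\beta$-step followed by two assignments, matching the heap translation. The component cases are the heart of the proof:
\begin{itemize}
\item \textsc{EC-InitComp}: the target steps through the thunk and $\mf{ref}(\mf{false})$, allocating a location $\ell_f$. We set $\R'=\R,p\mapsto\ell_f$ and apply $\delta$-discharge to land exactly on $\trans{\eval{D}{\mf{Init}}{p}{D}}\R'$.
\item Body steps \textsc{EB-StateI}: the $\mf{ref}$ allocates $\ell'$. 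We set $\R'=\R,(p,\ell)\mapsto\ell'$, and after the dereference and setter binding, $\delta$-discharge again aligns the terms.
\item \textsc{EB-StateS}: this is a dereference of $\R(p,\ell)$ followed by a binding.
\item \textsc{EC-EvalDone}: in the $\mf{Init}$ phase, the $\mf{let}\,rr,f,x$ destructuring steps to the $\mf{compInst}$. In the $\mf{Succ}$ phase the two sides are already syntactically equal up to the final $\mf{val}$.
\item \textsc{EC-UpdateCompT/F}: these use the target rule \textsc{EC-UpdateComp}, with the conditional on $\bang\R(p)$ agreeing with the translated rerender flag.
\end{itemize}
Sufficiency (iv) follows in each case from freshness, and from the fact that $\Pi'$ gains $\ell$ at $p$ exactly when $\R'$ gains $(p,\ell)$, which establishes the bi-implication.

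The main obstacle is mismatch of step granularity together with the $\mf{Succ}$-phase translation. Some source steps, notably those inside $\eval{D_0}{\mf{Succ}}{p}{D}$, translate to terms of the form $\mf{val}(\mf{compInst}(\ldots,\mf{init}\,(\strans{D}\ldots)))$. Here the target does not yet evaluate the body: the body is evaluated only later, under $\mf{init}$. So a source step inside the body may correspond to \emph{zero} target steps, which conflicts with the required $\stept$. My first attempt will be to check whether the translation actually positions $\strans{D}$ in an evaluation context; if not, I will weaken (v) in those cases to $\steprt$ up to a syntactic equality and verify that the top-level statement is still recovered.
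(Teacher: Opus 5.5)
Your overall plan matches the paper's: induction on the source step, reusing the preservation witnesses, with substitution-commutation, $\R$-extension and $\delta$-elimination lemmas, and separate body simulations for each phase. The gap is the ``main obstacle'', which you leave open with a fallback that would not prove the theorem.

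\textbf{The obstacle does not exist.} In the paper's calculi, $\mf{init}$ and $\mf{compInst}$ take \emph{values}. So the CBV-to-FGCBV let-insertion convention reads $\trans{\eval{D_0}{\mf{Succ}}{p}{D}}\R$ as $\elet{x_0}{\stransp{D}{p}{\R}{\emptyset}}{\elet{x_1}{\mf{init}\ x_0}{\mf{val}\ \mf{compInst}(\lambda().\stransp{D_0}{p}{\R}{\emptyset}, \R(p), x_1)}}$. This desugaring is spelled out in the paper's type-preservation proof for \textsc{TC-EvalS}. The body translation therefore sits at the head of a $\mf{let}$, i.e.\ in evaluation position, and is evaluated before $\mf{init}$, not under it.

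\textbf{How the case actually goes.} \textsc{EC-Eval} at $\mf{Succ}$ is the let-transitivity lemma (Lemma~\ref{thm:trans-let}) applied to the $\mf{Succ}$-body simulation (Theorem~\ref{thm:trans-simbs}). That simulation never stutters: \textsc{EB-Let} yields $\stept$ via pure simulation, \textsc{EB-Val} is one \textsc{EC-Val} step, and \textsc{EB-StateS} is \textsc{EC-Get} followed by two \textsc{EC-Val} steps. Weakening (v) to $\steprt$ is not an option, because the statement demands $\stept$ in every case; a stuttering variant is a different, weaker theorem.

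\textbf{The same issue in \textsc{EC-EvalDone}/$\mf{Succ}$.} The two translations are not syntactically equal. The left one begins with the administrative redex $\elet{x_0}{\mf{val}\ \trans{V}\R}{\ldots}$, and exactly one \textsc{EC-Val} step reaches $\trans{\elet{x}{\mf{init}\ V}{\mf{val}\ \compInst{p}{D}{x}}}\R$. That step is what supplies $\stept$; if the sides were literally equal, (v) would fail there.

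\textbf{Two smaller corrections.} First, \textsc{EC-Let} is not a single matching step with $\R'=\R$. The inner step may extend $\R$, for example via \textsc{EC-InitTag}, \textsc{EC-InitComp}, or an \textsc{EB-StateI} under \textsc{EC-Eval}. You must take the inductive hypothesis's $\R'$ and lift its $\stept$ through the $\mf{let}$ by let-transitivity. Then use $\R$-extension (Lemma~\ref{thm:trans-rextc}) to rewrite the continuation $\trans{N}\R$ as $\trans{N}\R'$.

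Second, in \textsc{EC-InitComp} the component's translation is parameterized by $\bot$, not by $p$. So besides $\R$-extension and flag elimination you also need specialization (Lemma~\ref{thm:trans-speci}): $\itransp{D}{D}{\bot}{\R}{\mf{flag}\mapsto x} = \itransp{D}{D}{p}{\R}{\mf{flag}\mapsto x}$. This holds because well-definedness at $\bot$ forces every flag and label lookup to be resolved through $\delta$. Your $\delta$-discharge lemma, stated only for a concrete $p$, does not cover this step on its own.
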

\begin{proof}
  By induction on $\heap \semi M \step N \semi \heap'$,
  with appeal to Lemmas
  \ref{thm:trans-simbi} (bodies for $\phi=\mf{Init}$)
  and \ref{thm:trans-simbs} (bodies for $\phi=\mf{Succ}$).
\end{proof}

As the semantics of the target language is deterministic we
immediately obtain a backwards simulation result too~\citep{Leroy09}
(a target-language step must lie along the unique path given
by the simulation of one step in the source language, thus showing that
translated terms do not ``go astray''). Hence there is
a tight operational correspondence between the source and target
languages.






\subsection{Discussion}
\label{sec:transdisc}

%
\paragraph{Compositionality}
The definition of our translation is entirely compositional
\cite{gorla:10:encode,Felleisen91}, except
for one small deviation: when we translate a component,
we parameterize the $\mf{Init}$-phase translation by the body \emph{twice}.
The first parameter serves as the structural basis for the translation
as normal.
The second parameter remains entirely
untouched until it is translated for a second time in $\mf{Succ}$-phase:
thus, the $\mf{val}\; V$ case in the $\mf{Init}$-phase translation violates
compositionality.

One can make the translation formally compositional using standard
techniques such as having the translation of bodies return a pair of
terms, one for $\mf{Init}$-phase and another for $\mf{Succ}$-phase, and
using a higher-order translation to avoid introducing administrative
redexes~\citep{DanvyMN07}.
We choose not to follow such a path here as it makes the definition of
the translation somewhat more complex.

\paragraph{Locality}
We distinguish two senses in which the word "local" has been used
in the literature to describe translations.
We call the weaker sense \emph{feature-local},
and it applies with respect to a  particular \emph{feature},
which we take to be a group of related constructs and their type formers.
For instance, the "product feature"
consists of the type $A \times B$, pairs $(V, W)$, and projections
$\mf{fst}$ and $\mf{snd}$.
Feature-locality then means that the translation is a homomorphism over
constructs which are not part of the feature in question
\cite{kawahara:20:aeco, lindley:25:products}.
We call the stronger sense \emph{macro-local},
which is synonymous with \emph{macro-expressiveness}
as originally formulated by Felleisen \cite{Felleisen91}.
Macro-locality implies feature-locality and additionally requires that
the translations on features are specified by a fixed context,
parameterized only by the translations of the subterms.
These notions coincide for unparameterized translations
but may diverge for parameterized ones,
wherein a translation context may involve the parameters.
For instance:
\begin{itemize}
  \item A CPS translation \cite{plotkin:75:lc,DanvyMN07} is not feature-local
    (and so not macro-local),
    since we must thread a continuation through every computation
    in the language.
  \item A translation from named variables to de Bruijn indices
    \cite{debruijn:72:nameless} is feature-local
    but not macro-local. Translating variables requires ``non-local''
    information about any surrounding binders, but unrelated features
    (e.g. application, products if we add them, etc.) are unaffected.
  \item Translating $\elet{x}{M}{N}$ into
    $(\lambda x.N)\; M$, as in \cite{Felleisen91},
    is macro-local (and so feature-local).
\end{itemize}


Our translation is clearly \emph{not} macro-local: for instance,
when translating components, $\delta$ is used to thread ``non-local''
information between phases.
However, the translation \emph{is} feature-local, so long
as we take our ``feature'' to consist of
component values (and their instances), setters,
component evaluation computations, and the syntactic category of bodies.

\paragraph{Duplication}
A drawback of the translation is the duplication of code.
The source language allows the same component body $D$ to be interpreted
in both phases, but to reflect this behavior in the target language,
we translate $D$ twice, once per phase.
A programmer seeking to replicate source-language behavior in the target
language would repeat themselves frequently.

It is \emph{in theory} possible to remove this duplication
by parameterizing target-language components appropriately
(e.g. having them take as argument a ``phase'' and the necessary context
to run under that phase). We choose not to do so in order to avoid
obscuring the essence of the translation with (even) more plumbing.

\paragraph{Abstraction}
The $\mf{useState}$ construct supports UI state in quite a general
way, but also abstracts over implementation details. For instance,
setters in $\lamhook$ encapsulate the rerender flag, whereas a programmer
working in the $\lamref$ has direct access to the flag, and so can
manipulate it ``incorrectly'', as in the following example:

\begin{lstlisting}
  let flag = ref(false) in ...
  let setX = \y. (xLoc := y ; flag := true) in ...
  <div onClick={\().setX (x+1); flag := false}>{x}</div>
\end{lstlisting}

By manually setting $\mf{flag} := \mf{false}$ in the event handler,
a subsequent $\mf{update}$ will not rerender
the component, despite its state having changed.
In contrast, the encapsulation in $\lamhook$ prevents us from
rendering stale state in this way:
the only time the rerender flag can be set to $\mf{ff}$
is in rule \textsc{EC-UpdateCompT}, once a rerender has already begun.
Our translation thus does not preserve \emph{full abstraction},
in the sense that translated terms can be observably distinguished
by target-language contexts (where these contexts need not
be in the translation image themselves)
\cite{kennedy:06:net,ahmed:11:cps,abadi:98:protect}.

\section{Extensions}
\label{sec:ext}

We show how $\lamhook$ can be extended to capture additional
features and semantic details from React beyond those of Section \ref{sec:lamhook}.
In particular, we address
reconciliation (\ref{sec:reconciliation}),
$\mf{useEffect}$ (\ref{sec:useeffect}),
and custom hooks (\ref{sec:custom}).
We only sketch the necessary adjustments, and briefly indicate
how our translation approach
might likewise be adapted to model the new behavior in $\lamref$.

\subsection{Reconciliation}
\label{sec:reconciliation}

React's runtime involves a process called \emph{reconciliation}, where
it will \emph{diff} the old and new trees and only reinitialize the
parts that have changed \citep{lee:25:react,react:calls}.
Reconciliation helps performance but also affects the preservation of
state.
When a component instance is updated in $\lamhook$, which lacks reconciliation,
its sub-elements are always freshly reinitialized, discarding their state.
Consider the following \texttt{NestedCounter} component that wraps the stateful
\texttt{Counter} example seen previously:
\begin{lstlisting}
  let NestedCounter = component { let x, setX = useState(0) in
  <div onClick={\().setX (x+1)}>{x}<Counter/></div> }
\end{lstlisting}
The inner counter will increment as normal on click,
but when the \emph{outer} counter is incremented,
the inner counter will reset to 0.
In contrast, a reconciliation semantics will preserve the inner counter's state,
even when the outer counter's state changes.

\paragraph{Extension Approach.}
Extending $\lamhook$ with reconciliation requires carrying
the prior element contents during rerendering so that, when finished,
the semantics can choose whether to reinitialize. Thus the body-evaluation
construct $\lceil D_0 \rceil \downarrow_p^\phi D$
becomes $\lceil D_0 \rceil \downarrow_p^\phi (V_{\mf{old}}, D)$,
with the $V_{\mf{old}}$ providing the prior contents.
When rerendering terminates at a value
$\lceil D_0 \rceil \downarrow_p^\phi (V_{\mf{old}}, \mf{val}\ V_{\mf{new}})$,
it triggers a $\mf{reconcile}$ process that compares $V_{\mf{old}}$
with $V_{\mf{new}}$, either intializing new instances or preserving the
identity of tags/components shallowly and recursing to their children.



\paragraph{Translation Approach.}
The translation poses no new problems: the extension has essentially only added
some data and some primitive-recursive functions.
Like the other runtime computations, we treat
$\mf{reconcile}$ as primitive in the target-language semantics,
since it can be encoded straightforwardly in a language with
recursion and appropriate destructors.

\subsection{The useEffect Hook}
\label{sec:useeffect}
React provides another primitive hook, $\mf{useEffect}$, to
synchronize an application with non-React code, e.g. the DOM
or an external API.
The $\mf{useEffect}$ hook wraps a computation $M$, which is (confusingly) named
an ``Effect'': this capitalized use of the word ``effect'' should \emph{not} be
confused with the effects tracked in our type system
($\mf{Pure}$ and $\mf{Mut}$).
The Effect is executed \emph{after} rendering.
A common use case is to dispatch an \emph{asynchronous}
function, so that the long-running task does not block rendering,
as in the following JavaScript snippet:
\begin{lstlisting}
  const DataView = ({ dataID }) => { const x, setX = useState(null);
    useEffect(() => { const startFetch = async () => { const data = await fetchData(dataID);
                                                       setX(data); };
                      startFetch(); }); }
\end{lstlisting}
When a setter is called in an Effect, React rerenders the component
to display the most recent state.
However, if the setter is called unconditionally, we risk an infinite
loop, wherein this rerender dispatches another Effect, which triggers
another rerender, and so on.
Nevertheless, to express useful code like the example above, we must permit
the use of setters inside Effects, and so our semantics must capture this
looping behavior.

\paragraph{Extension Approach.}
We introduce an \emph{effect queue} $Q ::= \overline{M}$,
which is initially empty, to our runtime configuration for bodies.
Our new programmer-level construct, $\mf{useEffect}\; \{M\}\; \mf{in}\; D$,
adds the effect $M : \tone \bang \mf{Mut}$
to the queue (where $\concat$ is meta-append):
\begin{center}
{\small
$\ba{c@{\;\step^\phi_p\;}c}
\langle \heap \semi Q \semi \mf{useEffect}\; \{M\}\; \mf{in}\; D\rangle
  &
  \langle \heap \semi Q \concat [M] \semi D\rangle
\ea$
}
\end{center}
We also extend our runtime constructs
for body-evaluation and component instances to capture this queue.
To traverse the element tree and flush all queues,
we add a new runtime computation
$\mf{commit} : \mf{ElemInst} \rightarrow \mf{ElemInst}$, with the following
relevant rules:
{\small
\begin{center}
$\ba{r@{\;\step\;}l}
  \heap \semi \mf{commit}\; \mf{compInst}_p(D, M :: Q, V)
  &
  (M \semi \mf{commit}\; \mf{compInst}_p(D, Q, V)) \semi \heap \\
  \heap \semi \mf{commit}\; \mf{compInst}_p(D, [], V)
  &
  \mf{compInst}_p(D, [], \mf{commit}\; V) \semi \heap \\
\ea$
\end{center}
}

Finally, to ensure committed state changes are rendered, we add
a $\mf{check}$ runtime construct which repeatedly rerenders its argument
until no flags have been flipped in the process.
\begin{center}
{\small
$\ba{r@{\;\step\;}l@{\quad}c}
  \heap \semi \mf{check}\; V
  &
  \mf{check}\; (\mf{commit} \; (\mf{update}\; V)) \semi \heap
  & (\text{if } \exists p \in \heap . \heap[p].\mf{rerender} = \mf{tt})\\
  \heap \semi \mf{check}\; V
  &
  \mf{val}\; V \semi \heap
  & (\text{if } \forall p \in \heap . \heap[p].\mf{rerender} = \mf{ff})\\
\ea$
}
\end{center}


\paragraph{Translation Approach.}
Effects $M$ translate as thunks, $\lambda ().M$, and bodies into
\emph{state-passing style}, the state being a list of effects. Then
$\mf{useEffect}$ translates as an enqueue on this state, and the
final queue is saved into an (extended) \textsf{compInst} constructor.
Runtime computations $\mf{commit}$ and $\mf{check}$ then behave analogously
to the source language.


\subsection{Custom Hooks}
\label{sec:custom}
To support modularity, React allows programmers to define
``custom hooks'', which amount to \emph{higher-order} hooks.
Custom hooks are functions which may invoke other primitive or custom
hooks in sequence, as long as those invocations obey the top-level restriction
with respect to the custom-hook definition.
For example, we can abstract over the stateful logic
in our \texttt{Counter} components by creating a custom
\texttt{useCounterState} hook that exposes a specialized setter to increment the
counter.

\vspace{-8pt}
\begin{minipage}[t]{0.5\textwidth}
\begin{lstlisting}
const useCounterState = () => {
  const [x, setX] = useState(0);
  const incrX = () => {setX (x + 1)};
  return [x, incrX]; }
\end{lstlisting}
\end{minipage}
\hfill
\begin{minipage}[t]{0.5\linewidth}
\begin{lstlisting}
const TwoCounters = () => {
  const [x, incrX] = useCounterState();
  const [y, incrY] = useCounterState();
  ... }
\end{lstlisting}
\end{minipage}
%
%
%

\paragraph{Extension Approach.}
In the untyped setting, custom hooks are merely functions
that may call other hooks \cite{lee:25:react}.
Ensuring that custom hooks are \emph{safe} is more complicated.

Our previous shortcut of identifying $\mf{useState}$ calls with static labels
no longer works: otherwise, the two calls to \texttt{useCounterState}
above would share state! We now let $\Ell$ be a type-level list,
where the list order reflects the state initialization order.
State labels are allocated dynamically via a counter $n$ that starts at $0$,
which we add to our body-evaluation configuration:
this counter corresponds to an index in $\Ell$,
so that a setter $\setter^n_p$ takes type $\Ell[n]$.
For instance, the $\mf{useState}$ rule when $\phi=\mf{Init}$
would be as follows, with the significant changes \grey{greyed}:
{\small
\[
  \langle \heap \semi \mgrey{n} \semi \mf{useState}\; V \rangle
  \step^\mf{Init}_p
  \langle
    \heap \mid \heap[p].\mf{st}[\mgrey{n}] = V
    \semi \mgrey{n+1} \semi (V, \setter^\mgrey{n}_p)
  \rangle
\]
}
We replace our syntactic category of component bodies with
the type $\mf{Hook}\; \Ell\; A$, which denotes a hook that
returns type $A$ and initializes state according to $\Ell$.
Equipped with a standard operational semantics for return/bind,
$\mf{Hook}$ forms a \emph{graded monad} \cite{katsumata:14:pmonad,
fujii:16:gmonad,gaboardi:16:grading,orchard:14:effhaskell},
graded by the list monoid $\Ell$.
We add $\mf{useState}$ as a primitive operation with type
$A \xrightarrow{\mf{Mut}}
\mf{Hook}\; [A]\; (A \times (A \xrightarrow{\mf{Mut}} \tone))$,
and we adapt our $\mf{component}$ constructor
to take a hook of underlying type $\mf{Element}$ (and any grade).
The grade of a hook decreases during evaluation, but we can state
preservation with respect to the heap type as before.
{\small
\begin{mathpar}
  \inferrule {
    \Gamma \semi \Pi \semi \Phi \vdash M :
    \mf{Hook}\; \Ell\; \mf{Element} \bang \mf{Pure}
  } {
    \Gamma \semi \Pi \semi \Phi \vdash \mf{component}\; \{M\} : \mf{Element}
  }

  \inferrule {
    \Gamma \semi \Pi \semi \Phi \vdash M : A \bang E
  } {
    \Gamma \semi \Pi \semi \Phi \vdash
    \mf{return}\; M : \mf{Hook}\; \epsilon\; A \bang E
  }

  \inferrule {
    \Gamma \semi \Pi \semi \Phi \vdash M : \mf{Hook}\; \Ell_1\; A \bang E
    \and
    \Gamma, x : A \semi \Pi \semi \Phi \vdash N : \mf{Hook}\; \Ell_2\; B \bang E
  } {
    \Gamma \semi \Pi \semi \Phi \vdash
    \mf{bind}\; x \leftarrow M\; \mf{in}\; N
    : \mf{Hook}\; (\Ell_1 \circ \Ell_2)\; B \bang E
  }
\end{mathpar}
}

\paragraph{Translation Approach.}
Our approach to translating custom hooks is similar to that of components.
Every hook is first translated in $\mf{Init}$-phase, and must return
its $\mf{Succ}$-phase interpretation alongside its return value.
A hook $M$ which invokes a custom hook $N$ (in a $\mf{bind}$)
in $\mf{Init}$-phase then receives $N$'s $\mf{Succ}$-phase interpretation,
which must in turn be invoked appropriately in $M$'s own
$\mf{Succ}$-phase interpretation.
The rerender flag is still only generated at the top of a component,
but we must pass it as a parameter to any custom hooks,
since it must be in scope of any $\mf{useState}$ calls.

\section{Related Work}
\label{sec:relwork}

\citet{madsen:20:react} introduced a small-step semantics to model an earlier
class-based version of React. Their semantics models component lifecycles and
React's reconciliation process, but they treat state through a \texttt{SetState}
primitive that predates hooks.

Closest to our work is React-tRace~\cite{lee:25:react}, a detailed semantics for
React hooks that aims to faithfully model React's behaviour. Although we reuse
some surface syntax and several technical ideas (e.g., separating the $\mf{Init}$ and $\mf{Succ}$
phases), our goal is different: to distil a well-behaved language that
captures the essence of hook-indexed state and to characterise its expressiveness
via a translation to first-class references.
There are therefore many technical differences between $\lamhook$ and
React-tRace. First, our stratified syntax and type-and-effect system
rule out many erroneous programs that React-tRace must model. This
enables significant simplifications, allowing us to use a substitution-based
(rather than environment-based) semantics and to omit React's retry behaviour.
Since our focus is the expressiveness of state hooks, we also de-emphasize
reconciliation and $\mf{useEffect}$, though we outline extensions that
incorporate both. Finally, our semantics uses small-step evaluation for
expressions, whereas React-tRace uses a big-step relation.

Flapjax~\cite{meyerovich2009flapjax} is an early reactive web language with
support for signals (time-varying values) and event streams from functional
reactive programming (FRP)~\cite{ElliottH97}.  Flapjax propagates signal changes
to the DOM, but has a more imperative flavor because it predates
reconciliation-based approaches.

\citet{Chlipala15} and~\citet{FowlerDG15} describe related approaches to
reactive web programming where users can create \emph{views} of a mutable
data source, compose views monadically, and embed views into a declarative
description of a web page.  Updates to the data source are propagated through a
dataflow graph and are eventually reflected in the web page.
This is a different model to hook-indexed state and neither work includes
a formal semantics.

Elm~\cite{elm-website} is a functional web programming language. Early versions
of Elm~\cite{czaplicki:13:elm} used an applicative form of FRP that disallowed higher-order signals in
order to avoid memory leaks.
Elm later abandoned FRP~\cite{farewell-frp} and adopted a simpler architecture
known as \emph{The Elm Architecture} or \emph{Model-View-Update} (MVU). MVU
involves a simple event loop: a model is rendered by a view function, and the
materialized view can produce messages that are processed by an update function
to produce an updated model.
In general, MVU is simpler than React, but can lead to larger and more
monolithic data models.
\citet{Fowler20} formalizes MVU as a core calculus, $\lambda_{\mathsf{MVU}}$,
and shows how it can be extended to incorporate linear resources.
The functional IO system presented by~\citet{FelleisenFFK09} also implements a
form of MVU that works with external events (e.g., mouse movements).

\citet{crichton:24:doc} introduce a taxonomy of \emph{document languages}:
programming languages designed to produce structured text.  Their article
calculus, based on System F, can represent hierarchically-structured documents
(including HTML) and is given a translational semantics to data types similar to
our $\mf{tag}$ and $\mf{text}$ constructs. The reactive extension to the
calculus supports MVU-style components that include initialization, update, and
rendering functions. Their design elegantly represents reactive documents in a
functional language but does not model state hooks: instead, state updates are
defined with respect to a pre-defined list of events.

When it comes to understanding a feature such as state
via translation, there exist multiple methodologies in the literature.
Pottier \cite{pottier:11:store} demonstrates that System F with
first-class references is expressible in a calculus with recursive kinds.
He defines both languages and their semantics separately, and then
proves that his translation preserves types and is a simulation.
By contrast, when O'Hearn and Reynolds \cite{ohearn:00:algol}
translate fragments of Algol into a linear $\lambda$-calculus,
they do not give their Algol fragments a direct semantics:
instead, their source languages are \emph{given semantics via translation}.
Both methodologies have their place; our work adopts the former approach.




\section{Conclusion}
\label{sec:conc}

We have proposed $\lamhook$, a core calculus that captures the essence
of hook-indexed state in the style of React's hooks and components.
By endowing $\lamhook$ with a type system and proving it sound, we
offer a blueprint for how language designers may soundly adapt React's
hook-indexed state into their own languages.  We have also defined a
local translation from $\lamhook$ into a more canonical language,
$\lamref$, with components based on ML-style first-class
references. We have thereby shown that hook-indexed state can be
expressed in terms of first-class references, although the translation
duplicates code and lowers the level of abstraction. Designers of
languages with state that is scoped to re-instantiable blocks, and
whose changes must be observed (e.g., to trigger rerendering or
propagate updates), now have a principled basis for choosing between
useState, ML-style references, and alternatives yet to be devised.

\section*{Disclosure of Use of AI in Research}
We used AI to brainstorm names for hook-indexed state,
and to suggest how to tighten a couple of weak sentences in the conclusion.
AI was not used during the research itself.

\section*{Data Availability Statement}
Our work has no associated artifacts.

\bibliographystyle{ACM-Reference-Format}
\bibliography{lambda-react}

\clearpage
\appendix

\section[Lambda-Hook-View Proofs]{$\lamhook$ Proofs}
\label{appdx:lamhook}

\paragraph{Presentation}
We denote \emph{deductions via inversion} (in a bottom-up derivation)
with a label \emph{above} the conclusion, e.g.:
\begin{mathpar}
\inferrule* {(1)} {\textnormal{some judgment}}
\end{mathpar}

Comparatively, we denote the \emph{use of deductions} (in a top-down derivation)
with a label \emph{besides} the use, e.g.:
\begin{mathpar}
\inferrule*[right=(1)] { } {\textnormal{some judgment}}
\end{mathpar}

To split up large derivations, we may also label subconclusions
on the left, e.g.:
\begin{mathpar}
  \inferrule* {\cdots} {(1)\ \textnormal{some judgment}}
\end{mathpar}
We then may use (1) as a premise in a larger derivation.

Oftentimes when deriving a subconclusion, it is briefer to use
an unspecified context, which we specify when \emph{using} the subconclusion.
For instance, we might derive the following (for some
unspecified/partially specified $\Gamma_1$):
\begin{mathpar}
  \inferrule* {\cdots} {
    (1)\ \Gamma_1 \vdash \textnormal{some judgment}
  }
\end{mathpar}

We can then invoke (1) while specifying $\Gamma_1$ as so:

\begin{mathpar}
  \inferrule* {[(1),\ \Gamma_1 = \Gamma, x: A]} {
    \Gamma \vdash \textnormal{some judgment}
  }
\end{mathpar}

\subsection{Derived Rules}

For the derived forms (often relating to $\mf{fold}$)
which appear in our semantics, it is often convenient to use derived rules
in lieu of their lengthier primitive counterparts.
Here, we present each such rule alongside its derivation.

\paragraph{Map} We have
$\mf{map}\ V\ W \triangleq
\mf{fold}\ (\lambda x.\lambda acc. V x :: acc)\ []\ W$,
where the right-hand lambda desugars to
$\lambda x.\mf{val}\ \lambda acc. \elet{y}{V\ x}{\mf{val}\ y :: acc}$.
The derived typing rule is given as follows:

\begin{mathpar}
  \inferrule*[right=TC-Map] {
    \Gamma \semi \Pi \semi \Phi_1 \vdash V: A \xrightarrow{E} B
    \and
    \Gamma \semi \Pi \semi \Phi_2 \vdash W: \mf{List}\ A
  } {
    \Gamma \semi \Pi \semi \Phi_1 \Join \Phi_2 \vdash \mf{map}\ V\ W:
    \mf{List}\ B \bang E
  }
\end{mathpar}

We derive this rule via intermediate derivations (1)-(3) as follows
for $E=\mf{Pure}$. A derivation for $E=\mf{Mut}$ can be straightforwardly
obtained by approprately invoking \textsc{TC-Sub} throughout.
We employ a (standard) weakening lemma on $\Gamma$.

\begin{mathpar}
  \inferrule*[right=TC-App] {
    \Gamma_1 \semi \Pi \semi \Phi_1 \vdash V: A \xrightarrow{\mf{Pure}} B
    \and
    \inferrule*[right=TV-Var] { } {
      \Gamma_1 \semi \Pi \semi \emptyset \vdash x: A
    }
  } {
    (1) \quad
    \Gamma_1 \semi \Pi \semi \Phi_1 \vdash V\ x: B \bang \mf{Pure}
  }

  \inferrule*[right=TC-Val] {
    \inferrule*[right=TV-Cons] {
      \inferrule*[right=TV-Var] { } {
        \Gamma_2 \semi \Pi \semi \emptyset \vdash y: B
      }
      \and
      \inferrule*[right=TV-Var] { } {
        \Gamma_2 \semi \Pi \semi \emptyset \vdash acc: \mf{List}\ B
      }
    } {
      \Gamma_2 \semi \Pi \semi \emptyset \vdash y :: acc: \mf{List}\ B
    }
  } {
    (2) \quad
    \Gamma_2 \semi \Pi \semi \emptyset \vdash \mf{val}\ y :: acc:
    \mf{List}\ B \bang \mf{Pure}
  }

  \inferrule*[right=TV-Lam] {
    \inferrule*[Right=TC-Val] {
      \inferrule*[Right=TV-Lam] {
        \inferrule*[Right=TC-Let] {
          [(1),\ \Gamma_1 = \Gamma, x: A, acc: \mf{List}\ B]
          \and
          [(2),\ \Gamma_2 = \Gamma, x: A, acc: \mf{List}\ B, y: B]
        } {
          \Gamma, x: A, acc: \mf{List}\ B \semi \Pi \semi \Phi_1 \vdash
          \elet{y}{V\ x}{\mf{val}\ y :: acc}:
          \mf{List}\ B \bang \mf{Pure}
        }
      } {
        \Gamma, x: A \semi \Pi \semi \Phi_1 \vdash
        \lambda acc. \elet{y}{V\ x}{\mf{val}\ y :: acc}:
        \mf{List}\ B \xrightarrow{\mf{Pure}} \mf{List}\ B
      }
    } {
      \Gamma, x: A \semi \Pi \semi \Phi_1 \vdash
      \mf{val}\ \lambda acc. \elet{y}{V\ x}{\mf{val}\ y :: acc}:
      \mf{List}\ B \xrightarrow{\mf{Pure}} \mf{List}\ B \bang \mf{Pure}
    }
  } {
    (1) \quad
    \Gamma \semi \Pi \semi \Phi_1 \vdash
    \lambda x.\mf{val}\ \lambda acc. \elet{y}{V\ x}{\mf{val}\ y :: acc}:
    A \xrightarrow{\mf{Pure}} \mf{List}\ B \xrightarrow{\mf{Pure}} \mf{List}\ B
  }

  \inferrule*[right=TC-Fold] {
    (1)
    \and
    \inferrule*[right=TV-Nil] { } {
      \Gamma \semi \Pi \semi \emptyset \vdash []: \mf{List}\ B
    }
    \and
    \Gamma \semi \Pi \semi \Phi_2 \vdash W: \mf{List}\ A
  } {
    \Gamma \semi \Pi \semi \Phi_1 \Join \Phi_2 \vdash
    \mf{fold}\ (\lambda x.\cdots)\ []\ W:
    \mf{List}\ B \bang \mf{Pure}
  }
\end{mathpar}

\paragraph{List Initialization} We have
$\mf{inits}\ V \triangleq
\mf{map}\ (\lambda x.\mf{init}\ x)\ V$.
The derived typing rule is given as follows:

\begin{mathpar}
  \inferrule*[right=TC-Inits] {
    \Gamma \semi \Pi \semi \Phi \vdash V: \mf{List\ Element}
  } {
    \Gamma \semi \Pi \semi \Phi \vdash \mf{inits}\ V: \mf{List\ ElemInst}
    \bang \mf{Mut}
  }
\end{mathpar}

We derive this rule using the previously derived rule \textsc{TC-Map}
as follows:

\begin{mathpar}
  \inferrule*[right=TC-Map] {
    \inferrule*[right=TV-Lam] {
      \inferrule*[Right=TC-Init] {
        \inferrule*[Right=TV-Var] { } {
          \Gamma, x: \mf{Element} \semi \Pi \semi \emptyset \vdash
          x: \mf{Element}
        }
      } {
        \Gamma, x: \mf{Element} \semi \Pi \semi \emptyset \vdash
        \mf{init}\ x: \mf{ElemInst} \bang \mf{Mut}
      }
    } {
      \Gamma \semi \Pi \semi \emptyset \vdash \lambda x.\mf{init}\ x:
      \mf{Element} \xrightarrow{\mf{Mut}}\ \mf{ElemInst}
    }
    \
    \Gamma \semi \Pi \semi \Phi \vdash V: \mf{List\ Element}
  } {
    \Gamma \semi \Pi \semi \Phi \vdash \mf{map}\ (\lambda x.\mf{init}\ x)\ V:
    \mf{List\ ElemInst} \bang \mf{Mut}
  }
\end{mathpar}

\paragraph{List Updating} We have
$\mf{updates}\ V \triangleq
\mf{map}\ (\lambda x.\mf{update}\ x)\ V$.
The derived typing rule is given as follows:

\begin{mathpar}
  \inferrule*[right=TC-Updates] {
    \Gamma \semi \Pi \semi \Phi \vdash V: \mf{List\ ElemInst}
  } {
    \Gamma \semi \Pi \semi \Phi \vdash \mf{updates}\ V: \mf{List\ ElemInst}
    \bang \mf{Mut}
  }
\end{mathpar}

We derive this rule using the previously derived rule \textsc{TC-Map}
as follows:

\begin{mathpar}
  \inferrule*[right=TC-Map] {
    \inferrule*[Right=TV-Lam] {
      \inferrule*[Right=TC-Init] {
        \inferrule*[Right=TV-Var] { } {
          \Gamma, x: \mf{ElemInst} \semi \Pi \semi \emptyset \vdash
          x: \mf{ElemInst}
        }
      } {
        \Gamma, x: \mf{ElemInst} \semi \Pi \semi \emptyset \vdash
        \mf{update}\ x: \mf{ElemInst} \bang \mf{Mut}
      }
    } {
      \Gamma \semi \Pi \semi \emptyset \vdash \lambda x.\mf{update}\ x:
      \mf{ElemInst} \xrightarrow{\mf{Mut}}\ \mf{ElemInst}
    }
    \
    \Gamma \semi \Pi \semi \Phi \vdash V: \mf{List\ ElemInst}
  } {
    \Gamma \semi \Pi \semi \Phi \vdash \mf{map}\ (\lambda x.\mf{update}\ x)\ V:
    \mf{List\ ElemInst} \bang \mf{Mut}
  }
\end{mathpar}

\paragraph{Iteration} We have
$\mf{iter}\ V\ W \triangleq
\mf{fold}\ (\lambda x.\lambda (). V x)\ ()\ W$,
where $\lambda ().M$ is shorthand for a function which does not
bind a variable in $M$.
The derived typing rule is given as follows:

\begin{mathpar}
  \inferrule*[right=TC-Iter] {
    \Gamma \semi \Pi \semi \Phi_1 \vdash V: A \xrightarrow{E} \tone
    \and
    \Gamma \semi \Pi \semi \Phi_2 \vdash W: \mf{List}\ A
  } {
    \Gamma \semi \Pi \semi \Phi_1 \Join \Phi_2 \vdash \mf{iter}\ V\ W:
    \tone \bang E
  }
\end{mathpar}

We derive this rule via intermediate derivations (1)-(3) as follows
for $E=\mf{Pure}$. A derivation for $E=\mf{Mut}$ can be straightforwardly
obtained by approprately invoking \textsc{TC-Sub} throughout.

\begin{mathpar}
  \inferrule*[right=TV-Lam] {
    \inferrule*[Right=TC-Val] {
      \inferrule*[Right=TV-Lam] {
        \inferrule*[Right=TC-Let] {
          \Gamma, x: A \semi \Pi \semi \Phi_1 \vdash
          V: A \xrightarrow{\mf{Pure}} \tone
          \and
          \inferrule*[Right=TV-Var] { } {
            \Gamma, x: A \semi \Pi \semi \emptyset \vdash x: A
          }
        } {
          \Gamma, x: A \semi \Pi \semi \Phi_1 \vdash
          V\ x: \tone \bang \mf{Pure}
        }
      } {
        \Gamma, x: A \semi \Pi \semi \Phi_1 \vdash
        \lambda (). V\ x:
        \tone \xrightarrow{\mf{Pure}} \tone
      }
    } {
      \Gamma, x: A \semi \Pi \semi \Phi_1 \vdash
      \mf{val}\ \lambda(). V\ x:
      \tone \xrightarrow{\mf{Pure}} \tone \bang \mf{Pure}
    }
  } {
    (1) \quad
    \Gamma \semi \Pi \semi \Phi_1 \vdash
    \lambda x.\mf{val}\ \lambda (). V\ x:
    A \xrightarrow{\mf{Pure}} \tone \xrightarrow{\mf{Pure}} \tone
  }

  \inferrule*[right=TC-Fold] {
    (1)
    \and
    \inferrule*[right=TV-Unit] { } {
      \Gamma \semi \Pi \semi \emptyset \vdash (): \tone
    }
    \and
    \Gamma \semi \Pi \semi \Phi_2 \vdash W: \mf{List}\ A
  } {
    \Gamma \semi \Pi \semi \Phi_1 \Join \Phi_2 \vdash
    \mf{fold}\ (\lambda x.\cdots)\ ()\ W:
    \tone \bang \mf{Pure}
  }
\end{mathpar}

\paragraph{List Click-Search} We have
$\mf{clicks}_{id}\ V \triangleq
\mf{iter}\ (\lambda x.\mf{click}_{id}\ x)\ V$.
The derived typing rule is given as follows:

\begin{mathpar}
  \inferrule*[right=TC-Clicks] {
    \Gamma \semi \Pi \semi \Phi \vdash V: \mf{List\ ElemInst}
  } {
    \Gamma \semi \Pi \semi \Phi \vdash \mf{clicks}_{id}\ V: \tone
    \bang \mf{Mut}
  }
\end{mathpar}

We derive this rule using the previously derived rule \textsc{TC-Iter}
as follows:

\begin{mathpar}
  \inferrule*[right=TC-Iter] {
    \inferrule*[right=TV-Lam] {
      \inferrule*[Right=TC-Init] {
        \inferrule*[Right=TV-Var] { } {
          \Gamma, x: \mf{ElemInst} \semi \Pi \semi \emptyset \vdash
          x: \mf{ElemInst}
        }
      } {
        \Gamma, x: \mf{ElemInst} \semi \Pi \semi \emptyset \vdash
        \mf{click}_{id}\ x: \tone \bang \mf{Mut}
      }
    } {
      \Gamma \semi \Pi \semi \emptyset \vdash \lambda x.\mf{click}_{id}\ x:
      \mf{ElemInst} \xrightarrow{\mf{Mut}}\ \tone
    }
    \
    \Gamma \semi \Pi \semi \Phi \vdash V: \mf{List\ ElemInst}
  } {
    \Gamma \semi \Pi \semi \Phi \vdash
    \mf{iter}\ (\lambda x.\mf{click}_{id}\ x)\ V:
    \tone \bang \mf{Mut}
  }
\end{mathpar}

\paragraph{List Click-Run} We have
$\mf{runClicks}\ V \triangleq
\mf{iter}\ (\lambda x.\mf{runClick}\ x)\ V$.
The derived typing rule is given as follows:

\begin{mathpar}
  \inferrule*[right=TC-RunClicks] {
    \Gamma \semi \Pi \semi \Phi \vdash V: \mf{List\ Attr}
  } {
    \Gamma \semi \Pi \semi \Phi \vdash \mf{runClicks}\ V: \tone
    \bang \mf{Mut}
  }
\end{mathpar}

We derive this rule using the previously derived rule \textsc{TC-Iter}
as follows:

\begin{mathpar}
  \inferrule*[right=TC-Iter] {
    \inferrule*[right=TV-Lam] {
      \inferrule*[Right=TC-Init] {
        \inferrule*[Right=TV-Var] { } {
          \Gamma, x: \mf{Attr} \semi \Pi \semi \emptyset \vdash
          x: \mf{Attr}
        }
      } {
        \Gamma, x: \mf{Attr} \semi \Pi \semi \emptyset \vdash
        \mf{runClick}\ x: \tone \bang \mf{Mut}
      }
    } {
      \Gamma \semi \Pi \semi \emptyset \vdash \lambda x.\mf{runClick}\ x:
      \mf{Attr} \xrightarrow{\mf{Mut}}\ \tone
    }
    \
    \Gamma \semi \Pi \semi \Phi \vdash V: \mf{List\ Attr}
  } {
    \Gamma \semi \Pi \semi \Phi \vdash
    \mf{iter}\ (\lambda x.\mf{runClick}\ x)\ V:
    \tone \bang \mf{Mut}
  }
\end{mathpar}

\subsection{Weakening}
\label{appdx:wkn}

We have standard weakening results on $\Gamma$ and
on $\Phi$ (for $\mf{Succ}$-phase pointers only).
Our weakening result for $\Pi$ is comparatively non-standard, since
arbitrarily weakening on $\Pi$ would be unsound:
for example, extending $\Pi[p]$ in rule \textsc{TC-EvalI}
might violate the premise that $\Pi[p] \cap \Ell = \emptyset$.
To sidestep this issue, we only permit weakening $\Pi$ at pointers
which are \emph{initial-phase} and \emph{unrelated to the term},
both of which can be determined by $\Phi$.

When weakening typing judgments involving some $\Phi$,
if we have $\Pi \sqsubseteq_{\Phi'} \Pi'$,
we require that $\Phi \Join \Phi'$ is defined,
implying that $\Phi$ and $\Phi'$ are \emph{sufficiently disjoint}:
in particular, we get that, if $p \in \dom(\Phi)$, then $\Pi[p] = \Pi'[p]$.

\begin{restatable}[$\Pi$-weakening, values]{lemma}{lamhookpiwknv}
\label{thm:lamhook-piwknv}
If
\begin{enumerate}[label=(\roman*)]
\item $\Gamma \semi \Pi \semi \Phi \vdash V: A$,
\item $\Pi \sqsubseteq_{\Phi'} \Pi'$, and
\item $\Phi \Join \Phi'$ is defined,
\end{enumerate}
then $\Gamma \semi \Pi' \semi \Phi \vdash V: A$.
\end{restatable}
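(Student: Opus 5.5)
The plan is to prove this lemma simultaneously with its companions for computations ($\Gamma \semi \Pi \semi \Phi \vdash M: A \bang E$) and bodies ($\Gamma \semi \Pi \semi \Phi \vdash D: \Ell$), each with the same hypotheses (ii) and (iii). The argument is a mutual induction on the typing derivation (i). The value statement alone is not inductive, since \textsc{TV-Lam} and \textsc{TV-Comp} descend into computations and bodies.

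First I will establish a small auxiliary fact about context merging. If $\Phi = \Phi_1 \Join \Phi_2$ and $\Phi \Join \Phi'$ is defined, then $\Phi_1 \Join \Phi'$ and $\Phi_2 \Join \Phi'$ are both defined. This follows by induction on the merge derivation, or pointwise. Any $p \in \dom(\Phi_i)$ has the same phase in $\Phi_i$ as in $\Phi$. If that phase is $\mf{Init}$, then $p \notin \dom(\Phi')$. If it is $\mf{Succ}$, then $p$ is either absent from $\Phi'$ or $\mf{Succ}$ there. The same reasoning covers triple merges such as the one in \textsc{TC-Fold}. With this fact, every rule that splits its phase context lets us apply the induction hypothesis to each premise with the same $\Pi'$ and $\Phi'$, and then reassemble the conclusion with the original split.

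Most cases are therefore routine: variables, constants, \textsc{TV-Lam}, \textsc{TV-Comp}, all computation rules other than the two eval rules, \textsc{TB-Val}, \textsc{TB-Let} and \textsc{TB-State}. In these cases $\Pi$ occurs only as a context that is passed through. The side conditions $\mf{Succ}(\Phi)$ and $\ell \notin \dom(\Ell)$ do not mention $\Pi$, so they are unaffected. The genuinely $\Pi$-sensitive cases are the following.

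\textsc{TV-Setter}: we have $\ell: A \in \Pi[p]$, so $p \in \dom(\Pi)$. Condition (iii) of $\Pi \sqsubseteq_{\Phi'} \Pi'$ gives either $\Pi[p] \subseteq \Pi'[p]$ or $\Pi[p] = \Pi'[p]$, and in both cases $\ell: A \in \Pi'[p]$.

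\textsc{TV-CompI}: the premise $(\Phi_1 \Join \Phi_2)[p] = \mf{Succ}$ puts $p$ in $\dom(\Phi)$. Definedness of $\Phi \Join \Phi'$ excludes $\Phi'[p] = \mf{Init}$, so the ``otherwise'' clause of (iii) gives $\Pi'[p] = \Pi[p]$. Here $p \in \dom(\Pi)$ because the premise $D: \Pi[p]$ must be meaningful, and then $\dom(\Pi) \subseteq \dom(\Pi')$ by (i). So the first premise, rewritten via the induction hypothesis, still has type $\Pi'[p]$.

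\textsc{TC-EvalS} is identical: $\Pi[p] = \Pi'[p]$ preserves both $D_0: \Pi[p]$ and $\Ell \subseteq \Pi[p]$.

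\textsc{TC-EvalI} is the case I expect to be the main obstacle. There $p: \mf{Init}$ is in $\Phi$, so linearity of the merge forces $p \notin \dom(\Phi')$, and again $\Pi'[p] = \Pi[p]$. This keeps the disjoint union $\Ell \uplus \Pi'[p]$ well formed; this is exactly the unsoundness the restriction on $\Pi$-weakening is designed to avoid. The care here lies in handling the outer context $(\Phi_1 \Join \Phi_2), p: \mf{Init}$ correctly. We apply the auxiliary fact to the inner merge $\Phi_1 \Join \Phi_2$, whose merge with $\Phi'$ is defined as a sub-merge of $\Phi \Join \Phi'$. Then we check that extending with $p: \mf{Init}$ remains compatible.
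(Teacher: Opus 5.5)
Your proposal is correct and follows essentially the same route as the paper. The paper likewise uses a mutual induction with the computation and body versions, splits definedness of $\Phi \Join \Phi'$ across premises via the definition of $\Join$, and treats the only $\Pi$-sensitive cases the same way: \textsc{TV-Setter} uses $\Pi[p] \subseteq \Pi'[p]$, and \textsc{TV-CompI}, \textsc{TC-EvalS} and \textsc{TC-EvalI} use the fact that $p \in \dom(\Phi)$ rules out $\Phi'[p] = \mf{Init}$, so $\Pi'[p] = \Pi[p]$. The extra compatibility check you plan for \textsc{TC-EvalI} is vacuous, since the conclusion is re-derived in the unchanged phase context $(\Phi_1 \Join \Phi_2), p{:}\,\mf{Init}$.
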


\begin{proof}
  By mutual induction on the derivation of the typing judgment \emph{(i)}
  alongside Lemmas \ref{thm:lamhook-piwknc} and \ref{thm:lamhook-piwknb}.

  \pcase{TV-Var}: We have
  \begin{mathpar}
    \inferrule*[right=TV-Var] {
      \inferrule* {(1)} {x: A \in \Gamma}
      \and
      \inferrule* {(2)} {\mf{Succ}(\Phi)}
    } {
      \Gamma \semi \Pi \semi \Phi \vdash x: A
    }
  \end{mathpar}
  \begin{itemize}
    \item We can immediately derive the conclusion:
    \begin{mathpar}
      \inferrule*[right=TV-Var] {
        \inferrule*[right=(1)] { } {x: A \in \Gamma}
        \and
        \inferrule*[right=(2)] { } {\mf{Succ}(\Phi)}
      } {
        \Gamma \semi \Pi' \semi \Phi \vdash x: A
      }
    \end{mathpar}
  \end{itemize}

  \textbf{Cases}
  \textsc{TV-Unit},
  \textsc{TV-Str},
  \textsc{TV-True},
  \textsc{TV-False}, and
  \textsc{TV-Nil} are similarly immediately rederivable.

  \pcase{TV-Pair}: We have
  \begin{mathpar}
    \inferrule*[right=TV-Pair] {
      \inferrule* {(1)} {\Gamma \semi \Pi \semi \Phi_1 \vdash V: A}
      \and
      \inferrule* {(2)} {\Gamma \semi \Pi \semi \Phi_2 \vdash W: B}
    } {
      \Gamma \semi \Pi \semi \Phi_1 \Join \Phi_2 \vdash (V, W):
      A \times B
    }
  \end{mathpar}
  \begin{itemize}
    \item From premise \emph{(iii)} and definition of $\Join$,
    it follows that (3) $\Phi_1 \Join \Phi'$ is defined
    and (4) $\Phi_2 \Join \Phi'$ is defined.
    \item By inductive hypothesis on (1), (3), and premise \emph{(ii)},
    we get (5) that $\Gamma \semi \Pi' \semi \Phi_1 \vdash V: A$.
    \item Likewise, by inductive hypothesis on (2), (4),
    and premise \emph{(ii)},
    we get (6) that\\
    $\Gamma \semi \Pi' \semi \Phi_2 \vdash W: B$.
    \item We derive the conclusion as follows:
    \begin{mathpar}
      \inferrule*[right=TV-Pair] {
        \inferrule*[right=(5)] { } {
          \Gamma \semi \Pi' \semi \Phi_1 \vdash V: A
        }
        \and
        \inferrule*[right=(6)] { } {
          \Gamma \semi \Pi' \semi \Phi_2 \vdash W: B
        }
      } {
        \Gamma \semi \Pi' \semi \Phi_1 \Join \Phi_2 \vdash (V, W):
        A \times B
      }
    \end{mathpar}
  \end{itemize}

  \textbf{Cases}
  \textsc{TV-Cons},
  \textsc{TV-Attr},
  \textsc{TV-OnClick},
  \textsc{TV-Text},
  \textsc{TV-Tag},
  \textsc{TV-TextI}, and
  \textsc{TV-TagI} can be shown analogously by inductive hypothesis.

  \pcase{TV-Lam}: We have
  \begin{mathpar}
    \inferrule*[right=TV-Lam] {
      \inferrule* {(1)} {
        \Gamma, x: A \semi \Pi \semi \Phi \vdash M: B \bang E
      }
      \and
      \inferrule* {(2)} {\mf{Succ}(\Phi)}
    } {
      \Gamma \semi \Pi \semi \Phi \vdash \lambda x.M: A \xrightarrow{E} B
    }
  \end{mathpar}
  \begin{itemize}
    \item By mutual inductive hypothesis (Lemma \ref{thm:lamhook-piwknc}) on (1)
    and premises \emph{(ii)} and \emph{(iii)}, we get (3) that
    $\Gamma, x: A \semi \Pi' \semi \Phi \vdash M: B \bang E$.
    \item We derive the conclusion as follows:
    \begin{mathpar}
      \inferrule*[right=TV-Lam] {
        \inferrule*[right=(3)] { } {
          \Gamma, x: A \semi \Pi' \semi \Phi \vdash e: B \bang E
        }
      \and
      \inferrule*[right=(2)] { } {\mf{Succ}(\Phi)}
      } {
        \Gamma \semi \Pi' \semi \Phi \vdash \lambda x.M: A \xrightarrow{E} B
      }
    \end{mathpar}
  \end{itemize}

  \pcase{TV-Comp}: We have
  \begin{mathpar}
    \inferrule*[right=TV-Comp] {
      \inferrule* {(1)} {\Gamma \semi \Pi \semi \Phi \vdash D: \Ell}
    } {
      \Gamma \semi \Pi \semi \Phi \vdash \mf{component}\ \{D\}:
      \mf{Element}
    }
  \end{mathpar}
  \begin{itemize}
    \item By mutual inductive hypothesis (Lemma \ref{thm:lamhook-piwknb}) on (1)
    and premise \emph{(ii)}, we get
    (2) that\\$\Gamma \semi \Pi' \semi \Phi \vdash D: \Ell$.
    \item We derive the conclusion as follows:
    \begin{mathpar}
      \inferrule*[right=TV-Comp] {
        \inferrule*[right=(2)] { } {
          \Gamma \semi \Pi' \semi \Phi \vdash D: \Ell
        }
      } {
        \Gamma \semi \Pi' \semi \Phi \vdash \mf{component}\ \{D\}: \mf{Element}
      }
    \end{mathpar}
  \end{itemize}

  \pcase{TV-CompI}: We have
  \begin{mathpar}
    \inferrule*[right=TV-CompI] {
      \inferrule* {(1)} {\Gamma \semi \Pi \semi \Phi \vdash D: \Pi[p]}
      \and
      \inferrule* {(2)} {\Phi[p] = \mf{Succ}}
      \and
      \inferrule* {(3)} {\Gamma \semi \Pi \semi \Phi \vdash V: \mf{ElemInst}}
    } {
      \Gamma \semi \Pi \semi \Phi_1 \Join \Phi_2
      \vdash \compInst{p}{D}{V}: \mf{ElemInst}
    }
  \end{mathpar}
  \begin{itemize}
    \item By (2) and premise \emph{(iii)}, either $\Phi'[p]=\mf{Succ}$
    or $p \notin \dom(\Phi')$. In either case, premise \emph{(ii)}
    then gives us (4) that $\Pi'[p] = \Pi[p]$.
    \item By mutual inductive hypothesis (Lemma \ref{thm:lamhook-piwknb})
    on (1) and premise \emph{(ii)}, we get that\\
    $\Gamma \semi \Pi' \vdash D: \Pi[p]$.
    By (4), we equivalently have (5) that $\Gamma \semi \Pi' \vdash D: \Pi'[p]$.
    \item By inductive hypothesis on (3)
    and premises \emph{(ii)} and \emph{(iii)},
    we get (6) that\\$\Gamma \semi \Pi' \semi \Phi \vdash V: \mf{ElemInst}$.
    \item We derive the conclusion as follows:
    \begin{mathpar}
      \inferrule*[right=TV-CompI] {
        \inferrule*[right=(5)] { } {\Gamma \semi \Pi' \vdash D: \Pi'[p]}
        \and
        \inferrule*[right=(2)] { } {\Phi[p] = \mf{Succ}}
        \and
        \inferrule*[right=(6)] { } {
          \Gamma \semi \Pi' \semi \Phi \vdash V: \mf{ElemInst}
        }
      } {
        \Gamma \semi \Pi' \semi \Phi \vdash \compInst{p}{D}{V}:
        \mf{ElemInst}
      }
    \end{mathpar}
  \end{itemize}

  \pcase{TV-Setter}: We have
  \begin{mathpar}
    \inferrule*[right=TV-Setter] {
      \inferrule* {(1)} {\ell: A \in \Pi[p]}
      \and
      \inferrule* {(2)} {\mf{Succ}(\Phi)}
    } {
      \Gamma \semi \Pi \semi \Phi \vdash \setter^\ell_p:
      A \xrightarrow{\mf{Mut}} \tone
    }
  \end{mathpar}
  \begin{itemize}
    \item By (1) and premise \emph{(ii)}, we get (3) that
    $\ell: A \in \Pi'[p]$ (regardless of whether $p \in \dom(\Phi')$).
    \item We derive the conclusion as follows:
    \begin{mathpar}
      \inferrule*[right=TV-Setter] {
        \inferrule*[right=(3)] { } {\ell: A \in \Pi'[p]}
        \and
        \inferrule*[right=(2)] { } {\mf{Succ}(\Phi)}
      } {
        \Gamma \semi \Pi' \semi \Phi \vdash \setter^\ell_p:
        A \xrightarrow{\mf{Mut}} \tone
      }
    \end{mathpar}
  \end{itemize}
\end{proof}

\begin{restatable}[$\Pi$-weakening, computations]{lemma}{lamhookpiwknc}
\label{thm:lamhook-piwknc}
If
\begin{enumerate}[label=(\roman*)]
\item $\Gamma \semi \Pi \semi \Phi \vdash M: A \bang E$,
\item $\Pi \sqsubseteq_{\Phi'} \Pi'$, and
\item $\Phi \Join \Phi'$ is defined,
\end{enumerate}
then $\Gamma \semi \Pi' \semi \Phi \vdash M: A \bang E$.
\end{restatable}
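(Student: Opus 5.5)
The proof goes by mutual induction on the derivation of \emph{(i)}, together with Lemma~\ref{thm:lamhook-piwknv} (values) and Lemma~\ref{thm:lamhook-piwknb} (bodies), following the structure of the value case just shown. Throughout we fix $\Pi \sqsubseteq_{\Phi'} \Pi'$. Whenever a rule splits its phase context as $\Phi = \Phi_1 \Join \Phi_2$ (or with three parts, as in \textsc{TC-Fold}), definedness of $\Phi \Join \Phi'$ gives definedness of each $\Phi_i \Join \Phi'$. The reason is that a pointer with $p : \mf{Init}$ in $\Phi'$ cannot occur in $\Phi$, and so cannot occur in any $\Phi_i$. Conversely, a pointer $p : \mf{Init}$ in $\Phi_i$ also appears as $p:\mf{Init}$ in $\Phi$, so it is absent from $\Phi'$. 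This lets us invoke the induction hypotheses on every premise.

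Most cases are routine. \textsc{TC-Val}, \textsc{TC-App}, \textsc{TC-Fst}, \textsc{TC-Snd}, \textsc{TC-Fold}, and the runtime constructs \textsc{TC-Init}, \textsc{TC-Click}, \textsc{TC-Run}, and \textsc{TC-Update} follow directly from Lemma~\ref{thm:lamhook-piwknv} applied to their value premises. \textsc{TC-Let}, \textsc{TC-If}, and \textsc{TC-Sub} follow from the induction hypothesis on their computation premises, with $\Gamma$ extended in the \textsc{TC-Let} case; this is harmless because the lemma is stated for arbitrary $\Gamma$. The side conditions $\mf{Succ}(\cdot)$ and the effects are unchanged, so each rule can be reapplied verbatim with $\Pi'$ in place of $\Pi$.

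The substantive cases are \textsc{TC-EvalI} and \textsc{TC-EvalS}, because their premises mention $\Pi[p]$ directly and we must show that $\Pi'[p] = \Pi[p]$.

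For \textsc{TC-EvalS}, we have $(\Phi_1 \Join \Phi_2)[p] = \mf{Succ}$. Definedness of $\Phi \Join \Phi'$ then gives that either $p \notin \dom(\Phi')$ or $\Phi'[p] = \mf{Succ}$. In both situations, clause (iii) of $\sqsubseteq_{\Phi'}$ gives $\Pi'[p] = \Pi[p]$; here $p \in \dom(\Pi)$ because $\Pi[p]$ appears in a premise. We then rewrite the premises $D_0 : \Pi[p]$ and $\Ell \subseteq \Pi[p]$ using this equality, apply Lemma~\ref{thm:lamhook-piwknb} to the two body premises, and reapply the rule.

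For \textsc{TC-EvalI}, the conclusion contains $p : \mf{Init}$ in $\Phi$. By linearity of $\Join$, definedness forces $p \notin \dom(\Phi')$, and so again $\Pi'[p] = \Pi[p]$. The disjointness condition in $\Ell \uplus \Pi[p]$ is therefore preserved, and the two body premises are weakened by Lemma~\ref{thm:lamhook-piwknb}.

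The main obstacle I expect is making the claim ``$p \notin \dom(\Phi')$ or $\Phi'[p]=\mf{Succ}$ implies $\Pi'[p]=\Pi[p]$'' precise. Clause (iii) of the heap context order as stated only constrains pointers by the value of $\Phi'[p]$, so I would first record a small auxiliary fact: for $p \in \dom(\Pi)$, we have $\Pi'[p] = \Pi[p]$ unless $\Phi'[p] = \mf{Init}$. This reads ``otherwise'' in clause (iii) as also covering $p \notin \dom(\Phi')$, as the value case for \textsc{TV-CompI} already does. With that fact in hand, both \textsc{TC-Eval} cases close uniformly.
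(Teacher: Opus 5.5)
Your proposal is correct and follows essentially the same route as the paper: mutual induction with the value and body $\Pi$-weakening lemmas, distributing definedness of $\Phi \Join \Phi'$ over the premises, and in the \textsc{TC-EvalI} and \textsc{TC-EvalS} cases using the phase of $p$ to get $p \notin \dom(\Phi')$ or $\Phi'[p] = \mf{Succ}$, hence $\Pi'[p] = \Pi[p]$. Your explicit reading of clause (iii) of $\sqsubseteq_{\Phi'}$ as also covering $p \notin \dom(\Phi')$ is exactly the reading the paper relies on implicitly in those cases.
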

\begin{proof}
  By mutual induction on the derivation of the typing judgment \emph{(i)}
  alongside Lemmas \ref{thm:lamhook-piwknv} and \ref{thm:lamhook-piwknb}.

  \pcase{TC-Val}: We have
  \begin{mathpar}
    \inferrule*[right=TC-Val] {
      \inferrule* {(1)} {\Gamma \semi \Pi \semi \Phi \vdash V: A}
    } {
      \Gamma \semi \Pi \semi \Phi \vdash \mf{val}\ V: A \bang \mf{Pure}
    }
  \end{mathpar}
  \begin{itemize}
    \item By mutual inductive hypothesis (Lemma \ref{thm:lamhook-piwknv}) on (1)
    and premises \emph{(ii)} and \emph{(iii)}, we get
    (2) that $\Gamma \semi \Pi' \semi \Phi \vdash V: A$.
    \item We derive the conclusion as follows:
    \begin{mathpar}
      \inferrule*[right=TC-Val] {
        \inferrule*[right=(3)] { } {
          \Gamma \semi \Pi' \semi \Phi \vdash V: A
        }
      } {
        \Gamma \semi \Pi' \semi \Phi \vdash \mf{val}\ V: A \bang \mf{Pure}
      }
    \end{mathpar}
  \end{itemize}

  \textbf{Cases}
  \textsc{TC-App},
  \textsc{TC-If},
  \textsc{TC-Fst},
  \textsc{TC-Snd},
  \textsc{TC-Fold},
  \textsc{TC-Init},
  \textsc{TC-Click},
  \textsc{TC-RunClick}, and
  \textsc{TC-Update} can be shown analogously by mutual inductive hypothesis.

  \pcase{TC-Let}: We have
  \begin{mathpar}
    \inferrule*[right=TC-Let] {
      \inferrule* {(1)} {
        \Gamma \semi \Pi \semi \Phi_1 \vdash M: A \bang E
      }
      \and
      \inferrule* {(2)} {
        \Gamma, x: A_1 \semi \Pi \semi \Phi_2 \vdash N: B \bang E
      }
    } {
      \Gamma \semi \Pi \semi \Phi_1 \Join \Phi_2 \vdash \elet{x}{M}{N}:
      B \bang E
    }
  \end{mathpar}
  \begin{itemize}
    \item From premise \emph{(iii)} and definition of $\Join$,
    it follows that (3) $\Phi_1 \Join \Phi'$ is defined
    and (4) $\Phi_2 \Join \Phi'$ is defined.
    \item By inductive hypothesis on (1), (3), and premise \emph{(ii)},
    we get (5) that $\Gamma \semi \Pi' \semi \Phi_1 \vdash M: A \bang E$.
    \item Likewise, by inductive hypothesis on (2), (4),
    and premise \emph{(ii)}, we get (6) that\\
    $\Gamma, x: A \semi \Pi' \semi \Phi_2 \vdash N: B \bang E$.
    \item We derive the conclusion as follows:
    \begin{mathpar}
    \inferrule*[right=TC-Let] {
      \inferrule*[right=(5)] { } {
        \Gamma \semi \Pi' \semi \Phi_1 \vdash M: A \bang E
      }
      \and
      \inferrule*[right=(6)] { } {
        \Gamma, x: A \semi \Pi' \semi \Phi_2 \vdash N: B \bang E
      }
    } {
      \Gamma \semi \Pi' \semi \Phi_1 \Join \Phi_2 \vdash \elet{x}{M}{N}:
      B \bang E
    }
    \end{mathpar}
  \end{itemize}

  \pcase{TC-Sub}: We have
  \begin{mathpar}
    \inferrule*[right=TC-Sub] {
      \inferrule* {(1)} {
        \Gamma \semi \Pi \semi \Phi \vdash M: A \bang \mf{Pure}
      }
    } {
      \Gamma \semi \Pi \semi \Phi \vdash M: A \bang \mf{Mut}
    }
  \end{mathpar}
  \begin{itemize}
    \item By inductive hypothesis on (1) and premises
    \emph{(ii)} and \emph{(iii)}, we get (2) that\\
    $\Gamma \semi \Pi' \semi \Phi \vdash M: A \bang \mf{Pure}$.
    \item We derive the conclusion as follows:
    \begin{mathpar}
      \inferrule*[right=TC-Sub] {
        \inferrule*[right=(2)] { } {
          \Gamma \semi \Pi' \semi \Phi \vdash M: A \bang \mf{Pure}
        }
      } {
        \Gamma \semi \Pi' \semi \Phi \vdash M: A \bang \mf{Mut}
      }
    \end{mathpar}
  \end{itemize}

  \pcase{TC-EvalI}: We have
  \begin{mathpar}
    \inferrule*[right=TC-EvalI] {
      \inferrule* {(1)} {\Gamma \semi \Pi \vdash D_0: \Ell \uplus \Pi[p]}
      \and
      \inferrule* {(2)} {\Gamma \semi \Pi \vdash D: \Ell}
      \and
      \inferrule* {(3)} {\mf{Succ}(\Phi)}
    } {
      \Gamma \semi \Pi \semi \Phi,p: \mf{Init} \vdash
      \eval{D_0}{\mf{Init}}{p}{D}: \mf{ElemInst} \bang \mf{Mut}
    }
  \end{mathpar}
  \begin{itemize}
    \item By premise \emph{(iii)}, we must have that $p \notin \dom(\Phi')$.
    Premise \emph{(ii)} then gives (4) $\Pi[p] = \Pi'[p]$.
    \item By mutual inductive hypothesis (Lemma \ref{thm:lamhook-piwknb}) on (1)
    and premise \emph{(ii)}, we get that\\
    $\Gamma \semi \Pi' \vdash D_0: \Ell \uplus \Pi[p]$.
    By (4), this is equivalent to (5) that
    $\Gamma \semi \Pi' \vdash D_0: \Ell \uplus \Pi'[p]$.
    \item By mutual inductive hypothesis (Lemma \ref{thm:lamhook-piwknb}) on (2)
    and premise \emph{(ii)}, we get (6) that\\
    $\Gamma \semi \Pi' \vdash D: \Ell$.
    \item We derive the conclusion as follows:
    \begin{mathpar}
      \inferrule*[right=TC-EvalI] {
        \inferrule*[right=(5)] { } {
          \Gamma \semi \Pi' \vdash D_0: \Ell \uplus \Pi'[p]
        }
        \and
        \inferrule*[right=(6)] { } {
          \Gamma \semi \Pi' \vdash D: \Ell
        }
        \and
        \inferrule*[right=(3)] { } {\mf{Succ}(\Phi)}
      } {
        \Gamma \semi \Pi' \semi \Phi,p: \mf{Init} \vdash
        \eval{D_0}{\phi}{p}{D}: \mf{ElemInst} \bang \mf{Mut}
      }
    \end{mathpar}
  \end{itemize}

  \pcase{TC-EvalS}: We have
  \begin{mathpar}
    \inferrule*[right=TC-EvalS] {
      \inferrule* {(1)} {\Gamma \semi \Pi \vdash D_0: \Pi[p]}
      \and
      \inferrule* {(2)} {\Gamma \semi \Pi \vdash D: \Ell}
      \and
      \inferrule* {(3)} {\Ell \subseteq \Pi[p]}
      \and
      \inferrule* {(4)} {\mf{Succ}(\Phi)}
    } {
      \Gamma \semi \Pi \semi \Phi,p: \mf{Succ} \vdash
      \eval{D_0}{\mf{Succ}}{p}{D}: \mf{ElemInst} \bang \mf{Mut}
    }
  \end{mathpar}
  \begin{itemize}
    \item Since we start with $\Phi, p: \mf{Succ}$, by premise \emph{(iii)},
    either $\Phi'[p]=\mf{Succ}$ or $p \notin \dom(\Phi')$. In either case,
    premise \emph{(ii)} then gives us (5) that $\Pi'[p] = \Pi[p]$.
    \item By mutual inductive hypothesis (Lemma \ref{thm:lamhook-piwknb}) on (1)
    and premise \emph{(ii)}, we get that\\
    $\Gamma \semi \Pi' \vdash D_0: \Pi[p]$.
    By (5), this is equivalent to (6) that
    $\Gamma \semi \Pi' \vdash D_0: \Pi'[p]$.
    \item Also by mutual inductive hypothesis (Lemma \ref{thm:lamhook-piwknb})
    on (2) and premise \emph{(ii)}, we get (7) that\\
    $\Gamma \semi \Pi' \vdash D: \Ell$.
    \item By (5), (3) is equivalent to (8) that $\Ell \subseteq \Pi'[p]$.
    \item We derive the conclusion as follows:
    \begin{mathpar}
      \inferrule*[right=TC-EvalS] {
        \inferrule*[right=(6)] { } {\Gamma \semi \Pi' \vdash D_0: \Pi'[p]}
        \
        \inferrule*[right=(7)] { } {\Gamma \semi \Pi' \vdash D: \Ell}
        \
        \inferrule*[right=(8)] { } {\Ell \subseteq \Pi'[p]}
        \
        \inferrule*[right=(4)] { } {\mf{Succ}(\Phi)}
      } {
        \Gamma \semi \Pi' \semi \Phi,p: \mf{Succ} \vdash
        \eval{D_0}{\mf{Succ}}{p}{D}: \mf{ElemInst} \bang \mf{Mut}
      }
    \end{mathpar}
  \end{itemize}
\end{proof}

\begin{restatable}[$\Pi$-weakening, bodies]{lemma}{lamhookpiwknb}
\label{thm:lamhook-piwknb}
If
\begin{enumerate}[label=(\roman*)]
\item $\Gamma \semi \Pi \semi \Phi \vdash D: \Ell$,
\item $\Pi \sqsubseteq_{\Phi'} \Pi'$, and
\item $\Phi \Join \Phi'$ is defined,
\end{enumerate}
then $\Gamma \semi \Pi' \semi \Phi \vdash D: \Ell$.
\end{restatable}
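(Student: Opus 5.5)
We prove the statement by mutual induction on the body typing derivation \emph{(i)}, together with Lemmas \ref{thm:lamhook-piwknv} (values) and \ref{thm:lamhook-piwknc} (computations). There are exactly three body rules, so the proof splits into the cases \textsc{TB-Val}, \textsc{TB-Let}, and \textsc{TB-State}. The body rules never mention $\Pi$ directly: $\Pi$ appears in premises only as the context passed unchanged to subderivations. Hence every case reduces to applying an induction hypothesis to each premise and reassembling with the same rule.

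\pcase{TB-Val}: the premise is $\Gamma \semi \Pi \semi \Phi \vdash V: \mf{Element}$. Lemma \ref{thm:lamhook-piwknv} applies to it directly, using premises \emph{(ii)} and \emph{(iii)}. Reapplying \textsc{TB-Val} yields the conclusion.

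\pcase{TB-Let}: the context is split as $\Phi_1 \Join \Phi_2$, with premises $\Gamma \semi \Pi \semi \Phi_1 \vdash M: A \bang \mf{Pure}$ and $\Gamma, x: A \semi \Pi \semi \Phi_2 \vdash D: \Ell$. As in the \textsc{TV-Pair} and \textsc{TC-Let} cases of the earlier lemmas, we first argue from premise \emph{(iii)} that $\Phi_1 \Join \Phi'$ and $\Phi_2 \Join \Phi'$ are both defined. This holds because a pointer at $\mf{Init}$ in $\Phi'$ cannot occur in $\Phi_1 \Join \Phi_2$, hence not in either part. Conversely, a pointer at $\mf{Init}$ in $\Phi_i$ occurs at $\mf{Init}$ in $\Phi_1 \Join \Phi_2$, hence not in $\Phi'$. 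We then apply Lemma \ref{thm:lamhook-piwknc} to the computation premise (the effect $\mf{Pure}$ is preserved) and the inductive hypothesis to the body premise, and reassemble with \textsc{TB-Let}.

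\pcase{TB-State}: this is handled identically. We split definedness of $\Phi_1 \Join \Phi'$ and $\Phi_2 \Join \Phi'$ as above, apply Lemma \ref{thm:lamhook-piwknv} to $V: A$, and apply the inductive hypothesis to the continuation body under the extended $\Gamma, x: A, x_\mf{set}: A \xrightarrow{\mf{Mut}} \tone$. The side condition $\ell \notin \dom(\Ell)$ does not mention $\Pi$, so it carries over unchanged, and \textsc{TB-State} gives the result.

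No case is genuinely hard. The difficult cases of the mutual induction, \textsc{TC-EvalI}, \textsc{TC-EvalS}, and \textsc{TV-CompI}, where $\Pi[p]$ must be shown unchanged, are discharged in the companion lemmas. The only point needing care here is the auxiliary fact that definedness of $(\Phi_1 \Join \Phi_2) \Join \Phi'$ implies definedness of each $\Phi_i \Join \Phi'$. I would state it once as a small lemma, proved by case analysis on the phase of each pointer via the definition of $\Join$, and reuse it in all three weakening lemmas.
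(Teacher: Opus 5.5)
Your proposal is correct and matches the paper's proof. The paper also uses mutual induction with the value and computation weakening lemmas, splits definedness of $\Phi \Join \Phi'$ into $\Phi_1 \Join \Phi'$ and $\Phi_2 \Join \Phi'$ in the \textsc{TB-Let} and \textsc{TB-State} cases, and reassembles each case with the same rule. The only difference is that you spell out why the split holds (via the $\mf{Init}$-linearity case analysis), where the paper just cites the definition of $\Join$.
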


\begin{proof}
  By mutual induction on the derivation of the typing judgment \emph{(i)}
  alongside Lemmas \ref{thm:lamhook-piwknv} and \ref{thm:lamhook-piwknc}.

  \pcase{TB-Val}: We have
  \begin{mathpar}
    \inferrule*[right=TB-Val] {
      \inferrule* {(1)} {
        \Gamma \semi \Pi \semi \Phi \vdash V: \mf{Element}
      }
    } {
      \Gamma \semi \Pi \semi \Phi \vdash \mf{val}\ V: \emptyset
    }
  \end{mathpar}
  \begin{itemize}
    \item By mutual inductive hypothesis (Lemma \ref{thm:lamhook-piwknv})
    on (1) and premises \emph{(ii)} and \emph{(iii)}, we get (2) that
    $\Gamma \semi \Pi' \semi \Phi \vdash V: \mf{Element}$.
    \item We derive the conclusion as follows:
    \begin{mathpar}
      \inferrule*[right=TB-Val] {
        \inferrule*[right=(2)] { } {
          \Gamma \semi \Pi' \semi \Phi \vdash V: \mf{Element}
        }
      } {
        \Gamma \semi \Pi' \semi \Phi \vdash \mf{val}\ V: \emptyset
      }
    \end{mathpar}
  \end{itemize}

  \pcase{TB-Let}: We have
  \begin{mathpar}
    \inferrule*[right=TB-Let] {
      \inferrule* {(1)} {
        \Gamma \semi \Pi \semi \Phi_1 \vdash M: A \bang \mf{Pure}
      }
      \and
      \inferrule* {(2)} {
        \Gamma, x: A \semi \Pi \semi \Phi_2 \vdash D: \Ell
      }
    } {
      \Gamma \semi \Pi \semi \Phi_1 \Join \Phi_2 \vdash \elet{x}{M}{D}: \Ell
    }
  \end{mathpar}
  \begin{itemize}
    \item From premise \emph{(iii)} and definition of $\Join$, it follows that
    (3) $\Phi_1 \Join \Phi'$ is defined and
    (4) that $\Phi_2 \Join \Phi'$ is defined.
    \item By mutual inductive hypothesis
    on (1), (3), and premise \emph{(ii)}, we get (5) that\\
    $\Gamma \semi \Pi' \semi \Phi_1 \vdash M: A \bang \mf{Pure}$.
    \item By inductive hypothesis on (2), (4), and premise \emph{(ii)},
    we get (6) that $\Gamma, x: A \semi \Pi' \semi \Phi_2 \vdash D: \Ell$.
    \item We derive the conclusion as follows:
    \begin{mathpar}
      \inferrule*[right=TB-Let] {
        \inferrule*[right=(5)] { } {
          \Gamma \semi \Pi' \semi \Phi_1 \vdash M: A \bang \mf{Pure}
        }
        \and
        \inferrule*[right=(6)] { } {
          \Gamma, x: A \semi \Pi' \semi \Phi_2 \vdash D: \Ell
        }
      } {
        \Gamma \semi \Pi' \semi \Phi_1 \Join \Phi_2 \vdash \elet{x}{M}{D}: \Ell
      }
    \end{mathpar}
  \end{itemize}

  \pcase{TB-State}: We have
  \begin{mathpar}
    \inferrule*[right=TB-State] {
      \inferrule* {(1)} {\Gamma \semi \Pi \semi \Phi_1 \vdash V: A}
      \and
      \inferrule* {(2)} {
        \Gamma, x: A, x_\mf{set}: A \xrightarrow{\mf{Mut}} \tone \semi \Pi
        \semi \Phi_2 \vdash D: \Ell
      }
      \and
      \inferrule* {(3)} {\ell \notin \dom(\Ell)}
    } {
      \Gamma \semi \Pi \semi \Phi_1 \Join \Phi_2
      \vdash \estate{x}{\ell}{V}{D}: \Ell, \ell: A
    }
  \end{mathpar}
  \begin{itemize}
    \item From premise \emph{(iii)} and definition of $\Join$, it follows that
    (4) $\Phi_1 \Join \Phi'$ is defined and
    (5) that $\Phi_2 \Join \Phi'$ is defined.
    \item By mutual inductive hypothesis (Lemma \ref{thm:lamhook-piwknv})
    on (1), (4), and premise \emph{(ii)}, we get (6) that
    $\Gamma \semi \Pi' \semi \emptyset \vdash V: A$.
    \item By inductive hypothesis on (2), (5), and premise \emph{(ii)},
    we get (7) that\\
    $\Gamma, x: A, x_\mf{set}: A \xrightarrow{\mf{Mut}} \tone
    \semi \Pi' \vdash D: \Ell$.
    \item We derive the conclusion as follows:
  \end{itemize}
  \begin{mathpar}
    \inferrule*[right=TB-Let] {
      \inferrule*[right=(6)] { } {
        \Gamma \semi \Pi' \semi \Phi_1 \vdash V: A
      }
      \
      \inferrule*[right=(7)] { } {
        \Gamma, x: A, x_\mf{set}: A \xrightarrow{\mf{Mut}} \tone
        \semi \Pi' \semi \Phi_2 \vdash D: \Ell
      }
      \
      \inferrule*[right=(3)] { } {\ell \notin \dom(\Ell)}
    } {
      \Gamma \semi \Pi' \semi \Phi_1 \Join \Phi_2
      \vdash \estate{x}{\ell}{V}{D}: \Ell, \ell: A
    }
  \end{mathpar}
\end{proof}

We have some further lemmas regarding $\Phi$:

\begin{restatable}{lemma}{lamhookunrestv}
\label{thm:lamhook-unrestv}
If $\Gamma \semi \Pi \semi \Phi \vdash V: A$,
then $\mf{Succ}(\Phi)$.
\end{restatable}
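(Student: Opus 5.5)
This is the value component of Lemma~\ref{thm:lamhook-unrest}. I would prove it by mutual induction on the typing derivation of $V$, together with the analogous statements for bodies ($\Gamma \semi \Pi \semi \Phi \vdash D: \Ell$ implies $\mf{Succ}(\Phi)$) and for pure computations ($\Gamma \semi \Pi \semi \Phi \vdash M: A \bang \mf{Pure}$ implies $\mf{Succ}(\Phi)$). The mutual induction is needed because \textsc{TV-Comp} and \textsc{TV-CompI} have body premises. \textsc{TB-Let} in turn has a pure-computation premise.

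The first ingredient is an auxiliary fact about merging: if $\mf{Succ}(\Phi_1)$ and $\mf{Succ}(\Phi_2)$ and $\Phi = \Phi_1 \Join \Phi_2$, then $\mf{Succ}(\Phi)$. I would prove this by induction on the derivation of $\Join$. Each rule either adds $p:\mf{Succ}$ to both sides, or adds $p:\phi$ to exactly one side. In the latter case $\phi$ comes from a context satisfying $\mf{Succ}$, so $\phi = \mf{Succ}$.

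For the value cases:
\begin{itemize}
\item \textbf{Direct cases.} \textsc{TV-Lam} and \textsc{TV-Setter} carry $\mf{Succ}(\Phi)$ as an explicit premise. The leaf rules (\textsc{TV-Var}, \textsc{TV-Unit}, \textsc{TV-Str}, \textsc{TV-Bool}, \textsc{TV-Nil}) are, in the $\lamhook$ presentation, implicitly equipped with $\mf{Succ}(\Phi)$, as the \textsc{TV-Var} case of Lemma~\ref{thm:lamhook-piwknv} displays. I would state this reading explicitly.
\item \textbf{Structural cases.} \textsc{TV-Pair}, \textsc{TV-Cons}, \textsc{TV-Attr}, \textsc{TV-Tag} and \textsc{TV-TagI} split $\Phi$ as $\Phi_1 \Join \Phi_2$ between value premises. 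The induction hypothesis gives $\mf{Succ}$ for both parts, and the auxiliary fact gives $\mf{Succ}(\Phi)$. The unary rules (\textsc{TV-OnClick}, \textsc{TV-Text}, \textsc{TV-TextI}) follow directly from the induction hypothesis.
\item \textbf{Component cases.} \textsc{TV-Comp} uses the body induction hypothesis. \textsc{TV-CompI} uses the body hypothesis on $D$ and the value hypothesis on $V$, and then merges.
\end{itemize}

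For the body and pure-computation cases:
\begin{itemize}
\item \textsc{TB-Val} follows from the value hypothesis.
\item \textsc{TB-Let} and \textsc{TB-State} merge the hypotheses for their premises.
\item \textsc{TC-Val}, \textsc{TC-Fst} and \textsc{TC-Snd} follow from the value hypothesis.
\item \textsc{TC-Let}, \textsc{TC-App} and \textsc{TC-If} at effect $\mf{Pure}$ have all premises at effect $\mf{Pure}$ or as values, so the hypotheses apply and we merge.
\item The rules whose conclusion is $\mf{Mut}$ (\textsc{TC-Fold}, the runtime rules, \textsc{TC-EvalI}, \textsc{TC-EvalS}) cannot conclude $\mf{Pure}$. \textsc{TC-Sub} likewise concludes only $\mf{Mut}$. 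These cases are therefore vacuous.
\end{itemize}

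The main point requiring care is \textsc{TC-App} at $\mf{Pure}$. There the function premise is a value, so the value hypothesis applies regardless of its arrow effect, and the pure-computation case goes through. Beyond that, the only subtlety is making the implicit $\mf{Succ}(\Phi)$ side condition on leaf rules explicit.
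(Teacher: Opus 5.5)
Your proposal is correct and follows the paper's argument: the main-text proof of Lemma~\ref{thm:lamhook-unrest} is this same mutual induction, with leaf rules, \textsc{TV-Lam} and \textsc{TV-Setter} direct and all other cases by the hypotheses. Your explicit merge fact, and your remark that \textsc{TV-Comp}/\textsc{TV-CompI} force the induction to be mutual, make precise what the appendix's value-only proof glosses over; one harmless discrepancy is that although Figure~\ref{fig:lamhook-typing} types \textsc{TC-Fold} at $\mf{Mut}$, the appendix elsewhere uses it at $\mf{Pure}$, but that case is immediate anyway since all its premises are values.
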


\begin{proof}
  By induction on the derivation of the typing judgment.
  If the value in question has no subvalues, then we always
  have $\mf{Succ}(\Phi)$ explicitly. Otherwise,
  the conclusion follows by inductive hypothesis.
\end{proof}

\begin{restatable}{lemma}{lamhookunrestc}
\label{thm:lamhook-unrestc}
If $\Gamma \semi \Pi \semi \Phi \vdash M: A \bang \mf{Pure}$,
then $\mf{Succ}(\Phi)$.
\end{restatable}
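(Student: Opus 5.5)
We argue by induction on the derivation of $\Gamma \semi \Pi \semi \Phi \vdash M: A \bang \mf{Pure}$, appealing to Lemma~\ref{thm:lamhook-unrestv} for every value premise. The key observation is that only a few rules can conclude a judgment with effect $\mf{Pure}$. \textsc{TC-Sub} concludes $\mf{Mut}$, so it never appears at the root. All runtime computations ($\mf{init}$, $\mf{click}_{id}$, $\mf{runClick}$, $\mf{update}$, $\mf{fold}$, and both body-evaluation rules \textsc{TC-EvalI} and \textsc{TC-EvalS}) are typed only with $\mf{Mut}$. Hence those cases are vacuous. This matters in particular for \textsc{TC-EvalI}, the only rule that introduces $p{:}\,\mf{Init}$ into the phase context.

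The remaining cases are \textsc{TC-Val}, \textsc{TC-Fst}, \textsc{TC-Snd}, \textsc{TC-App} with $E=\mf{Pure}$, \textsc{TC-If} with $E=\mf{Pure}$, and \textsc{TC-Let} with $E=\mf{Pure}$. For \textsc{TC-Val}, \textsc{TC-Fst} and \textsc{TC-Snd}, the phase context is exactly that of a value premise, so Lemma~\ref{thm:lamhook-unrestv} gives $\mf{Succ}(\Phi)$ directly. For \textsc{TC-App}, both premises are value judgments under $\Phi_1$ and $\Phi_2$, so $\mf{Succ}(\Phi_1)$ and $\mf{Succ}(\Phi_2)$ hold. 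For \textsc{TC-If}, the guard gives $\mf{Succ}(\Phi_1)$ by the value lemma, and either branch, typed $\bang\mf{Pure}$ under $\Phi_2$, gives $\mf{Succ}(\Phi_2)$ by the inductive hypothesis. For \textsc{TC-Let}, both subcomputations carry the shared effect $\mf{Pure}$, so the inductive hypothesis yields $\mf{Succ}(\Phi_1)$ and $\mf{Succ}(\Phi_2)$.

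It then remains to show an auxiliary fact about merging: if $\Phi = \Phi_1 \Join \Phi_2$ with $\mf{Succ}(\Phi_1)$ and $\mf{Succ}(\Phi_2)$, then $\mf{Succ}(\Phi)$. We prove this by induction on the derivation of $\Phi = \Phi_1 \Join \Phi_2$, since each merge rule adds to $\Phi$ a binding $p{:}\,\phi$ that was already present in $\Phi_1$ or $\Phi_2$. Every pointer of $\Phi$ therefore receives its phase from one of the two sides, and both sides contain only $\mf{Succ}$.

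The only delicate point is checking that no $\mf{Pure}$-typed rule can smuggle in an $\mf{Init}$ binding. \textsc{TC-App} is where this could go wrong, because it permits effect $E$ and a $\mf{Pure}$ function may close over pointers. However, its premises are values, and \textsc{TV-Lam} itself demands $\mf{Succ}(\Phi)$, so the value lemma already covers it. Everything else is routine bookkeeping.
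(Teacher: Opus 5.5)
Your proof is correct and is essentially the paper's: induction on the typing derivation, with value premises discharged by Lemma~\ref{thm:lamhook-unrestv} and computation premises by the inductive hypothesis, and with the decisive observation that \textsc{TC-EvalI} concludes $\mf{Mut}$. That matters because \textsc{TC-EvalI} is the only rule that introduces a $p{:}\,\mf{Init}$ binding; beyond the paper, you also spell out the case analysis and the merge fact, which the paper leaves implicit.

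There are two small caveats. First, the appendix also uses \textsc{TC-Fold} at effect $\mf{Pure}$ (e.g.\ in deriving \textsc{TC-Map}), so that case need not be vacuous, but it follows at once from the value lemma and your merge fact. Second, Lemma~\ref{thm:lamhook-unrestv} itself relies on the present lemma via \textsc{TV-Comp}/\textsc{TV-CompI} and \textsc{TB-Let}, so strictly the three facts should be proved by the simultaneous induction of Lemma~\ref{thm:lamhook-unrest}, rather than by citing the value lemma as already closed.
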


\begin{proof}
  By induction on the derivation of the typing judgment.
  We appeal to some combination of the inductive hypothesis
  and Lemma \ref{thm:lamhook-unrestv}.

  In particular, since our premise requires a $\mf{Pure}$ effect,
  rule \textsc{TC-EvalI}, which introduces a $\mf{Init}$-phase pointer,
  can never occur since it also introduces a $\mf{Mut}$ effect.
\end{proof}

\begin{restatable}{lemma}{lamhookunrestb}
\label{thm:lamhook-unrestb}
If $\Gamma \semi \Pi \semi \Phi \vdash D: \Ell$,
then $\mf{Succ}(\Phi)$.
\end{restatable}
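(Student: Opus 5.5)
We argue by induction on the derivation of $\Gamma \semi \Pi \semi \Phi \vdash D: \Ell$, mirroring the proofs of Lemmas \ref{thm:lamhook-unrestv} and \ref{thm:lamhook-unrestc}, which we invoke for the value and computation subderivations. There are exactly three body typing rules, so the proof is a three-way case split. The only general fact needed is about merging. If $\Phi = \Phi_1 \Join \Phi_2$ with $\mf{Succ}(\Phi_1)$ and $\mf{Succ}(\Phi_2)$, then $\mf{Succ}(\Phi)$. This follows by a short induction on the derivation of $\Phi = \Phi_1 \Join \Phi_2$: each of the four merging rules adds a binding $p{:}\,\phi$ to $\Phi$ only when that same binding appears in $\Phi_1$ or $\Phi_2$, so each binding is $\mf{Succ}$ by assumption.

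\pcase{TB-Val}. The premise is $\Gamma \semi \Pi \semi \Phi \vdash V: \mf{Element}$. Lemma \ref{thm:lamhook-unrestv} then gives $\mf{Succ}(\Phi)$ directly.

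\pcase{TB-Let}. The premises are $\Gamma \semi \Pi \semi \Phi_1 \vdash M: A \bang \mf{Pure}$ and $\Gamma, x{:}A \semi \Pi \semi \Phi_2 \vdash D: \Ell$, with $\Phi = \Phi_1 \Join \Phi_2$. Lemma \ref{thm:lamhook-unrestc} yields $\mf{Succ}(\Phi_1)$, because the embedded computation is required to be $\mf{Pure}$. The inductive hypothesis yields $\mf{Succ}(\Phi_2)$, and the merging fact concludes.

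\pcase{TB-State}. Lemma \ref{thm:lamhook-unrestv} on the initial value $V$ gives $\mf{Succ}(\Phi_1)$. The inductive hypothesis on the continuation body gives $\mf{Succ}(\Phi_2)$, and merging again concludes.

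The only step with any content is the \textsc{TB-Let} case. It relies on the $\mf{Pure}$ restriction in that rule: a $\mf{Mut}$ computation, for instance one typed by \textsc{TC-EvalI}, could introduce an $\mf{Init}$ pointer. The restriction rules this out, and Lemma \ref{thm:lamhook-unrestc} already handles it, so no real obstacle remains. Combining the three Lemmas \ref{thm:lamhook-unrestv}, \ref{thm:lamhook-unrestc} and the present one yields Lemma \ref{thm:lamhook-unrest}.
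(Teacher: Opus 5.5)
Your proof is correct and follows the paper's argument. Like the paper, you induct on the body derivation, use Lemma~\ref{thm:lamhook-unrestv} for \textsc{TB-Val} and \textsc{TB-State}, and use the $\mf{Pure}$ restriction (via Lemma~\ref{thm:lamhook-unrestc}) for \textsc{TB-Let}; your merge fact and the inductive hypothesis on continuation bodies spell out details the paper leaves implicit.

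One caveat, which applies equally to the paper: \textsc{TV-Comp} and \textsc{TV-CompI} have body premises, so Lemma~\ref{thm:lamhook-unrestv} itself depends on the present lemma. The three results should therefore be read as a single mutual induction, as in the main-text proof of Lemma~\ref{thm:lamhook-unrest}, rather than as lemmas established one after another.
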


\begin{proof}
  By induction on the derivation of the typing judgment.
  In case \textsc{TB-Let}, we have explicitly that the subcomputation
  has $\mf{Succ}(\Phi)$.
  In the other cases, we appeal to Lemma \ref{thm:lamhook-unrestv}.
\end{proof}

\lamhookunrest*

\begin{proof}
  Corollary of Lemmas \ref{thm:lamhook-unrestv},
  \ref{thm:lamhook-unrestc}, and \ref{thm:lamhook-unrestb}.
\end{proof}

\begin{restatable}{lemma}{lamhookunresth}
\label{thm:lamhook-unresth}
If $\Gamma \semi \Pi \semi \Phi \vDash \heap$,
then $\mf{Succ}(\Phi)$.
\end{restatable}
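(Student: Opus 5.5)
The plan is to prove the statement by induction on the derivation of the heap typing judgment $\Gamma \semi \Pi \semi \Phi \vDash \heap$. Recall that this judgment abbreviates $\Gamma \semi \Pi \semi \Phi \vDash \heap : \Pi$, which is built from \textsc{TH-Nil} and \textsc{TH-Cons}. The latter in turn relies on the state-store judgment $\Gamma \semi \Pi \semi \Phi \vDash \sigma : \Sigma$, built from \textsc{TS-Nil} and \textsc{TS-Cons}. So I first prove an auxiliary claim for state stores: if $\Gamma \semi \Pi \semi \Phi \vDash \sigma : \Sigma$ then $\mf{Succ}(\Phi)$. I prove it by induction on the store derivation, and then do the heap case the same way.

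For the store claim, the \textsc{TS-Nil} case is immediate, since $\mf{Succ}(\Phi)$ is a premise. In the \textsc{TS-Cons} case the context is $\Phi_1 \Join \Phi_2$. The inductive hypothesis gives $\mf{Succ}(\Phi_1)$, and Lemma~\ref{thm:lamhook-unrestv} applied to $\Gamma \semi \Pi \semi \Phi_2 \vdash V : A$ gives $\mf{Succ}(\Phi_2)$. The heap claim has the same shape: \textsc{TH-Nil} carries $\mf{Succ}(\Phi)$ as a premise. In \textsc{TH-Cons} the inductive hypothesis on the heap premise yields $\mf{Succ}(\Phi_1)$, and the store claim applied to $\pi.\mf{st}$ yields $\mf{Succ}(\Phi_2)$.

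The only remaining step, and the only one needing any care, is a small auxiliary fact about merging: if $\Phi = \Phi_1 \Join \Phi_2$ with $\mf{Succ}(\Phi_1)$ and $\mf{Succ}(\Phi_2)$, then $\mf{Succ}(\Phi)$. I prove this by induction on the derivation of $\Phi = \Phi_1 \Join \Phi_2$, using the four merging rules.
\begin{itemize}
\item The empty rule is trivial.
\item In the shared-$\mf{Succ}$ rule, the added binding is $p : \mf{Succ}$ by construction.
\item In the two one-sided rules, the added binding $p : \phi$ comes from $\Phi_1$ or from $\Phi_2$, so $\phi = \mf{Succ}$ by hypothesis.
\end{itemize}
In every case the remaining merge is handled by the inductive hypothesis, after restricting $\mf{Succ}(\cdot)$ to the smaller contexts. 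Equivalently, every $p \in \dom(\Phi)$ has its phase drawn from $\Phi_1$ or $\Phi_2$. With this fact in hand, both \textsc{Cons} cases close, and the lemma follows.
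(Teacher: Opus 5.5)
Your proof is correct and follows essentially the same route as the paper. The paper flattens the heap derivation into an $n$-ary merge $\Phi = \Join_{i} \Phi_i$ of the contexts typing each stored value, applies Lemma~\ref{thm:lamhook-unrestv} to each $\Phi_i$, and concludes $\mf{Succ}(\Phi)$; your nested inductions over the heap and store derivations, with the explicit closure of $\mf{Succ}(\cdot)$ under $\Join$ and the \textsc{TS-Nil}/\textsc{TH-Nil} leaves, spell out steps the paper leaves implicit.
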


\begin{proof}
  By definition, there exist some $\Phi_i$
  such that $\Phi = \Join_{i=1}^{n} \Phi_i$ and\\
  $\Gamma \semi \Pi \semi \Phi_i \vdash \heap[p].\mf{st}[\ell]: \Pi[p][\ell]$
  for all $n$ elements in $\heap$, each identified by a $(p, \ell)$ pair.
  Lemma \ref{thm:lamhook-unrestv} tells us that all such $\Phi_i$
  have $\mf{Succ}(\Phi_i)$, so it follows that $\mf{Succ}(\Phi)$.
\end{proof}

\begin{lemma}[$\Phi$-weakening, values]
\label{thm:lamhook-phiwknv}
If \begin{enumerate}[label=(\roman*)]
  \item $\Gamma \semi \Pi \semi \Phi \vdash V: A$ and
  \item $\mf{Succ}(\Phi')$,
\end{enumerate}
then $\Gamma \semi \Pi \semi \Phi \Join \Phi' \vdash V: A$.

\begin{proof}
  By induction on the derivation of the typing judgment \emph{(i)}.

\end{proof}
\end{lemma}

\begin{lemma}[$\Phi$-weakening, computations]
\label{thm:lamhook-phiwknc}
If \begin{enumerate}[label=(\roman*)]
  \item $\Gamma \semi \Pi \semi \Phi \vdash M: A \bang E$,
  \item $\mf{Succ}(\Phi')$, and
  \item $\Phi \Join \Phi'$ is defined,
\end{enumerate}
then $\Gamma \semi \Pi \semi \Phi \Join \Phi' \vdash M: A \bang E$.

\begin{proof}
  By induction on the derivation of the typing judgment \emph{(i)}.

\end{proof}
\end{lemma}

\begin{lemma}[$\Phi$-weakening, bodies]
\label{thm:lamhook-phiwknb}
If \begin{enumerate}[label=(\roman*)]
  \item $\Gamma \semi \Pi \semi \Phi \vdash D: \Ell$,
  \item $\mf{Succ}(\Phi')$, and
  \item $\Phi \Join \Phi'$ is defined,
\end{enumerate}
then $\Gamma \semi \Pi \semi \Phi \Join \Phi' \vdash D: \Ell$.

\begin{proof}
  By induction on the derivation of the typing judgment \emph{(i)}.

\end{proof}
\end{lemma}

\subsection{Progress}
\label{appdx:prog}

We begin with a standard canonical forms lemma:

\begin{lemma}[Canonical forms]
\label{thm:lamhook-canon}
If $\emptyset \semi \Pi \semi \Phi \vdash V: A$, then:
\begin{itemize}
  \item if $A = \tone$, then $V = ()$.
  \item if $A = \mf{Str}$, then $V = s$.
  \item if $A = \mf{Bool}$, then $V = \mf{true}$ or $V = \mf{false}$.
  \item if $A = A_1 \xrightarrow{E} A_2$, then $V = \lambda x.M$
    or $V = \setter^\ell_p$.
  \item if $A = A_1 \times A_2$, then $V = (V_1, V_2)$.
  \item if $A = \mf{List}\ B$, then $V = \nil$ or $V = V_1 :: V_2$.
  \item if $A = \mf{Element}$, then $V = \mf{text}\ W$
    or $V = \mf{tag}(V_1, V_2)$ or $V = \mf{component}\ \{D\}$.
  \item if $A = \mf{Attr}$, then $V = \mf{attr}(V_1, V_2)$
    or $v = \mf{onClick}\ W$.
  \item if $A = \mf{ElemInst}$, then $V = \mf{textInst}\ W$
    or $V = \mf{tagInst}_{id}(V_1, V_2)$ or $V = \compInst{p}{D}{W}$.
\end{itemize}
\begin{proof}
  By case-analysis on the derivation of the typing judgment.
  Since $\Gamma = \emptyset$, rule \textsc{TV-Var} is impossible.

\end{proof}
\end{lemma}

We then proceed to proofs of progress for the three mutually-defined
step judgments.

\begin{restatable}[Progress for $\step$]{theorem}{lamhookprogc}
\label{thm:lamhook-progc}
If $\emptyset \semi \Pi \semi \Phi_H \semi \Phi_D
  \vdash \heap \semi M: A \bang E$,
then either:
\begin{enumerate}[label=(\roman*)]
\item $M=\mf{val}\ V$ or
\item $\heap \semi M \step N \semi \heap'$ for some $N$ and $\heap'$.
\end{enumerate}
\end{restatable}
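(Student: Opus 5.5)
We prove this by induction on the derivation of $\emptyset \semi \Pi \semi \Phi_M \vdash M : A \bang E$. The proof is mutual with the two body-progress results: Theorem~\ref{thm:lamhook-progbi} for $\phi = \mf{Init}$, and Theorem~\ref{thm:lamhook-progbs} for $\phi = \mf{Succ}$ under the premise $\Ell \subseteq \Pi[p]$. Throughout we keep the heap typing $\emptyset \semi \Pi \semi \Phi_H \vDash \heap$ fixed.

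First, the easy cases.
\begin{itemize}
\item \textsc{TC-Val} gives case (i) directly.
\item \textsc{TC-Sub} is immediate by the inductive hypothesis. We first split $\Phi_M$ appropriately; definedness of $\Phi_H \Join \Phi_M$ is inherited by sub-contexts.
\item \textsc{TC-Let}: apply the inductive hypothesis to $M$, using that $\Phi_H \Join \Phi_1$ is defined because $\Phi_1$ is a piece of $\Phi_M$. If $M = \mf{val}\ V$, step by \textsc{EC-Val}; otherwise step by \textsc{EC-Let}.
\item Application, conditionals, projections, $\mf{fold}$, $\mf{init}$, $\mf{click}_{id}$, $\mf{runClick}$ and $\mf{update}$: invoke the Canonical Forms Lemma~\ref{thm:lamhook-canon} on the value premises and exhibit the matching $\beta$- or runtime rule.
\end{itemize}

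Several subcases need care.
\begin{itemize}
\item For $V\ W$ with $V = \setter^\ell_p$, \textsc{EC-AppSet} needs $p \in \dom(\heap)$. This follows from \textsc{TV-Setter}, which gives $\ell \in \Pi[p]$, together with the fact that the heap typing makes $\dom(\heap) = \dom(\Pi)$ and $\dom(\heap[p].\mf{st}) = \dom(\Pi[p])$.
\item For $\mf{init}\ (\mf{component}\ \{D\})$, \textsc{EC-InitComp} needs only a fresh pointer, which always exists.
\item For $\mf{update}$ and $\mf{click}_{id}$ on $\compInst{p}{D}{V}$, we must know that $\heap[p].\mf{rerender}$ is defined so that exactly one of \textsc{EC-UpdateCompT} and \textsc{EC-UpdateCompF} applies. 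We get $p \in \dom(\Pi)$ from the premise $\Pi[p]$ of \textsc{TV-CompI}, and hence $p \in \dom(\heap)$ by the heap typing.
\end{itemize}

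The interesting cases are \textsc{TC-EvalI} and \textsc{TC-EvalS}. In both, if $D = \mf{val}\ V$ we step by \textsc{EC-EvalDone}. Otherwise we invoke the corresponding body-progress theorem to get $\heap \semi D \step^\phi_p D' \semi \heap'$, and lift it by \textsc{EC-Eval}.

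To apply the body theorems we must rebuild a well-typed body configuration.
\begin{itemize}
\item The body's phase context $\Phi_2$ is a fragment of $\Phi_M$, so $\Phi_H \Join \Phi_2$ is defined.
\item For $\mf{Succ}$, the side condition $\Ell \subseteq \Pi[p]$ is a premise of \textsc{TC-EvalS}.
\item For $\mf{Init}$, we need disjointness $\Ell \cap \Pi[p] = \emptyset$, which comes from the $\uplus$ in \textsc{TC-EvalI}. We also need $p \in \dom(\heap)$, which follows because $p \in \dom(\Pi)$ is implicit in writing $\Pi[p]$.
\end{itemize}

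Within the mutual induction, each body theorem in turn calls back into this theorem for the \textsc{EB-Let} subcomputation. That subcomputation is pure, so its phase context satisfies $\mf{Succ}$ by Lemma~\ref{thm:lamhook-unrest}, and is smaller.

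The main obstacle is making the heap-domain facts precise. We need a small auxiliary lemma: if $\Gamma \semi \Pi \semi \Phi \vDash \heap$, then $\dom(\heap) = \dom(\Pi)$ and, for each $p$, $\dom(\heap[p].\mf{st}) = \dom(\Pi[p])$. This is proved by induction on \textsc{TH-Cons} and \textsc{TS-Cons}. It is what guarantees that the heap lookups in \textsc{EC-AppSet}, \textsc{EC-UpdateComp*} and, in the $\mf{Succ}$ body case, \textsc{EB-StateS} are defined. The induction measure for the mutual argument is the size of the typing derivation, which decreases in every cross-call.
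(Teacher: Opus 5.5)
Your proposal is correct and follows essentially the same route as the paper's proof. Both use mutual induction on the computation typing derivation together with Theorems~\ref{thm:lamhook-progbi} and~\ref{thm:lamhook-progbs}, canonical forms for the value premises, and the heap typing $\emptyset \semi \Pi \semi \Phi_H \vDash \heap$ to guarantee that the lookups in \textsc{EC-AppSet}, \textsc{EC-UpdateCompT}/\textsc{EC-UpdateCompF} and \textsc{EB-StateS} are defined.

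The differences are cosmetic. You state explicitly the domain-correspondence fact that the paper uses silently (``by premise (ii)''). You also ask for slightly more than is needed in two places: Theorem~\ref{thm:lamhook-progbi} requires only $p \in \dom(\Pi)$, not the disjointness $\Ell \cap \Pi[p] = \emptyset$, and \textsc{EC-ClickComp} needs no heap fact at all.
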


\begin{proof}
  By definition of the premise, we have
  \begin{enumerate}[label=(\roman*)]
  \item $\emptyset \semi \Pi \semi \Phi_M \vdash M: A \bang E$ and
  \item $\emptyset \semi \Pi \semi \Phi_H \vDash \heap$.
  \end{enumerate}

  We proceed by mutual induction on the derivation of the computation
  typing judgment (i) alongside Lemmas \ref{thm:lamhook-progbi}
  and \ref{thm:lamhook-progbs}.

  \pcase{TC-Val}: We trivially satisfy conclusion \emph{(i)}.
  \begin{mathpar}
    \inferrule*[right=TC-Val] {
      \emptyset \semi \Pi \semi \Phi_M \vdash V: A
    } {
      \emptyset \semi \Pi \semi \Phi_M \vdash \mf{val}\ V: A \bang \mf{Pure}
    }
  \end{mathpar}

  \pcase{TC-Let}: We have
  \begin{mathpar}
  \inferrule*[right=TC-Let] {
    \inferrule* {(1)} {
      \emptyset \semi \Pi \semi \Phi_1 \vdash M: A \bang E
    }
    \and
    x: A \semi \Pi \semi \Phi_2 \vdash N: B \bang E
  } {
    \emptyset \semi \Pi \semi \Phi_1 \Join \Phi_2 \vdash \elet{x}{M}{N}:
    B \bang E
  }
  \end{mathpar}
  \begin{itemize}
    \item By inductive hypothesis on (1) and premise (ii),
    we have two subcases: either (a) $M=\mf{val}\ V$
    or (b) $\heap \semi M \step M' \semi \heap'$.
    \item In subcase (a), rule \textsc{EC-Val} says that
    $\heap \semi \elet{x}{\mf{val}\ V}{N} \step N[V/x] \semi \heap$.
    \item In subcase (b), we derive a step as follows:
    \begin{mathpar}
      \inferrule*[right=EC-Let] {
        \inferrule*[right=\textnormal{(b)}] { } {
          \heap \semi M \step M' \semi \heap'
        }
      } {
        \heap \semi \elet{x}{M}{N} \step \elet{x}{M'}{N} \semi \heap'
      }
    \end{mathpar}
    \item Thus, in either subcase we take a step,
    satisfying conclusion \emph{(ii)}.
  \end{itemize}

  \pcase{TC-App}: We have
  \begin{mathpar}
  \inferrule*[right=TC-App] {
    \inferrule* {(1)} {
      \emptyset \semi \Pi \semi \Phi_1 \vdash V: A \xrightarrow{E} B
    }
    \and
    \emptyset \semi \Pi \semi \Phi_2 \vdash W: A_1
  } {
    \emptyset \semi \Pi \semi \Phi_1 \Join \Phi_2 \vdash V\ W: B \bang E
  }
  \end{mathpar}
  \begin{itemize}
    \item By Lemma \ref{thm:lamhook-canon} (canonical forms) on (1),
    we have two subcases: either (a) $V = \lambda x.N$
    or (b) $V = \setter^\ell_p$.
    \item In subcase (a), we apply \textsc{EC-App}
    to get $\heap \semi (\lambda x.N)\ V \step N[V/x] \semi \heap$.
    \item In subcase (b), we continue our bottom-up derivation:
      \begin{mathpar}
        \inferrule*[right=TV-Set] {
          \inferrule* {(2)} {\ell: A \in \Pi[p]}
          \and \mf{Succ}(\Phi_1)
        } {
          \emptyset \semi \Pi \semi \Phi_1 \vdash \setter^\ell_p:
          A \xrightarrow{\mf{Mut}} B
        }
      \end{mathpar}
    \item From (2) and premise (ii), we know that
      $\ell \in \dom(\heap[p].\mf{st})$.
    \item Thus, $\heap' = \heap \mid \heap[p].\mf{st}[\ell] = V \text{ and }
        \heap[p].\mf{rerender} = \mf{tt}$ is well-defined.
    \item We can therefore apply \textsc{EC-AppSet}
      to get $\heap \semi \setter^\ell_p\ V \step \mf{val}\ () \semi \heap'$.
    \item Thus, in either subcase we take a step,
    satisfying conclusion \emph{(ii)}.
  \end{itemize}

  \pcase{TC-If}: We have
  \begin{mathpar}
  \inferrule*[right=TC-If] {
    \inferrule* {(1)} {\emptyset \semi \Pi \semi \Phi_1 \vdash V: \mf{Bool}}
    \and
    \emptyset \semi \Pi \semi \Phi_2 \vdash M: A \bang E
    \and
    \emptyset \semi \Pi \semi \Phi_2 \vdash N: A \bang E
  } {
    \emptyset \semi \Pi \semi \Phi_1 \Join \Phi_2 \vdash \eif{V}{M}{N}:
    A \bang E
  }
  \end{mathpar}
  \begin{itemize}
    \item By Lemma \ref{thm:lamhook-canon} (canonical forms) on (1),
    we have two subcases: either (a) $V = \mf{true}$
    or (b) $V = \mf{false}$.
    \item In subcase (a), we apply \textsc{EC-IfT}
    to get $\heap \semi \eif{\mf{true}}{M}{N} \step M \semi \heap$.
    \item In subcase (b), we apply \textsc{EC-IfF}
    to get $\heap \semi \eif{\mf{false}}{M}{N} \step N \semi \heap$.
    \item Thus, in either subcase we take a step,
    satisfying conclusion \emph{(ii)}.
  \end{itemize}

  \pcase{TC-Fst}: We have
  \begin{mathpar}
    \inferrule*[right=TC-Fst] {
      \inferrule* {(1)} {
        \emptyset \semi \Pi \semi \Phi_M \vdash V: A \times B
      }
    } {
      \emptyset \semi \Pi \semi \Phi_M \vdash \mf{fst}\ V: A \bang \mf{Pure}
    }
  \end{mathpar}
  \begin{itemize}
    \item By Lemma \ref{thm:lamhook-canon} (canonical forms) on (1),
    we have that $V = (V_1, V_2)$.
    \item Rule \textsc{EC-Fst} then tells us that
    $\heap \semi \mf{fst}\ (V_1, V_2) \step \mf{val}\ V_1 \semi \heap$,
    satisfying conclusion \emph{(ii)}.
  \end{itemize}

  \pcase{TC-Snd}: Analogous to the above case, via \textsc{EC-Snd}.

  \pcase{TC-Fold}: We have
  \begin{mathpar}
    \inferrule*[right=TC-Fold] {
      \emptyset \semi \Pi \semi \Phi_1 \vdash
      V_1: A \xrightarrow{E} B \xrightarrow{E} B
      \and
      \emptyset \semi \Pi \semi \Phi_2 \vdash
      V_2: B
      \and
      \inferrule* {(1)} {
        \emptyset \semi \Pi \semi \Phi_3 \vdash
        V_3: \mf{List}\ A
      }
    } {
      \emptyset \semi \Pi \semi \Phi_1 \Join \Phi_2 \Join \Phi_3 \vdash
      \mf{fold}\ V_1\ V_2\ V_3: B \bang \mf{Mut}
    }
  \end{mathpar}
  \begin{itemize}
    \item By Lemma \ref{thm:lamhook-canon} (canonical forms) on (1),
      we have two subcases: either (a) $V_3 = []$ or (b) $V_3 = V_h :: V_t$.
    \item In subcase (a), we apply \textsc{EC-FdN} to get
      $\heap \semi \mf{fold}\ V_1\ V_2\ [] \step \mf{val}\ V_2 \semi \heap$.
    \item In subcase (v), we apply \textsc{EC-FdC} to get
      $\heap \semi \mf{fold}\ V_1\ V_2\ (V_h :: V_t)
      \step V_1\ V_h\ (\mf{fold}\ V_1\ V_2\ V_t) \semi \heap$.
    \item Thus, in either subcase we take a step,
      satisfying conclusion \emph{(ii)}.
  \end{itemize}

  \pcase{TC-Sub}: We have
  \begin{mathpar}
  \inferrule*[right=TC-Sub] {
    \inferrule* {(1)} {
      \emptyset \semi \Pi \semi \Phi_M \vdash M: A \bang \mf{Pure}
    }
  } {
    \emptyset \semi \Pi \semi \Phi_M \vdash M: A \bang \mf{Mut}
  }
  \end{mathpar}
  \begin{itemize}
    \item The result follows directly from the inductive hypothesis
    applied to (1) and premise \emph{(ii)}.
  \end{itemize}

  \pcase{TC-EvalI}: We have
  \begin{mathpar}
    \inferrule*[right=TC-EvalI] {
      \inferrule* {(1)} {
        \emptyset \semi \Pi \semi \Phi_1 \vdash D_0: \Ell \uplus \Pi[p]
      }
      \and
      \inferrule* {(2)} {
        \emptyset \semi \Pi \semi \Phi_2 \vdash D: \Ell
      }
    } {
      \emptyset \semi \Pi \semi \Phi_1 \Join \Phi_2,p: \mf{Init} \vdash
      \eval{D_0}{\mf{Init}}{p}{D}: \mf{ElemInst} \bang \mf{Mut}
    }
  \end{mathpar}
  \begin{itemize}
    \item By (1), we have (3) that $p \in \dom(\Pi)$.
    \item By mutual inductive hypothesis (Theorem \ref{thm:lamhook-progbi})
    on (2), (3), and premise \emph{(ii)}, we have two subcases:
    either (a) $D = \mf{val}\ V$
    or (b) $\heap \semi D \step^\mf{Init}_p D' \semi \heap'$.
    \item In subcase (a), we apply \textsc{EC-EvalDone} to get\\
    $\heap \semi \eval{D_0}{\mf{Init}}{p}{\mf{val}\ V} \step
    \elet{x}{\mf{init}\ V}{\mf{val}\ \compInst{p}{D_0}{x}} \semi \heap$.
    \item In subcase (b), we derive a step as follows:
    \begin{mathpar}
      \inferrule*[right=EC-Eval] {
        \inferrule*[right=\textnormal{(b)}] { } {
          \heap \semi D \step^\mf{Init}_p D' \semi \heap'
        }
      } {
        \heap \semi \eval{D_0}{\mf{Init}}{p}{D}
        \step
        \eval{D_0}{\mf{Init}}{p}{D'} \semi \heap'
      }
    \end{mathpar}
    \item Thus, in either subcase we take a step,
    satisfying conclusion \emph{(ii)}.
  \end{itemize}

  \pcase{TC-EvalS}: We have
  \begin{mathpar}
    \inferrule*[right=TC-EvalS] {
      \emptyset \semi \Pi \semi \Phi_1 \vdash D_0: \Pi[p]
      \quad
      \inferrule* {(1)} {\emptyset \semi \Pi \semi \Phi_2 \vdash D: \Ell}
      \
      \inferrule* {(2)} {\Ell \subseteq \Pi[p]}
      \quad
      (\Phi_1 \Join \Phi_2)[p] = \mf{Succ}
    } {
      \emptyset \semi \Pi \semi \Phi_1 \Join \Phi_2 \vdash
      \eval{D_0}{\mf{Succ}}{p}{D}: \mf{ElemInst} \bang \mf{Mut}
    }
  \end{mathpar}
  \begin{itemize}
    \item By mutual inductive hypothesis (Theorem \ref{thm:lamhook-progbs})
    on (1), (2), and premise \emph{(ii)}, we have two subcases:
    either (a) $D = \mf{val}\ V$
    or (b) $\heap \semi D \step^\mf{Succ}_p D' \semi \heap'$.
    \item In subcase (a), we apply \textsc{EC-EvalDone} to get\\
    $\heap \semi \eval{D_0}{\mf{Succ}}{p}{\mf{val}\ V} \step
    \elet{x}{\mf{init}\ V}{\mf{val}\ \compInst{p}{D_0}{x}} \semi \heap$.
    \item In subcase (b), we derive a step as follows:
    \begin{mathpar}
      \inferrule*[right=EC-Eval] {
        \inferrule*[right=\textnormal{(b)}] { } {
          \heap \semi D \step^\mf{Succ}_p D' \semi \heap'
        }
      } {
        \heap \semi \eval{D_0}{\mf{Succ}}{p}{D}
        \step
        \eval{D_0}{\mf{Succ}}{p}{D'} \semi \heap'
      }
    \end{mathpar}
    \item Thus, in either subcase we take a step,
    satisfying conclusion \emph{(ii)}.
  \end{itemize}

  \pcase{TC-Init}: We have
  \begin{mathpar}
    \inferrule*[right=TC-Init] {
      \inferrule* {(1)} {
        \emptyset \semi \Pi \semi \Phi_M \vdash V: \mf{Element}
      }
    } {
      \emptyset \semi \Pi \semi \Phi_M \vdash \mf{init}\ V:
      \mf{ElemInst} \bang \mf{Mut}
    }
  \end{mathpar}
  \begin{itemize}
    \item By Lemma \ref{thm:lamhook-canon} (canonical forms) on (1),
    we have three subcases:
    either (a) $V = \mf{text}\ W$,
    (b) $V = \mf{tag}(V_1, V_2)$,
    or (c) $V = \mf{component}\ D$.
    \item In subcase (a), we apply \textsc{EC-InitText}
    to get $\heap \semi \mf{init\ text}\ W
    \step \mf{val\ textInst}\ W \semi \heap$.
    \item In subcase (b), we apply \textsc{EC-InitTag} to get\\
    $\heap \semi \mf{init}\ \mf{tag}(V_1, V_2) \step
    \elet{x}{\mf{inits}\ V_2}{\mf{val}\ \mf{tagInst}_{id}(V_1, x)}
    \semi \heap$ for a fresh ${id}$.
    \item In subcase (c), we apply \textsc{EC-InitComp}
    to get $\heap \semi \mf{init}\ \mf{component}\ \{D\}
    \step \eval{D}{\mf{Init}}{p}{D} \semi \heap'$
    for some $\heap'$ and a fresh $p$.
    \item Thus, in any subcase we take a step,
    satisfying conclusion \emph{(ii)}.
  \end{itemize}

  \pcase{TC-Click}: We have
  \begin{mathpar}
    \inferrule*[right=TC-Click] {
      \inferrule* {(1)} {
        \emptyset \semi \Pi \semi \Phi_M \vdash V: \mf{ElemInst}
      }
    } {
      \emptyset \semi \Pi \semi \Phi_M \vdash \mf{click}_{id}\ V:
      \tone \bang \mf{Mut}
    }
  \end{mathpar}
  \begin{itemize}
    \item By Lemma \ref{thm:lamhook-canon} (canonical forms) on (1),
    we have three subcases:
    either (a) $V = \mf{textInst}\ W$,
    (b) $V = \mf{tagInst}_{id'}(V_1, V_2)$,
    or (c) $V = \compInst{p}{D}{W}$.
    \item In subcase (a), we apply \textsc{EC-ClickText}
    to get $\heap \semi \mf{click}_{id}\ \mf{textInst}\ W
    \step \mf{val}\ () \semi \heap$.
    \item In subcase (b), we further subcase on whether ${id}={id}'$:
    \begin{itemize}
      \item If so, then \textsc{EC-ClickTagT} gives that\\
      $\heap \semi \mf{click}_{id}\ \mf{tagInst}_{id}(V_1, V_2)
      \step \mf{runClicks}\ V_1 \semi \mf{clicks}_{id}\ V_2 \semi \heap$.
      \item If not, then \textsc{EC-ClickTagF} gives that
      $\heap \semi \mf{click}_{id}\ \mf{tagInst}_{{id}'}(V_1, V_2)
      \step \mf{clicks}_{id}\ V_2 \semi \heap$.
    \end{itemize}
    \item In subcase (c), \textsc{EC-ClickComp}
    gives that $\heap \semi \mf{click}_{id}\ \compInst{p}{D}{W}
    \step \mf{click}_{id}\ W \semi \heap$.
    \item Thus, in any subcase we take a step,
    satisfying conclusion \emph{(ii)}.
  \end{itemize}

  \pcase{TC-RunClick}: We have
  \begin{mathpar}
    \inferrule*[right=TC-RunClick] {
      \inferrule* {(1)} {
        \emptyset \semi \Pi \semi \Phi_M \vdash V: \mf{Attr}
      }
    } {
      \emptyset \semi \Pi \semi \Phi_M \vdash \mf{runClick}\ V:
      \tone \bang \mf{Mut}
    }
  \end{mathpar}
  \begin{itemize}
    \item By Lemma \ref{thm:lamhook-canon} (canonical forms) on (1),
    we have two subcases:
    either (a) $V = \mf{attr}(V_1, V_2)$
    or (b) $V = \mf{onClick}\ W$.
    \item In subcase (a), we apply \textsc{EC-RunAttr} to get
    $\heap \semi \mf{runClick\ attr}(V_1, V_2) \step \mf{val}\ () \semi \heap$.
    \item In subcase (v), we apply \textsc{EC-RunClick} to get
    $\heap \semi \mf{runClick\ onClick}\ W \step W\ () \semi \heap$.
    \item Thus, in either subcase we take a step,
    satisfying conclusion \emph{(ii)}.
  \end{itemize}

  \pcase{TC-Update}: We have
  \begin{mathpar}
    \inferrule*[right=TC-Update] {
      \inferrule* {(1)} {
        \emptyset \semi \Pi \semi \Phi_M \vdash V: \mf{ElemInst}
      }
    } {
      \emptyset \semi \Pi \semi \Phi_M \vdash \mf{update}\ V:
      \mf{ElemInst} \bang \mf{Mut}
    }
  \end{mathpar}
  \begin{itemize}
    \item By Lemma \ref{thm:lamhook-canon} (canonical forms) on (1),
    we have three subcases:
    either (a) $V = \mf{textInst}\ W$,
    (b) $V = \mf{tagInst}_{id}(V_1, V_2)$,
    or (c) $V = \compInst{p}{D}{W}$.
    \item In subcase (a), \textsc{EC-UpdateText} gives that
    $\heap \semi \mf{update\ textInst}\ W \step
    \mf{val\ textInst}\ W \semi \heap$.
    \item In subcase (b), we apply \textsc{EC-UpdateTag} to get\\
    $\heap \semi \mf{update\ tagInst}_{id}(V_1, V_2) \step
    \elet{x}{\mf{updates}\ V_2}{\mf{val\ tagInst}_{id}(V_1, x)}
    \semi \heap$.
    \item In subcase (c), we continue derivation from (1) as follows
      (where $\Phi_M = \Phi_1 \Join \Phi_2$):
    \begin{mathpar}
      \inferrule*[right=TV-CompI] {
        \inferrule* {(2)} {\Gamma \semi \Pi \semi \Phi_1 \vdash D: \Pi[p]}
        \and
        \Phi_M[p] = \mf{Succ}
        \and
        \Gamma \semi \Pi \semi \Phi_2 \vdash W: \mf{ElemInst}
      } {
        (1) \quad \Gamma \semi \Pi \semi \Phi_M
        \vdash \compInst{p}{D}{W}: \mf{ElemInst}
      }
    \end{mathpar}
    \item (2) gives that $\Pi[p]$ is defined. Then by premise \emph{(ii)},
    we know that $\heap[p].\mf{rerender}$ is defined.
    We further subcase on the value of $\heap[p].\mf{rerender}$:
    \begin{itemize}
      \item If $\heap[p].\mf{rerender} = \mf{tt}$, we apply
      \textsc{EC-UpdateCompT}
      to get\\ $\heap \semi \mf{update}\ \compInst{p}{D}{W}
      \step \eval{D}{\mf{Succ}}{p}{D} \semi \heap'$
      for some $\heap'$.
      \item If instead $\heap[p].\mf{rerender} = \mf{ff}$,
      we apply \textsc{EC-UpdateCompF} to get\\
      $\heap \semi \mf{update}\ \compInst{p}{D}{W}
      \step \elet{x}{\mf{update}\ W}{
        \mf{val}\ \compInst{p}{D}{x}
      } \semi \heap$.
    \end{itemize}
    \item Thus, in any subcase we take a step,
    satisfying conclusion \emph{(ii)}.
  \end{itemize}

\end{proof}

\begin{restatable}[Progress for $\step^\mf{Init}_p$]{theorem}{lamhookprogbi}
\label{thm:lamhook-progbi}
If
\begin{enumerate}[label=(\roman*)]
\item $\emptyset \semi \Pi \semi \Phi_H \semi \Phi_D \vdash \heap \semi D: \Ell$
  and
\item $p \in \dom(\Pi)$,
\end{enumerate}
then either:
\begin{enumerate}[label=(\roman*)]
\item $D=\mf{val}\ V$ or
\item $\heap \semi D \step^\mf{Init}_p D' \semi \heap'$
for some $D'$ and $\heap'$.
\end{enumerate}
\end{restatable}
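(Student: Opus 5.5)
We proceed by induction on the body typing derivation $\emptyset \semi \Pi \semi \Phi_D \vdash D: \Ell$ extracted from premise \emph{(i)}, mutually with Theorem~\ref{thm:lamhook-progc}, whose \textsc{TC-EvalI} case invokes this result. From premise \emph{(i)} we also keep the heap judgment $\emptyset \semi \Pi \semi \Phi_H \vDash \heap$, so that the configuration-level hypotheses of Theorem~\ref{thm:lamhook-progc} are available for the embedded computations. The case split is on the last rule, which is \textsc{TB-Val}, \textsc{TB-Let} or \textsc{TB-State}.

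\textbf{Case} \textsc{TB-Val}: here $D = \mf{val}\ V$ and conclusion \emph{(i)} holds immediately.

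\textbf{Case} \textsc{TB-Let}: here $D = \elet{x}{M}{D_1}$ with $\emptyset \semi \Pi \semi \Phi_1 \vdash M: A \bang \mf{Pure}$. Since $\Phi_1$ is a component of $\Phi_D$ and $\Phi_H \Join \Phi_D$ is defined, $\Phi_H \Join \Phi_1$ is also defined (by the definition of $\Join$). Hence the configuration $\heap \semi M$ is well-typed, and we apply Theorem~\ref{thm:lamhook-progc} to it. If $M = \mf{val}\ V$, then \textsc{EB-Val} gives the step $D_1[V/x] \semi \heap$. Otherwise $\heap \semi M \step N \semi \heap'$, and \textsc{EB-Let} lifts this to a $\step^\mf{Init}_p$ step. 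Note that the induction hypothesis on $D_1$ is never needed, because evaluation only happens in the head position.

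\textbf{Case} \textsc{TB-State}: here $D = \estate{x}{\ell}{V}{D_1}$. Rule \textsc{EB-StateI} always applies, provided the heap update $\heap \mid \heap[p].\mf{st}[\ell] \mapsto V$ is defined. That requires $p \in \dom(\heap)$. We obtain this from premise \emph{(ii)}, $p \in \dom(\Pi)$, together with the heap typing $\emptyset \semi \Pi \semi \Phi_H \vDash \heap: \Pi$: the rules \textsc{TH-Nil} and \textsc{TH-Cons} build $\heap$ and $\Pi$ in lockstep, so their domains coincide, which a short induction on the heap derivation confirms. Unlike the $\mf{Succ}$ phase, we do not need $\ell \in \dom(\heap[p].\mf{st})$, since the update writes the cell (adding or overwriting it) rather than reading it.

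The only point requiring care is this domain correspondence $\dom(\heap) = \dom(\Pi)$, together with checking that splitting the phase contexts keeps $\Phi_H \Join \Phi_1$ defined. Both are routine, and no genuine obstacle is expected.
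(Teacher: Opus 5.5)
Your proposal is correct and follows the paper's proof essentially step for step. Both argue by induction on the body typing derivation, mutually with progress for computations. \textsc{TB-Val} is immediate, \textsc{TB-Let} is discharged by computation progress followed by \textsc{EB-Val} or \textsc{EB-Let}, and \textsc{TB-State} is discharged by \textsc{EB-StateI} once $p \in \dom(\heap)$ is known. You are slightly more explicit than the paper on two points it leaves implicit: that $\dom(\heap) = \dom(\Pi)$ follows from the heap typing, and that $\Phi_H \Join \Phi_1$ is defined, so the embedded configuration is well-typed.
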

\begin{proof}
  By definition of premise \emph{(i)}, we have subpremises
  \begin{enumerate}[label=(i-\alph*)]
  \item $\emptyset \semi \Pi \semi \Phi_D \vdash D: \Ell$ and
  \item $\emptyset \semi \Pi \semi \Phi_H \vDash \heap$.
  \end{enumerate}

  We proceed by mutual induction on the derivation of the body
  typing judgment (i-a) alongside Lemmas \ref{thm:lamhook-progc}
  and \ref{thm:lamhook-progbs}.

  \pcase{TB-Val}: We immediately satisfy conclusion \emph{(i)}.
  \begin{mathpar}
    \inferrule*[right=TB-Val] {
      \emptyset \semi \Pi \semi \Phi_D \vdash V: \mf{Element}
    } {
      \emptyset \semi \Pi \semi \Phi_D \vdash \mf{val}\ V: \emptyset
    }
  \end{mathpar}

  \pcase{TB-Let}: We have
  \begin{mathpar}
    \inferrule*[right=TB-Let] {
      \inferrule* {(1)} {
        \emptyset \semi \Pi \semi \Phi_1 \vdash M: A \bang \mf{Pure}
      }
      \and
      x: A \semi \Pi \semi \Phi_2 \vdash D: \Ell
    } {
      \emptyset \semi \Pi \semi \Phi_1 \Join \Phi_2 \vdash \elet{x}{M}{D}: \Ell
    }
  \end{mathpar}
  \begin{itemize}
    \item By mutual inductive hypothesis (Theorem \ref{thm:lamhook-progc})
    on (1) and premise (i-b), we have two subcases:
    either (a) $M = \mf{val}\ V$
    or (b) $\heap \semi M \step N \semi \heap'$.
    \item In subcase (a), we apply \textsc{EB-Val} to get
    $\heap \semi \elet{x}{\mf{val}\ V}{D}
    \step^\mf{Init}_p D[V/x] \semi \heap$.
    \item In subcase (b), we derive a step as follows:
    \begin{mathpar}
      \inferrule*[right=EB-Let] {
        \inferrule*[right=\textnormal{(b)}] { } {
          \heap \semi M \step N \semi \heap'
        }
      } {
        \heap \semi \elet{x}{M}{D}
        \step^\mf{Init}_p \elet{x}{N}{D} \semi \heap'
      }
    \end{mathpar}
    \item Thus, in either subcase we take a step,
    satisfying conclusion \emph{(ii)}.
  \end{itemize}

  \pcase{TB-State}: We have
  \begin{mathpar}
    \inferrule*[right=TB-State] {
      \emptyset \semi \Pi \semi \Phi_1 \vdash V: A
      \and
      x: A, x_\mf{set}: A \xrightarrow{\mf{Mut}} \tone \semi \Pi \semi \Phi_2
      \vdash D: \Ell
      \and
      \ell \notin \dom(\Ell)
    } {
      \emptyset \semi \Pi \semi \Phi_1 \Join \Phi_2
      \vdash \estate{x}{\ell}{V}{D}: \Ell, \ell: A
    }
  \end{mathpar}
  \begin{itemize}
    \item Say $\heap' = \heap \mid \heap[p].\mf{st}[\ell] \mapsto V$.
    $\heap'$ is only well-defined if $p \in \dom(\heap)$:
    this is given by premise \emph{(ii)}.
    \item We can then apply \textsc{EB-StateI}
    to satisfy conclusion \emph{(ii)}:
    $$\heap \semi \estate{x}{\ell}{V}{D}
    \step^\mf{Init}_p D[V/x, \setter^\ell_p/x_\mf{set}]
    \semi \heap'$$
  \end{itemize}
\end{proof}

\begin{restatable}[Progress for $\step^\mf{Succ}_p$]{theorem}{lamhookprogbs}
\label{thm:lamhook-progbs}
If
\begin{enumerate}[label=(\roman*)]
\item $\emptyset \semi \Pi \semi \Phi \vdash \heap \semi D: \Ell$ and
\item $\Ell \subseteq \Pi[p]$,
\end{enumerate}
then either:
\begin{enumerate}[label=(\roman*)]
\item $D=\mf{val}\ V$ or
\item $\heap \semi D \step^\mf{Succ}_p D' \semi \heap'$
for some $D'$ and $\heap'$.
\end{enumerate}
\end{restatable}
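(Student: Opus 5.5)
The proof mirrors that of Theorem~\ref{thm:lamhook-progbi}. Unfolding premise \emph{(i)}, we obtain $\emptyset \semi \Pi \semi \Phi_D \vdash D: \Ell$ and $\emptyset \semi \Pi \semi \Phi_H \vDash \heap$. We then proceed by mutual induction on the body typing derivation, alongside Theorems~\ref{thm:lamhook-progc} and~\ref{thm:lamhook-progbi}, with one case per body rule.

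\pcase{TB-Val}. Conclusion \emph{(i)} holds immediately.

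\pcase{TB-Let}. Inversion gives $\emptyset \semi \Pi \semi \Phi_1 \vdash M: A \bang \mf{Pure}$. We have $\Phi_H \Join \Phi_1$ defined, since it is a sub-merge of $\Phi_H \Join \Phi_D$. Theorem~\ref{thm:lamhook-progc} therefore applies to the configuration $\heap \semi M$, and yields either $M = \mf{val}\ V$ or a step $\heap \semi M \step N \semi \heap'$. In the first subcase we step by \textsc{EB-Val}; in the second by \textsc{EB-Let}. Neither rule depends on $\phi$ or $p$, so both steps are steps of $\step^\mf{Succ}_p$.

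\pcase{TB-State}. Here $D = \estate{x}{\ell}{V}{D'}$ and $\Ell = \Ell', \ell: A$. This case is the only real obstacle: rule \textsc{EB-StateS} is applicable only if $\heap[p].\mf{st}[\ell]$ is defined. Premise \emph{(ii)} gives $\ell: A \in \Ell \subseteq \Pi[p]$, so $p \in \dom(\Pi)$ and $\ell \in \dom(\Pi[p])$. Heap typing $\emptyset \semi \Pi \semi \Phi_H \vDash \heap: \Pi$ is built by \textsc{TH-Cons} and \textsc{TS-Cons}, which match pointers and labels exactly. Hence $p \in \dom(\heap)$ and $\ell \in \dom(\heap[p].\mf{st})$; this is the same inversion already used in the \textsc{TC-App} case of Theorem~\ref{thm:lamhook-progc} for setters. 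Concretely, I would prove the auxiliary fact by induction on the heap-typing derivation, perhaps stated as a small inversion lemma: $\ell \in \dom(\Pi[p])$ implies that $\heap[p].\mf{st}[\ell]$ is defined. With that fact, \textsc{EB-StateS} steps to $D'[\heap[p].\mf{st}[\ell]/x, \setter^\ell_p/x_\mf{set}] \semi \heap$, which satisfies conclusion \emph{(ii)}.

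No inductive hypothesis on the continuation body is needed in any case, since every non-value body steps at its head. The mutual induction is needed only so that the computation case can be invoked for \textsc{TB-Let}.
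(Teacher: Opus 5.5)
Your proposal is correct and follows the paper's proof essentially step for step: mutual induction on the body typing derivation, with \textsc{TB-Val} and \textsc{TB-Let} handled exactly as in the $\mf{Init}$-phase progress theorem, and \textsc{TB-State} discharged by combining premise \emph{(ii)} with the heap typing to show that $\heap[p].\mf{st}[\ell]$ is defined, so that \textsc{EB-StateS} applies. Your explicit check that $\Phi_H \Join \Phi_1$ is defined, which is needed before invoking computation progress in the \textsc{TB-Let} case, is a detail the paper leaves implicit.
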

\begin{proof}
  By definition of premise \emph{(i)}, we have subpremises
  \begin{enumerate}[label=(i-\alph*)]
  \item $\emptyset \semi \Pi \semi \Phi_D \vdash D: \Ell$ and
  \item $\emptyset \semi \Pi \semi \Phi_H \vDash \heap$.
  \end{enumerate}

  We proceed by mutual induction on the derivation of the body
  typing judgment (i-a) alongside Lemmas \ref{thm:lamhook-progc}
  and \ref{thm:lamhook-progbi}.

  \textbf{Cases} \textsc{TB-Val} and \textsc{TB-Let}
  are shown analogously to Theorem \ref{thm:lamhook-progbi}
  (progress for $\step^\mf{Init}_p$).

  \pcase{TB-State}: We have
  \begin{mathpar}
    \inferrule*[right=TB-State] {
      \emptyset \semi \Pi \semi \Phi_1 \vdash V: A
      \and
      x: A, x_\mf{set}: A \xrightarrow{\mf{Mut}} \tone \semi \Pi \semi \Phi_2
      \vdash D: \Ell
      \and
      \ell \notin \dom(\Ell)
    } {
      \emptyset \semi \Pi \semi \Phi_1 \Join \Phi_2
      \vdash \estate{x}{\ell}{V}{D}: \Ell, \ell: A
    }
  \end{mathpar}
  \begin{itemize}
    \item We want to apply rule \textsc{EB-StateS}, which says:
    \[\heap \semi \estate{x}{\ell}{V}{D}
    \step^\mf{Succ}_p D[\heap[p].\mf{st}[\ell]/x, \setter^\ell_p/x_\mf{set}]
    \semi \heap \]
    To do so, it suffices to show that $\heap[p].\mf{st}[\ell]$ is well-defined.
    \item Premise \emph{(ii)}, alongside the type of the body,
    tells us that $\ell: A \in \Pi[p]$. Then by premise (i-b),
    we have that $\Gamma \semi \Pi \semi \Phi_3
    \vdash \heap[p].\mf{st}[\ell]: A$,
    i.e. $\heap[p].\mf{st}[\ell]$ is well-defined.
    \item We can thus apply \textsc{EB-StateS}, satisfying
    conclusion \emph{(ii)}, and we are done.
  \end{itemize}
\end{proof}

We may now prove progress for the top-level step judgment
using the above results:

\lamhookprogtop*

\begin{proof}\
  \begin{itemize}
    \item By Theorem \ref{thm:lamhook-progc} (progress for $\step$)
    on the premise, we either have (a) $M = \mf{val}\ V$
    or (b) $\heap \semi M \step N \semi \heap'$.
    \item In case (a), we apply rule \textsc{ET-Event} to get that
    $\heap \semi \mf{val}\ V \step_\top
    \mf{click}_{id}\ V \semi \mf{update}\ V \semi \heap$.
    \item In case (b), we derive a step as follows:
    \begin{mathpar}
      \inferrule*[right=ET-Lift] {
        \inferrule*[right=\textnormal{(b)}] { } {
          \heap \semi M \step N \semi \heap'
        }
      } {
        \heap \semi M \step_\top N \semi \heap'
      }
    \end{mathpar}
  \item Thus, in either case, we take a step and satisfy the conclusion.
  \end{itemize}
\end{proof}

\subsection{Preservation}
\label{appdx:prez}

We begin with a couple of auxiliary lemmas regarding typing contexts.

\freshphi*
\begin{proof}
  We first want to show (c1) that
  $\Phi[p] \in \Phi_1' \Join \Phi_2$ for all $p \in \dom(\Phi)$.
  We case-analyze on the origin of $p$ via the definition of $\Join$:

  \textbf{Case} $p: \mf{Init} \in \Phi_1$:
  We must also have $p \notin \dom(\Phi_2)$.
  By the second premise, we have some $\phi \sqsupseteq \mf{Init}$
  with $p: \phi \in \Phi_1'$.
  Then by definition, $p: \phi \in \Phi_1' \Join \Phi_2$.

  \textbf{Case} $p: \mf{Succ} \in \Phi_1$:
  By the second premise, we have $p: \mf{Succ} \in \Phi_1'$.
  We also must have either $p \notin \dom(\Phi_2)$ or $\Phi_2(p) = \mf{Succ}$.
  In either case, by definition, $p: \mf{Succ} \in \Phi_1' \Join \Phi_2$.

  \textbf{Case} $p \notin \dom(\Phi_1)$:
  We must have $p: \phi \in \Phi_2$.
  Then by premise the second premise, since $p$ is \emph{not} fresh,
  we have $p \notin \dom(\Phi_1')$. Therefore, for any $\phi$,
  we get $p: \phi \in \Phi_1' \Join \Phi_2$.

  We have thus shown (c1). It remains to show (c2) that all $p$ in
  $\dom(\Phi_1' \Join \Phi_2) \setminus \dom(\Phi_1 \Join \Phi_2)$ are fresh:
  this follows from the second premise, which gives us that all $p$ in
  $\dom(\Phi_1') \setminus \dom(\Phi_1)$ are fresh, alongside the fact that
  $\dom(\Phi_1 \Join \Phi_2) = \dom(\Phi_1) \cup \dom(\Phi_2)$.

  The conclusion is thus given by (c1) and (c2), and we are done.
\end{proof}

\begin{lemma}[$\Pi$-order preservation]
\label{thm:lamhook-piprez}
If \begin{enumerate}[label=(\roman*)]
  \item $\Pi \sqsubseteq_{\Phi_1} \Pi'$ and
  \item $\Phi = \Phi_1 \Join \Phi_2$,
\end{enumerate}
then $\Pi \sqsubseteq_\Phi \Pi'$.

\begin{proof}
  By case-analysis on the definition of premise \emph{(i)}.
  If $\Pi = \Pi'$, then the conclusion is trivial.

  Otherwise, for some number of $p$, we have $p: \mf{Init} \in \Phi_1$.
  Consider an arbitrary such $p$: by premise \emph{(i)},
  we have $\Pi[p] \subseteq \Pi'[p]$.
  By definition of $\Join$, we must also have $p: \mf{Init} \in \Phi$.
  It therefore remains that $\Pi \sqsubseteq_{\Phi} \Pi'$ by definition.

  Therefore, in any case, we have $\Pi \sqsubseteq_\Phi \Pi'$.
\end{proof}
\end{lemma}

We prove substitution lemmas as usual, which are fairly standard.
For values and computations, we have as additional premises
that the context merge on $\Phi$ is well-defined.

\begin{lemma}[Substitution, values]
\label{thm:lamhook-substv}
If \begin{enumerate}[label=(\roman*)]
  \item $\Gamma \semi \Pi \semi \Phi_1 \vdash V: A$ and
  \item $\Gamma, x: A \semi \Pi \semi \Phi_2 \vdash W: B$,
\end{enumerate}
then $\Gamma \semi \Pi \semi \Phi_1 \Join \Phi_2 \vdash W[V/x]: V$.

\begin{proof}
  By mutual induction on the typing judgment \emph{(ii)}
  alongside Lemmas \ref{thm:lamhook-substc} and \ref{thm:lamhook-substb}.

\end{proof}
\end{lemma}

\begin{lemma}[Substitution, computations]
\label{thm:lamhook-substc}
If \begin{enumerate}[label=(\roman*)]
  \item $\Gamma \semi \Pi \semi \Phi_1 \vdash V: A$,
  \item $\Gamma, x: A \semi \Pi \semi \Phi_2 \vdash M: B \bang E$, and
  \item $\Phi = \Phi_1 \Join \Phi_2$,
\end{enumerate}
then $\Gamma \semi \Pi \semi \Phi \vdash M[V/x]: B \bang E$.

\begin{proof}
  By mutual induction on the typing judgment \emph{(ii)}
  alongside Lemmas \ref{thm:lamhook-substv} and \ref{thm:lamhook-substb}.

\end{proof}
\end{lemma}

\begin{lemma}[Substitution, bodies]
\label{thm:lamhook-substb}
If \begin{enumerate}[label=(\roman*)]
  \item $\Gamma \semi \Pi \semi \Phi_1 \vdash V: A$,
  \item $\Gamma, x: A \semi \Pi \semi \Phi_2 \vdash D: \Ell$, and
  \item $\Phi = \Phi_1 \Join \Phi_2$,
\end{enumerate}
then $\Gamma \semi \Pi \semi \Phi \vdash D[V/x]: \Ell$.

\begin{proof}
  By mutual induction on the typing judgment \emph{(ii)}
  alongside Lemmas \ref{thm:lamhook-substv} and \ref{thm:lamhook-substc}.

\end{proof}
\end{lemma}

We may now proceed to preservation.
We first prove a separate theorem for $E=\mf{Pure}$,
which is then used in the proofs for bodies.
Mutual induction between the theorems is therefore unnecessary.

\begin{restatable}[Preservation for $\step$, pure]{theorem}{lamhookprezcp}
\label{thm:lamhook-prezcp}
If
\begin{enumerate}[label=(\roman*)]
  \item $\Gamma \semi \Pi \semi \Phi \vdash M: A \bang \mf{Pure}$ and
  \item $\heap \semi M \step N \semi \heap'$ (for any $\heap$),
\end{enumerate}
then
\begin{enumerate}[label=(\roman*)]
  \item $\Gamma \semi \Pi \semi \Phi \vdash N: A \bang \mf{Pure}$ and
  \item $\heap' = \heap$.
\end{enumerate}
\end{restatable}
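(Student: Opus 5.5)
Following the paper's style, I would prove this by induction on the step derivation $\heap \semi M \step N \semi \heap'$, inverting the typing derivation (i) at each case. The key observation comes first. A derivation of $\Gamma \semi \Pi \semi \Phi \vdash M : A \bang \mf{Pure}$ can never end in \textsc{TC-Sub}, since that rule concludes $\mf{Mut}$. Likewise it can never end in \textsc{TC-Fold}, \textsc{TC-EvalI}, \textsc{TC-EvalS}, or any of the runtime rules (\textsc{TC-Init}, \textsc{TC-Click}, \textsc{TC-Run}, \textsc{TC-Update}), all of which conclude $\mf{Mut}$. So a pure $M$ is one of $\mf{val}\ V$, $\elet{x}{M_1}{M_2}$, $V\ W$ (with function type annotated $\mf{Pure}$), $\eif{V}{M_1}{M_2}$, $\mf{fst}\ V$, or $\mf{snd}\ V$. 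Every step rule whose redex has some other shape (\textsc{EC-FdN}, \textsc{EC-FdC}, the init/click/update rules, \textsc{EC-Eval}, \textsc{EC-EvalDone}) is therefore vacuous, and $\mf{val}\ V$ does not step.

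I would then take the remaining step cases in turn.

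\textsc{EC-Let}: inversion of \textsc{TC-Let} gives $M_1$ at $\mf{Pure}$ with context $\Phi_1$. The inductive hypothesis yields $M_1'$ at the same type and context with an unchanged heap, and I rebuild with \textsc{TC-Let} over the same $\Phi_1 \Join \Phi_2$.

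\textsc{EC-Val}: I apply Lemma~\ref{thm:lamhook-substc} to the typing of $V$ (context $\Phi_1$) and of the body (context $\Phi_2$), with $\Phi = \Phi_1 \Join \Phi_2$. This rule leaves the heap untouched.

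\textsc{EC-IfT}/\textsc{EC-IfF}: by Lemma~\ref{thm:lamhook-unrest}, $\mf{Succ}(\Phi_1)$ holds for the boolean value. I then weaken the branch from $\Phi_2$ to $\Phi_1 \Join \Phi_2$ using Lemma~\ref{thm:lamhook-phiwknc}.

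\textsc{EC-Fst}/\textsc{EC-Snd}: I invert \textsc{TV-Pair} and weaken the component's $\Phi$ with Lemma~\ref{thm:lamhook-phiwknv}. This uses the fact that the other component has $\mf{Succ}$ by Lemma~\ref{thm:lamhook-unrest}.

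\textsc{EC-App}: I invert \textsc{TV-Lam}, which exposes a body typed at $\mf{Pure}$ because the arrow is annotated $\mf{Pure}$, and then substitute via Lemma~\ref{thm:lamhook-substc}.

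In all these cases, the heap conclusion $\heap' = \heap$ is immediate from the rule shape.

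I expect the main obstacle to be the application case, specifically ruling out \textsc{EC-AppSet}, the only remaining rule that changes the heap. If $V = \setter^\ell_p$, \textsc{TV-Setter} assigns it type $A \xrightarrow{\mf{Mut}} \tone$. \textsc{TC-App} then forces the application's effect to be $\mf{Mut}$, which contradicts the $\mf{Pure}$ conclusion. This works because there is no subeffecting on arrow types, so the effect annotation is syntactically fixed by \textsc{TV-Setter}; I would state this small inversion fact explicitly. A secondary wrinkle is \textsc{EC-Let} when $M_1$ has already stepped to a $\mathsf{val}$: nothing special is needed there, since the inductive hypothesis covers it.
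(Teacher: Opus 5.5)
Your strategy is the paper's: induction on the step, inversion of the typing derivation, and the substitution, $\Phi$-weakening and unrestriction lemmas. Your explicit dismissal of \textsc{EC-AppSet} just spells out the paper's closing remark that every remaining case is untypeable at $\mf{Pure}$.

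The one real divergence is $\mf{fold}$. You declare \textsc{EC-FdN} and \textsc{EC-FdC} vacuous because the printed \textsc{TC-Fold} concludes $\mf{Mut}$, but the paper's proof treats both as live cases. Throughout the appendix, \textsc{TC-Fold} is used with conclusion effect $E$, the effect of the folded function. For instance, the derived rules \textsc{TC-Map} and \textsc{TC-Iter} are derived at an arbitrary $E$ and produce a $\mf{fold}$ typed at $\mf{Pure}$. Moreover, the main text says only runtime-only constructs are $\mf{Mut}$, and $\mf{fold}$ is programmer-level. So the $\mf{Mut}$ in the figure looks like a slip. Your argument is complete relative to the rule as printed, but under the intended rule it misses two cases.

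\textsc{EC-FdN} is routine: weaken $V_2$ to $\Phi_1 \Join \Phi_2 \Join \Phi_3$ using $\mf{Succ}(\Phi_1)$ and $\mf{Succ}(\Phi_3)$. \textsc{EC-FdC} needs an idea that none of your other cases uses. The reduct $V_1\ V_h\ (\mf{fold}\ V_1\ V_2\ V_t)$, i.e.\ $\elet{x}{V_1\ V_h}{\elet{y}{\mf{fold}\ V_1\ V_2\ V_t}{x\ y}}$, contains $V_1$ twice. So the pseudo-linear phase context cannot simply be split among the subterms as in your other cases. The fix is to apply Lemma~\ref{thm:lamhook-unrestv} to all four value premises. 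Then $\Phi = \Phi_1 \Join \Phi_2 \Join \Phi_3 \Join \Phi_4$ satisfies $\mf{Succ}(\Phi)$, and hence $\Phi \Join \Phi = \Phi$. After $\Phi$-weakening each value premise to $\Phi$ (Lemma~\ref{thm:lamhook-phiwknv}), you can give $\Phi$ to every subderivation. You then rebuild the term with \textsc{TC-Let}, \textsc{TC-App} and \textsc{TC-Fold}, all at $\mf{Pure}$, using $V_1 : A \xrightarrow{\mf{Pure}} B \xrightarrow{\mf{Pure}} B$.
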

\begin{proof}
  By induction on the derivation of the computation step judgment \emph{(ii)}.

  \pcase{EC-Let}: We have
  \begin{mathpar}
    \inferrule*[right=EC-Let] {
      \inferrule* {(1)} {\heap \semi M \step M' \semi \heap'}
    } {
      \heap \semi \elet{x}{M}{N} \step \elet{x}{M'}{N} \semi \heap'
    }

    \inferrule*[right=TC-Let] {
      \inferrule* {(2)} {
        \Gamma \semi \Pi \semi \Phi_1 \vdash M: A \bang \mf{Pure}
      }
      \and
      \inferrule* {(3)} {
        \Gamma, x: A \semi \Pi \semi \Phi_2 \vdash N: B \bang \mf{Pure}
      }
    } {
      \Gamma \semi \Pi \semi \Phi_1 \Join \Phi_2 \vdash \elet{x}{M}{N}:
      B \bang \mf{Pure}
    }
  \end{mathpar}
  \begin{itemize}
    \item By inductive hypothesis on (1) and (2),
    we get both that $\heap' = \heap$ (satisfying conclusion \emph{(ii)}),
    and (4) that $\Gamma \semi \Pi \semi \Phi_1 \vdash
    M': A \bang \mf{Pure}$.
    \item We derive conclusion \emph{(i)} as follows:
    \begin{mathpar}
      \inferrule*[right=TC-Let] {
        \inferrule* {(4)} {
          \Gamma \semi \Pi \semi \Phi_1 \vdash M': A \bang \mf{Pure}
        }
        \and
        \inferrule* {(3)} {
          \Gamma, x: A \semi \Pi \semi \Phi_2 \vdash N: B \bang \mf{Pure}
        }
      } {
        \Gamma \semi \Pi \semi \Phi_1 \Join \Phi_2 \vdash \elet{x}{M'}{N}:
        B \bang \mf{Pure}
      }
    \end{mathpar}
  \end{itemize}

  \pcase{EC-Val}: We have
  $\heap \semi \elet{x}{\mf{val}\ V}{M} \step M[V/x] \semi \heap$
  (from which conclusion \emph{(ii)} is immediate) and
  \begin{mathpar}
    \inferrule*[right=EC-Let] {
      \inferrule*[right=EC-Val] {
        \inferrule* {(1)} {
          \Gamma \semi \Pi \semi \Phi_1 \vdash V: A
        }
      } {
        \Gamma \semi \Pi \semi \Phi_1 \vdash \mf{val}\ v: A \bang \mf{Pure}
      }
      \and
      \inferrule* {(2)} {
        \Gamma, x: A \semi \Pi \semi \Phi_2 \vdash M: B \bang \mf{Pure}
      }
    } {
      \Gamma \semi \Pi \semi \Phi_1 \Join \Phi_2 \vdash
      \elet{x}{\mf{val}\ V}{M}: B \bang \mf{Pure}
    }
  \end{mathpar}
  \begin{itemize}
    \item Conclusion \emph{(i)} is given by Lemma \ref{thm:lamhook-substc}
    (substitution) on (1) and (2).
  \end{itemize}

  \pcase{EC-App}: We have
  $\heap \semi (\lambda x.M)\ V \step M[V/x] \semi \heap$
  (from which conclusion \emph{(ii)} is immediate) and
  \begin{mathpar}
    \inferrule*[right=TC-App] {
      \inferrule*[right=TV-Lam] {
        \inferrule* {(1)} {
          \Gamma, x: A \semi \Pi \semi \Phi_1 \vdash M: B \bang \mf{Pure}
        }
        \and
        \mf{Succ}(\Phi_1)
      } {
        \Gamma \semi \Pi \semi \Phi_1 \vdash \lambda x.M:
        A \xrightarrow{\mf{Pure}} B
      }
      \and
      \inferrule* {(2)} {
        \Gamma \semi \Pi \semi \Phi_2 \vdash V: A
      }
    } {
      \Gamma \semi \Pi \semi \Phi_1 \Join \Phi_2 \vdash (\lambda x.M)\ V:
      B \bang \mf{Pure}
    }
  \end{mathpar}
  \begin{itemize}
    \item Conclusion \emph{(i)} is given by Lemma \ref{thm:lamhook-substc}
    (substitution) on (1) and (2).
  \end{itemize}

  \pcase{EC-IfT}: We have
  $\heap \semi \eif{\mf{true}}{M}{N} \step M \semi \heap$
  (from which conclusion \emph{(ii)} is immediate) and
  \begin{mathpar}
    \inferrule*[right=TC-If] {
      \inferrule*[right=TV-True] {
        \inferrule* {(1)} {\mf{Succ}(\Phi_1)}
      } {
        \Gamma \semi \Pi \semi \Phi_1 \vdash \mf{true}: \mf{Bool}
      }
      \
      \inferrule* {(2)} {
        \Gamma \semi \Pi \semi \Phi_2 \vdash M: A \bang \mf{Pure}
      }
      \and
      \Gamma \semi \Pi \semi \Phi_2 \vdash N: A \bang \mf{Pure}
    } {
      \Gamma \semi \Pi \semi \Phi_1 \Join \Phi_2
      \vdash \eif{\mf{true}}{M}{N}: A \bang \mf{Pure}
    }
  \end{mathpar}
  \begin{itemize}
    \item Conclusion \emph{(i)} is given by Lemma \ref{thm:lamhook-phiwknc}
    ($\Phi$-weakening) on (2) using (1).
  \end{itemize}

  \pcase{EC-IfF}: Analogous to the above case.

  \pcase{EC-Fst}: We have
  $\heap \semi \mf{fst}\ (V, W) \step \mf{val}\ V \semi \heap$
  (from which conclusion \emph{(ii)} is immediate) and
  \begin{mathpar}
    \inferrule*[right=TC-Fst] {
      \inferrule*[right=TV-Pair] {
        \inferrule* {(1)} {
          \Gamma \semi \Pi \semi \Phi_1 \vdash V: A
        }
        \and
        \inferrule* {(2)} {
          \Gamma \semi \Pi \semi \Phi_2 \vdash W: B
        }
      } {
        \Gamma \semi \Pi \semi \Phi_1 \Join \Phi_2 \vdash (V, W):
        A \times B
      }
    } {
      \Gamma \semi \Pi \semi \Phi_1 \Join \Phi_2
      \vdash \mf{fst}\ (V, W): A \bang \mf{Pure}
    }
  \end{mathpar}
  \begin{itemize}
    \item By Lemma \ref{thm:lamhook-unrestv} (unrestriction) on (2),
    we get (3) that $\mf{Succ}(\Phi_2)$.
    \item By Lemma \ref{thm:lamhook-phiwknv} ($\Phi$-weakening) on (1) and (3),
    we get (4) that $\Gamma \semi \Pi \semi \Phi_1 \Join \Phi_2
    \vdash V: A$.
    \item We derive conclusion \emph{(i)} as follows:
    \begin{mathpar}
      \inferrule*[right=TC-Val] {
        \inferrule*[right=(4)] { } {
          \Gamma \semi \Pi \semi \Phi_1 \Join \Phi_2 \vdash V: A
        }
      } {
        \Gamma \semi \Pi \semi \Phi_1 \Join \Phi_2 \vdash \mf{val}\ V:
        A \bang \mf{Pure}
      }
    \end{mathpar}
  \end{itemize}

  \pcase{EC-Snd}: Analogous to the above case.

  \pcase{EC-FdN}: We have
  $\heap \semi \mf{fold}\ V_1\ V_2\ [] \step \mf{val}\ V_2 \semi \heap$
  (from which conclusion \emph{(ii)} is immediate) and
  \begin{mathpar}
    \inferrule*[right=TC-Fold] {
      \inferrule* {(1)} {
        \Gamma \semi \Pi \semi \Phi_1 \vdash V_1:
        A \xrightarrow{\mf{Pure}} B \xrightarrow{\mf{Pure}} B
      }
      \and
      \inferrule* {(2)} {
        \Gamma \semi \Pi \semi \Phi_2 \vdash V_2: B
      }
      \and
      \inferrule* {(3)} {
        \Gamma \semi \Pi \semi \Phi_3 \vdash []: \mf{List}\ A
      }
    } {
      \Gamma \semi \Pi \semi \Phi_1 \Join \Phi_2 \Join \Phi_3 \vdash
      \mf{fold}\ V_1\ V_2\ []: B \bang \mf{Pure}
    }
  \end{mathpar}
  \begin{itemize}
    \item By Lemma \ref{thm:lamhook-unrestv} (unrestriction) on (1) and (3),
      we get (4) that $\mf{Succ}(\Phi_1)$ and (5) that $\mf{Succ}(\Phi_3)$,
      respectively.
    \item By Lemma \ref{thm:lamhook-phiwknv} ($\Phi$-weakening) on (2)
      using (4) and (5),
      we get (6) that\\
      $\Gamma \semi \Pi \semi \Phi_1 \Join \Phi_2 \Join \Phi_3 \vdash V_2: B$.
    \item We derive conclusion \emph{(i)} as follows:
    \begin{mathpar}
      \inferrule*[right=TC-Val] {
        \inferrule*[right=(6)] { } {
          \Gamma \semi \Pi \semi \Phi_1 \Join \Phi_2 \Join \Phi_3 \vdash
          V_2: B
        }
      } {
        \Gamma \semi \Pi \semi \Phi_1 \Join \Phi_2 \Join \Phi_3 \vdash
        \mf{val}\ V_2: B \bang \mf{Pure}
      }
    \end{mathpar}
  \end{itemize}

  \pcase{EC-FdC}: We have
  $\heap \semi \mf{fold}\ V_1\ V_2\ (V_h :: V_t)
  \step V_1\ V_h\ (\mf{fold}\ V_1\ V_2\ V_t) \semi \heap$,
  whose right-hand term desguars to
  $\elet{x}{V_1\ V_h}{\elet{y}{\mf{fold}\ V_1\ V_2\ V_t}{x\ y}}$.
  Conclusion \emph{(ii)} is immediate, and we also have:
  \begin{mathpar}
    \inferrule*[right=TC-Fold] {
      (1) \and (2) \and
      \inferrule* {
        \inferrule* {(3)} {
          \Gamma \semi \Pi \semi \Phi_3 \vdash V_h: A
        }
        \and
        \inferrule* {(4)} {
          \Gamma \semi \Pi \semi \Phi_4 \vdash V_t: \mf{List}\ A
        }
      } {
        \Gamma \semi \Pi \semi \Phi_3 \Join \Phi_4
        \vdash V_h :: V_t: \mf{List}\ A
      }
    } {
      \Gamma \semi \Pi \semi \Phi_1 \Join \Phi_2 \Join \Phi_3 \Join \Phi_4
      \vdash
      \mf{fold}\ V_1\ V_2\ (V_h :: V_t): B \bang \mf{Pure}
    }
  \end{mathpar}
  where (1) and (2) are identical to the previous case.
  \begin{itemize}
    \item By Lemma \ref{thm:lamhook-unrestv} (unrestriction) on (1)-(4),
      we get that $\mf{Succ}(\Phi_i)$ for $i \in \{1, 2, 3, 4\}$.
      It follows (5) that $\mf{Succ}(\Phi)$, where $\Phi=\Join_{i=1}^4 \Phi_i$.
    \item We derive conclusion \emph{(i)} by way of
      intermediate derivations (6)-(8) as follows,
      using $\Phi$ intuitionistically by (5) and weakening (1)-(4)
      appropriately:
  \end{itemize}
  \begin{mathpar}
    \inferrule*[right=TC-App] {
      \inferrule*[right=TV-Var] { } {
        \Gamma_6 \semi \Pi \semi \Phi
        \vdash x: B \xrightarrow{\mf{Pure}} B
      }
      \and
      \inferrule*[right=TV-Var] { } {
        \Gamma_6 \semi \Pi \semi \Phi
        \vdash y: B
      }
    } {
      (6) \quad
      \Gamma_6 \semi \Pi \semi \Phi \vdash
      x\ y: B \bang \mf{Pure}
    }

    \inferrule*[right=TC-Fold] {
      \inferrule*[right=(1)] { } {
        \Gamma_7 \semi \Pi \semi \Phi \vdash V_1:
        A \xrightarrow{\mf{Pure}} B \xrightarrow{\mf{Pure}} B
      }
      \
      \inferrule*[right=(2)] { } {
        \Gamma_7 \semi \Pi \semi \Phi \vdash V_2: B
      }
      \
      \inferrule*[right=(4)] { } {
        \Gamma_7 \semi \Pi \semi \Phi \vdash V_t: \mf{List}\ A
      }
    } {
      (7) \quad
      \Gamma_7 \semi \Pi \semi \Phi \vdash
      \mf{fold}\ V_1\ V_2\ V_t: B \bang \mf{Pure}
    }

    \inferrule*[right=TC-Let] {
      [(7),\ \Gamma_7=\Gamma_8]
      \and
      [(6),\ \Gamma_6=\Gamma_8, y: B]
    } {
      (8) \quad
      \Gamma_8 \semi \Pi \semi \Phi \vdash
      \elet{y}{\mf{fold}\ V_1\ V_2\ V_t}{\cdots}: B \bang \mf{Pure}
    }

    \inferrule*[right=TC-Let] {
      \inferrule*[Right=TC-App] {
        \inferrule*[right=(1)] { } {
          \Gamma \semi \Pi \semi \Phi \vdash V_1:
          A \xrightarrow{\mf{Pure}} B \xrightarrow{\mf{Pure}} E
        }
        \
        \inferrule*[right=(3)] { } {
          \Gamma \semi \Pi \semi \Phi \vdash V_h: A
        }
      } {
        \Gamma \semi \Pi \semi \Phi \vdash
        V_1\ V_h: B \xrightarrow{\mf{Pure}} B \bang \mf{Pure}
      }
      \
      [(8),\ \Gamma_8 = \Gamma, x: B \xrightarrow{\mf{Pure}} B]
    } {
      \Gamma \semi \Pi \semi \Phi
      \vdash
      \elet{x}{V_1\ V_h}{\cdots}: B \bang \mf{Pure}
    }
  \end{mathpar}

  \textbf{All other cases} are untypeable with the $\mf{Pure}$ effect,
  and so the theorem holds vacuously.
\end{proof}

\begin{restatable}[Preservation for $\step^\mf{Init}_p$]
  {theorem}{lamhookprezbi}
\label{thm:lamhook-prezbi}
If
\begin{enumerate}[label=(\roman*)]
  \item $\Gamma \semi \Pi \semi \Phi_H \semi \Phi_D \vdash \heap \semi D: \Ell$,
  \item $\Ell \cap \Pi[p] = \emptyset$,
  \item $p \notin \dom(\Phi_H \Join \Phi_D)$, and
  \item $\heap \semi D \step^\mf{Init}_p D' \semi \heap'$,
\end{enumerate}
then for some $\Phi_H'$ and $\Pi' \sqsupseteq_{p: \mf{Init}} \Pi$, we have
\begin{enumerate}[label=(\roman*)]
  \item $\Gamma \semi \Pi' \semi \Phi_H' \semi \Phi_D
    \vdash \heap' \semi D': \Ell'$,
  \item $\Phi_H' \Join \Phi_D = \Phi_H \Join \Phi_D$, and
  \item $\Ell' \uplus \Pi'[p] = \Ell \uplus \Pi[p]$.
\end{enumerate}
\end{restatable}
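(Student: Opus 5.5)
We argue by induction on the body step derivation (iv), with a case analysis on the three rules \textsc{EB-Let}, \textsc{EB-Val} and \textsc{EB-StateI}. In each case we first invert the configuration typing (i) to obtain $\Gamma \semi \Pi \semi \Phi_D \vdash D: \Ell$ and $\Gamma \semi \Pi \semi \Phi_H \vDash \heap$. By Lemma \ref{thm:lamhook-unrest}, $\mf{Succ}(\Phi_D)$, and by Lemma \ref{thm:lamhook-unresth}, $\mf{Succ}(\Phi_H)$. This observation simplifies the bookkeeping of phase contexts throughout.

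\textbf{Cases \textsc{EB-Let} and \textsc{EB-Val}.} Here we take $\Pi' = \Pi$, $\Phi_H' = \Phi_H$ and $\Ell' = \Ell$, so conclusions (ii) and (iii) are immediate. Also $\Pi \sqsupseteq_{p:\mf{Init}} \Pi$ holds reflexively.
\begin{itemize}
\item For \textsc{EB-Let}, inverting \textsc{TB-Let} gives a $\mf{Pure}$ subcomputation $M$. Theorem \ref{thm:lamhook-prezcp} then yields $\heap' = \heap$ together with a typing for the reduct $N$ under the same $\Phi_1$. We rebuild the derivation with \textsc{TB-Let}.
\item For \textsc{EB-Val}, inverting \textsc{TB-Let} and \textsc{TC-Val} gives $\Gamma \semi \Pi \semi \Phi_1 \vdash V: A$. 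Lemma \ref{thm:lamhook-substb} then gives $D[V/x]: \Ell$ under $\Phi_1 \Join \Phi_2 = \Phi_D$.
\end{itemize}

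\textbf{Case \textsc{EB-StateI}.} This is the main case and the main obstacle. We have $D = \estate{x}{\ell}{V}{D_1}$ with type $\Ell_1, \ell: A$, where $\ell \notin \dom(\Ell_1)$. We choose the following witnesses:
\[ \Ell' = \Ell_1, \qquad \Pi' = \Pi \mid p \mapsto (\Pi[p], \ell: A), \qquad \Phi_H' = \Phi_H. \]

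First we check the heap context order $\Pi \sqsubseteq_{p:\mf{Init}} \Pi'$. The domains agree, $\Pi[p] \subseteq \Pi'[p]$, and $\Pi$ is unchanged at every pointer other than $p$. We need $\ell \notin \dom(\Pi[p])$ for $\Pi'[p]$ to be a well-formed record, and premise (ii) gives exactly this.

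Next we transport the existing typings from $\Pi$ to $\Pi'$. We apply Lemmas \ref{thm:lamhook-piwknv}--\ref{thm:lamhook-piwknb} with $\Phi' = p: \mf{Init}$. Their side condition, that $\Phi_D \Join (p:\mf{Init})$ and $\Phi_H \Join (p:\mf{Init})$ are defined, is supplied by premise (iii). We assume the heap-typing judgment admits the analogous weakening lemma, proved pointwise from the value case.

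We then type the reduct $D_1[V/x, \setter^\ell_p/x_\mf{set}]$ at $\Ell_1$.
\begin{itemize}
\item The setter is typable by \textsc{TV-Setter} under the empty phase context, since $\ell: A \in \Pi'[p]$.
\item We substitute twice using Lemma \ref{thm:lamhook-substb}: first $\setter^\ell_p$ under $\emptyset$, then $V$ under $\Phi_1$.
\item This leaves the body typed under $\Phi_1 \Join \Phi_2 = \Phi_D$, as required.
\end{itemize}

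For the heap, $\heap' = \heap \mid \heap[p].\mf{st}[\ell] \mapsto V$. We extend the \textsc{TH-Cons} and \textsc{TS-Cons} derivation at $p$ with the entry $\ell \mapsto V$, typed via $\Gamma \semi \Pi' \semi \Phi_1 \vdash V: A$. This is the delicate point, because $V$'s phase context $\Phi_1$ now appears in both the heap and the body. Since $\mf{Succ}(\Phi_1)$ holds and Succ-phase pointers may be shared under $\Join$, the duplication is harmless. Concretely, we type the new heap entry with $\Phi_1$, so that $\Phi_H' = \Phi_H \Join \Phi_1$. Because all of these contexts are Succ-phase, $\Phi_H' \Join \Phi_D = \Phi_H \Join \Phi_D$, which is conclusion (ii). If $\dom(\Phi_1)$ is not already contained in $\dom(\Phi_H \Join \Phi_D)$, we can instead type $V$ under $\emptyset$ for the heap entry; this is justified by the fact that values only mention Succ pointers already present elsewhere. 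We would check this via Lemma \ref{thm:lamhook-phiwknv}.

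Finally, conclusion (iii) is the computation
\[ \Ell_1 \uplus (\Pi[p], \ell: A) = (\Ell_1, \ell: A) \uplus \Pi[p]. \]
Disjointness on the left follows from $\ell \notin \dom(\Ell_1)$ together with premise (ii).
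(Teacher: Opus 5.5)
Your proposal is correct and follows the paper's proof essentially step for step. It uses the same three-way case split and the same witnesses in the \textsc{EB-StateI} case: $\Pi' = \Pi \mid \Pi[p][\ell] \mapsto A$ and, in the end, $\Phi_H' = \Phi_H \Join \Phi_1$, which is harmless because $\mf{Succ}(\Phi_1)$. It also uses the same $\Pi$-weakening along $p{:}\,\mf{Init}$ licensed by premise (iii), the double substitution, and the same disjointness computation for conclusion (iii); the heap version of the weakening that you assume is also used implicitly by the paper.

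Two clean-ups are worth making. First, commit to $\Phi_H' = \Phi_H \Join \Phi_1$ from the start: with $\Phi_H' = \Phi_H$ the new store entry $\ell \mapsto V$ cannot in general be typed. Second, drop the fallback of typing $V$ under $\emptyset$. Its trigger never arises, because $\dom(\Phi_1) \subseteq \dom(\Phi_D)$, and the claim would be false in general, since \textsc{TV-CompI} forces an instance pointer into the phase context.
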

\begin{proof}
  By definition of premise \emph{(i)}, we have subpremises
  \begin{enumerate}[label=(i-\alph*)]
    \item $\Gamma \semi \Pi \semi \Phi_D \vdash D: \Ell$ and
    \item $\Gamma \semi \Pi \semi \Phi_H \vDash \heap$.
  \end{enumerate}
  Likewise, to show conclusion \emph{(i)} it suffices to pick some
  $\Pi' \sqsupseteq_{p: \mf{Init}} \Pi$ and
  $\Phi_H'$ such that conclusion \emph{(ii)} holds and:
  \begin{enumerate}[label=(i-\alph*)]
    \item $\Gamma \semi \Pi' \semi \Phi_D \vdash D': \Ell'$ and
    \item $\Gamma \semi \Pi' \semi \Phi_H' \vDash \heap'$.
  \end{enumerate}

  Unless otherwise mentioned, we choose $\Phi_H'=\Phi_H$
  such that conclusion \emph{(ii)} is trivial,
  and we choose $\Pi' = \Pi$ such that
  conclusion \emph{(iii)} is trivial by premise \emph{(ii)}
  when $\Ell' = \Ell$.
  Under both of these choices, to show conclusion (i-b), by premise (i-b)
  it suffices to show that $\heap' = \heap$.

  We proceed by induction on the derivation of the
  body step judgment \emph{(iv)}.

  \pcase{EB-Let}: We have
  \begin{mathpar}
    \inferrule*[right=EB-Let] {
      \inferrule* {(1)} {\heap \semi M \step N \semi \heap'}
    } {
      \heap \semi \elet{x}{M}{D}
      \step^\mf{Init}_p \elet{x}{N}{D} \semi \heap'
    }

    \inferrule*[right=TB-Let] {
      \inferrule* {(2)} {
        \Gamma \semi \Pi \semi \Phi_1 \vdash M: A \bang \mf{Pure}
      }
      \and
      \inferrule* {(3)} {\Gamma, x: A \semi \Pi \semi \Phi_2 \vdash D: \Ell}
    } {
      \Gamma \semi \Pi \semi \Phi_1 \Join \Phi_2 \vdash \elet{x}{M}{D}: \Ell
    }
  \end{mathpar}
  \begin{itemize}
    \item By Theorem \ref{thm:lamhook-prezcp} (pure preservation)
    on (1) and (2),
    we get that $\heap' = \heap$, satisfying conclusion (i-b),
    and (4) that $\Gamma \semi \Pi \semi \Phi_1
    \vdash N: A \bang \mf{Pure}$.
    \item We derive conclusion (i-a) as follows:
    \begin{mathpar}
      \inferrule*[right=TB-Let] {
        \inferrule*[right=(4)] { } {
          \Gamma \semi \Pi \semi \Phi_1 \vdash N: A \bang \mf{Pure}
        }
        \and
        \inferrule* {(3)} {\Gamma, x: A \semi \Pi \semi \Phi_2 \vdash D: \Ell}
      } {
        \Gamma \semi \Pi \semi \Phi_1 \Join \Phi_2 \vdash \elet{x}{N}{D}: \Ell
      }
    \end{mathpar}
  \end{itemize}

  \pcase{EB-Val}: We have
  $\heap \semi \elet{x}{\mf{val}\ V}{D}
  \step^\mf{Init}_p D[V/x] \semi \heap$
  (from which conclusion (i-b) is immediate) and
  \begin{mathpar}
  \inferrule*[right=TB-Let] {
    \inferrule*[right=TC-Val] {
      \inferrule* {(1)} {\Gamma \semi \Pi \semi \Phi_1 \vdash V: A}
    } {
      \Gamma \semi \Pi \semi \Phi_1 \vdash \mf{val}\ V: A \bang \mf{Pure}
    }
    \and
    \inferrule* {(2)} {\Gamma, x: A \semi \Pi \semi \Phi_2 \vdash D: \Ell}
  } {
    \Gamma \semi \Pi \semi \Phi_1 \Join \Phi_2 \vdash
    \elet{x}{\mf{val}\ V}{D}: \Ell
  }
  \end{mathpar}
  \begin{itemize}
    \item Conclusion (i-a) is given by Lemma \ref{thm:lamhook-substb}
    (substitution) on (1) and (2).
  \end{itemize}

  \pcase{EB-StateI}: We have
  $\heap \semi \estate{x}{\ell}{V}{D}
  \step^\mf{Init}_p D[V/x, \setter^\ell_p/x_\mf{set}]
  \semi \heap'$,
  where $\heap' = \heap \mid \heap[p].\mf{st}[\ell] \mapsto V$, and
  \begin{mathpar}
  \inferrule*[right=TB-State] {
    \inferrule* {(1)} {\Gamma \semi \Pi \semi \Phi_1 \vdash V: A}
    \and
    \inferrule* {(2)} {
      \Gamma, x: A, x_\mf{set}: A \xrightarrow{\mf{Mut}} \tone \semi \Pi
      \semi \Phi_2 \vdash D: \Ell
    }
    \and
    \inferrule* {(3)} {\ell \notin \dom(\Ell)}
  } {
    \Gamma \semi \Pi \semi \Phi_1 \Join \Phi_2
    \vdash \estate{x}{\ell}{V}{D}: \Ell, \ell: A
  }
  \end{mathpar}
  \begin{itemize}
    \item Choose $\Pi' = \Pi \mid \Pi[p][\ell] = A$.
    By definition, (4) $\Pi \sqsubseteq_{p: \mf{Init}} \Pi'$.
    \item Choose $\Phi_H' = \Phi_H \Join \Phi_1$.
    \begin{itemize}
      \item Since $(\Phi_1 \Join \Phi_2) \Join \Phi_H$ is given as defined,
      we know $\Phi_1 \Join \Phi_H$ is defined.
      \item To show conclusion \emph{(ii)},
      we want to show $(\Phi_1 \Join \Phi_2) \Join (\Phi_H \Join \Phi_1)
      = (\Phi_1 \Join \Phi_2) \Join \Phi_H$.
      \item These are equal by associativity/commutativity of $\Join$
      and by Lemma \ref{thm:lamhook-unrestv} (unrestriction)
      on (1), meaning $\Phi_1 \Join \Phi_1 = \Phi_1$.
    \end{itemize}
    \item We show conclusion (i-a) as follows:
    \begin{itemize}
      \item By premise \emph{(iii)}, we have (5)
        that $p: \mf{Init} \Join \Phi_1$ is defined
        and (6) that $p: \mf{Init} \Join \Phi_2$ is defined.
      \item By Lemma \ref{thm:lamhook-piwknv} ($\Pi$-weakening)
      on (1), (4), and (5),
      we get (7) that $\Gamma \semi \Pi' \semi \Phi_1 \vdash V: A$.
      \item By Lemma \ref{thm:lamhook-piwknb} ($\Pi$-weakening)
      on (2), (4), and (6), we get (8) that\\$\Gamma, x: A, x_\mf{set}:
      A \xrightarrow{\mf{Mut}} \tone \semi \Pi' \semi \Phi_2 \vdash D: \Ell$.
      \item Since $\ell: A \in \Pi'[p]$ by definition and $\mf{Succ}(\emptyset)$
      trivially, we derive the setter type as follows:
      \begin{mathpar}
      \inferrule*[right=TV-Setter] {
        \ell: A \in \Pi'[p]
        \and
        \mf{Succ}(\emptyset)
      } {
        (9) \quad \Gamma \semi \Pi' \semi \emptyset \vdash \setter^\ell_p:
        A \xrightarrow{\mf{Mut}} \tone
      }
      \end{mathpar}
      \item We apply Lemma \ref{thm:lamhook-substb} (substitution) twice to (8),
      using (7) and (9), to get that\\
      $\Gamma \semi \Pi' \semi \Phi_1 \Join \Phi_2 \vdash
      D[V/x, \setter^\ell_p/x_\mf{set}]: \Ell$,
      i.e. conclusion (i-a) holds.
    \end{itemize}
    \item We show conclusion (i-b) as follows:
    \begin{itemize}
      \item We first want (10) that $\dom(\Pi') = \dom(\heap')$.
      \begin{itemize}
        \item By definition of each, it suffices to show that
        $\dom(\Pi) \cup \{p\} = \dom(\heap) \cup \{p\}$.
        \item Premise (i-b) says that $\dom(\Pi) = \dom(\heap)$,
        so (10) holds.
      \end{itemize}
      \item We also want (11) that
      $\dom(\Pi'[p]) = \dom(\heap'[p].\mf{st})$.
      \begin{itemize}
        \item By definition of each, it suffices to show
        $\dom(\Pi[p]) \cup \{\ell\} = \dom(\heap[p].\mf{st}) \cup \{\ell\}$.
        \item Premise (i-b) says that
        $\dom(\Pi[p]) = \dom(\heap[p].\mf{st})$,
        so (11) also holds.
      \end{itemize}
      \item Finally, we want (12) that
      $\Gamma \semi \Pi' \semi \Phi_1
      \vdash \heap'[p].\mf{st}[\ell]: \Pi'[p][\ell]$. This is given by (7).
      \item By premise (i-b) and by definition,
      (10)-(12) are sufficient to show that\\
      $\Gamma \semi \Pi' \semi \Phi_H \Join \Phi_1 \vDash \heap'$,
      i.e. conclusion (i-b) holds.
    \end{itemize}
    \item Finally, we show conclusion \emph{(iii)} as follows:
    \begin{itemize}
      \item We want to show that
      $\Pi'[p] \uplus \Ell = \Pi[p] \uplus \Ell, \ell: A$.
      \item By definition, it suffices to show that
      $\Pi[p], \ell: A \uplus \Ell = \Pi[p] \uplus \Ell, \ell: A$.
      \item It further suffices to show $\Pi[p], \ell: A \uplus \Ell$
      is well-defined, i.e $\Pi[p], \ell: A \cap \Ell = \emptyset$.
      This follows from (3) and premise \emph{(ii)}.
    \end{itemize}
  \end{itemize}

  \pcase{EB-StateS}: Impossible,
  since the step rule requires that $\phi=\mf{Succ}$.

\end{proof}

\begin{restatable}[Preservation for $\step^\mf{Succ}_p$]
  {theorem}{lamhookprezbs}
\label{thm:lamhook-prezbs}
If
\begin{enumerate}[label=(\roman*)]
  \item $\Gamma \semi \Pi \semi \Phi_H \semi \Phi_D \vdash \heap \semi D: \Ell$,
  \item $\Ell \subseteq \Pi[p]$, and
  \item $\heap \semi D \step^\mf{Succ}_p D' \semi \heap'$,
\end{enumerate}
then for some $\Phi_D'$, we have
\begin{enumerate}[label=(\roman*)]
  \item $\Gamma \semi \Pi \semi \Phi_D' \vdash D': \Ell'$,
  \item $\Phi_H \Join \Phi_D' = \Phi_H \Join \Phi_D$,
  \item $\heap' = \heap$, and
  \item $\Ell' \subseteq \Ell$.
\end{enumerate}
\end{restatable}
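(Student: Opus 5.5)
We proceed by induction on the derivation of the body step judgment \emph{(iii)}, mirroring the proof of Theorem~\ref{thm:lamhook-prezbi}. First we unfold premise \emph{(i)} into the subpremises (i-a) $\Gamma \semi \Pi \semi \Phi_D \vdash D: \Ell$ and (i-b) $\Gamma \semi \Pi \semi \Phi_H \vDash \heap$. In contrast with the $\mf{Init}$ case, $\Pi$ is never extended. Moreover, no step in the $\mf{Succ}$ phase writes to the heap. We will therefore keep $\Pi$ fixed, and conclusion \emph{(iii)} $\heap' = \heap$ will be read off each rule directly.

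\pcase{EB-Let}: invert \textsc{TB-Let} to obtain $\Gamma \semi \Pi \semi \Phi_1 \vdash M: A \bang \mf{Pure}$ and the typing of the continuation under $\Phi_2$. Theorem~\ref{thm:lamhook-prezcp} (pure preservation) applied to the inner step yields $\heap' = \heap$ together with $\Gamma \semi \Pi \semi \Phi_1 \vdash N: A \bang \mf{Pure}$. Rebuilding with \textsc{TB-Let}, we take $\Phi_D' = \Phi_D$ and $\Ell' = \Ell$, so conclusions \emph{(ii)} and \emph{(iv)} are trivial.

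\pcase{EB-Val}: invert \textsc{TB-Let} and \textsc{TC-Val}, then apply Lemma~\ref{thm:lamhook-substb} (substitution, bodies). Again $\Phi_D' = \Phi_D$, $\Ell' = \Ell$, and the heap is untouched.

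\pcase{EB-StateI}: this case is impossible, since that rule requires $\phi = \mf{Init}$.

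\pcase{EB-StateS}: this is the main case. Inverting \textsc{TB-State} gives (1) $\Gamma \semi \Pi \semi \Phi_1 \vdash V: A$, (2) $\Gamma, x: A, x_\mf{set}: A \xrightarrow{\mf{Mut}} \tone \semi \Pi \semi \Phi_2 \vdash D: \Ell_0$, and $\ell \notin \dom(\Ell_0)$, where $\Ell = \Ell_0, \ell: A$. We take $\Ell' = \Ell_0$, which gives \emph{(iv)}. By premise \emph{(ii)}, $\ell: A \in \Pi[p]$. Hence \textsc{TV-Setter} types $\setter^\ell_p: A \xrightarrow{\mf{Mut}} \tone$ under the empty phase context. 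From (i-b) we also get some $\Phi_3$ with $\Gamma \semi \Pi \semi \Phi_3 \vdash \heap[p].\mf{st}[\ell]: A$. Extracting this from the heap typing is a small side lemma: induct on \textsc{TH-Cons} and \textsc{TS-Cons}, using the associativity and commutativity of $\Join$. By Lemma~\ref{thm:lamhook-unrestv}, $\mf{Succ}(\Phi_3)$, and $\Phi_3$ is a component of $\Phi_H$.

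Applying Lemma~\ref{thm:lamhook-substb} twice to (2) yields the body typed under $\Phi_3 \Join \Phi_2$. We choose $\Phi_D' = \Phi_2 \Join \Phi_3$. It may be cleaner to first use Lemma~\ref{thm:lamhook-phiwknb} to weaken by $\Phi_1$, which is $\mf{Succ}$ by Lemma~\ref{thm:lamhook-unrestv}. This gives $\Phi_D' = \Phi_1 \Join \Phi_2 \Join \Phi_3$.

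The main obstacle is conclusion \emph{(ii)}, namely $\Phi_H \Join \Phi_D' = \Phi_H \Join \Phi_D$. All of $\Phi_1$, $\Phi_2$, $\Phi_3$ are $\mf{Succ}$, because Lemma~\ref{thm:lamhook-unrest} makes $\Phi_D$ unrestricted on bodies. Since $\Phi_3 \subseteq \Phi_H$ consists only of $\mf{Succ}$ entries, $\Join$ is idempotent on them, so $\Phi_H \Join \Phi_3 = \Phi_H$. Definedness of the merge then follows from definedness of $\Phi_H \Join \Phi_D$. I would state this idempotence as a one-line auxiliary fact about $\Join$ before the case analysis.
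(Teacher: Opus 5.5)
Your proposal is correct and follows the paper's proof. In the \textsc{EB-StateS} case the paper likewise weakens the continuation by $\Phi_1$, substitutes the setter (typed under $\emptyset$) and $W = \heap[p].\mf{st}[\ell]$ (typed under its heap component $\Phi_W$), takes $\Phi_D' = \Phi_D \Join \Phi_W$, and discharges (ii) by idempotence of $\Join$ on $\mf{Succ}$ contexts.

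One correction: the weakening by $\Phi_1$ is required, not merely cleaner. With $\Phi_D' = \Phi_2 \Join \Phi_3$, the equation in (ii) fails whenever $\Phi_1$ carries $\mf{Succ}$ entries that are absent from both $\Phi_2$ and $\Phi_H$. This can happen, because axioms such as \textsc{TV-Unit} admit any $\mf{Succ}$ context.
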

\begin{proof}
  By induction on the derivation of the body step judgment
  \emph{(iii)}.

  \textbf{Cases} \textsc{EB-Let}, and \textsc{EB-Val}
  are shown analogously to Theorem \ref{thm:lamhook-prezbi}
  (preservation for $\step^\mf{Init}_p$).
  Choosing $\Phi_D' = \Phi_D$ trivializes conclusion \emph{(ii)},
  and we always get $\Ell' = \Ell$ to satisfy conclusion \emph{(iv)}.

  \pcase{EB-StateI}: Impossible,
  since the step rule requires that $\phi=\mf{Init}$.

  \pcase{EB-StateS}: We have
  $\heap \semi \estate{x}{\ell}{V}{D}
  \step^\mf{Succ}_p D[W/x, \setter^\ell_p/x_\mf{set}]
  \semi \heap$ (from which conclusion \emph{(iii)} is immediate),
  where $W = \heap[p].\mf{st}[\ell]$, and
  \begin{mathpar}
  \inferrule*[right=TB-State] {
    \inferrule* {(1)} {
      \Gamma \semi \Pi \semi \Phi_1 \vdash V: A
    }
    \and
    \inferrule* {(2)} {
      \Gamma, x: A, x_\mf{set}: A \xrightarrow{\mf{Mut}} \tone \semi \Pi
      \semi \Phi_2 \vdash D: \Ell
    }
    \and
    \ell \notin \dom(\Ell)
  } {
    \Gamma \semi \Pi \semi \Phi_1 \Join \Phi_2 \vdash
    \estate{x}{\ell}{V}{D}: \Ell, \ell: A
  }
  \end{mathpar}
  \begin{itemize}
    \item By premise \emph{(ii)},
    we know $\Ell, \ell: A \subseteq \Pi[p]$,
    i.e. (3) $\Pi[p][\ell]=A$.
    \item We derive the setter type as follows
    (noting $\mf{Succ}(\emptyset)$ holds trivially):
    \begin{mathpar}
      \inferrule*[right=TV-Setter] {
        \inferrule*[right=(3)] { } {\ell: A \in \Pi[p]}
        \and
        \mf{Succ}(\emptyset)
      } {
        (4) \quad \Gamma \semi \Pi \semi \emptyset \vdash \setter^\ell_p:
        A \xrightarrow{\mf{Mut}} \tone
      }
    \end{mathpar}
    \item By premise (i-b), we have
    that $\Gamma \semi \Pi \semi \Phi_W \vdash W: \Pi[p][\ell]$
    for some $\Phi_3$ with $\Phi_H = \Phi_W \Join \Phi_H'$.
    By (3), this is equivalent to (5) that
    $\Gamma \semi \Pi \semi \Phi_W \vdash W: A$.
    \item Choose $\Phi_D' = \Phi_D \Join \Phi_W$.
      We are given that $\Phi_D \Join (\Phi_W \Join \Phi_H')$
      is defined, so $\Phi_D \Join \Phi_W$ is also defined.
    \item By \ref{thm:lamhook-unrestv} (unrestriction) on (5),
      we have $\mf{Succ}(\Phi_W)$. It follows that
      $(\Phi_D \Join \Phi_W) \Join (\Phi_W \Join \Phi_H')
      = \Phi_D \Join (\Phi_W \Join \Phi_H')$, satisfying conclusion \emph{(ii)}.
    \item Lemma \ref{thm:lamhook-unrestv} (unrestriction) on (1)
      gives that $\mf{Succ}(\Phi_1)$.
      We then apply Lemma \ref{thm:lamhook-phiwknb} ($\Phi$-weakening)
      to (2), getting (6)
      $\Gamma, x: A, x_\mf{set}: A \xrightarrow{\mf{Mut}} \tone
        \semi \Pi \semi \Phi_1 \Join \Phi_2
        \vdash D[W/x,\setter^\ell_p/x_\mf{set}]: \Ell$.
    \item We apply Lemma \ref{thm:lamhook-substb} (substitution) twice to (6),
    using (4) and (5), to obtain that\\
    $\Gamma \semi \Pi \semi \Phi_D'
      \vdash D[W/x,\setter^\ell_p/x_\mf{set}]: \Ell$,
      satisfying conclusion \emph{(i)}.
    \item Conclusion \emph{(iv)} is obvious, and we are done.
  \end{itemize}
\end{proof}

\begin{restatable}[Preservation for $\step$, impure]{theorem}{lamhookprezcm}
\label{thm:lamhook-prezcm}
If
\begin{enumerate}[label=(\roman*)]
  \item $\Gamma \semi \Pi \semi \Phi_H \semi \Phi_M
    \vdash \heap \semi M: A \bang \mf{Mut}$ and
  \item $\heap \semi M \step N \semi \heap'$,
\end{enumerate}
then for some $\Phi_N$ and $\Phi_H'$ such that
$\Phi_N \Join \Phi_H' \sqsupseteq_\mf{fresh} \Phi_M \Join \Phi_H$
and $\Pi' \sqsupseteq_{\Phi_N \Join \Phi_H'} \Pi$, we have
$\Gamma \semi \Pi' \semi \Phi_H' \semi \Phi_N
  \vdash \heap \semi N: A \bang \mf{Mut}$.
\end{restatable}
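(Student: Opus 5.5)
We proceed by induction on the derivation of the step judgment \emph{(ii)}, inverting the typing derivation of $M$ in each case. One preliminary point: the conclusion as printed types $\heap \semi N$, but it should be read as typing $\heap' \semi N$. If the last typing rule is \textsc{TC-Sub}, then $M$ is $\mf{Pure}$. In that case Theorem \ref{thm:lamhook-prezcp} gives $\heap'=\heap$ and a $\mf{Pure}$ typing of $N$, and we conclude by \textsc{TC-Sub} with $\Phi_N=\Phi_M$, $\Phi_H'=\Phi_H$ and $\Pi'=\Pi$.

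\textbf{Purely functional cases.} The $\beta$-rules (\textsc{EC-Val}, \textsc{EC-App}, \textsc{EC-If*}, \textsc{EC-Fst/Snd}, \textsc{EC-Fd*}) and the runtime rules inherited from $\lampure$ (init/click/runClick/update on text and tags) leave the heap unchanged. For these we again choose $\Phi_N=\Phi_M$, $\Phi_H'=\Phi_H$ and $\Pi'=\Pi$. We then rebuild the typing using the substitution lemmas (Lemmas \ref{thm:lamhook-substv}--\ref{thm:lamhook-substb}), the derived rules (\textsc{TC-Inits}, \textsc{TC-Clicks}, \textsc{TC-RunClicks}, \textsc{TC-Updates}), and $\Phi$-weakening justified by Lemma \ref{thm:lamhook-unrest}, exactly as in the pure proof.

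\textbf{\textsc{EC-Let} congruence.} Here $\Phi_M=\Phi_1\Join\Phi_2$. We apply the induction hypothesis to $M$, with heap context $\Phi_H\Join\Phi_2$ treated as "the rest." This yields $\Phi_1'$ with a fresh extension, together with $\Pi'$. We then $\Pi$-weaken the typing of $N$ using Lemma \ref{thm:lamhook-piwknc}; this is legitimate because $\Phi_2\Join\Phi_1'$ is defined by Lemma \ref{thm:lamhook-freshphi}. Lemma \ref{thm:lamhook-piprez} transports the heap order. To make this work cleanly we will strengthen the induction to carry an arbitrary frame context, which is probably the cleanest route.

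\textbf{Heap-affecting cases.}
\begin{itemize}
\item \textsc{EC-AppSet}. The heap type is unchanged, $\ell\in\Pi[p]$, and $V$ has type $\Pi[p][\ell]$. We move $V$'s $\mf{Succ}$-only phase context into $\Phi_H'$ (legitimate by Lemma \ref{thm:lamhook-unrestv}) and type $\mf{val}\,()$.
\item \textsc{EC-InitComp}. We extend $\Pi$ with $p:\emptyset$ and $\Phi_N$ with $p:\mf{Init}$. Both extensions are fresh, so $\sqsubseteq_\mf{fresh}$ holds, and $\Pi\sqsubseteq\Pi'$ holds since $p$ is $\mf{Init}$. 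We type $\eval{D}{\mf{Init}}{p}{D}$ by \textsc{TC-EvalI} with $\Ell$ equal to the body type of $D$ and $\Pi'[p]=\emptyset$. This requires $\Pi$-weakening $D$'s typing, which is legitimate because $p$ is fresh.
\item \textsc{EC-UpdateCompT}. Flipping the flag does not alter $\Pi$. The \textsc{TV-CompI} premises give $D:\Pi[p]$ and $\Phi[p]=\mf{Succ}$, so \textsc{TC-EvalS} applies with $\Ell=\Pi[p]$.
\item \textsc{EC-UpdateCompF} and \textsc{EC-ClickComp}. These retype directly.
\item \textsc{EC-EvalDone}. In the $\mf{Init}$ case we have $\Ell=\emptyset$, so $D_0:\Pi[p]$. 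We change $p:\mf{Init}$ to $p:\mf{Succ}$ in $\Phi_N$, which is allowed by $\sqsubseteq$, and type the result by \textsc{TV-CompI} together with \textsc{TC-Init}.
\end{itemize}

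\textbf{\textsc{EC-Eval}: the main obstacle.}
\begin{itemize}
\item For $\phi=\mf{Succ}$ we use Theorem \ref{thm:lamhook-prezbs}. It gives $\heap'=\heap$ and $\Ell'\subseteq\Ell\subseteq\Pi[p]$, and we retype with \textsc{TC-EvalS}.
\item For $\phi=\mf{Init}$ we must invoke Theorem \ref{thm:lamhook-prezbi}. Its premises $p\notin\dom(\Phi_H\Join\Phi_D)$ and $\Ell\cap\Pi[p]=\emptyset$ come from the linearity of $p:\mf{Init}$ in \textsc{TC-EvalI} and from the $\uplus$. The theorem returns $\Pi'\sqsupseteq_{p:\mf{Init}}\Pi$ with $\Ell'\uplus\Pi'[p]=\Ell\uplus\Pi[p]$. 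We must then $\Pi$-weaken the typing of $D_0$ (and the rest of the configuration) to $\Pi'$, which Lemma \ref{thm:lamhook-piwknb} permits because $p$ is $\mf{Init}$ and is absent from the other contexts. With that, the first premise of \textsc{TC-EvalI} still reads $D_0:\Ell'\uplus\Pi'[p]$.
\end{itemize}
The delicate bookkeeping is the reshuffling of $\Phi_H'$ in the body theorem, because the state value moves from the body into the heap, and checking that this preserves $\Join$-definedness with the surrounding frame.
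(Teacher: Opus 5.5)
Your proposal is correct and follows the paper's proof essentially case for case: the same default choices $\Phi_N=\Phi_M$, $\Phi_H'=\Phi_H$, $\Pi'=\Pi$, the same appeals to Theorems~\ref{thm:lamhook-prezbi} and~\ref{thm:lamhook-prezbs} for \textsc{EC-Eval}, and the same use of Lemmas~\ref{thm:lamhook-freshphi} and~\ref{thm:lamhook-piprez} plus $\Pi$-weakening for \textsc{EC-Let}. You are also right that the conclusion should type $\heap' \semi N$, and discharging \textsc{TC-Sub} once via Theorem~\ref{thm:lamhook-prezcp} is a tidy shortcut for what the paper handles case by case.

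The frame-context strengthening is not needed, and the heap cannot literally be typed under $\Phi_H \Join \Phi_2$ when $\Phi_2$ contains $\mf{Init}$ pointers. Simply apply the induction hypothesis with $\Phi_H$ and extend the resulting $\sqsubseteq_\mf{fresh}$ by $\Phi_2$ using Lemma~\ref{thm:lamhook-freshphi}, as the paper does.
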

\begin{proof}
  By definition of premise \emph{(i)}, we have subpremises
  \begin{enumerate}[label=(i-\alph*)]
    \item $\Gamma \semi \Pi \semi \Phi_M \vdash M: A \bang \mf{Mut}$ and
    \item $\Gamma \semi \Pi \semi \Phi_H \vDash \heap$.
  \end{enumerate}
  Likewise, to show the conclusion, it suffices to pick some
  $\Phi_N$, $\Phi_H'$, and $\Pi'$ such that
  \begin{enumerate}[label=(\roman*)]
    \item $\Gamma \semi \Pi' \semi \Phi_N \vdash N: A \bang \mf{Mut}$,
    \item $\Gamma \semi \Pi' \semi \Phi_H' \vDash \heap'$,
    \item $\Phi_M \Join \Phi_H \sqsubseteq_\mf{fresh} \Phi_N \Join \Phi_H'$, and
    \item $\Pi \sqsubseteq_{\Phi_N \Join \Phi_H'} \Pi'$.
  \end{enumerate}

  Unless otherwise mentioned, we choose $\Phi_N = \Phi_M$, $\Phi_H' = \Phi_H$,
  and $\Pi' = \Pi$, such that conclusions (iii)-(iv) are trivial and
  (ii) is trivial by premise (i-b) when $\heap' = \heap$.

  We proceed by induction on the derivation of the
  computation step judgment \emph{(ii)}.

  \pcase{EC-Let}: We have
  \begin{mathpar}
    \inferrule*[right=EC-Let] {
      \inferrule* {(1)} {\heap \semi M \step M' \semi \heap'}
    } {
      \heap \semi \elet{x}{M}{N} \step \elet{x}{M'}{N} \semi \heap'
    }

    \inferrule*[right=TC-Let] {
      \inferrule* {(2)} {
        \Gamma \semi \Pi \semi \Phi_1 \vdash M: A \bang \mf{Mut}
      }
      \and
      \inferrule* {(3)} {
        \Gamma, x: A \semi \Pi \semi \Phi_2 \vdash N: B \bang \mf{Mut}
      }
    } {
      \Gamma \semi \Pi \semi \Phi_1 \Join \Phi_2 \vdash \elet{x}{M}{N}:
      B \bang \mf{Mut}
    }
  \end{mathpar}
  \begin{itemize}
    \item By inductive hypothesis on (1), (2), and premise (i-b),
    we get some $\Phi_1'$, $\Phi_H'$, and $\Pi'$ such that:
    \begin{enumerate}
      \setcounter{enumi}{3}
      \item $\Gamma \semi \Pi' \semi \Phi_1' \vdash M': A \bang \mf{Mut}$,
      \item $\Gamma \semi \Pi' \semi \Phi_H' \vDash \heap'$,
      \item $\Phi_1 \Join \Phi_H \sqsubseteq_\mf{fresh} \Phi_1' \Join \Phi_H'$,
      and \item $\Pi \sqsubseteq_{\Phi_1' \Join \Phi_H'} \Pi'$.
    \end{enumerate}
    \item Choose $\Phi_N = \Phi_1' \Join \Phi_2$ and this $\Phi_H'$ as our own.
    \item By Lemma \ref{thm:lamhook-freshphi} (fresh $\Phi$-extension) using (6)
    and extending by $\Phi_2$, we get that $\Phi_N \Join \Phi_H'$
    is defined and
    $\Phi_M \Join \Phi_H \sqsubseteq_{\mf{fresh}} \Phi_N \Join \Phi_H'$,
    satisfying conclusion (iii).
    \item Choose this $\Pi'$ as our own.
    By Lemma \ref{thm:lamhook-piprez} ($\Pi$-order preservation) using (7)
    and extending by $\Phi_2$,
    we get that $\Pi \sqsubseteq_{\Phi_N \Join \Phi_H} \Pi'$.
    \item (5) gives us conclusion (ii).
    \item By Lemma \ref{thm:lamhook-piwknc} ($\Pi$-weakening)
    on (3) using (7) and the fact that $\Phi_N \Join \Phi_H$ is defined,
    we get (8) that $\Gamma, x: A \semi \Pi' \semi \Phi_2 \vdash N:
    B \bang \mf{Pure}$.
    \item We derive conclusion (i) as follows:
    \begin{mathpar}
      \inferrule*[right=TC-Let] {
        \inferrule*[right=(6)] { } {
          \Gamma \semi \Pi' \semi \Phi_1' \vdash M': A \bang \mf{Pure}
        }
        \and
        \inferrule*[right=(8)] { } {
          \Gamma, x: A \semi \Pi' \semi \Phi_2 \vdash N: B \bang \mf{Pure}
        }
      } {
        \Gamma \semi \Pi \semi \Phi_1' \Join \Phi_2 \vdash \elet{x}{M'}{N}:
        B \bang \mf{Pure}
      }
    \end{mathpar}
  \end{itemize}

  \textbf{Cases}
  \textsc{EC-Val},
  \textsc{EC-App},
  \textsc{EC-IfT},
  \textsc{EC-IfF},
  \textsc{EC-Fst},
  \textsc{EC-Snd},
  \textsc{EC-FdN}, and
  \textsc{EC-FdC} are all shown analogously to Theorem \ref{thm:lamhook-prezcp}
  (pure preservation). The main difference is that rule \textsc{TC-Sub} now
  appears in typing derivations, but this ultimately has no effect
  on what we can derive.
  (These cases are also spelled out in \ref{thm:trans-simc})

  \pcase{EC-AppSet}: We have
  $\heap \semi \setter^\ell_p\ V \step \mf{val}\ () \semi \heap'$ and
  \begin{mathpar}
    \inferrule*[right=TC-App] {
      \inferrule*[right=TV-Setter] {
        \inferrule* {(1)} {\ell: A \in \Pi[p]}
        \and
        \inferrule* {(2)} {\mf{Succ}(\Phi_1)}
      } {
        \Gamma \semi \Pi \semi \Phi_1 \vdash \setter^\ell_p:
        A \xrightarrow{\mf{Mut}} \tone
      }
      \and
      \inferrule* {(3)} {
        \Gamma \semi \Pi \semi \Phi_2 \vdash V: A
      }
    } {
      \Gamma \semi \Pi \semi \Phi_1 \Join \Phi_2 \vdash
      \setter^\ell_p\ V: \tone \bang \mf{Mut}
    }
  \end{mathpar}
  where $\heap' = \heap \mid \heap[p].\mf{st}[\ell] \mapsto V\
  \text{and}\ \heap[p].\mf{rerender} \mapsto \mf{tt}$.
  \begin{itemize}
    \item (1) and premise (i-b) give us (4) that
    $\Gamma \semi \Pi \semi \Phi_3 \vdash \heap[p].\mf{st}[\ell]: A$
    for $\Phi_H = \Phi_3 \Join \Phi_4$.
    \item By Lemma \ref{thm:lamhook-unrestv} (unrestriction) on (3),
      we know (5) that $\mf{Succ}(\Phi_2)$.
    \item Choose $\Phi_N = \Phi_M$, $\Phi_H' = \Phi_H \Join \Phi_2$,
      and $\Pi' = \Pi$, trivializing conclusion (iv).
      By (5) and Lemma \ref{thm:lamhook-unresth} (unrestriction) on $\Phi_H$,
      $\Phi_H'$ is defined.
    \item By (5),
      $(\Phi_1 \Join \Phi_2) \Join (\Phi_H \Join \Phi_2)
      = (\Phi_1 \Join \Phi_2) \Join \Phi_H$,
      satisfying conclusion (iii).
    \item Clearly $\mf{Succ}(\Phi_3)$ by Lemma \ref{thm:lamhook-unresth},
      so we weaken (3) by $\Phi_3$ (Lemma \ref{thm:lamhook-phiwknv})
      to get that\\$\Gamma \semi \Pi \semi \Phi_2 \Join \Phi_3 \vdash V: A$.
      It then follows from premise (i-b) that
      $\Gamma \semi \Pi \semi (\Phi_2 \Join \Phi_3) \Join \Phi_4 \vDash \heap$,
      i.e. conclusion (ii) holds.
    \item We derive conclusion (i) as follows:
    \begin{mathpar}
    \inferrule*[right=TC-Sub] {
      \inferrule*[right=TC-Val] {
        \inferrule*[right=TV-Unit] {
          \inferrule*[right=(2) \textnormal{and} (5)] { } {
            \mf{Succ}(\Phi_1 \Join \Phi_2)
          }
        } {
          \Gamma \semi \Pi \semi \Phi_1 \Join \Phi_2 \vdash (): \tone
        }
      } {
        \Gamma \semi \Pi \semi \Phi_1 \Join \Phi_2 \vdash \mf{val}\ ():
        \tone \bang \mf{Pure}
      }
    } {
      \Gamma \semi \Pi \semi \Phi_1 \Join \Phi_2 \vdash \mf{val}\ ():
      \tone \bang \mf{Mut}
    }
    \end{mathpar}
  \end{itemize}

  \pcase{EC-Eval}, $\phi=\mf{Init}$: We have
  \begin{mathpar}
    \inferrule*[right=EC-Eval] {
      \inferrule* {(1)} {\heap \semi D \step^\mf{Init}_p D' \semi \heap'}
    } {
      \heap \semi \eval{D_0}{\mf{Init}}{p}{D}
      \step
      \eval{D_0}{\mf{Init}}{p}{D'} \semi \heap'
    }

    \inferrule*[right=TC-EvalI] {
      \inferrule* {(2)} {
        \Gamma \semi \Pi \semi \Phi_1 \vdash D_0: \Ell \uplus \Pi[p]
      }
      \and
      \inferrule* {(3)} {\Gamma \semi \Pi \semi \Phi_2 \vdash D: \Ell}
    } {
      \Gamma \semi \Pi \semi \Phi_1 \Join \Phi_2,p: \mf{Init} \vdash
      \eval{D_0}{\mf{init}}{p}{D}: \mf{ElemInst} \bang \mf{Mut}
    }
  \end{mathpar}
  \begin{itemize}
    \item (2) implies (4) that $\Ell \cap \Pi[p] = \emptyset$.
      That $\Phi_1 \Join \Phi_2, p: \mf{Init}$ is defined entails
      (5) $p \notin \dom(\Phi_2)$.
    \item
    By Theorem \ref{thm:lamhook-prezbi}
    ($\step^\mf{Init}_p$-preservation)
    on (1), (3), (4), (5), and premise (i-b),
    we have some $\Phi_H'$ and $\Pi'$ with:
    \begin{enumerate}
      \setcounter{enumi}{5}
      \item $\Gamma \semi \Pi' \semi \Phi_2 \vdash D': \Ell'$,
      \item $\Gamma \semi \Pi' \semi \Phi_H' \vDash \heap'$,
      \item $\Phi_2 \Join \Phi_H' = \Phi_2 \Join \Phi_H$,
      \item $\Pi \sqsubseteq_{p: \mf{Init}} \Pi'$, and
      \item $\Pi'[p] \uplus \Ell' = \Pi[p] \uplus \Ell$.
    \end{enumerate}
    \item We choose this $\Pi'$ and $\Phi_H'$,
      and retain $\Phi_N = \Phi_M = \Phi_1 \Join \Phi_2, p: \mf{Init}$.
    \item We started with $\Phi_1 \Join \Phi_2 \Join \Phi_H, p: \mf{Init}$
      as defined. By (8), we know (11) that\\
      $\Phi_1 \Join \Phi_2 \Join \Phi_H', p: \mf{Init}$ is also defined.
    \item Conclusion (iii) follows from
      Lemma \ref{thm:lamhook-freshphi} (fresh $\Phi$-extension) using (8)
      and extending by $\Phi_1, p: \mf{Init}$, which is valid by (11).
    \item (11) also lets us apply Lemma
      \ref{thm:lamhook-piprez} ($\Pi$-order preservation) on (8),
      extending by\\$\Phi_1 \Join \Phi_2 \Join \Phi_H'$,
      to get that $\Pi \sqsubseteq_{\Phi_M \Join \Phi_H'} \Pi'$,
      satisfying conclusion (iv).
    \item By choice of $\Pi$ and $\Phi$, (7) gives conclusion (ii).
    \item Lemma \ref{thm:lamhook-piwknb} ($\Pi$-weakening) on (2) using (9),
      which is valid by (11), gives us (12) that
      $\Gamma \semi \Pi' \semi \Phi_1 \vdash D_0: \Ell \uplus \Pi[p]$.
    \item By (10), (12) is equivalent to (13) that
      $\Gamma \semi \Pi' \semi \Phi_1 \vdash D_0: \Ell' \uplus \Pi'[p]$.
    \item We derive conclusion (i) as follows:
    \begin{mathpar}
      \inferrule*[right=TC-EvalI] {
        \inferrule*[right=(13)] { } {
          \Gamma \semi \Pi' \semi \Phi_1 \vdash D_0: \Ell' \uplus \Pi'[p]
        }
        \and
        \inferrule*[right=(6)] { } {
          \Gamma \semi \Pi' \semi \Phi_2 \vdash D': \Ell'
        }
      } {
        \Gamma \semi \Pi' \semi \Phi_1 \Join \Phi_2, p: \mf{Init} \vdash
        \eval{D_0}{\mf{Init}}{p}{D'}: \mf{ElemInst} \bang \mf{Mut}
      }
    \end{mathpar}
  \end{itemize}

  \pcase{EC-Eval}, $\phi=\mf{Succ}$: We have
  \begin{mathpar}
    \inferrule*[right=EC-Eval] {
      \inferrule* {(1)} {\heap \semi D \step^\mf{Succ}_p D' \semi \heap'}
    } {
      \heap \semi \eval{D_0}{\mf{Succ}}{p}{D}
      \step
      \eval{D_0}{\mf{Succ}}{p}{D'} \semi \heap'
    }

    \inferrule*[right=TC-EvalS] {
      \inferrule* {(2)} {\Gamma \semi \Pi \semi \Phi_1 \vdash D_0: \Pi[p]}
      \
      \inferrule* {(3)} {\Gamma \semi \Pi \semi \Phi_2 \vdash D: \Ell}
      \
      \inferrule* {(4)} {\Ell \subseteq \Pi[p]}
      \
      \inferrule* {(5)} {(\Phi_1 \Join \Phi_2)[p] = \mf{Succ}}
    } {
      \Gamma \semi \Pi \semi \Phi_1 \Join \Phi_2 \vdash
      \eval{D_0}{\mf{Succ}}{p}{D}: \mf{ElemInst} \bang \mf{Mut}
    }
  \end{mathpar}
  \begin{itemize}
    \item By Theorem \ref{thm:lamhook-prezbs}
    ($\step^\mf{Succ}_p$-preservation)
    on (1), (3), (4), and premise (i-b), we have some $\Phi_2'$ such that
    \begin{enumerate}
      \setcounter{enumi}{5}
      \item $\Gamma \semi \Pi \semi \Phi_2' \vdash D': \Ell'$,
      \item $\Phi_2' \Join \Phi_H = \Phi_2 \Join \Phi_H$,
      \item $\heap' = \heap$, and
      \item $\Ell' \subseteq \Ell$.
    \end{enumerate}
    \item Choose $\Phi_N = \Phi_1 \Join \Phi_2'$ and $\Phi_H' = \Phi_H$.
      By Lemma \ref{thm:lamhook-unrestb} (unrestriction) on (2) and (6),
      both $\mf{Succ}(\Phi_1)$ and $\mf{Succ}(\Phi_2')$, so $\Phi_N$
      is defined. Lemma \ref{thm:lamhook-freshphi} (fresh $\Phi$-extension)
      on (7) then gives us conclusion (iii).
    \item Choose $\Pi' = \Pi$. Conclusion (iv) is trivial,
      and (8) and premise (i-b) give us conclusion (ii).
    \item By (4) and (9), we have (10) that
      $\Ell' \subseteq \Pi[p]$.
    \item We want to show (11) that $(\Phi_1 \Join \Phi_2')[p] = \mf{Succ}$
      by case-analysis on (5):
    \begin{itemize}
      \item If $p: \mf{Succ} \in \Phi_1$,
        then clearly $p: \mf{Succ} \in (\Phi_1 \Join \Phi_2')$.
      \item If not, then we must have $p: \mf{Succ} \in \Phi_2$.
        By definition, $p: \mf{Succ} \in \Phi_2 \Join \Phi_H$.
        By (7), $p: \mf{Succ} \in \Phi_2' \Join \Phi_H$.
        Again by definition, $p: \mf{Succ} \in \Phi_2'$.
        Finally, we get\\$p: \mf{Succ} \in (\Phi_1 \Join \Phi_2')$.
    \end{itemize}
    \item We derive conclusion (i) by \textsc{TC-EvalS} as follows:
  \end{itemize}
  \begin{mathpar}
    \inferrule {
      \inferrule*[right=(2)] { } {
        \Gamma \semi \Pi \semi \Phi_1 \vdash D_0: \Pi[p]
      }
      \
      \inferrule*[right=(7)] { } {
        \Gamma \semi \Pi \semi \Phi_2' \vdash D': \Ell'
      }
      \
      \inferrule*[right=(10)] { } {\Ell' \subseteq \Pi[p]}
      \
      \inferrule*[right=(11)] { } {(\Phi_1 \Join \Phi_2')[p] = \mf{Succ}}
    } {
      \Gamma \semi \Pi \semi \Phi_1 \Join \Phi_2' \vdash
      \eval{D_0}{\mf{Succ}}{p}{D'}: \mf{ElemInst} \bang \mf{Mut}
    }
  \end{mathpar}

  \pcase{EC-EvalDone}, $\phi=\mf{Init}$: We have
  $$\heap \semi \eval{D}{\mf{Init}}{p}{\mf{val}\ V}
  \step \elet{x}{\mf{init}\ V}{\mf{val}\ \compInst{p}{D}{x}} \semi \heap$$
  \begin{mathpar}
    \inferrule*[right=TC-EvalI] {
      \inferrule* {(1)} {
        \Gamma \semi \Pi \semi \Phi_1 \vdash D_0: \emptyset \uplus \Pi[p]
      }
      \and
      \inferrule*[right=TB-Val] {
        \inferrule* {(2)} {
          \Gamma \semi \Pi \semi \Phi_2 \vdash V: \mf{Element}
        }
      } {
        \Gamma \semi \Pi \semi \Phi_2 \vdash \mf{val}\ V: \emptyset
      }
    } {
      \Gamma \semi \Pi \semi \Phi_1 \Join \Phi_2,p: \mf{Init} \vdash
      \eval{D_0}{\mf{Init}}{p}{\mf{val}\ V}: \mf{ElemInst} \bang \mf{Mut}
    }
  \end{mathpar}
  \begin{itemize}
    \item We choose $\Phi_H' = \Phi_H$ and
    $\Phi_N = \Phi_1 \Join \Phi_2, p: \mf{Succ}$.
    Clearly $\Phi_M \Join \Phi_H \sqsubseteq_\mf{fresh} \Phi_N \Join \Phi_H'$,
    satisfying conclusion (iii).
    \item We retain $\Pi' = \Pi$, trivializing conclusion (iv),
      and also letting conclusion (ii) follow easily from premise (i-b).
    \item By Lemmas \ref{thm:lamhook-unrestb} and \ref{thm:lamhook-unrestv}
      on (1) and (2), we get $\mf{Succ}(\Phi_1)$ and $\mf{Succ}(\Phi_2)$,
      respectively. Clearly then (3) $\mf{Succ}(\Phi_N)$.
    \item By Lemma \ref{thm:lamhook-phiwknb} ($\Phi$-weakening) on (1) using (3),
      we get (4) that $\Gamma \semi \Pi \semi \Phi_N \vdash D: \Pi[p]$.
    \item Similarly, by Lemma \ref{thm:lamhook-phiwknv} ($\Phi$-weakening)
      on (2) using (3), we get (5) that\\
      $\Gamma \semi \Pi \semi \Phi_N \vdash V: \mf{Element}$.
    \item We derive conclusion (i) as follows
    by way of subderivation (6),
    using $\Phi_N$ intuitionistically by (3):
    \begin{mathpar}
      \inferrule*[right=TC-Sub] {
        \inferrule*[Right=TC-Val] {
          \inferrule*[Right=TV-CompI] {
            \inferrule*[right=(4)] { } {
              \Gamma_6 \semi \Pi \semi \Phi_N \vdash D: \Pi[p]
            }
            \and
            \inferrule* { } {\Phi_N[p] = \mf{Succ}}
            \and
            \inferrule*[Right=TV-Var] {
              \inferrule*[right=(3)] { } {\mf{Succ}(\Phi_N)}
            } {
              \Gamma_6 \semi \Pi \semi \Phi_N \vdash
              x: \mf{ElemInst}
            }
          } {
            \Gamma_6 \semi \Pi \semi \Phi_N \vdash
            \compInst{p}{D}{x}: \mf{ElemInst}
          }
        } {
          \Gamma_6 \semi \Pi \semi \Phi_N \vdash
          \mf{val}\ \compInst{p}{D}{x}: \mf{ElemInst} \bang \mf{Pure}
        }
      } {
        (6) \quad
        \Gamma_6 \semi \Pi \semi \Phi_N \vdash
        \mf{val}\ \compInst{p}{D}{x}: \mf{ElemInst} \bang \mf{Mut}
      }
      \and
      \inferrule*[right=TC-Let] {
        \inferrule*[right=TC-Init] {
          \inferrule*[right=(5)] { } {
            \Gamma \semi \Pi \semi \Phi_N \vdash V: \mf{Element}
          }
        } {
          \Gamma \semi \Pi \semi \Phi_N \vdash \mf{init}\ V:
          \mf{ElemInst} \bang \mf{Mut}
        }
        \and
        [(6),\ \Gamma_6 = \Gamma, x: \mf{ElemInst}]
      } {
        \Gamma \semi \Pi \semi \Phi_N \vdash \elet{x}{\mf{init}\ V}{
          \mf{val}\ \compInst{p}{D}{x}
        }: \mf{ElemInst} \bang \mf{Mut}
      }
    \end{mathpar}
  \end{itemize}

  \pcase{EC-EvalDone}, $\phi=\mf{Succ}$:
  \[\heap \semi \eval{D}{\mf{Succ}}{p}{\mf{val}\ V}
  \step \elet{x}{\mf{init}\ V}{\mf{val}\ \compInst{p}{D}{x}} \semi \heap\]
  \begin{mathpar}
    \inferrule*[right=TC-EvalS] {
      \inferrule* {(1)} {\Gamma \semi \Pi \semi \Phi_1 \vdash D: \Pi[p]}
      \
      \inferrule*[right=TB-Val] {
        \inferrule* {(2)} {
          \Gamma \semi \Pi \semi \Phi_2 \vdash V: \mf{Element}
        }
      } {
        \Gamma \semi \Pi \semi \Phi_2 \vdash \mf{val}\ V: \emptyset
      }
      \
      \inferrule* {(3)} {(\Phi_1 \Join \Phi_2)[p] = \mf{Succ}}
      \ \cdots
    } {
      \Gamma \semi \Pi \semi \Phi_1 \Join \Phi_2 \vdash
      \eval{D_0}{\mf{Succ}}{p}{D}: \mf{ElemInst} \bang \mf{Mut}
    }
  \end{mathpar}
  \begin{itemize}
    \item We retain $\Phi_N = \Phi_1 \Join \Phi_2$.
      By Lemmas \ref{thm:lamhook-unrestb} and \ref{thm:lamhook-unrestv}
      on (1) and (2), we get $\mf{Succ}(\Phi_1)$ and $\mf{Succ}(\Phi_2)$,
      respectively. Clearly then (4) $\mf{Succ}(\Phi_N)$.
    \item By Lemma \ref{thm:lamhook-phiwknb} ($\Phi$-weakening) on (1)
      using (4), we get (5) that
      $\Gamma \semi \Pi \semi \Phi_N \vdash D: \Pi[p]$.
    \item By Lemma \ref{thm:lamhook-phiwknv} ($\Phi$-weakening) on (2),
    we get (6) $\Gamma \semi \Pi \semi \Phi_N \vdash V: \mf{Element}$.
    \item We derive conclusion (i) as follows
    by way of subderivation (7),
    using $\Phi_N$ intuitionistically by (4):
    \begin{mathpar}
      \inferrule*[right=TC-Sub] {
        \inferrule*[Right=TC-Val] {
          \inferrule*[Right=TV-CompI] {
            \inferrule*[right=(5)] { } {
              \Gamma_7 \semi \Pi \semi \Phi_N \vdash D: \Pi[p]
            }
            \and
            \inferrule* { } {\Phi_N[p] = \mf{Succ}}
            \and
            \inferrule*[Right=TV-Var] {
              \inferrule*[right=(4)] { } {\mf{Succ}(\Phi_D')}
            } {
              \Gamma_7 \semi \Pi \semi \Phi_N \vdash
              x: \mf{ElemInst}
            }
          } {
            \Gamma_7 \semi \Pi \semi \Phi_N \vdash
            \compInst{p}{D}{x}: \mf{ElemInst}
          }
        } {
          \Gamma_7 \semi \Pi \semi \Phi_N \vdash
          \mf{val}\ \compInst{p}{D}{x}: \mf{ElemInst} \bang \mf{Pure}
        }
      } {
        (7) \quad
        \Gamma_7 \semi \Pi \semi \Phi_N \vdash
        \mf{val}\ \compInst{p}{D}{x}: \mf{ElemInst} \bang \mf{Mut}
      }
      \and
      \inferrule*[right=TC-Let] {
        \inferrule*[right=TC-Init] {
          \inferrule*[right=(5)] { } {
            \Gamma \semi \Pi \semi \Phi_N \vdash V: \mf{Element}
          }
        } {
          \Gamma \semi \Pi \semi \Phi_N \vdash \mf{init}\ V:
          \mf{ElemInst} \bang \mf{Mut}
        }
        \and
        [(7),\ \Gamma_7 = \Gamma, x: \mf{ElemInst}]
      } {
        \Gamma \semi \Pi \semi \Phi_N \vdash \elet{x}{\mf{init}\ V}{
          \mf{val}\ \compInst{p}{D}{x}
        }: \mf{ElemInst} \bang \mf{Mut}
      }
    \end{mathpar}
  \end{itemize}

  \pcase{EC-InitText}: We have
  $\heap \semi \mf{init\ text}\ V \step \mf{val\ textInst}\ V \semi \heap$ and
  \begin{mathpar}
    \inferrule*[right=TC-Init] {
      \inferrule*[right=TV-Text] {
        \inferrule* {(1)} {
          \Gamma \semi \Pi \semi \Phi_M \vdash V: \mf{Str}
        }
      } {
        \Gamma \semi \Pi \semi \Phi_M \vdash \mf{text}\ V: \mf{Element}
      }
    } {
      \Gamma \semi \Pi \semi \Phi_M \vdash \mf{init\ text}\ V:
      \mf{ElemInst} \bang \mf{Mut}
    }
  \end{mathpar}
  \begin{itemize}
    \item We derive conclusion (i) as follows:
    \begin{mathpar}
      \inferrule*[right=TC-Sub] {
        \inferrule*[right=TC-Val] {
          \inferrule*[right=TV-TextI] {
            \inferrule*[right=(1)] { } {
              \Gamma \semi \Pi \semi \Phi_M \vdash V: \mf{Str}
            }
          } {
            \Gamma \semi \Pi \semi \Phi_M \vdash \mf{textInst}\ V: \mf{ElemInst}
          }
        } {
          \Gamma \semi \Pi \semi \Phi_M \vdash \mf{val\ textInst}\ V:
          \mf{ElemInst} \bang \mf{Pure}
        }
      } {
        \Gamma \semi \Pi \semi \Phi_M \vdash \mf{val\ textInst}\ V:
        \mf{ElemInst} \bang \mf{Mut}
      }
    \end{mathpar}
  \end{itemize}

  \pcase{EC-InitTag}: We have
  $\heap \semi \mf{init}\ \mf{tag}(V, W)
  \step
  \elet{x}{\mf{inits}\ W}{\mf{val}\ \mf{tagInst}_{id}(V, x)} \semi \heap$
  and
  \begin{mathpar}
  \inferrule*[right=TC-Init] {
    \inferrule*[right=TV-Tag] {
      \inferrule* {(1)} {
        \Gamma \semi \Pi \semi \Phi_1 \vdash V: \mf{List\ Attr}
      }
      \and
      \inferrule* {(2)} {
        \Gamma \semi \Pi \semi \Phi_2 \vdash W: \mf{List\ Element}
      }
    } {
      \Gamma \semi \Pi \semi \Phi_1 \Join \Phi_2 \vdash \mf{tag}(V, W):
      \mf{Element}
    }
  } {
    \Gamma \semi \Pi \semi \Phi_1 \Join \Phi_2 \vdash \mf{init\ tag}(V, W):
    \mf{ElemInst} \bang \mf{Mut}
  }
  \end{mathpar}
  \begin{itemize}
    \item We derive conclusion (i) via subderivation (3)
    as follows, weakening (1) appropriately on $\Gamma$.
  \end{itemize}
  \begin{mathpar}
    \inferrule*[right=TC-Sub] {
      \inferrule*[right=TC-Val] {
        \inferrule*[right=TV-TagI] {
          \inferrule*[right=(1)] { } {
            \Gamma_3 \semi \Pi \semi \Phi_1
            \vdash V: \mf{List\ Attr}
          }
          \
          \inferrule*[Right=TV-Var] { } {
            \Gamma_3 \semi \Pi \semi \emptyset
            \vdash x: \mf{List\ Eleminst}
          }
        } {
          \Gamma_3 \semi \Pi \semi \Phi_1
          \vdash \mf{tagInst}_{id}(V, x): \mf{ElemInst}
        }
      } {
        \Gamma_3 \semi \Pi \semi \Phi_1
        \vdash \mf{val}\ \mf{tagInst}_{id}(V, x):
        \mf{ElemInst} \bang \mf{Pure}
      }
    } {
      (3) \quad \Gamma_3 \semi \Pi \semi \Phi_1
      \vdash \mf{val}\ \mf{tagInst}_{id}(V, x):
      \mf{ElemInst} \bang \mf{Mut}
    }
    \and
    \inferrule*[right=TC-Let] {
      \inferrule*[right=TC-Inits] {
        \inferrule*[right=(2)] { } {
          \Gamma \semi \Pi \semi \Phi_2 \vdash W: \mf{List\ Element}
        }
      } {
        \Gamma \semi \Pi \semi \Phi_2 \vdash \mf{inits}\ W:
        \mf{List\ ElemInst} \bang \mf{Mut}
      }
      \
      [(3),\ \Gamma_3 = \Gamma, x: \mf{List\ ElemInst}]
    } {
      \Gamma \semi \Pi \semi \Phi_1 \Join \Phi_2 \vdash
      \elet{x}{\mf{inits}\ W}{\mf{val}\ \mf{tagInst}_{id}(V, x)}:
      \mf{Element} \bang \mf{Mut}
    }
  \end{mathpar}

 \pcase{EC-InitComp}: We have
  $\heap \semi \mf{init}\ \mf{component}\ \{D\}
  \step \eval{D}{\mf{Init}}{p}{D}
  \semi \heap'$, where $p$ is fresh,\\
  $\heap' = \heap, p \mapsto
  \{\mf{st}: \emptyset, \mf{rerender}: \mf{ff}\}$,
  and
  \begin{mathpar}
  \inferrule*[right=TC-Init] {
    \inferrule*[right=TV-Comp] {
      \inferrule* {(1)} {\Gamma \semi \Pi \semi \Phi_M \vdash D: \Ell}
    } {
      \Gamma \semi \Pi \semi \Phi_M \vdash \mf{component}\ \{D\}: \mf{Element}
    }
  } {
    \Gamma \semi \Pi \semi \Phi_M \vdash \mf{init\ component}\ \{D\}:
    \mf{ElemInst} \bang \mf{Mut}
  }
  \end{mathpar}
  \begin{itemize}
    \item Choose $\Phi_N = \Phi_M, p: \mf{Init}$ and $\Phi_H' = \Phi_H$.
      Since $p$ is fresh, $\Phi_N \Join \Phi_H$ is defined and we have
      $\Phi_M \Join \Phi_H \sqsubseteq_\mf{fresh} \Phi_N \Join \Phi_H$,
      satisfying conclusion (iii).
    \item Choose $\Pi'=\Pi,p \mapsto \emptyset$. By definition,
      we have $\Pi \sqsubseteq_\mf{\Phi_N \Join \Phi_H} \Pi'$,
      satisfying conclusion (iv).
    \item By premise (i-b)
      and definition of each of $\Pi'$ and $\heap'$,
      we have that $\Gamma \semi \Pi' \semi \Phi_H \vDash \heap'$,
      satisfying conclusion (ii).
    \item By Lemma \ref{thm:lamhook-piwknb} ($\Pi$-weakening) on (1),
    using the fact that $p$ is fresh and so $p \notin \dom(\Phi_M)$,
    we get (3) that $\Gamma \semi \Pi' \semi \Phi_M \vdash D: \Ell$.
    \item Since $\Pi'[p]=\emptyset$,
    (3) is equivalent to (4) that
    $\Gamma \semi \Pi' \semi \Phi_M \vdash D: \Ell \uplus \Pi'[p]$.
    \item By Lemma \ref{thm:lamhook-unrestb} on (1),
      we know $\mf{Succ}(\Phi_M)$.
      It follows that (5) $\Phi_M \Join \Phi_M = \Phi_M$.
    \item We derive conclusion (i) as follows, implicitly using (5):
    \begin{mathpar}
      \inferrule*[right=TC-EvalI] {
        \inferrule*[right=(4)] { } {
          \Gamma \semi \Pi' \semi \Phi_M \vdash D: \Ell \uplus \Pi'[p]
        }
        \and
        \inferrule*[right=(3)] { } {
          \Gamma \semi \Pi' \semi \Phi_M \vdash D: \Ell
        }
      } {
        \Gamma \semi \Pi' \semi \Phi_M, p: \mf{Init} \vdash
        \eval{D}{\mf{Init}}{p}{D}: \mf{ElemInst} \bang \mf{Mut}
      }
    \end{mathpar}
  \end{itemize}

  \pcase{EC-ClickText}: We have
  $\heap \semi \mf{click}_{id}\ \mf{textInst}\ V
  \step \mf{val}\ () \semi \heap$ and
  \begin{mathpar}
    \inferrule*[right=TC-Click] {
      \inferrule* {(1)} {
        \Gamma \semi \Pi \semi \Phi_M \vdash \mf{textInst}\ V: \mf{ElemInst}
      }
    } {
      \Gamma \semi \Pi \semi \Phi_M \vdash \mf{click}_{id}\ \mf{textInst}\ V:
      \tone \bang \mf{Mut}
    }
  \end{mathpar}
  \begin{itemize}
    \item By Lemma \ref{thm:lamhook-unrestv} (unrestriction) on (1),
    we get (2) that $\mf{Succ}(\Phi_M)$.
    \item We derive conclusion (i) as follows:
    \begin{mathpar}
      \inferrule*[right=TC-Sub] {
        \inferrule*[right=TC-Val] {
          \inferrule*[right=TV-Unit] {
            \inferrule*[right=(2)] { } {\mf{Succ}(\Phi_M)}
          } {
            \Gamma \semi \Pi \semi \Phi_M \vdash (): \tone
          }
        } {
          \Gamma \semi \Pi \semi \Phi_M
          \vdash \mf{val}\ (): \tone \bang \mf{Pure}
        }
      } {
        \Gamma \semi \Pi \semi \Phi_M
        \vdash \mf{val}\ (): \tone \bang \mf{Mut}
      }
    \end{mathpar}
  \end{itemize}

  \pcase{EC-ClickTagT}: We have
  $\heap \semi \mf{click}_{id}\ \mf{tagInst}_{id}(V, W)
  \step \mf{runClicks}\ V \semi \mf{clicks}_{id}\ W \semi \heap$ and
  \begin{mathpar}
    \inferrule*[right=TC-Click] {
      \inferrule*[right=TV-TagI] {
        \inferrule* {(1)} {
          \Gamma \semi \Pi \semi \Phi_1 \vdash V: \mf{List\ Attr}
        }
        \and
        \inferrule* {(2)} {
          \Gamma \semi \Pi \semi \Phi_2 \vdash W: \mf{List\ ElemInst}
        }
      } {
        \Gamma \semi \Pi \semi \Phi_1 \Join \Phi_2 \vdash
        \mf{tagInst}_{id}(V, W): \mf{ElemInst}
      }
    } {
      \Gamma \semi \Pi \semi \Phi_1 \Join \Phi_2 \vdash
      \mf{click}_{id}\ \mf{tagInst}_{id}(V, W): \tone \bang \mf{Mut}
    }
  \end{mathpar}
  \begin{itemize}
    \item We derive conclusion (i) as follows:
  \end{itemize}
  \begin{mathpar}
    \inferrule*[right=TC-Let] {
      \inferrule*[right=TC-RunClicks] {
        \inferrule*[right=(1)] { } {
          \Gamma \semi \Pi \semi \Phi_1 \vdash V: \mf{List\ Attr}
        }
      } {
        \Gamma \semi \Pi \semi \Phi_1 \vdash \mf{runClicks}\ V:
        \tone \bang \mf{Mut}
      }
      \
      \inferrule*[Right=TC-Clicks] {
        \inferrule*[right=(2)] { } {
          \Gamma \semi \Pi \semi \Phi_2 \vdash W: \mf{List\ ElemInst}
        }
      } {
        \Gamma \semi \Pi \semi \Phi_2 \vdash \mf{clicks}_{id}\ W:
        \tone \bang \mf{Mut}
      }
    } {
      \Gamma \semi \Pi \semi \Phi_1 \Join \Phi_2 \vdash
      \mf{runClicks}\ V \semi \mf{clicks}_{id}\ W: \tone \bang \mf{Mut}
    }
  \end{mathpar}

  \pcase{EC-ClickTagF}: We have
  $\heap \semi \mf{click}_{id}\ \mf{tagInst}_{{id}'}(V, W)
  \step \mf{clicks}_{id}\ W \semi \heap$ and
  \begin{mathpar}
    \inferrule*[right=TC-Click] {
      \inferrule*[right=TV-TagI] {
        \inferrule* {(1)} {
          \Gamma \semi \Pi \semi \Phi_1 \vdash V: \mf{List\ Attr}
        }
        \and
        \inferrule* {(2)} {
          \Gamma \semi \Pi \semi \Phi_2 \vdash W: \mf{List\ ElemInst}
        }
      } {
        \Gamma \semi \Pi \semi \Phi_1 \Join \Phi_2 \vdash
        \mf{tagInst}_{id}(V, W): \mf{ElemInst}
      }
    } {
      \Gamma \semi \Pi \semi \Phi_1 \Join \Phi_2 \vdash
      \mf{click}_{id}\ \mf{tagInst}_{id}(V, W): \tone \bang \mf{Mut}
    }
  \end{mathpar}
  \begin{itemize}
    \item By Lemma \ref{thm:lamhook-unrestv} (unrestriction) on (1),
    we get (3) that $\mf{Succ}(\Phi_1)$.
    \item By Lemma \ref{thm:lamhook-phiwknv} ($\Phi$-weakening) on (2)
    using (3), we get (4) that\\
    $\Gamma \semi \Pi \semi \Phi_1 \Join \Phi_2
    \vdash W: \mf{List\ ElemInst}$.
    \item We derive conclusion (i) as follows:
    \begin{mathpar}
      \inferrule*[right=TC-Clicks] {
        \inferrule*[right=(1)] { } {
          \Gamma \semi \Pi \semi \Phi_1 \Join \Phi_2 \vdash W:
          \mf{List\ ElemInst}
        }
      } {
        \Gamma \semi \Pi \semi \Phi_1 \Join \Phi_2 \vdash
        \mf{clicks}_{id}\ W: \tone \bang \mf{Mut}
      }
    \end{mathpar}
  \end{itemize}

  \pcase{EC-ClickComp}: We have
  $\heap \semi \mf{click}_{id}\ \compInst{p}{D}{V}
  \step \mf{click}_{id}\ V \semi \heap$ and
  \begin{mathpar}
    \inferrule*[right=TC-Click] {
      \inferrule*[Right=TV-CompI] {
        \inferrule* {(1)} {\Gamma \semi \Pi \semi \Phi_1 \vdash D: \Pi[p]}
        \and
        (\Phi_1 \Join \Phi_2)[p] = \mf{Succ}
        \and
        \inferrule* {(1)} {
          \Gamma \semi \Pi \semi \Phi_2 \vdash V: \mf{ElemInst}
        }
      } {
        \Gamma \semi \Pi \semi \Phi_1 \Join \Phi_2
        \vdash \compInst{p}{D}{V}: \mf{ElemInst}
      }
    } {
      \Gamma \semi \Pi \semi \Phi_1 \Join \Phi_2 \vdash
      \mf{click}_{id}\ \compInst{p}{D}{V}: \tone \bang \mf{Mut}
    }
  \end{mathpar}
  \begin{itemize}
    \item By Lemma \ref{thm:lamhook-unrestb} (unrestriction) on (1),
    we get (3) that $\mf{Succ}(\Phi_1)$.
    \item By Lemma \ref{thm:lamhook-phiwknv} ($\Phi$-weakening) on (2)
    using (3), we get (4) that
    $\Gamma \semi \Pi \semi \Phi_1 \Join \Phi_2
    \vdash V: \mf{ElemInst}$.
    \item We derive conclusion (i) as follows:
    \begin{mathpar}
      \inferrule*[right=TC-Click] {
        \inferrule*[right=(4)] { } {
          \Gamma \semi \Pi \semi \Phi_1 \Join \Phi_2 \vdash V: \mf{ElemInst}
        }
      } {
        \Gamma \semi \Pi \semi \Phi_1 \Join \Phi_2 \vdash
        \mf{click}_{id}\ V: \tone \bang \mf{Mut}
      }
    \end{mathpar}
  \end{itemize}

  \pcase{EC-RunAttr}: Analogous to \textsc{EC-ClickText}.

  \pcase{EC-RunClick}: We have
  $\heap \semi \mf{runClick\ onClick}\ V \step V\ () \semi \heap$ and
  \begin{mathpar}
    \inferrule*[right=TC-RunClick] {
      \inferrule*[right=TV-OnClick] {
        \inferrule* {(1)} {
          \Gamma \semi \Pi \semi \Phi_M \vdash V:
          \tone \xrightarrow{\mf{Mut}} \tone
        }
      } {
        \Gamma \semi \Pi \semi \Phi_M \vdash \mf{onClick}\ V: \mf{Attr}
      }
    } {
      \Gamma \semi \Pi \semi \Phi_M \vdash \mf{runClick\ onClick}\ V:
      \tone \bang \mf{Mut}
    }
  \end{mathpar}
  \begin{itemize}
    \item We derive conclusion (i) as follows:
    \begin{mathpar}
      \inferrule*[right=TC-App] {
        \inferrule*[right=(1)] { } {
          \Gamma \semi \Pi \semi \Phi_M \vdash V:
          \tone \xrightarrow{\mf{Mut}} \tone
        }
        \and
        \inferrule*[right=TV-Unit] { } {
          \Gamma \semi \Pi \semi \emptyset \vdash (): \tone
        }
      } {
        \Gamma \semi \Pi \semi \Phi_M \vdash V\ (): \tone \bang \mf{Mut}
      }
    \end{mathpar}
  \end{itemize}

  \textbf{Case} \textsc{EC-UpdateText} and \pcase{EC-UpdateTag} are
  analogous to \textsc{EC-InitText} and \textsc{EC-InitTag} respectively.

  \pcase{EC-UpdateCompT}: We have
  $$\heap \semi \mf{update}\ \compInst{p}{D}{v}
  \step \eval{D}{\mf{Succ}}{p}{D}
  \semi \heap \mid \heap[p].\mf{rerender} \mapsto \mf{ff}$$
  \begin{mathpar}
    \inferrule*[right=TC-Update] {
      \inferrule*[Right=TV-CompI] {
        \inferrule* {(1)} {\Gamma \semi \Pi \semi \Phi_1 \vdash D: \Pi[p]}
        \and
        \inferrule* {(2)} {(\Phi_1 \Join \Phi_2)[p] = \mf{Succ}}
        \and
        \inferrule* {(3)} {\Gamma \semi \Pi \semi \Phi_2 \vdash V: \mf{ElemInst}}
      } {
        \Gamma \semi \Pi \semi \Phi_1 \Join \Phi_2
        \vdash \compInst{p}{D}{V}: \mf{ElemInst}
      }
    } {
      \Gamma \semi \Pi \semi \Phi_1 \Join \Phi_2
      \vdash \mf{update}\ \compInst{p}{D}{V}: \mf{ElemInst} \bang \mf{Mut}
    }
  \end{mathpar}
  \begin{itemize}
    \item We retain $\Phi_N = \Phi_M$, $\Phi_H' = \Phi_H$, and $\Pi' = \Pi$.
      Conclusions (iii) and (iv) are trivial.
      By premise (i-b), we have
      $\Gamma \semi \Pi \semi \Phi_H \vDash
      \heap \mid \heap[p].\mf{rerender} \mapsto \mf{ff}$,
      satisfying conclusion (ii).
    \item By Lemma \ref{thm:lamhook-unrestv} (unrestriction) on (3),
      we must have (4) $\mf{Succ}(\Phi_2)$.
    \item By Lemma \ref{thm:lamhook-phiwknb} ($\Phi$-weakening) on (1)
      using (4), we get (5) that
      $\Gamma \semi \Pi \semi \Phi_M \vdash D: \Pi[p]$.
    \item We derive conclusion (i) as follows, noting that
       $\Phi_M \Join \Phi_M = \Phi_M$ by (4):
    \begin{mathpar}
      \inferrule*[right=TC-EvalS] {
        \inferrule*[right=(5) (x2)] { } {
          \Gamma \semi \Pi \semi \Phi_M \vdash D: \Pi[p]
        }
        \and
        \inferrule* { } {\Pi[p] \subseteq \Pi[p]}
        \and
        \inferrule*[right=(2)] { } {\Phi_M[p] = \mf{Succ}}
      } {
        \Gamma \semi \Pi \semi \Phi_M \vdash
        \eval{D}{\mf{Succ}}{p}{D}: \mf{Element} \bang \mf{Mut}
      }
    \end{mathpar}
  \end{itemize}

  \pcase{EC-UpdateCompF}: Analogous to \textsc{EC-ClickTagF}.
\end{proof}

\lamhookpreztop*

\begin{proof}
  By case-analysis on the derivation of the small-step judgment.
  (We label our premises and conclusions as in the previous proof.)

  \pcase{ET-Lift}: We have
  \begin{mathpar}
  \inferrule*[right=ET-Lift] {
    \inferrule* {(1)} {\heap \semi M \step N \semi \heap'}
  } {
    \heap \semi M \step_\top N \semi \heap'
  }
  \end{mathpar}
  \begin{itemize}
    \item The conclusion follows from Theorem \ref{thm:lamhook-prezcm}
    (preservation for $\step$) on (1) and the typing premise.
  \end{itemize}

  \pcase{ET-Event}: We have $\heap \semi \mf{val}\ V
  \step_\top \mf{click}_{id}\ V \semi \mf{update}\ V \semi \heap$
  and
  \begin{mathpar}
    \inferrule*[right=TC-Sub] {
      \inferrule*[right=TC-Val] {
        \inferrule* {(1)} {
          \Gamma \semi \Pi \semi \Phi_M \vdash V: \mf{ElemInst}
        }
      } {
        \Gamma \semi \Pi \semi \Phi_M \vdash \mf{val}\ V:
        \mf{ElemInst} \bang \mf{Pure}
      }
    } {
      \Gamma \semi \Pi \semi \Phi_M \vdash \mf{val}\ V:
      \mf{ElemInst} \bang \mf{Mut}
    }
  \end{mathpar}
  \begin{itemize}
    \item Choose $\Phi_N = \Phi_M$, $\Phi_H' = \Phi_H$,
      and $\Pi'=\Pi$, such that conclusions (iii) and (iv) are trivial,
      and conclusion (ii) is given by premise (i-a).
    \item By Lemna \ref{thm:lamhook-unrestv} (unrestriction) on (1),
      we get $\mf{Succ}(\Phi_M)$.
      Then by definition, $\Phi_M = \Phi_M \Join \Phi_M$.
    \item We derive conclusion (i) as follows:
  \end{itemize}
  \begin{mathpar}
    \inferrule*[right=TC-Let] {
      \inferrule*[right=TC-Click] {
        \inferrule*[right=(1)] { } {
          \Gamma \semi \Pi \semi \Phi_M \vdash V: \mf{ElemInst}
        }
      } {
        \Gamma \semi \Pi \semi \Phi_M \vdash \mf{click}_{id}\ V:
        \tone \bang \mf{Mut}
      }
      \
      \inferrule*[Right=TC-Update] {
        \inferrule*[right=(1)] { } {
          \Gamma \semi \Pi \semi \Phi_M \vdash V: \mf{ElemInst}
        }
      } {
        \Gamma \semi \Pi \semi \Phi_M \vdash \mf{update}\ V:
        \mf{ElemInst} \bang \mf{Mut}
      }
    } {
      \Gamma \semi \Pi \semi \Phi_M \vdash
      \elet{\_}{\mf{click}_{id}\ V}{\mf{update}\ V}:
      \mf{ElemInst} \bang \mf{Mut}
    }
  \end{mathpar}
\end{proof}

\subsection{Consistency}
\label{appdx:consistency}

To prove consistency, since
$\mf{Init}$-phase evaluation updates the heap,
we must reason about heaps with respect to a partial order:

\begin{definition}[Heap ordering]
Define $\sigma \subseteq \sigma'$ to hold iff $\sigma[\ell] = \sigma'[\ell]$
for all $\ell \in \dom(\sigma)$.

Define $\pi \sqsubseteq \pi'$ to hold iff
$\pi.\mf{st} \subseteq \pi'.\mf{st}$ and
$\pi.\mf{rerender} = \pi'.\mf{rerender}$.

Finally, define $\heap \sqsubseteq \heap'$ to hold iff
$\heap[p] \sqsubseteq \heap'[p]$
for all $p \in \dom(\heap)$.
\end{definition}

We prove that initialization $\step^\mf{Init}_p$
is monotonic with respect to this heap ordering.

\begin{lemma}[Initialization monotonicity, single-step]
\label{thm:lamhook-monoinitss}
If
\begin{enumerate}[label=(\roman*)]
  \item $\Gamma \semi \Pi \semi \Phi_H \semi \Phi_D \vdash \heap \semi D: \Ell$,
  \item $\Ell \cap \Pi[p] = \emptyset$, and
  \item $\heap \semi D \step^\mf{Init}_p D' \semi \heap'$,
\end{enumerate}
then $\heap \sqsubseteq \heap'$.

\begin{proof}
  By definition of premise \emph{(i)}, we have subpremises
  \begin{enumerate}[label=(i-\alph*)]
    \item $\Gamma \semi \Pi \semi \Phi_D \vdash D: \Ell$ and
    \item $\Gamma \semi \Pi \semi \Phi_H \vDash \heap$.
  \end{enumerate}

  We proceed by induction on the derivation of the step premise \emph{(iii)}.
  It often suffices to show $\heap' = \heap$, which entails
  the conclusion by reflexivity of $\sqsubseteq$.

  \pcase{EB-Let}: We have
  \begin{mathpar}
    \inferrule*[right=EB-Let] {
      \inferrule* {(1)} {\heap \semi M \step N \semi \heap'}
    } {
      \heap \semi \elet{x}{M}{D}
      \step^\mf{Init}_p \elet{x}{N}{D} \semi \heap'
    }

    \inferrule*[right=TB-Let] {
      \inferrule* {(2)} {
        \Gamma \semi \Pi \semi \Phi_1 \vdash M: A \bang \mf{Pure}
      }
      \and
      \Gamma, x: A \semi \Pi \semi \Phi_2 \vdash D: \Ell
    } {
      \Gamma \semi \Pi \semi \Phi_1 \Join \Phi_2 \vdash \elet{x}{M}{D}: \Ell
    }
  \end{mathpar}
  \begin{itemize}
    \item By Theorem \ref{thm:lamhook-prezcp} (pure preservation)
    on (1) and (2), we get that $\heap' = \heap$.
  \end{itemize}

  \pcase{EB-Val}: We have
  $\heap \semi \elet{x}{\mf{val}\ V}{D} \step^\mf{Init}_p
  D[V/x] \semi \heap$, i.e. $\heap' = \heap$ and we are done.

  \pcase{EB-StateI}: We have
  $\heap \semi \estate{x}{\ell}{V}{D}
  \step^\mf{Init}_p D[V/x, \setter^\ell_p/x_\mf{set}]
  \semi \heap'$,
  where $\heap' = \heap \mid \heap[p].\mf{st}[\ell] \mapsto V$, and
  \begin{mathpar}
  \inferrule*[right=TB-State] {
    \Gamma \semi \Pi \semi \Phi_1 \vdash V: A
    \and
    \Gamma, x: A, x_\mf{set}: A \xrightarrow{\mf{Mut}} \tone \semi \Pi
    \semi \Phi_2 \vdash D: \Ell
    \and
    \ell \notin \dom(\Ell)
  } {
    \Gamma \semi \Pi \semi \Phi_1 \Join \Phi_2
    \vdash \estate{x}{\ell}{V}{D}: \Ell, \ell: A
  }
  \end{mathpar}
  \begin{itemize}
    \item By definition of $\heap'$ and $\sqsubseteq$,
    to show $\heap \sqsubseteq \heap'$,
    it suffices to show that $\ell \notin \dom(\heap[p].\mf{st})$.
    \item Premise \emph{(ii)} says that
    $(\Ell, \ell: A) \cap \Pi[p] = \emptyset$,
    i.e. (1) $\ell \notin \dom(\Pi[p])$.
    \item By premise (i-b), (1) entails
    $\ell \notin \dom(\heap[p].\mf{st})$,
    which is what we needed to show.
  \end{itemize}

  \pcase{EB-StateS} is impossible,
  since the step rule requires that $\phi=\mf{Init}$.
\end{proof}
\end{lemma}

\begin{lemma}[Initialization monotonicity]
\label{thm:lamhook-monoinit}
If
\begin{enumerate}[label=(\roman*)]
  \item $\Gamma \semi \Pi \semi \Phi_H \semi \Phi_D \vdash \heap \semi D: \Ell$,
  \item $\Ell \cap \Pi[p] = \emptyset$,
  \item $p \notin \dom(\Phi_H \Join \Phi_D)$, and
  \item $\heap \semi D \steprt^\mf{Init}_p\ D' \semi \heap'$,
\end{enumerate}
then $\heap \sqsubseteq \heap'$.

\begin{proof}
  By induction on the length $n$ of the reduction sequence.

  \textbf{Base case} $n=0$: We have $\heap' = \heap$,
  and the conclusion is trivial.

  \textbf{Inductive case} $n+1$: We have a single step
  (1) $\heap_0 \semi D_0 \step^\mf{Init}_p D_1 \semi \heap_1$
  and a sequence (2)
  $\heap_1 \semi D_1 \steprt^\mf{Init}_p\ D_2 \semi \heap_2$
  of length $n$.
  (We want to show $\heap_0 \sqsubseteq \heap_2$.)
  \begin{itemize}
    \item By Lemma \ref{thm:lamhook-monoinitss} (single-step monotonicity)
    on (1) and premises \emph{(i)} and \emph{(ii)},
    we get (3) that $\heap_0 \sqsubseteq \heap_1$.
    \item By Theorem \ref{thm:lamhook-prezbi} (preservation)
    on (1) and premises \emph{(i)}-\emph{(iii)}, we get some $\Phi_H'$
    and $\Pi'$ with:
    \begin{enumerate}
      \setcounter{enumi}{3}
      \item $\Gamma \semi \Pi' \semi \Phi_H' \semi \Phi_D \vdash
        \heap \semi D_1$,
      \item $\Phi_H' \Join \Phi_D = \Phi_H \Join \Phi_D$, and
      \item $\Ell' \uplus \Pi'[p] = \Ell \uplus \Pi[p]$.
    \end{enumerate}
    \item (5) and premise \emph{(iii)} entail (7)
      that $p \notin \dom(\Phi_H' \Join \Phi_D)$.
    \item (6) implies (8) that $\Ell' \cap \Pi'[p] = \emptyset$.
    \item Now by inductive hypothesis on (2), (4), (7), and (8),
      we get (9) that $\heap_1 \sqsubseteq \heap_2$.
    \item The conclusion is given by (3), (9),
      and transitivity of $\sqsubseteq$.
  \end{itemize}
\end{proof}
\end{lemma}

We next show that steps which \emph{do not change} the heap $\heap$
(but \emph{may read} from $\heap$)
are preserved under a greater heap $\heap' \sqsupseteq \heap$.

\begin{lemma}[Heap lifting, computations]
\label{thm:lamhook-liftc}
If
\begin{enumerate}[label=(\roman*)]
  \item $\heap \semi M \step N \semi \heap$ and
  \item $\heap \sqsubseteq \heap'$,
\end{enumerate}
then $\heap' \semi M \step N \semi \heap'$.

\begin{proof}
  By mutual induction on the derivation of the step judgment
  with Theorem \ref{thm:lamhook-liftb}.

  \pcase{EC-Let}: We have
  \begin{mathpar}
    \inferrule*[right=EC-Let] {
      \inferrule* {(1)} {\heap \semi M \step M' \semi \heap}
    } {
      \heap \semi \elet{x}{M}{N} \step \elet{x}{M'}{N} \semi \heap
    }
  \end{mathpar}
  \begin{itemize}
    \item By inductive hypothesis on (1)
    and premise \emph{(ii)}, we get (2) that
    $\heap' \semi M \step M' \semi \heap'$.
    \item We derive the conclusion as follows:
    \begin{mathpar}
      \inferrule*[right=EC-Let] {
        \inferrule*[right=(2)] { } {\heap' \semi M \step M' \semi \heap'}
      } {
        \heap' \semi \elet{x}{M}{N} \step \elet{x}{M'}{N} \semi \heap'
      }
    \end{mathpar}
  \end{itemize}

  \pcase{EC-Eval}: We have
  \begin{mathpar}
    \inferrule*[right=EC-Eval] {
      \inferrule* {(1)} {\heap \semi D \step^\phi_p D' \semi \heap}
    } {
      \heap \semi \eval{D_0}{\phi}{p}{D}
      \step
      \eval{D_0}{\phi}{p}{D'} \semi \heap
    }
  \end{mathpar}
  \begin{itemize}
    \item By mutual inductive hypothesis on (1) and premise \emph{(ii)},
    we get (2) that $\heap' \semi D \step^\phi_p D' \semi \heap'$.
    \item We derive the conclusion as follows:
    \begin{mathpar}
      \inferrule*[right=EC-Eval] {
        \inferrule*[right=(2)] { } {
          \heap' \semi D \step^\phi_p D' \semi \heap'
        }
      } {
        \heap' \semi \eval{D_0}{\phi}{p}{D}
        \step
        \eval{D_0}{\phi}{p}{D'} \semi \heap'
      }
    \end{mathpar}
  \end{itemize}

  \pcase{EC-AppSet}: We have
  $\heap \semi \setter^\ell_p\ V \step \mf{val}\ () \semi \heap''$,
  where $\heap'' = \heap \mid \heap[p].\mf{st}[\ell] \mapsto V$
  and $\heap[p].\mf{rerender} \mapsto \mf{tt}$.
  \begin{itemize}
    \item By the step premise, we must have $\heap'' = \heap$,
    i.e. $\heap[p].\mf{st}[\ell] = V$ and $\heap[p].\mf{rerender} = \mf{tt}$.
    \item By premise \emph{(ii)}, we therefore also have
    $\heap'[p].\mf{st}[\ell] = V$ and $\heap'[p].\mf{rerender} = \mf{tt}$.
    \item We can reapply \textsc{EC-AppSet} using $\heap'$, getting that
    $\heap' \semi \setter^\ell_p\ V \step \mf{val}\ () \semi \heap'$,
    which satisfies the conclusion.
  \end{itemize}

  \pcase{EC-InitComp}: We have
  $\heap \semi \mf{init}\ \mf{component}\ \{D\}
  \step \eval{D}{\mf{Init}}{p}{D}
  \semi \heap''$, where $p$ is fresh and
  $\heap'' = \heap, p \mapsto \{\mf{st}: \emptyset, \mf{rerender}: \mf{ff}\}$.
  \begin{itemize}
    \item To satisfy the step premise, we must have $\heap'' = \heap$. However,
    this is \emph{impossible}, since $p$ must be fresh, i.e.
    $p \notin \dom(\heap)$. Therefore, this case holds vacuously.
  \end{itemize}

  \pcase{EC-Val}: We have
  $\heap \semi \elet{x}{\mf{val}\ V}{M} \step M[V/x] \semi \heap$.
  \begin{itemize}
    \item The conclusion is immediately rederivable
    via the same rule using $\heap'$.
  \end{itemize}

  \textbf{All other cases} can be shown analogously to \textsc{EC-Val}.

\end{proof}
\end{lemma}

\begin{lemma}[Heap lifting, bodies]
\label{thm:lamhook-liftb}
If
\begin{enumerate}[label=(\roman*)]
  \item $\heap \semi D \step^\mf{Succ}_p D' \semi \heap$ and
  \item $\heap \sqsubseteq \heap'$,
\end{enumerate}
then $\heap' \semi D \step^\mf{Succ}_p D' \semi \heap'$.

\begin{proof}
  By mutual induction on the derivation of the step judgment
  with Theorem \ref{thm:lamhook-liftc}.

  \pcase{EB-Let}: We have
  \begin{mathpar}
    \inferrule*[right=EB-Let] {
      \inferrule* {(1)} {\heap \semi M \step N \semi \heap}
    } {
      \heap \semi \elet{x}{M}{D} \step^\mf{Succ}_p
      \elet{x}{N}{D} \semi \heap
    }
  \end{mathpar}
  \begin{itemize}
    \item By mutual inductive hypothesis on (1)
    and premise \emph{(ii)}, we get (2) that
    $\heap' \semi M \step N \semi \heap'$.
    \item We derive the conclusion as follows:
    \begin{mathpar}
      \inferrule*[right=EB-Let] {
        \inferrule*[right=(2)] { } {\heap' \semi M \step N \semi \heap'}
      } {
        \heap' \semi \elet{x}{M}{D} \step^\mf{Succ}_p
        \elet{x}{N}{D} \semi \heap'
      }
    \end{mathpar}
  \end{itemize}

  \pcase{EB-Ret}: We have
  $\heap \semi \elet{x}{\mf{val}\ V}{D}
  \step^\phi_p D[V/x] \semi \heap$.
  \begin{itemize}
    \item The conclusion is immediately rederivable via the same rule.
  \end{itemize}

  \pcase{EB-StateI}: We have
  $\heap \semi \estate{x}{\ell}{V}{D}
  \step^\mf{Init}_p D[V/x, \setter^\ell_p/x_\mf{set}]
  \semi \heap''$,
  where $\heap'' = \heap \mid \heap[p].\mf{st}[\ell] \mapsto V$.
  \begin{itemize}
    \item By the step premise, we also must have $\heap'' = \heap$,
    i.e. $\heap[p].\mf{st}[\ell] = V$.
    \begin{itemize}
      \item Note: this fact means that the configuration is inherently
      ill-typed, since our type system rules out re-initialization.
      Since this lemma does not concern typing,
      we must still show this case (and it is easy to do so.)
    \end{itemize}
    \item By premise \emph{(ii)}, we also have $\heap'[p].\mf{st}[\ell] = V$.
    \item We can reapply \textsc{EB-StateI} using $\heap'$,
    getting that $\heap' \semi \estate{x}{\ell}{V}{D}
    \step^\mf{Init}_p D[V/x, \setter^\ell_p/x_\mf{set}] \semi \heap'$,
    which satisfies the conclusion.
  \end{itemize}

  \pcase{EB-StateS}: We have
  $$\heap \semi \estate{x}{\ell}{V}{D}
  \step^\mf{Succ}_p D[\heap[p].\mf{st}[\ell]/x, \setter^\ell_p/x_\mf{set}]
  \semi \heap$$
  \begin{itemize}
    \item To show the conclusion (via \textsc{EB-StateS}),
    it suffices to show that $\heap'[p].\mf{st}[\ell] = \heap[p].\mf{st}[\ell]$.
    \item This is given by definition of $\heap \sqsubseteq \heap'$,
    and we are done.
  \end{itemize}
\end{proof}
\end{lemma}

We now proceed to our main proofs of component consistency.

\begin{lemma}[Component consistency, single-step]
\label{thm:lamhook-consistencyss}
If
\begin{enumerate}[label=(\roman*)]
\item $\Gamma \semi \Pi \semi \Phi_H \semi \Phi_D \vdash \heap \semi D: \Ell$
and
\item $\heap \semi D \step^{\phi}_p D' \semi \heap'$,
\end{enumerate}
then $\heap' \semi D \step^{\mf{Succ}}_p D' \semi \heap'$.

\begin{proof}
  By definition of premise \emph{(i)}, we have subpremises
  \begin{enumerate}[label=(i-\alph*)]
    \item $\Gamma \semi \Pi \semi \Phi_D \vdash D: \Ell$ and
    \item $\Gamma \semi \Pi \semi \Phi_H \vDash \heap$.
  \end{enumerate}

  We proceed by case-analysis on the derivation of the step judgment.

  \pcase{EB-Let}: We have
  \begin{mathpar}
    \inferrule*[right=EB-Let] {
      \inferrule* {(1)} {\heap \semi M \step N \semi \heap'}
    } {
      \heap \semi \elet{x}{M}{D} \step^\phi_p
      \elet{x}{N}{D} \semi \heap'
    }

    \inferrule*[right=TB-Let] {
      \inferrule* {(2)} {
        \Gamma \semi \Pi \semi \Phi_1 \vdash M: A \bang \mf{Pure}
      }
      \and
      \Gamma, x: A \semi \Pi \semi \Phi_2 \vdash D: \Ell
    } {
      \Gamma \semi \Pi \semi \Phi_1 \Join \Phi_2 \vdash \elet{x}{M}{D}: \Ell
    }
  \end{mathpar}
  \begin{itemize}
    \item By Theorem \ref{thm:lamhook-prezcp} (pure preservation)
    on (1) and (2),
    we get that $\heap' = \heap$.
    \item We derive the conclusion as follows:
    \begin{mathpar}
      \inferrule*[right=EB-Let] {
        \inferrule*[right=(1)] { } {\heap' \semi M \step N \semi \heap'}
      } {
        \heap' \semi \elet{x}{M}{D} \step^\mf{Succ}_p
        \elet{x}{N}{D} \semi \heap'
      }
    \end{mathpar}
  \end{itemize}

  \pcase{EB-Ret}: We have
  $\heap \semi \elet{x}{\mf{val}\ V}{D}
  \step^\phi_p D[V/x] \semi \heap$.
  \begin{itemize}
    \item The conclusion is immediately rederivable at $\phi=\mf{Succ}$
    via the same rule.
  \end{itemize}

  \pcase{EB-StateI}: We have
  $\heap \semi \estate{x}{\ell}{V}{D}
  \step^\mf{Init}_p D[V/x, \setter^\ell_p/x_\mf{set}]
  \semi \heap'$,
  where $\heap' = \heap \mid \heap[p].\mf{st}[\ell] \mapsto V$.
  \begin{itemize}
    \item Since $\heap'[p].\mf{st}[\ell] = V$, we apply \textsc{EB-StateS}
    to get that
    \[\heap' \semi \estate{x}{\ell}{V}{D}
    \step^\mf{Succ}_p D[V/x, \setter^\ell_p/x_\mf{set}] \semi \heap'\]
    which satsisfies the conclusion.
  \end{itemize}

  \pcase{EB-StateS}: We have
  $\heap \semi \estate{x}{\ell}{V}{D} \step^\mf{Succ}_p
  D[W/x, \setter^\ell_p/x_\mf{set}] \semi \heap$,
  where $W = \heap[p].\mf{st}[\ell]$.
  \begin{itemize}
    \item Since $\heap' = \heap$ and $\phi=\mf{Succ}$,
    the conclusion is immediately given.
  \end{itemize}
\end{proof}
\end{lemma}

\consistency*

\begin{proof}
  We label the premises as follows:
  \begin{enumerate}[label=(\roman*)]
    \item $\Gamma \semi \Pi \semi \Phi_H \semi \Phi_D
      \vdash \heap \semi D: \Ell$,
    \item $\Ell \cap \Pi[p] = \emptyset$,
    \item $p \notin \dom(\Phi_H \Join \Phi_D)$, and
    \item $\heap \semi D \steprt^\phi_p\ \mf{val}\ V \semi \heap'$.
  \end{enumerate}

  We proceed by induction on the length $n$ of the reduction sequence.

  \textbf{Base case} $n=0$: We have
  $\heap \semi \mf{val}\ V \steprt^\phi_p\ \mf{val}\ V \semi \heap$.
  \begin{itemize}
    \item We get that $\heap' = \heap$.
    The conclusion is thus trivial by reflexivity of
    $\steprt^\mf{Succ}_p$.
  \end{itemize}

  \textbf{Inductive case} $n+1$: We have a single step
  (1) $\heap_0 \semi D_0 \step^\mf{Init}_p D_1 \semi \heap_1$
  and a sequence (2)
  $\heap_1 \semi D_1 \steprt^\mf{Init}_p\ \mf{val}\ V \semi \heap_2$
  of length $n$.
  (We want to show
  $\heap_2 \semi D_0 \steprt^\mf{Succ}_p \mf{val}\ V \semi \heap_2$.)
  \begin{itemize}
    \item By Theorem \ref{thm:lamhook-prezbi} (preservation)
    on (1) and premises \emph{(i)}-\emph{(iii)},
    we get some $\Phi_H'$ and $\Pi'$ with
    \begin{enumerate}
      \setcounter{enumi}{2}
      \item $\Gamma \semi \Pi' \semi \Phi_H' \semi \Phi_D
        \vdash \heap_1 \semi D_1: \Ell'$,
      \item $\Phi_D \Join \Phi_H' = \Phi_D \Join \Phi_H$, and
      \item $\Ell' \uplus \Pi'[p] = \Ell \uplus \Pi[p]$.
    \end{enumerate}
    \item (4) and premise \emph{(iii)} entail (6)
      that $p \notin \dom(\Phi_H' \Join \Phi_D)$.
    \item (5) implies (7) that $\Ell' \cap \Pi'[p] = \emptyset$.
    \item By inductive hypothesis on (2), (3), (6), and (7),
      we get (8)
      $\heap_2 \semi D_1 \steprt^\mf{Succ}_p \mf{val}\ V \semi \heap_2$.
    \item By Lemma \ref{thm:lamhook-monoinit} (initialization monotonicity)
      on (2), (3), (5), and (6),
      we get (9) that $\heap_1 \sqsubseteq \heap_2$.
    \item By Lemma \ref{thm:lamhook-consistencyss} (single-step consistency)
      on (1) and premise \emph{(i)}, we get (10) that\\
      $\heap_1 \semi D_0 \step^\mf{Succ}_p D_1 \semi \heap_1$.
    \item By Lemma \ref{thm:lamhook-liftb} (heap lifting) on (10) using (9),
    we get (11) that
    $\heap_2 \semi D_0 \step^\mf{Succ}_p D_1 \semi \heap_2$.
    \item We conclude by (11) and (8) that
    $\heap_2 \semi D_0 \steprt^\mf{Succ}_p\ \mf{val}\ V \semi \heap_2$,
    and we are done.
  \end{itemize}
\end{proof}

\section{Translation Proofs}
\label{appdx:trans}

\subsection{Type Preservation}
\label{appdx:transprez}

We begin with a couple of lemmas which let us type the results of
the $\delta$-function:

\begin{lemma}[Flag typing]
\label{thm:trans-tpflag}
If
\begin{enumerate}[label=(\roman*)]
\item $\Gamma \semi \Pi \vDash_p \delta : \Ell$ and
\item $\Pi \semi \emptyset \vDash \R$,
\end{enumerate}
then $\Gamma \semi \bsq{\Pi}\R \vdash \delta(p, \R): \mf{Ref\ Bool}$.

\begin{proof}
  By case-analysis on premise \emph{(i)}, either:
  \begin{enumerate}[label=(\alph*)]
  \item $\mf{flag} \mapsto x \in \delta$
  and $x: \mf{Ref\ Bool} \in \Gamma$, or
  \item $\mf{flag} \notin \dom(\delta)$ and $p \in \dom(\Pi)$.
  \end{enumerate}

  \textbf{Case} (a):
  \begin{itemize}
    \item By definition, $\delta(p, \R)=x$.
    \item From (a), we also have that $x: \mf{Ref\ Bool} \in \Gamma$,
    i.e. (1) $\delta(p, \R): \mf{Ref\ Bool} \in \Gamma$.
    \item We derive the conclusion as follows:
    \begin{mathpar}
      \inferrule*[right=TV-Var] {
        \inferrule*[right=(1)] { } {
          \delta(p, \R): \mf{Ref\ Bool} \in \Gamma
        }
      } {
        \Gamma \semi \bsq{\Pi}\R \vdash \delta(p, \R): \mf{Ref\ Bool}
      }
    \end{mathpar}
  \end{itemize}

  \textbf{Case} (b):
  \begin{itemize}
    \item By definition, $\delta(p, \R)=\R(p)$.
    \item By premise \emph{(ii)}, since $p \in \dom(\Pi)$,
    we have that $\R(p)$ is well-defined. Then by definition of $\bsq{\Pi}\R$,
    we get that $\R(p): \mf{Bool} \in \bsq{\Pi}\R$,
    i.e. (2) $\delta(p, \R): \mf{Bool} \in \bsq{\Pi}\R$.
    \item We derive the conclusion as follows:
    \begin{mathpar}
      \inferrule*[right=TV-Loc] {
        \inferrule*[right=(2)] { } {\delta(p, \R): \mf{Bool} \in \bsq{\Pi}\R}
      } {
        \Gamma \semi \bsq{\Pi}\R \vdash \delta(p, \R): \mf{Ref\ Bool}
      }
    \end{mathpar}
  \end{itemize}
\end{proof}
\end{lemma}

\begin{lemma}[Label typing]
\label{thm:trans-tplab}
If
\begin{enumerate}[label=(\roman*)]
\item $\Gamma \semi \Pi \vDash_p \delta : (\Ell, \ell: A)$ and
\item $\Pi \semi \emptyset \vDash \R$,
\end{enumerate}
then $\Gamma \semi \bsq{\Pi}\R \vdash
\delta(p, \ell, \R): \mf{Ref}\ \bsq{A}$.

\begin{proof}
  By case-analysis on premise \emph{(i)}, either:
  \begin{enumerate}[label=(\alph*)]
    \item $x \mapsto \ell \in \delta$ and $x: \mf{Ref}\ \bsq{A} \in \Gamma$, or
    \item $\ell \notin \dom(\delta)$ and $\ell: A \in \Pi[p]$.
  \end{enumerate}

  \textbf{Case} (a):
  \begin{itemize}
    \item By definition, $\delta(p, \ell, \R)=x$.
    \item From (a), we also have that $x: \mf{Ref}\ \bsq{A} \in \Gamma$,
    i.e. (1) $\delta(p, \ell, \R): \mf{Ref}\ \bsq{A} \in \Gamma$.
    \item We derive the conclusion as follows:
    \begin{mathpar}
      \inferrule*[right=TV-Var] {
        \inferrule*[right=(1)] { } {
          \delta(p, \ell, \R): \mf{Ref}\ \bsq{A} \in \Gamma
        }
      } {
        \Gamma \semi \bsq{\Pi}\R \vdash \delta(p, \ell, \R):
        \mf{Ref}\ \bsq{A}
      }
    \end{mathpar}
  \end{itemize}

  \textbf{Case} (b):
  \begin{itemize}
    \item By definition, $\delta(p, \ell, \R)=\R(p, \ell)$.
    \item By premise \emph{(ii)}, since $\ell: A \in \Pi[p]$,
    we have that $\R(p, \ell)$ is well-defined.
    Then by definition of $\bsq{\Pi}\R$,
    we get that $\R(p, \ell): \bsq{A} \in \bsq{\Pi}\R$,
    i.e. (2) $\delta(p, \ell, \R): \bsq{A} \in \bsq{\Pi}\R$.
    \item We derive the conclusion as follows:
    \begin{mathpar}
      \inferrule*[right=TV-Loc] {
        \inferrule*[right=(2)] { } {
          \delta(p, \ell, \R): \bsq{A} \in \bsq{\Pi}\R
        }
      } {
        \Gamma \semi \bsq{\Pi}\R \vdash \delta(p, \ell, \R):
        \mf{Ref}\ \bsq{A}
      }
    \end{mathpar}
  \end{itemize}

\end{proof}
\end{lemma}

We now proceed to the main proofs of type preservation.

\begin{restatable}[Type preservation, values]{theorem}{transprezv}
\label{thm:trans-prezv}
If
\begin{enumerate}[label=(\roman*)]
\item $\Gamma \semi \Pi \semi \Phi \vdash V: A$ and
\item $\Pi \semi \mf{ids}(V) \vDash \R$,
\end{enumerate}
then $\bsq{\Gamma} \semi \bsq{\Pi}\R \vdash \bsq{V}\R: \bsq{A}$.
\end{restatable}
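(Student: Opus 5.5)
The proof is by mutual induction on the derivation of $\Gamma \semi \Pi \semi \Phi \vdash V: A$, together with the companion statements for computations (Theorem~\ref{thm:trans-prezc}) and for bodies in both phases (Theorems~\ref{thm:trans-prezbi} and~\ref{thm:trans-prezbs}). Throughout, $\Phi$ and effect annotations are discarded, since the translation of types simply erases the $E$ on arrows. Sufficiency of $\R$ passes to subterms because $\mf{ids}$ of a subterm is contained in $\mf{ids}(V)$. The conditions on pointers and labels in $\Pi \semi \mathbb{I} \vDash \R$ do not depend on $\mathbb{I}$, so sufficiency is preserved when we shrink the identifier set.

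Most cases are homomorphic. For \textsc{TV-Var}, \textsc{TV-Unit}, strings, booleans, \textsc{TV-Nil}, pairs, cons, attributes, $\mf{onClick}$, $\mf{text}$, $\mf{tag}$ and $\mf{textInst}$, we apply the induction hypothesis to the premises and rebuild the derivation with the corresponding $\lamref$ rule. In the $\mf{onClick}$ case, $\trans{\tone \xrightarrow{\mf{Mut}} \tone} = \tone \rightarrow \tone$. For \textsc{TV-Lam}, the mutual hypothesis on computations gives $\trans{\Gamma}, x{:}\trans{A} \semi \trans{\Pi}\R \vdash \trans{M}\R : \trans{B}$. For \textsc{TV-TagI}, sufficiency on $\mathbb{I} \ni {id}$ makes $\R({id})$ defined, and it is simply the identifier of the translated $\mf{tagInst}$.

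Three cases need real work. For \textsc{TV-Setter}, we have $\ell : A \in \Pi[p]$, so sufficiency makes both $\R(p)$ and $\R(p,\ell)$ defined. By the definition of $\trans{\Pi}\R$, these locations are typed at $\mf{Ref\ Bool}$ and $\mf{Ref}\ \trans{A}$ via \textsc{TV-Loc}. We then type $\lambda x.(\R(p,\ell) := x \semi \R(p) := \mf{true})$ with \textsc{TC-Set}, a let, and \textsc{TV-Lam}, obtaining $\trans{A} \rightarrow \tone$.

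For \textsc{TV-Comp}, with premise $\Gamma \semi \Pi \semi \Phi \vdash D : \Ell$, the translated body appears under $\elet{x}{\mf{ref}(\mf{false})}{\cdots}$. We invoke Theorem~\ref{thm:trans-prezbi} with $D_0 = D$, $\Ell_0 = \Ell$, $p = \bot$ and $\delta = (\mf{flag} \mapsto x)$, in context $\Gamma' = x : \mf{Ref\ Bool}$. Its sufficiency premise $\Gamma' \semi \Pi \vDash_\bot \delta : \emptyset$ holds, since $\Ell_0 \setminus \Ell = \emptyset$ imposes no label obligations and the flag is present. This yields type $A_\mf{comp}$, and \textsc{TC-Ref}, \textsc{TC-Let}, \textsc{TV-Lam} and \textsc{TV-Comp} finish the case.

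For \textsc{TV-CompI}, $\Gamma \semi \Pi \semi \Phi_1 \vdash D : \Pi[p]$ and $p \in \dom(\Pi)$, so $\R(p)$ is a location of type $\mf{Ref\ Bool}$. We apply Theorem~\ref{thm:trans-prezbs} to $D$ with $\delta = \emptyset$. Sufficiency $\Gamma' \semi \Pi \vDash_p \emptyset : \Pi[p]$ holds because every label of $\Pi[p]$ is in $\Pi[p]$ and $p \in \dom(\Pi)$. This gives $\mf{Element}$ for $\stransp{D}{p}{\R}{\emptyset}$, which we thunk to get $\tone \rightarrow \mf{Element}$. The child is handled by the induction hypothesis, and \textsc{TV-CompI} of $\lamref$ applies.

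The main obstacle is bookkeeping in the $\mf{component}$ case. The sufficiency predicate must be checked at $p = \bot$, where $\delta$ alone supplies the flag. We must also confirm that the successor-phase translation of $D_0$, embedded in the $\mf{val}$ case of $\itrans{\cdot}$, is typed with all label variables $x_\ell$ in scope. This is exactly what the precondition $\Ell_0 \setminus \Ell$ in Theorem~\ref{thm:trans-prezbi} tracks, so it should discharge once the mutual induction is set up on derivation size.
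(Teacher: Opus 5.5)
Your proposal is correct and follows the paper's proof essentially step for step. It uses the same mutual induction with the computation and body theorems and rebuilds the structural cases directly. It also handles the three substantive cases the same way: \textsc{TV-Setter}; \textsc{TV-Comp} via Theorem~\ref{thm:trans-prezbi} with $D_0 = D$, $p = \bot$ and $\delta = (\mf{flag} \mapsto x)$; and \textsc{TV-CompI} via Theorem~\ref{thm:trans-prezbs} with $\delta = \emptyset$.

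Two details in your version are actually more accurate than the paper's text. First, your use of the concrete pointer $p$ in the \textsc{TV-CompI} case matches the definition of the translation. The $\bot$ written at that point in the paper's proof is a slip, since $\emptyset \semi \Pi \vDash_\bot \emptyset : \Pi[p]$ fails for lack of a flag. Second, your caveat about the induction measure is apt. Theorem~\ref{thm:trans-prezbi} appeals to Theorem~\ref{thm:trans-prezbs} on the derivation for $D_0$, which is not a subderivation of its inductive premise. Plain derivation size therefore needs a lexicographic refinement. For example, order calls of Theorem~\ref{thm:trans-prezbi} by the size of the larger of its two body derivations, breaking ties by the size of the first.
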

\begin{proof}
  By mutual induction on the derivation of the value typing judgment \emph{(i)}
  with Theorems \ref{thm:trans-prezc}, \ref{thm:trans-prezbs},
  and \ref{thm:trans-prezbi}

  \pcase{TV-Var}: We have $\bsq{x}\R = x$ and
  \begin{mathpar}
    \inferrule*[right=TV-Var] {
      \inferrule* {(1)} {
        x: A \in \Gamma
      }
      \and \mf{Succ}(\Phi)
    } {
      \Gamma \semi \Pi \semi \Phi \vdash x: A
    }
  \end{mathpar}
  \begin{itemize}
    \item By (1) and definition of $\bsq{\Gamma}$, we have (2) that
    $x: \bsq{A} \in \bsq{\Gamma}$.
    \item We derive the conclusion as follows:
    \begin{mathpar}
      \inferrule*[right=TV-Var] {
        \inferrule*[right=(2)] { } {
          x: \bsq{A} \in \bsq{\Gamma}
        }
      } {
        \bsq{\Gamma} \semi \bsq{\Pi}\R \vdash x: \bsq{A}
      }
    \end{mathpar}
  \end{itemize}

  \pcase{TV-Unit}: $\bsq{()}\R=()$ and $\bsq{\tone}=\tone$,
  so the conclusion is easy:
  \begin{mathpar}
    \inferrule*[right=TV-Unit] {
      \mf{Succ}(\Phi)
    } {
      \Gamma \semi \Pi \semi \Phi \vdash (): \tone
    }

    \inferrule*[right=TV-Unit] { } {
      \bsq{\Gamma} \semi \bsq{\Pi}\R \vdash (): \tone
    }
  \end{mathpar}

  \textbf{Cases} \textsc{TV-True}, \textsc{TV-False},
  \textsc{TV-Str}, and \textsc{TV-Nil} can all be shown analogously
  to the above case.

  \pcase{TV-Pair}: We have \[
  \bsq{(V, W)}\R = (\bsq{V}\R, \bsq{W}\R)
  \quad
  \bsq{A \times B} = \bsq{A} \times \bsq{B}
  \quad
  \mf{ids}((V, W)) = \mf{ids}(V) \cup \mf{ids}(W)
  \]
  \begin{mathpar}
    \inferrule*[right=TV-Pair] {
      \inferrule* {(1)} {\Gamma \semi \Pi \semi \Phi_1 \vdash V: A}
      \and
      \inferrule* {(2)} {\Gamma \semi \Pi \semi \Phi_2 \vdash W: B}
    } {
      \Gamma \semi \Pi \semi \Phi_1 \Join \Phi_2 \vdash
      (V, W): A \times B
    }
  \end{mathpar}
  \begin{itemize}
    \item By premise \emph{(ii)} and definition of $\mf{ids}$,
    we get (3) that $\Pi \semi \mf{ids}(V) \vDash \R$
    and (4) that\\$\Pi \semi \mf{ids}(W) \vDash \R$.
    \item By inductive hypothesis on each of pairs (1)/(3) and (2)/(4),
    we get (5) that\\
    $\bsq{\Gamma} \semi \bsq{\Pi}\R \vdash \bsq{V}\R: \bsq{A}$
    and (6) that
    $\bsq{\Gamma} \semi \bsq{\Pi}\R \vdash \bsq{W}\R: \bsq{B}$,
    respectively.
    \item We derive the conclusion as follows:
    \begin{mathpar}
      \inferrule*[right=TV-Pair] {
        \inferrule*[right=(5)] { } {
          \bsq{\Gamma} \semi \bsq{\Pi}\R \vdash \bsq{V}\R: \bsq{A}
        }
        \and
        \inferrule*[right=(6)] { } {
          \bsq{\Gamma} \semi \bsq{\Pi}\R \vdash \bsq{W}\R: \bsq{B}
        }
      } {
        \bsq{\Gamma} \semi \bsq{\Pi}\R
        \vdash (\bsq{V}\R, \bsq{W}\R): \bsq{A} \times \bsq{B}
      }
    \end{mathpar}
  \end{itemize}

  \textbf{Cases} \textsc{TV-Cons}, \textsc{TV-Attr},
  \textsc{TV-OnClick}, \textsc{TV-Text}, \textsc{TV-Tag},
  \textsc{TV-TextI}, and \textsc{TV-TagI}
  can all be shown analogously
  to the above case.

  \pcase{TV-Lam}: We have
  \[
  \bsq{\lambda x.M}\R = \lambda x.\bsq{M}\R
  \qquad
  \bsq{A \xrightarrow{E} B} = \bsq{A} \rightarrow \bsq{B}
  \qquad
  \mf{ids}(\lambda x.M) = \mf{ids}(M)
  \]
  \begin{mathpar}
    \inferrule*[right=TV-Lam] {
      \inferrule* {(1)} {
        \Gamma, x: A \semi \Pi \semi \Phi \vdash M: B \bang E
      }
      \and
      \mf{Succ}(\Phi)
    } {
      \Gamma \semi \Pi \semi \Phi \vdash \lambda x.M: A \xrightarrow{E} B
    }
  \end{mathpar}
  \begin{itemize}
    \item By mutual inductive hypothesis (Theorem \ref{thm:trans-prezc})
    on (1) and premise \emph{(ii)}, we have (2) that
    $\bsq{\Gamma} x: \bsq{A} \semi \bsq{\Pi}\R
    \vdash \bsq{M}\R: \bsq{B}$.
    \item We derive the conclusion as follows:
    \begin{mathpar}
      \inferrule*[right=TV-Lam] {
        \inferrule*[right=(2)] { } {
          \bsq{\Gamma}, x: \bsq{A} \semi \bsq{\Pi}\R
          \vdash \bsq{M}\R: \bsq{B}
        }
      } {
        \bsq{\Gamma} \semi \bsq{\Pi}\R
        \vdash \lambda x.\bsq{M}\R: \bsq{A} \rightarrow \bsq{B}
      }
    \end{mathpar}
  \end{itemize}

  \pcase{TV-Setter}: We have
  \[
  \bsq{\setter^\ell_p}\R = \lambda x.
    (\R(p, \ell) := x \semi \R(p) := \mf{true})
  \qquad
  \bsq{A \xrightarrow{\mf{Mut}} \tone} = \bsq{A} \rightarrow \tone
  \]
  \begin{mathpar}
  \inferrule*[right=TV-Setter] {
    \inferrule* {(1)} {\ell: A \in \Pi[p]}
    \and \mf{Succ}(\Phi)
  } {
    \Gamma \semi \Pi \semi \Phi \vdash \setter^\ell_p:
    A \xrightarrow{\mf{Mut}} \tone
  }
  \end{mathpar}
  \begin{itemize}
    \item By (1) and premise \emph{(ii)}, we have
    (2) that $\R(p)$ is well-defined
    and (3) that $\R(p, \ell)$ is well-defined.
    \item By definition of $\bsq{\Pi}\R$, from (2) and (3) respectively
    we get (4) that $\R(p): \mf{Bool} \in \bsq{\Pi}\R$
    and (5) that $\R(p, \ell): \bsq{A} \in \bsq{\Pi}\R$,
    noting that $\Pi[p][\ell]=A$ by (1).
    \item We derive the conclusion by way of subderivations (6) and (7)
    as follows:
    \begin{mathpar}
      \inferrule*[right=TC-Set] {
        \inferrule*[right=TV-Loc] {
          \inferrule*[right=(5)] { } {
            \R(p, \ell): \bsq{A} \in \bsq{\Pi}_R
          }
        } {
          \Gamma_6 \semi \bsq{\Pi}\R
          \vdash \R(p, \ell): \mf{Ref}\ \bsq{A}
        }
        \and
        \inferrule*[right=TV-Var] {
          \inferrule* { } {x: \bsq{A} \in \Gamma_6}
        } {
          \Gamma_6 \semi \bsq{\Pi}\R
          \vdash x: \bsq{A}
        }
      } {
        (6) \quad
        \Gamma_6 \semi \bsq{\Pi}\R
        \vdash \R(p, \ell) := x: \tone
      }
      \and
      \inferrule*[right=TC-Set] {
        \inferrule*[right=TV-Loc] {
          \inferrule*[right=(4)] { } {
            \R(p): \mf{Bool} \in \bsq{\Pi}\R
          }
        } {
          \Gamma_7 \semi \bsq{\Pi}\R \vdash \R(p): \mf{Ref\ Bool}
        }
        \and
        \inferrule*[right=TV-True] { } {
          \Gamma_7 \semi \bsq{\Pi}\R \vdash \mf{true}: \mf{Bool}
        }
      } {
        (7) \quad
        \Gamma_7 \semi \bsq{\Pi}\R
        \vdash \R(p) := \mf{true}: \tone
      }
      \and
      \inferrule*[right=TV-Lam] {
        \inferrule*[right=TC-Let] {
          [(6),\ \Gamma_6=\bsq{\Gamma}, x: \bsq{A}]
          \and
          [(7),\ \Gamma_7=\bsq{\Gamma}, x: \bsq{A}]
        } {
          \Gamma, x: \bsq{A} \semi \bsq{\Pi}\R
          \vdash \R(p, \ell) := x \semi \R(p) := \mf{true}: \tone
        }
      } {
        \bsq{\Gamma} \semi \bsq{\Pi}\R
        \vdash \lambda x.(\R(p, \ell) := x \semi \R(p) := \mf{true}):
        \bsq{A} \rightarrow \tone
      }
    \end{mathpar}
  \end{itemize}

  \pcase{TV-Comp}: We have
  \begin{gather*}
    \bsq{\mf{component}\ \{D\}}\R = \mf{component}\left(
    \lambda ().\elet{x}{\mf{ref}(\mf{false})}{
      \itransp{D}{D}{\bot}{\R}{\mf{flag} \mapsto x}
    }
  \right)\\
    \bsq{\mf{Element}} = \mf{Element}\\
    \mf{ids}(\mf{component}\ \{D\}) = \mf{ids}(D)
  \end{gather*}
  \begin{mathpar}
  \inferrule*[right=TV-Comp] {
    \inferrule* {(1)} {\Gamma \semi \Pi \semi \Phi \vdash D: \Ell}
  } {
    \Gamma \semi \Pi \semi \Phi \vdash \mf{component}\ \{D\}: \mf{View}
  }
  \end{mathpar}
  \begin{itemize}
    \item By definition, we have (2) that
    $x: \mf{Ref\ Bool} \semi \Pi \vDash_\bot \mf{flag} \mapsto x: \emptyset$.
    \item By Theorem \ref{thm:trans-prezbi} (initial-phase type preservation)
    on (1), (2), and premise \emph{(ii)}, we get (3) that
    $\bsq{\Gamma}, x: \mf{Ref\ Bool} \semi \bsq{\Pi}\R \vdash
    \itransp{D}{D}{\bot}{\R}{\mf{flag} \mapsto x}:
    A_\mf{comp}$.
    \item We derive the conclusion by way of subderivation (4) as follows:
    \begin{mathpar}
      \inferrule*[right=TC-Ref] {
        \inferrule*[right=TV-False] { } {
          \bsq{\Gamma} \semi \bsq{\Pi}\R \vdash
          \mf{false}: \mf{Bool}
        }
      } {
        (4) \quad
        \bsq{\Gamma} \semi \bsq{\Pi}\R \vdash
        \mf{ref}(\mf{false}): \mf{Ref\ Bool}
      }
      \and
      \inferrule*[right=TV-Comp] {
        \inferrule*[Right=TV-Lam] {
          \inferrule*[Right=TC-Let] {
            (4)
            \and
            \inferrule*[right=(3)] { } {
              \bsq{\Gamma}, x: \mf{Ref\ Bool} \semi \bsq{\Pi}\R \vdash
              \itransp{D}{D}{\bot}{\R}{\mf{flag} \mapsto x}:
              A_\mf{comp}
            }
          } {
            \bsq{\Gamma} \semi \bsq{\Pi}\R \vdash
            \elet{x}{\mf{ref}(\mf{false})}{
              \itransp{D}{D}{\bot}{\R}{\mf{flag} \mapsto x}
            } :A_\mf{comp}
          }
        } {
          \bsq{\Gamma} \semi \bsq{\Pi}\R \vdash
          \lambda ().\elet{x}{\mf{ref}(\mf{false})}{
            \itransp{D}{D}{\bot}{\R}{\mf{flag} \mapsto x}
          }: \tone \rightarrow A_\mf{comp}
        }
      } {
        \bsq{\Gamma} \semi \bsq{\Pi}\R \vdash
        \mf{component}\ \lambda ().\elet{x}{\mf{ref}(\mf{false})}{
          \cdots
        }: \mf{Element}
      }
    \end{mathpar}
  \end{itemize}

  \pcase{TV-CompI}: We have
  \begin{gather*}
    \bsq{\compInst{p}{D}{V}}\R = \mf{compInst}(
    \lambda ().\stransp{D}{\bot}{\R}{\emptyset},
    \R(p),
    \bsq{V}\R
  )\\
  \bsq{\mf{ElemInst}}=\mf{ElemInst}
  \qquad
  \mf{ids}(\compInst{p}{D}{V}) = \mf{ids}(D) \cup \mf{ids}(V)
  \end{gather*}
  \begin{mathpar}
    \inferrule*[right=TV-CompI] {
      \inferrule* {(1)} {\Gamma \semi \Pi \semi \Phi_1 \vdash D: \Pi[p]}
      \and
      (\Phi_1 \Join \Phi_2)[p] = \mf{Succ}
      \and
      \inferrule* {(2)} {\Gamma \semi \Pi \semi \Phi_2 \vdash V: \mf{ElemInst}}
    } {
      \Gamma \semi \Pi \semi \Phi_1 \Join \Phi_2
      \vdash \compInst{p}{D}{V}: \mf{ElemInst}
    }
  \end{mathpar}
  \begin{itemize}
    \item By premise \emph{(ii)} and definition of $\mf{ids}$,
    we have (3) that $\Pi \semi \mf{ids}(D) \vDash \R$
    and (4) that\\$\Pi \semi \mf{ids}(V) \vDash \R$.
    \item By definition, we have (5) that
    $\emptyset \semi \Pi \vDash_p \emptyset: \Pi[p]$.
    \item By mutual inductive hypothesis (Theorem \ref{thm:trans-prezbs})
    on (1), (3) and (5), we get (6) that\\
    $\bsq{\Gamma} \semi \bsq{\Pi}\R \vdash
    \stransp{D}{\bot}{\R}{\emptyset}: \mf{Element}$.
    \item By inductive hypothesis on (2) and (4),
    we get (7) that
    $\bsq{\Gamma} \semi \bsq{\Pi}\R \vdash \bsq{V}\R: \mf{ElemInst}$.
    \item By premise \emph{(ii)}, since $p \in \dom(\Pi)$ (given by (1)),
    we have that $\R(p)$ is well-defined. Then by definition of $\bsq{\Pi}\R$,
    we get (8) that $\R(p): \mf{Bool} \in \bsq{\Pi}\R$.
    \item We derive the conclusion by way of subderivation (9) as follows:
    \begin{mathpar}
      \inferrule*[right=TV-Lam] {
        \inferrule*[right=(6)] { } {
          \bsq{\Gamma} \semi \bsq{\Pi}\R \vdash
          \stransp{D}{\bot}{\R}{\emptyset}:
          \mf{Element}
        }
      } {
        (9) \quad
        \bsq{\Gamma} \semi \bsq{\Pi}\R \vdash
        \lambda ().\stransp{D}{\bot}{\R}{\emptyset}:
        \tone \rightarrow \mf{Element}
      }
      \and
      \inferrule*[right=TV-CompI] {
        (9)
        \and
        \inferrule*[right=TV-Loc] {
          \inferrule*[right=(8)] { } {
            \R(p): \mf{Bool} \in \bsq{\Pi}\R
          }
        } {
          \bsq{\Gamma} \semi \bsq{\Pi}\R \vdash \R(p): \mf{Ref\ Bool}
        }
        \
        \inferrule*[Right=(7)] { } {
          \bsq{\Gamma} \semi \bsq{\Pi}\R \vdash \bsq{V}\R: \mf{ElemInst}
        }
      } {
        \bsq{\Gamma} \semi \bsq{\Pi}\R \vdash \mf{compInst}(
          \lambda ().\stransp{D}{\bot}{\R}{\emptyset},
          \R(p),
          \bsq{V}\R
        ): \mf{ElemInst}
      }
    \end{mathpar}
  \end{itemize}

\end{proof}

\begin{restatable}[Type preservation, computations]{theorem}{transprezc}
\label{thm:trans-prezc}
If
\begin{enumerate}[label=(\roman*)]
\item $\Gamma \semi \Pi \semi \Phi \vdash M: A \bang E$ and
\item $\Pi \semi \mf{ids}(M) \vDash \R$,
\end{enumerate}
then $\bsq{\Gamma} \semi \bsq{\Pi}\R \vdash \bsq{M}\R: \bsq{A}$.
\end{restatable}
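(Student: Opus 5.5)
We argue by mutual induction on the derivation of the computation typing judgment (i), simultaneously with Theorem~\ref{thm:trans-prezv} (values), Theorem~\ref{thm:trans-prezbs} (bodies at successor phase) and Theorem~\ref{thm:trans-prezbi} (bodies at initial phase). In every case we first split premise (ii) along the definition of $\mf{ids}$. Since $\mf{ids}$ of a compound term is the union over its subterms, we obtain $\Pi \semi \mf{ids}(\cdot) \vDash \R$ for each immediate subterm, and can then apply the appropriate inductive hypothesis.

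The $\lampure$-style cases are routine. For \textsc{TC-Val}, \textsc{TC-Let}, \textsc{TC-App}, \textsc{TC-If}, \textsc{TC-Fst}, \textsc{TC-Snd} and \textsc{TC-Fold}, and for the runtime constructs \textsc{TC-Init}, \textsc{TC-Click}, \textsc{TC-RunClick} and \textsc{TC-Update}, the translation is homomorphic. We apply the value or computation hypothesis to each premise and re-derive the result with the same-named $\lamref$ rule. This works because type translation erases effect annotations and commutes with every type former, so, for example, $\bsq{A \xrightarrow{E} B} = \bsq{A} \rightarrow \bsq{B}$ matches the unannotated rules. In \textsc{TC-Let} the extended context $\bsq{\Gamma, x:A} = \bsq{\Gamma}, x:\bsq{A}$ is handled pointwise. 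The case \textsc{TC-Sub} is immediate from the hypothesis, since effects play no role in the target typing. The phase contexts $\Phi$ are discarded throughout.

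The interesting cases are \textsc{TC-EvalI} and \textsc{TC-EvalS}.

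\textbf{Case \textsc{TC-EvalI}.} Here we have $\Gamma \semi \Pi \semi \Phi_1 \vdash D_0 : \Ell \uplus \Pi[p]$ and $\Gamma \semi \Pi \semi \Phi_2 \vdash D : \Ell$. We apply Theorem~\ref{thm:trans-prezbi} with $\Ell_0 = \Ell \uplus \Pi[p]$ and $\delta = \emptyset$, which gives
\[
\bsq{\Gamma} \semi \bsq{\Pi}\R \vdash \itransp{D}{D_0}{p}{\R}{\emptyset} : A_\mf{comp}.
\]
For this we need the side condition $\emptyset \semi \Pi \vDash_p \emptyset : (\Ell_0 \setminus \Ell)$, i.e.\ $\emptyset \semi \Pi \vDash_p \emptyset : \Pi[p]$. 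It holds because every $\ell : A \in \Pi[p]$ is, trivially, in $\Pi[p]$, and $p \in \dom(\Pi)$; the latter is implicit in the premise mentioning $\Pi[p]$. We then destructure the triple. The components $rr$ and $f$ have types $\tone \rightarrow \mf{Element}$ and $\mf{Ref\ Bool}$, and $x : \mf{Element}$ gives $\mf{init}\ x : \mf{ElemInst}$. Applying \textsc{TV-CompI} under \textsc{TC-Let} then yields type $\mf{ElemInst} = \bsq{\mf{ElemInst}}$.

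\textbf{Case \textsc{TC-EvalS}.} We apply Theorem~\ref{thm:trans-prezbs} twice with $\delta = \emptyset$. On $D_0 : \Pi[p]$ the sufficiency condition $\emptyset \semi \Pi \vDash_p \emptyset : \Pi[p]$ holds as above. On $D : \Ell$ we use the premise $\Ell \subseteq \Pi[p]$ to obtain $\emptyset \semi \Pi \vDash_p \emptyset : \Ell$. Both translations therefore have type $\mf{Element}$. The remaining ingredients are as follows. Thunking the $D_0$ translation gives the rerender function. The flag $\R(p)$ is typed by \textsc{TV-Loc}, using $p \in \dom(\Pi)$, the sufficiency of $\R$, and the definition of $\bsq{\Pi}\R$, exactly as in the \textsc{TV-CompI} case of Theorem~\ref{thm:trans-prezv}. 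The current body is handled by \textsc{TC-Init} after let-inserting the nested computation in the FGCBV elaboration. Assembling these with \textsc{TV-CompI} and \textsc{TC-Val} concludes this case.

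I expect the main obstacle to be the sufficiency bookkeeping for $\delta$ in these two cases. In particular, I am not certain that $p \in \dom(\Pi)$ is derivable directly from the typing premises. If it is not, I would first try to extract it from $\Pi[p]$ appearing in the premise. Failing that, I would use the convention that $\Pi[p]$ is only meaningful for $p \in \dom(\Pi)$, which the paper already relies on in the progress proof for \textsc{TC-EvalI}.
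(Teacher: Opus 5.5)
Your proposal is correct and follows the paper's proof essentially step for step. The paper also uses mutual induction with the value and two body theorems and treats the homomorphic cases as routine. It handles \textsc{TC-EvalI} by applying the initial-phase body theorem with $\Ell_0 = \Ell \uplus \Pi[p]$ and $\delta = \emptyset$, then destructuring the triple. It handles \textsc{TC-EvalS} by two applications of the successor-phase theorem (using $\Ell \subseteq \Pi[p]$) together with \textsc{TV-Loc} for $\R(p)$. The paper resolves your worry about $p \in \dom(\Pi)$ exactly as you propose, reading it off from the premises that mention $\Pi[p]$.
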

\begin{proof}
  By mutual induction on the derivation of the computation
  typing judgment \emph{(i)} with Theorems \ref{thm:trans-prezv},
  \ref{thm:trans-prezbs}, and \ref{thm:trans-prezbi}.

  \pcase{TC-Val}: We have \[
  \bsq{\mf{val}\ V}\R = \mf{val}\ \bsq{V}\R
  \qquad
  \mf{ids}(\mf{val}\ V) = \mf{ids}(V)
  \]
  \begin{mathpar}
    \inferrule*[right=TC-Val] {
      \inferrule* {(1)} {\Gamma \semi \Pi \semi \Phi \vdash V: A}
    } {
      \Gamma \semi \Pi \semi \Phi \vdash \mf{val}\ V: A \bang \mf{Pure}
    }
  \end{mathpar}
  \begin{itemize}
    \item By mutual inductive hypothesis (Theorem \ref{thm:trans-prezv})
    on (1) and premise \emph{(ii)}, we have (2) that
    $\bsq{\Gamma} \semi \bsq{\Pi}\R \vdash \bsq{V}\R: \bsq{A}$.
    \item We derive the conclusion as follows:
    \begin{mathpar}
      \inferrule*[right=TC-Val] {
        \inferrule*[right=(2)] { } {
          \bsq{\Gamma} \semi \bsq{\Pi}\R \vdash \bsq{V}\R: \bsq{A}
        }
      } {
        \bsq{\Gamma} \semi \bsq{\Pi}\R \vdash \mf{val}\ \bsq{V}\R: \bsq{A}
      }
    \end{mathpar}
  \end{itemize}

  \textbf{Cases} \textsc{TC-Fst}, \textsc{TC-Snd}, \textsc{TC-Fold},
  \textsc{TC-Init}, \textsc{TC-Click}, \textsc{TC-RunClick}, and
  \textsc{TC-Update} can all be shown analogously to the above case.

  \pcase{TC-Let}: We have \[
  \bsq{\elet{x}{M}{N}}\R = \elet{x}{\bsq{M}\R}{\bsq{N}\R}
  \qquad
  \mf{ids}(\elet{x}{M}{N}) = \mf{ids}(M) \cup \mf{ids}(N)
  \]
  \begin{mathpar}
    \inferrule*[right=TC-Let] {
      \inferrule* {(1)} {\Gamma \semi \Pi \semi \Phi_1 \vdash M: A \bang E}
      \and
      \inferrule* {(2)} {
        \Gamma, x: A \semi \Pi \semi \Phi_2 \vdash N: B \bang E
      }
    } {
      \Gamma \semi \Pi \semi \Phi_1 \Join \Phi_2 \vdash
      \elet{x}{M}{N}: B \bang E
    }
  \end{mathpar}
  \begin{itemize}
    \item By premise \emph{(ii)} and definition of $\mf{ids}$,
    we have (3) that $\Pi \semi \mf{ids}(M) \vDash \R$
    and (4) that\\$\Pi \semi \mf{ids}(N) \vDash \R$.
    \item By inductive hypothesis on each of pairs (1)/(3) and (2)/(4),
    we get (5) that\\
    $\bsq{\Gamma} \semi \bsq{\Pi}\R \vdash \bsq{M}\R: \bsq{A}$
    and (6) that
    $\bsq{\Gamma}, x: \bsq{A} \semi \bsq{\Pi}\R
    \vdash \bsq{N}\R: \bsq{B}$,
    respectively.
    \item We derive the conclusion as follows:
    \begin{mathpar}
      \inferrule*[right=TC-Let] {
        \inferrule*[right=(5)] { } {
          \bsq{\Gamma} \semi \bsq{\Pi}\R \vdash \bsq{M}\R: \bsq{A}
        }
        \and
        \inferrule*[right=(6)] { } {
          \bsq{\Gamma}, x: \bsq{A} \semi \bsq{\Pi}\R
          \vdash \bsq{N}\R: \bsq{B}
        }
      } {
        \bsq{\Gamma} \semi \bsq{\Pi}\R \vdash
        \elet{x}{\bsq{M}\R}{\bsq{N}\R}: \bsq{B}
      }
    \end{mathpar}
  \end{itemize}

  \pcase{TC-App}: We have \[
  \bsq{V\ W}\R = \bsq{V}\R\ \bsq{W}\R
  \qquad
  \mf{ids}(V\ W) = \mf{ids}(V) \cup \mf{ids}(W)
  \]
  \begin{mathpar}
    \inferrule*[right=TC-App] {
      \inferrule* {(1)} {
        \Gamma \semi \Pi \semi \Phi_1 \vdash V: A \xrightarrow{E} B
      }
      \and
      \inferrule* {(2)} {
        \Gamma \semi \Pi \semi \Phi_2 \vdash W: A
      }
    } {
      \Gamma \semi \Pi \semi \Phi_1 \Join \Phi_2 \vdash V\ W: B \bang E
    }
  \end{mathpar}
  \begin{itemize}
    \item By premise \emph{(ii)} and definition of $\mf{ids}$,
    we have (3) that $\Pi \semi \mf{ids}(V) \vDash \R$
    and (4) that\\$\Pi \semi \mf{ids}(W) \vDash \R$.
    \item By mutual inductive hypothesis (Theorem \ref{thm:trans-prezv})
    on each of pairs (1)/(3) and (2)/(4), we get (5) that
    $\bsq{\Gamma} \semi \bsq{\Pi}\R \vdash \bsq{V}\R:
    \bsq{A} \rightarrow \bsq{B}$
    and (6) that
    $\bsq{\Gamma} \semi \bsq{\Pi}\R \vdash \bsq{W}\R: \bsq{A}$,
    respectively.
    \item We derive the conclusion as follows:
    \begin{mathpar}
      \inferrule*[right=TC-App] {
        \inferrule*[right=(5)] { } {
          \bsq{\Gamma} \semi \bsq{\Pi}\R \vdash \bsq{V}\R:
           \bsq{A} \rightarrow \bsq{B}
        }
        \and
        \inferrule*[right=(6)] { } {
          \bsq{\Gamma} \semi \bsq{\Pi}\R \vdash \bsq{W}\R: \bsq{A}
        }
      } {
        \bsq{\Gamma} \semi \bsq{\Pi}\R \vdash \bsq{V}\R\ \bsq{W}\R: \bsq{B}
      }
    \end{mathpar}
  \end{itemize}

  \pcase{TC-If}: We have
  \begin{gather*}
  \bsq{\eif{V}{M_1}{M_2}}\R = \eif{\bsq{V}\R}{\bsq{M_1}\R}{\bsq{M_2}\R}
  \\
  \mf{ids}(\eif{V}{M_1}{M_2}) =
  \mf{ids}(V) \cup \mf{ids}(M_1) \cup \mf{ids}(M_2)
  \end{gather*}
  \begin{mathpar}
    \inferrule*[right=TC-If] {
      \inferrule* {(1)} {\Gamma \semi \Pi \semi \Phi_1 \vdash V: \mf{Bool}}
      \and
      \inferrule* {(2)} {\Gamma \semi \Pi \semi \Phi_2 \vdash M_1: A \bang E}
      \and
      \inferrule* {(3)} {\Gamma \semi \Pi \semi \Phi_2 \vdash M_2: A \bang E}
    } {
      \Gamma \semi \Pi \semi \Phi_1 \Join \Phi_2 \vdash
      \eif{V}{M_1}{M_2}: A \bang E
    }
  \end{mathpar}
  \begin{itemize}
    \item By premise \emph{(ii)} and definition of $\mf{ids}$,
    we have (4) that $\Pi \semi \mf{ids}(V) \vDash \R$,
    (5) that $\Pi \semi \mf{ids}(M_1) \vDash \R$,
    and (6) that $\Pi \semi \mf{ids}(M_2) \vDash \R$.
    \item By mutual inductive hypothesis (Theorem \ref{thm:trans-prezv})
    on (1) and (4), we get (7) that\\
    $\bsq{\Gamma} \semi \bsq{\Pi}\R \vdash \bsq{V}\R: \mf{Bool}$.
    \item By inductive hypothesis
    on each of pairs (2)/(5) and (3)/(6), we get (8) that\\
    $\bsq{\Gamma} \semi \bsq{\Pi}\R \vdash \bsq{M_1}\R: \bsq{A}$
    and (9) that
    $\bsq{\Gamma} \semi \bsq{\Pi}\R \vdash \bsq{M_2}\R: \bsq{A}$,
    respectively.
    \item We derive the conclusion as follows:
  \end{itemize}
  \begin{mathpar}
    \inferrule*[right=TC-If] {
      \inferrule*[right=(7)] { } {
        \bsq{\Gamma} \semi \bsq{\Pi}\R \vdash \bsq{V}\R: \mf{Bool}
      }
      \
      \inferrule*[right=(8)\ \&\ (9)] { } {
        \bsq{\Gamma} \semi \bsq{\Pi}\R \vdash \bsq{M_i}\R: \bsq{A}
        \quad (i \in \{1, 2\})
      }
    } {
      \bsq{\Gamma} \semi \bsq{\Pi}\R \vdash
      \eif{\bsq{V}\R}{\bsq{M_1}\R}{\bsq{M_2}\R}: \bsq{A}
    }
  \end{mathpar}

  \pcase{TC-Sub}: We have
  \begin{mathpar}
  \inferrule*[right=TC-Sub] {
    \inferrule* {(1)} {\Gamma \semi \Pi \semi \Phi \vdash M: A \bang \mf{Pure}}
  } {
    \Gamma \semi \Pi \semi \Phi \vdash M: A \bang \mf{Mut}
  }
  \end{mathpar}
  \begin{itemize}
    \item The result follows directly from the inductive hypothesis
    applied to (1) and premise \emph{(ii)}.
  \end{itemize}

  \pcase{TC-EvalI}: We have
  \[\bsq{\eval{D_0}{\mf{Init}}{p}{D}}\R =
  \elet{rr, f, x}{\itransp{D}{D_0}{p}{\R}{\emptyset}}{
    \mf{compInst}(rr, f, \mf{init}\ x)
  }\] whose right-hand side desugars to
  \begin{align*}
    &\elet{x_0}{\itransp{D}{D_0}{p}{\R}{\emptyset}}{
      \elet{rr}{\mf{fst}\ x_0}{\elet{x_1}{\mf{snd}\ x_0}{\\
      &\elet{f}{\mf{fst}\ x_1}{\elet{x}{\mf{snd}\ x_1}{
        \elet{x_2}{\mf{init}\ x}{\mf{val\ compInst}(rr, f, x_2)}
      }}
    }}}
  \end{align*}
  where all $x_i$ are fresh (as are $rr$, $f$, and $x$). We also have
  $\mf{ids}(\eval{D_0}{\mf{Init}}{p}{D}) =
  \mf{ids}(D_0) \cup \mf{ids}(D)$ and
  \begin{mathpar}
    \inferrule*[right=TC-EvalI] {
      \inferrule* {(1)} {
        \Gamma \semi \Pi \semi \Phi_1 \vdash D_0: \Ell \uplus \Pi[p]
      }
      \and
      \inferrule* {(2)} {\Gamma \semi \Pi \semi \Phi_2 \vdash D: \Ell}
      \and
      \mf{Succ}(\Phi)
    } {
      \Gamma \semi \Pi \semi \Phi_1 \Join \Phi_2, p: \mf{Init} \vdash
      \eval{D_0}{\mf{Init}}{p}{D}: \mf{ElemInst} \bang \mf{Mut}
    }
  \end{mathpar}
  \begin{itemize}
    \item Call $\Ell_0 = \Ell \uplus \Pi[p]$.
    By definition of $\uplus$, we get (3) that $\Ell_0 \setminus \Ell = \Pi[p]$
    and (4) $\Ell \subseteq \Ell_0$.
    \item By (3) and definition, we get
    (5) $\emptyset \semi \Pi \vDash_p \emptyset: \Ell_0 \setminus \Ell$.
    \item By mutual inductive hypothesis (Theorem \ref{thm:trans-prezbi})
    on (1), (2), (4), (5), and premise \emph{(ii)}, we get (6) that
    $\bsq{\Gamma} \semi \bsq{\Pi}\R \vdash
    \itransp{D}{D_0}{p}{\R}{\emptyset}: A_\mf{comp}$.
    \item We derive the conclusion by way of subderivations (7)-(12) as follows,
    recalling that\\$A_\mf{comp} \triangleq (\tone \rightarrow \mf{Element})
    \times \mf{Ref\ Bool} \times \mf{Element}$,
    and additionally abbreviating\\
    $A'_\mf{comp} = \mf{Ref\ Bool} \times \mf{Element}$:
  \end{itemize}
  \begin{mathpar}
    \inferrule*[right=TC-Val] {
      \inferrule*[Right=TV-CompI] {
        \inferrule* { } {
          \Gamma_7 \semi \bsq{\Pi}\R \vdash rr: \tone \rightarrow \mf{Element}
        }
        \
        \inferrule* { } {
          \Gamma_7 \semi \bsq{\Pi}\R \vdash f: \mf{Ref\ Bool}
        }
        \
        \inferrule*[Right=TV-Var (x3)] { } {
          \Gamma_7 \semi \bsq{\Pi}\R \vdash x_2: \mf{ElemInst}
        }
      } {
        \Gamma_7 \semi \bsq{\Pi}\R \vdash
        \mf{compInst}(rr, f, x_2): \mf{ElemInst}
      }
    } {
      (7) \quad
      \Gamma_7 \semi \bsq{\Pi}\R \vdash
      \mf{val\ compInst}(rr, f, x_2): \mf{ElemInst}
    }
    \and
    \inferrule*[right=TC-Let] {
      \inferrule*[right=TC-Init] {
        \inferrule*[right=TV-Var] { } {
          \Gamma_8 \semi \bsq{\Pi}\R \vdash x: \mf{Element}
        }
      } {
        \Gamma_8 \semi \bsq{\Pi}\R \vdash \mf{init}\ x: \mf{ElemInst}
      }
      \and [(7),\ \Gamma_7=\Gamma_8,x_2: \mf{ElemInst}]
    } {
      (8) \quad
      \Gamma_8 \semi \bsq{\Pi}\R \vdash \elet{x_2}{\mf{init}\ x}{\cdots}:
      \mf{ElemInst}
    }
    \and
    \inferrule*[right=TC-Let] {
      \inferrule*[right=TC-Snd] {
        \inferrule*[right=TV-Var] { } {
          \Gamma_9 \semi \bsq{\Pi}\R \vdash x_1: A'_\mf{comp}
        }
      } {
        \Gamma_9 \semi \bsq{\Pi}\R \vdash \mf{snd}\ x_1: \mf{Element}
      }
      \and [(8),\ \Gamma_8=\Gamma_9,x: \mf{Element}]
    } {
      (9) \quad
      \Gamma_9 \semi \bsq{\Pi}\R \vdash \elet{x}{\mf{snd}\ x_1}{\cdots}:
      \mf{ElemInst}
    }
  \end{mathpar}
  \begin{mathpar}
    \inferrule*[right=TC-Let] {
      \inferrule*[right=TC-Fst] {
        \inferrule*[right=TV-Var] { } {
          \Gamma_{10} \semi \bsq{\Pi}\R \vdash x_1: A'_\mf{comp}
        }
      } {
        \Gamma_{10} \semi \bsq{\Pi}\R \vdash \mf{fst}\ x_1: \mf{Ref\ Bool}
      }
      \and [(9),\ \Gamma_9=\Gamma_{10},f: \mf{Ref\ Bool}]
    } {
      (10) \quad
      \Gamma_{10} \semi \bsq{\Pi}\R \vdash \elet{f}{\mf{fst}\ x_1}{\cdots}:
      \mf{ElemInst}
    }
    \and
    \inferrule*[right=TC-Let] {
      \inferrule*[right=TC-Snd] {
        \inferrule*[right=TV-Var] { } {
          \Gamma_{11} \semi \bsq{\Pi}\R \vdash x_0: A_\mf{comp}
        }
      } {
        \Gamma_{11} \semi \bsq{\Pi}\R \vdash \mf{snd}\ x_0: A'_\mf{comp}
      }
      \and
      [(10),\ \Gamma_{10}=\Gamma_{11}, x_1: A'_\mf{comp}]
    } {
      (11) \quad
      \Gamma_{11} \semi \bsq{\Pi}\R \vdash \elet{x_1}{\mf{snd}\ x_0}{\cdots}:
      \mf{ElemInst}
    }
    \and
    \inferrule*[right=TC-Let] {
      \inferrule*[right=TC-Fst] {
        \inferrule*[right=TV-Var] { } {
          \Gamma_{12} \semi \bsq{\Pi}\R \vdash x_0: A_\mf{comp}
        }
      } {
        \Gamma_{12} \semi \bsq{\Pi}\R \vdash \mf{fst}\ x_0:
        \tone \rightarrow \mf{Element}
      }
      \and [(11),\ \Gamma_{12}=\Gamma_{12},rr: \tone \rightarrow \mf{Element}]
    } {
      (12) \quad
      \Gamma_{12} \semi \bsq{\Pi}\R \vdash \elet{rr}{\mf{fst}\ x_0}{\cdots}:
      \mf{ElemInst}
    }
    \and
    \inferrule*[right=TC-Let] {
      \inferrule*[right=(6)] { } {
        \bsq{\Gamma} \semi \bsq{\Pi}\R \vdash
        \itransp{D}{D_0}{p}{\R}{\emptyset}:
        A_\mf{comp}
      }
      \and [(12),\ \Gamma_{12}=\bsq{\Gamma}, x_0: A_\mf{comp}]
    } {
      \bsq{\Gamma} \semi \bsq{\Pi}\R \vdash
      \elet{x_0}{
        \itransp{D}{D_0}{p}{\R}{\emptyset}
      }{\cdots}: \mf{ElemInst}
    }
  \end{mathpar}

  \pcase{TC-EvalS}: We have
  \[\bsq{\eval{D_0}{\mf{Succ}}{p}{D}}\R =
  \mf{compInst}(
    \lambda ().\stransp{D_0}{p}{\R}{\emptyset},
    \R(p),
    \mf{init}\ \stransp{D}{p}{\R}{\emptyset}
  )\] whose right-hand side desugars to
  \begin{align*}
    &\elet{x_0}{\stransp{D}{p}{\R}{\emptyset}}{
      \elet{x_1}{\mf{init}\ x_0}{\\
        &\mf{val\ compInst}(
          \lambda ().\stransp{D_0}{p}{\R}{\emptyset},
          \R(p),
          x_1
        )
    }}
  \end{align*}
  where all $x_i$ are fresh. We also have
  $\mf{ids}(\eval{D_0}{\mf{Succ}}{p}{D}) =
  \mf{ids}(D_0) \cup \mf{ids}(D)$ and
  \begin{mathpar}
    \inferrule*[right=TC-EvalS] {
      \inferrule* {(1)} {\Gamma \semi \Pi \semi \Phi_1 \vdash D_0: \Pi[p]}
      \
      \inferrule* {(2)} {\Gamma \semi \Pi \semi \Phi_2 \vdash D: \Ell}
      \
      \inferrule* {(3)} {\Ell \subseteq \Pi[p]}
      \
      (\Phi_1 \Join \Phi_2)[p] = \mf{Succ}
    } {
      \Gamma \semi \Pi \semi \Phi_1 \Join \Phi_2 \vdash
      \eval{D_0}{\mf{Succ}}{p}{D}: \mf{ElemInst} \bang \mf{Mut}
    }
  \end{mathpar}
  \begin{itemize}
    \item By premise \emph{(ii)} and definition of $\mf{ids}$,
    we have (4) that $\Pi \semi \mf{ids}(D_0) \vDash \R$
    and (5) that\\$\Pi \semi \mf{ids}(D) \vDash \R$.
    \item By definition, we have (6) that
    $\emptyset \semi \Pi \vDash_p \emptyset: \Pi[p]$.
    \item By (3), we also have (7) that
    $\emptyset \semi \Pi \vDash_p \emptyset: \Ell$.
    \item By mutual inductive hypothesis (Theorem \ref{thm:trans-prezbs})
    on (1), (4) and (6), we get (8) that\\
    $\bsq{\Gamma} \semi \bsq{\Pi}\R \vdash
    \stransp{D_0}{p}{\R}{\emptyset}: \mf{Element}$.
    \item By mutual inductive hypothesis (Theorem \ref{thm:trans-prezbs})
    on (2), (5), and (7), we get (9) that\\
    $\bsq{\Gamma} \semi \bsq{\Pi}\R \vdash
    \stransp{D}{p}{\R}{\emptyset}: \mf{Element}$.
    \item By premise \emph{(ii)}, since $p \in \dom(\Pi)$
    (given e.g. by (3)),
    we have that $\R(p)$ is well-defined. Then by definition of $\bsq{\Pi}\R$,
    we get (10) that $\R(p): \mf{Bool} \in \bsq{\Pi}\R$.
    \item We derive the conclusion by way of subderivations
    (11)-(13) as follows, weakening (6) on $\Gamma$ appropriately:
  \end{itemize}
  \begin{mathpar}
    \inferrule*[right=TV-Lam] {
      \inferrule*[right=(8)] { } {
        \Gamma_{11} \semi \bsq{\Pi}\R \vdash
        \stransp{D_0}{p}{\R}{\emptyset}: \mf{Element}
      }
    } {
      (11) \quad
      \Gamma_{11} \semi \bsq{\Pi}\R \vdash
      \lambda ().\stransp{D_0}{p}{\R}{\emptyset}:
      \tone \rightarrow \mf{Element}
    }
    \and
    \inferrule*[right=TC-Val] {
      \inferrule*[Right=TV-CompI] {
        [(11),\ \Gamma_{11} = \Gamma_{12}]
        \quad
        \inferrule*[right=TV-Loc] {
          \inferrule*[right=(10)] { } {
            \R(p): \mf{Bool} \in \bsq{\Pi}\R
          }
        } {
          \Gamma_{12} \semi \bsq{\Pi}\R \vdash \R(p): \mf{Ref\ Bool}
        }
        \
        \inferrule*[Right=TV-Var] { } {
          \Gamma_{12} \semi \bsq{\Pi}\R \vdash x_1: \mf{ElemInst}
        }
      } {
        \Gamma_{12} \semi \bsq{\Pi}\R \vdash
        \mf{compInst}(\lambda ().\stransp{D_0}{p}{\R}{\emptyset},
            \R(p),
            x_1
        ): \mf{ElemInst}
      }
    } {
      (12) \quad
      \Gamma_{12} \semi \bsq{\Pi}\R \vdash \mf{val\ compInst}(
        \lambda ().\stransp{D_0}{p}{\R}{\emptyset}, \R(p), x_1
      ): \mf{ElemInst}
    }
    \and
    \inferrule*[right=TC-Let] {
      \inferrule*[right=TC-Init] {
        \inferrule*[right=TV-Var] { } {
          \Gamma_{13} \semi \bsq{\Pi}\R \vdash x_0: \mf{Element}
        }
      } {
        \Gamma_{13} \semi \bsq{\Pi}\R \vdash \mf{init}\ x_0: \mf{ElemInst}
      }
      \and
      [(12),\ \Gamma_{12} = \Gamma_{13}, x_1: \mf{ElemInst}]
    } {
      (13) \quad
      \Gamma_{13} \semi \bsq{\Pi}\R \vdash
      \elet{x_1}{\mf{init}\ x_0}{\cdots}: \mf{ElemInst}
    }
    \and
    \inferrule*[right=TC-Let] {
      \inferrule*[right=(7)] { } {
        \bsq{\Gamma} \semi \bsq{\Pi}\R \vdash
        \stransp{D}{p}{\R}{\emptyset}: \mf{Element}
      }
      \and
      [(13),\ \Gamma_{13} = \bsq{\Gamma}, x_0: \mf{Element}]
    } {
      \bsq{\Gamma} \semi \bsq{\Pi}\R \vdash
      \elet{x_0}{\stransp{D}{p}{\R}{\emptyset}}{\cdots}:
      \mf{ElemInst}
    }
  \end{mathpar}
\end{proof}

\begin{restatable}[Type preservation, bodies, successor]{theorem}{transprezbs}
\label{thm:trans-prezbs}
If
\begin{enumerate}[label=(\roman*)]
\item $\Gamma \semi \Pi \semi \Phi \vdash D: \Ell$,
\item $\Pi \semi \mf{ids}(D) \vDash \R$, and
\item $\Gamma' \semi \Pi \vDash_p \delta: \Ell$,
\end{enumerate}
then $\bsq{\Gamma}, \Gamma' \semi \bsq{\Pi}\R
  \vdash \stransp{D}{p}{\R}{\delta}: \mf{Element}$.
\end{restatable}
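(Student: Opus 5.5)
We proceed by induction on the body typing derivation \emph{(i)}, mutually with Theorems \ref{thm:trans-prezv}, \ref{thm:trans-prezc} and \ref{thm:trans-prezbi}, exactly as in the value and computation cases. The only genuinely new ingredients are the $\delta$-lookups in the $\mf{useState}$ case. These are discharged by Lemma \ref{thm:trans-tplab} (label typing) and Lemma \ref{thm:trans-tpflag} (flag typing), applied under the extended context $\bsq{\Gamma}, \Gamma'$. Throughout we use a standard weakening of target typing on $\Gamma$ so that $\Gamma'$ can be threaded through subderivations.

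\pcase{TB-Val}: Here $\stransp{\mf{val}\ V}{p}{\R}{\delta} = \mf{val}\ \bsq{V}\R$ and $\Ell=\emptyset$. Theorem \ref{thm:trans-prezv} on the premise $\Gamma \semi \Pi \semi \Phi \vdash V : \mf{Element}$ and on $\Pi \semi \mf{ids}(V) \vDash \R$ gives $\bsq{\Gamma} \semi \bsq{\Pi}\R \vdash \bsq{V}\R : \mf{Element}$. We weaken by $\Gamma'$ and apply \textsc{TC-Val}.

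\pcase{TB-Let}: We split $\mf{ids}$ to get sufficiency of $\R$ for $M$ and for $D$. Theorem \ref{thm:trans-prezc} types $\bsq{M}\R$ at $\bsq{A}$; we weaken it by $\Gamma'$. For the continuation we use the inductive hypothesis on $\Gamma, x{:}A \vdash D : \Ell$ with the same $\delta$. The sufficiency $\Gamma' \semi \Pi \vDash_p \delta : \Ell$ is unchanged, but the target context becomes $\bsq{\Gamma}, x{:}\bsq{A}, \Gamma'$. By Barendregt's convention $x$ does not clash with the variables of $\Gamma'$ (which are $\delta$-generated), so this context can be reordered to $\bsq{\Gamma, x{:}A}, \Gamma'$. \textsc{TC-Let} then closes the case.

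\pcase{TB-State}: The body is $\estate{x}{\ell}{V}{D}$ with type $\Ell, \ell{:}A$. Premise \emph{(iii)} at $(\Ell, \ell{:}A)$ and Lemma \ref{thm:trans-tplab} give $\bsq{\Gamma},\Gamma' \semi \bsq{\Pi}\R \vdash \delta(p,\ell,\R) : \mf{Ref}\ \bsq{A}$. Applying \textsc{TC-Get} types $\bang \delta(p,\ell,\R)$ at $\bsq{A}$. Lemma \ref{thm:trans-tpflag} types $\delta(p,\R)$ at $\mf{Ref\ Bool}$. With \textsc{TC-Set} twice, sequencing via \textsc{TC-Let}, and \textsc{TV-Lam}, the setter thunk gets type $\bsq{A} \rightarrow \tone = \bsq{A \xrightarrow{\mf{Mut}} \tone}$. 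For the continuation, premise \emph{(iii)} restricts to $\Gamma' \semi \Pi \vDash_p \delta : \Ell$, since sufficiency is pointwise and the flag clause is unaffected. The inductive hypothesis on $\Gamma, x{:}A, x_\mf{set}{:}A \xrightarrow{\mf{Mut}} \tone \vdash D : \Ell$ then types $\stransp{D}{p}{\R}{\delta}$ at $\mf{Element}$. Two uses of \textsc{TC-Let} finish the case. The unused initial value $V$ needs no translation here, because the successor translation discards it.

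The main obstacle is bookkeeping rather than mathematics. First, the lemmas \ref{thm:trans-tplab} and \ref{thm:trans-tpflag} are stated with premise $\Pi \semi \emptyset \vDash \R$, which must be derived from $\Pi \semi \mf{ids}(D) \vDash \R$ by monotonicity of sufficiency in the identifier set. Second, the lemmas must be applied with context $\bsq{\Gamma},\Gamma'$ rather than $\Gamma'$ alone, which requires weakening them. Third, when $p=\bot$ (the component case), $\delta$ must supply all labels and the flag. Sufficiency then forces every $\ell \in \Ell$ to be in $\dom(\delta)$, so the case (b) branches of the lemmas, which would invoke $\R(\bot,\ell)$, never arise.
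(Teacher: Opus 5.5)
Your proposal is correct and follows the paper's proof essentially step for step. Both argue by induction on the body typing derivation: value and computation preservation plus weakening handle the \textsc{TB-Val} and \textsc{TB-Let} cases. In the \textsc{TB-State} case, both use the label- and flag-typing lemmas (with $\Pi \semi \emptyset \vDash \R$ derived from premise (ii)) and restrict $\delta$-sufficiency from $\Ell, \ell{:}A$ to $\Ell$. Your explicit treatment of reordering $\Gamma'$ past the newly bound variables, and of the $p=\bot$ case, is if anything slightly more careful than the paper's write-up.
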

\begin{proof}
  By mutual induction on the derivation of the body typing judgment \emph{(i)}
  with Theorems \ref{thm:trans-prezv}, \ref{thm:trans-prezc},
  and \ref{thm:trans-prezbi}.

  \pcase{TB-Val}: We have
  $\stransp{\mf{val}\ V}{p}{\R}{\delta} = \mf{val}\ \bsq{V}\R$ and
  \begin{mathpar}
    \inferrule*[right=TB-Val] {
      \inferrule* {(1)} {
        \Gamma \semi \Pi \semi \Phi \vdash V: \mf{Element}
      }
    } {
      \Gamma \semi \Pi \semi \Phi \vdash V: \emptyset
    }
  \end{mathpar}
  \begin{itemize}
    \item By mutual inductive hypothesis (Theorem \ref{thm:trans-prezv})
    on (1) and premise \emph{(ii)}, we have (2) that
    $\bsq{\Gamma} \semi \bsq{\Pi}\R \vdash \bsq{V}\R: \mf{Element}$.
    \item Weakening (2) on $\Gamma$ gives us (3) that
    $\bsq{\Gamma}, \Gamma' \semi \bsq{\Pi}\R \vdash \bsq{V}\R: \mf{Element}$.
    \item We derive the conclusion as follows:
    \begin{mathpar}
      \inferrule*[right=TC-Val] {
        \inferrule*[right=(3)] { } {
          \bsq{\Gamma}, \Gamma' \semi \bsq{\Pi}\R
          \vdash \bsq{V}\R: \mf{Element}
        }
      } {
        \bsq{\Gamma}, \Gamma' \semi \bsq{\Pi}\R \vdash \mf{val}\ \bsq{V}\R:
        \mf{Element}
      }
    \end{mathpar}
  \end{itemize}

  \pcase{TB-Let}: We have
  \begin{gather*}
  \stransp{\elet{x}{M}{D}}{p}{\R}{\delta} =
  \elet{x}{\bsq{M}\R}{\stransp{D}{p}{\R}{\delta}}
  \\
  \mf{ids}(\elet{x}{M}{D}) = \mf{ids}(M) \cup \mf{ids}(D)
  \end{gather*}
  \begin{mathpar}
    \inferrule*[right=TB-Let] {
      \inferrule* {(1)} {
        \Gamma \semi \Pi \semi \Phi_1 \vdash M: A \bang \mf{Pure}
      }
      \and
      \inferrule* {(2)} {\Gamma, x: A \semi \Pi \semi \Phi_2 \vdash D: \Ell}
    } {
      \Gamma \semi \Pi \semi \Phi_1 \Join \Phi_2 \vdash \elet{x}{M}{D}: \Ell
    }
  \end{mathpar}
  \begin{itemize}
    \item By premise \emph{(ii)} and definition of $\mf{ids}$,
    we have (3) that $\Pi \semi \mf{ids}(M) \vDash \R$
    and (4) that\\$\Pi \semi \mf{ids}(D) \vDash \R$.
    \item By mutual inductive hypothesis (Theorem \ref{thm:trans-prezc})
    on (1) and (3), we get (5) that\\
    $\bsq{\Gamma} \semi \bsq{\Pi}\R \vdash \bsq{M}\R: \bsq{A}$.
    \item By inductive hypothesis on (2), (4), and premise \emph{(iii)},
    we get (6) that\\
    $\bsq{\Gamma}, x: \bsq{A} \semi \bsq{\Pi}\R \vdash
    \stransp{D}{p}{\R}{\delta}: \mf{Element}$.
    \item We derive the conclusion as follows:
  \end{itemize}
  \begin{mathpar}
    \inferrule*[right=TC-Let] {
      \inferrule*[right=(5)] { } {
        \bsq{\Gamma} \semi \bsq{\Pi}\R \vdash \bsq{M}\R: \bsq{A}
      }
      \
      \inferrule*[right=(6)] { } {
        \bsq{\Gamma}, x: \bsq{A} \semi \bsq{\Pi}\R \vdash
        \stransp{D}{p}{\R}{\delta}: \mf{Element}
      }
    } {
      \bsq{\Gamma} \semi \bsq{\Pi}\R \vdash
      \elet{x}{\bsq{M}\R}{\stransp{D}{p}{\R}{\delta}}:
      \mf{Element}
    }
  \end{mathpar}

  \pcase{TB-State}: We have
  \begin{gather*}
    \begin{aligned}
      \stransp{\estate{x}{\ell}{V}{D}}{p}{\R}{\delta}\
      &= \elet{x}{\bang \delta(p, \ell, \R)}{\\
        &\quad\ \elet{x_\mf{set}}{\mf{val}\ \lambda y.
          (\delta(p, \ell, \R) := y \semi \delta(p, \R) := \mf{true})
        }{\\
        &\quad\ \stransp{D}{p}{\R}{\delta}
        }
      }
    \end{aligned}\\
    \mf{ids}(\estate{x}{\ell}{V}{D}) = \mf{ids}(V) \cup \mf{ids}(D)
  \end{gather*}
  \begin{mathpar}
    \inferrule*[right=TB-State] {
      \Gamma \semi \Pi \semi \Phi_1 \vdash V: A
      \and
      \inferrule* {(1)} {
        \Gamma, x: A, x_\mf{set}: A \xrightarrow{\mf{Mut}} \tone \semi \Pi
        \semi \Phi_2 \vdash D: \Ell
      }
      \and
      \ell \notin \dom(\Ell)
    } {
      \Gamma \semi \Pi \semi \Phi_1 \Join \Phi_2
      \vdash \estate{x}{\ell}{V}{D}: \Ell, \ell: A
    }
  \end{mathpar}
  \begin{itemize}
    \item By premise \emph{(ii)} and definition of $\mf{ids}$,
    we have (2) that $\Pi \semi \mf{ids}(D) \vDash \R$.
    \item From premise \emph{(ii)} it also follows that (3)
    $\Pi \semi \emptyset \vDash \R$.
    \item Since we have $\Gamma' \semi \Pi \vDash_p \delta: \Ell, \ell: A$
    from premise \emph{(iii)}, by definition we also have (4) that
    $\Gamma' \semi \Pi \vDash_p \delta: \Ell$.
    \item By inductive hypothesis on (1), (2), and (4),
    we get (5) that\\
    $\bsq{\Gamma, x: A, x_\mf{set}: A \xrightarrow{\mf{Mut}} \tone}, \Gamma'
    \semi \bsq{\Pi}\R
    \vdash \stransp{D}{p}{\R}{\delta}: \mf{Element}$.
    \item By definition, (5) is equivalent to (6) that\\
    $\bsq{\Gamma}, \Gamma', x: \bsq{A}, x_\mf{set}: \bsq{A} \rightarrow \tone
    \semi \bsq{\Pi}\R
    \vdash \stransp{D}{p}{\R}{\delta}: \mf{Element}$.
    \item By Lemma \ref{thm:trans-tplab} (label typing) on (3)
    and premise \emph{(iii)}, we get (7) that\\
    $\Gamma' \semi \bsq{\Pi}\R \vdash \delta(p, \ell, \R): \mf{Ref}\ \bsq{A}$.
    \item By \ref{thm:trans-tpflag} (flag typing) on (3)
    and premise \emph{(iii)}, we get (8) that
    $\Gamma' \semi \bsq{\Pi}\R \vdash \delta(p, \R): \mf{Ref\ Bool}$.
    \item We derive the conclusion by way of subderivations (9)-(12)
    as follows, weakening our previous subconclusions on $\Gamma$
    as appropriate:
    \begin{mathpar}
      \inferrule*[right=TC-Set] {
        \inferrule*[right=(7)] { } {
          \Gamma_9 \semi \bsq{\Pi}\R \vdash \delta(p, \ell, \R):
          \mf{Ref}\ \bsq{A}
        }
        \and
        \inferrule*[right=TV-Var] { } {
          \Gamma_9 \semi \bsq{\Pi}\R \vdash y: \bsq{A}
        }
      } {
        (9) \quad
        \Gamma_9 \semi \bsq{\Pi}\R \vdash \delta(p, \ell, \R) := y: \tone
      }
      \and
      \inferrule*[right=TC-Set] {
        \inferrule*[right=(8)] { } {
          \Gamma_{10} \semi \bsq{\Pi}\R \vdash \delta(p, \R): \mf{Ref\ Bool}
        }
        \and
        \inferrule*[right=TV-True] { } {
          \Gamma_{10} \semi \bsq{\Pi}\R \vdash \mf{true}: \mf{Bool}
        }
      } {
        (10) \quad
        \Gamma_{10} \semi \bsq{\Pi}\R \vdash \delta(p, \R) := \mf{true}: \tone
      }
      \and
      \inferrule*[right=TC-Val] {
        \inferrule*[right=TV-Lam] {
          \inferrule*[right=TC-Let] {
            [(9),\ \Gamma_9 = \Gamma_{11}, y: \bsq{A}]
            \and
            [(10),\ \Gamma_{10} = \Gamma_{11}, y: \bsq{A}]
          } {
            \Gamma_{11}, y: \bsq{A} \semi \bsq{\Pi}\R \vdash
            \delta(p, \ell, \R) := y \semi \delta(p, \R) := \mf{true}: \tone
          }
        } {
          \Gamma_{11} \semi \bsq{\Pi}\R \vdash
          \lambda y. (\delta(p, \ell, \R) := y
          \semi \delta(p, \R) := \mf{true}):
          \bsq{A} \rightarrow \tone
        }
      } {
        (11) \quad
        \Gamma_{11} \semi \bsq{\Pi}\R \vdash
        \mf{val}\ \lambda y.
        (\delta(p, \ell, \R) := y \semi \delta(p, \R) := \mf{true}):
        \bsq{A} \rightarrow \tone
      }
      \and
      \inferrule*[right=TC-Let] {
        [(11),\ \Gamma_{11} = \Gamma_{12}]
        \and
        \inferrule*[right=(6)] { } {
          \Gamma_{12}, x_\mf{set}: \bsq{A} \rightarrow \tone \semi \bsq{\Pi}\R
          \vdash \stransp{D}{p}{\R}{\delta}: \mf{Element}
        }
      } {
        (12) \quad
        \Gamma_{12} \semi \bsq{\Pi}\R \vdash
        \elet{x_\mf{set}}{\cdots}{\stransp{D}{p}{\R}{\delta}}:
        \mf{Element}
      }
      \and
      \inferrule*[right=TC-Let] {
        \inferrule*[right=TC-Get] {
          \inferrule*[right=(7)] { } {
            \bsq{\Gamma}, \Gamma' \semi \bsq{\Pi}\R \vdash
            \delta(p, \ell, \R): \mf{Ref}\ \bsq{A}
          }
        } {
          \bsq{\Gamma}, \Gamma' \semi \bsq{\Pi}\R \vdash
          \bang \delta(p, \ell, \R): \bsq{A}
        }
        \
        [(12), \Gamma_{12} = \bsq{\Gamma}, \Gamma', x: \bsq{A}]
      } {
        \bsq{\Gamma}, \Gamma' \semi \bsq{\Pi}\R \vdash
        \elet{x}{\bang \delta(p, \ell, \R)}{\cdots}: \mf{Element}
      }
    \end{mathpar}
  \end{itemize}
\end{proof}

\transprezbi*

\begin{proof}
  By mutual induction on the derivation of the body typing judgment \emph{(i)}
  with Theorems \ref{thm:trans-prezv}, \ref{thm:trans-prezc},
  and \ref{thm:trans-prezbs}.
  (Recall that $A_\mf{comp} \triangleq (\tone \rightarrow \mf{Element})
  \times \mf{Ref\ Bool} \times \mf{Element}$.)

  \pcase{TB-Val}: We have
  \[
  \itransp{\mf{val}\ V}{D_0}{p}{\R}{\delta} =
  \mf{val\ }(
    \lambda ().\stransp{D_0}{p}{\R}{\delta},
    \delta(p, \R), \bsq{V}\R)
  \]
  \begin{mathpar}
    \inferrule*[right=TB-Val] {
      \inferrule* {(1)} {
        \Gamma \semi \Pi \semi \Phi_1 \vdash V: \mf{Element}
      }
    } {
      \Gamma \semi \Pi \semi \Phi_1 \vdash \mf{val}\ V: \emptyset
    }
  \end{mathpar}
  \begin{itemize}
    \item By premise \emph{(iii)} and definition of $\mf{ids}$,
    we have (2) that $\Pi \semi \mf{ids}(V) \vDash \R$,
    (3) that $\Pi \semi \mf{ids}(D_0) \vDash \R$,
    and (4) that $\Pi \semi \emptyset \vDash \R$.
    \item By mutual inductive hypothesis (Theorem \ref{thm:trans-prezv})
    on (1) and (2), we have (5) that\\
    $\bsq{\Gamma} \semi \bsq{\Pi}\R \vdash \bsq{V}\R: \mf{Element}$.
    \item Since we have $\Ell=\emptyset$, premise \emph{(iv)}
    reduces to (6) that $\Gamma' \semi \Pi \vDash_p \delta: \Ell_0$.
    \item By mutual inductive hypothesis (Theorem \ref{thm:trans-prezbs})
    on (3), (6), and premise \emph{(ii)} we have (7) that
    $\bsq{\Gamma}, \Gamma' \semi \bsq{\Pi}\R \vdash
    \stransp{D_0}{p}{\R}{\delta}: \mf{Element}$.
    \item By Lemma \ref{thm:trans-tpflag} (flag typing)
    on (4) and (6), we get (8) that
    $\Gamma' \semi \bsq{\Pi}\R \vdash \delta(p, \R): \mf{Ref\ Bool}$.
    \item We derive the conclusion by way of subderivations (9) and (10)
    as follows, where we weaken our previous subconclusions on $\Gamma$
    as appropriate:
  \end{itemize}
  \begin{mathpar}
    \inferrule*[right=TV-Lam] {
      \inferrule*[right=(4)] { } {
        \bsq{\Gamma}, \Gamma' \semi \bsq{\Pi}\R \vdash
        \stransp{D_0}{p}{\R}{\delta}: \mf{Element}
      }
    } {
      (9) \quad
      \bsq{\Gamma}, \Gamma' \semi \bsq{\Pi}\R \vdash
      \lambda ().\stransp{D_0}{p}{\R}{\delta}:
      \tone \rightarrow \mf{Element}
    }
    \and
    \inferrule*[right=TV-Pair] {
      \inferrule*[right=(8)] { } {
        \bsq{\Gamma}, \Gamma' \semi \bsq{\Pi}\R \vdash
        \delta(p, \R): \mf{Ref\ Bool}
      }
      \and
      \inferrule*[right=(5)] { } {
        \bsq{\Gamma}, \Gamma' \semi \bsq{\Pi}\R \vdash \bsq{V}\R: \mf{Element}
      }
    } {
      (10) \quad
      \bsq{\Gamma}, \Gamma' \semi \bsq{\Pi}\R \vdash
      (\delta(p, \R), \bsq{V}\R): \mf{Ref\ Bool} \times \mf{Element}
    }
    \and
    \inferrule*[right=TC-Val] {
      \inferrule*[Right=TV-Pair] {
        (9)
        \and
        (10)
      } {
        \bsq{\Gamma}, \Gamma' \semi \bsq{\Pi}\R \vdash
        (\lambda ().\stransp{D_0}{p}{\R}{\delta},
        \delta(p, \R), \bsq{V}\R): A_\mf{comp}
      }
    } {
      \bsq{\Gamma}, \Gamma' \semi \bsq{\Pi}\R \vdash
      \mf{val}\ (
        \lambda ().\stransp{D_0}{p}{\R}{\delta},
        \delta(p, \R), \bsq{V}\R): A_\mf{comp}
    }
  \end{mathpar}

  \pcase{TB-Let}: We have
  $\itransp{\elet{x}{M}{D}}{D_0}{p}{\R}{\delta}=
  \elet{x}{\bsq{M}\R}{\itransp{D}{D_0}{p}{\R}{\delta}}$
  and
  \begin{mathpar}
    \inferrule*[right=TB-Let] {
      \inferrule* {(1)} {
        \Gamma \semi \Pi \semi \Phi_1 \vdash M: A \bang \mf{Pure}
      }
      \and
      \inferrule* {(2)} {\Gamma, x: A \semi \Pi \semi \Phi_1' \vdash D: \Ell}
    } {
      \Gamma \semi \Pi \semi \Phi_1 \Join \Phi_1' \vdash \elet{x}{M}{D}: \Ell
    }
  \end{mathpar}
  \begin{itemize}
    \item By premise \emph{(iii)} and definition of $\mf{ids}$,
    we have (3) that $\Pi \semi \mf{ids}(M) \vDash \R$ and
    (4) that\\$\Pi \semi \mf{ids}(D) \vDash \R$.
    \item By mutual inductive hypothesis (Theorem \ref{thm:trans-prezc})
    on (1) and (3), we have (5) that\\
    $\bsq{\Gamma} \semi \bsq{\Pi}\R \vdash \bsq{M}\R: \bsq{A}$.
    \item We weaken premise \emph{(ii)} on $\Gamma$
    to get (6) that $\Gamma, x: A \semi \Pi \vdash D_0: \Ell_0$.
    \item By inductive hypothesis on (2), (4), (6), and premise \emph{(iv)},
    we get (7) that\\
    $\bsq{\Gamma}, \Gamma', x: \bsq{A} \semi \bsq{\Pi}\R \vdash
    \itransp{D}{D_0}{p}{\R}{\delta}: A_\mf{comp}$.
    \item We derive the conclusion as follows, weakening (5)
    on $\Gamma$ as appropriate:
    \begin{mathpar}
      \inferrule[(TC-Let)] {
        \inferrule*[right=(5)] { } {
          \bsq{\Gamma}, \Gamma' \semi \bsq{\Pi}\R \vdash \bsq{M}\R: \bsq{A}
        }
        \
        \inferrule*[Right=(6)] { } {
          \bsq{\Gamma}, \Gamma', x: \bsq{A} \semi \bsq{\Pi}\R \vdash
          \itransp{D}{D_0}{p}{\R}{\delta}: A_\mf{comp}
        }
      } {
        \bsq{\Gamma}, \Gamma' \semi \bsq{\Pi}\R \vdash
        \elet{x}{\bsq{M}\R}{
          \itransp{D}{D_0}{p}{\R}{\delta}
        }: A_\mf{comp}
      }
    \end{mathpar}
  \end{itemize}

  \pcase{TB-State}: We have
  \begin{gather*}
    \begin{aligned}
      \itransp{\estate{x}{\ell}{V}{D}}{D_0}{p}{\R}{\delta}
      &\triangleq \elet{x_\ell}{\mf{ref}(\bsq{V}\R)}{\\
        &\quad\ \elet{x}{\bang x_\ell}{\\
        &\quad\ \elet{x_\mf{set}}{\mf{val}\ \lambda y.
          (x_\ell := y \semi \delta(p, \R) := \mf{true})
        }{\\
        &\quad\ \itransp{D}{D_0}{p}{\R}{\delta, \ell \mapsto x_\ell}
        }}
      }
    \end{aligned}\\
    \mf{ids}(\estate{x}{\ell}{V}{D})
    = \mf{ids}(V) \cup \mf{ids}(D)
  \end{gather*}
  \begin{mathpar}
    \inferrule*[right=TB-State] {
      \inferrule* {(1)} {\Gamma \semi \Pi \semi \Phi_1 \vdash V: A}
      \and
      \inferrule* {(2)} {
        \Gamma, x: A, x_\mf{set}: A \xrightarrow{\mf{Mut}} \tone \semi \Pi
        \semi \Phi_1' \vdash D: \Ell
      }
      \and
      \inferrule* {(3)} {\ell \notin \dom(\Ell)}
    } {
      \Gamma \semi \Pi \semi \Phi_1 \Join \Phi_1'
      \vdash \estate{x}{\ell}{V}{D}: \Ell, \ell: A
    }
  \end{mathpar}
  \begin{itemize}
    \item By premise \emph{(iii)} and definition of $\mf{ids}$,
    we have (4) that $\Pi \semi \mf{ids}(V) \vDash \R$,
    (5) that $\Pi \semi \mf{ids}(D) \vDash \R$,
    and (6) that $\Pi \semi \emptyset \vDash \R$.
    \item By mutual inductive hypothesis (Theorem \ref{thm:trans-prezv})
    on (1) and (4), we have (7) that\\
    $\bsq{\Gamma} \semi \bsq{\Pi}\R \vdash \bsq{V}\R: \bsq{A}$.
    \item Let $\delta' = \delta, \ell \mapsto x_\ell$
    and $\Gamma'' = \Gamma', x_\ell: \mf{Ref}\ \bsq{A}$. We want to show (8)
    that\\$\Gamma'' \semi \Pi \vDash_p \delta': (\Ell_0 \setminus \Ell)$:
    \begin{itemize}
      \item Premise \emph{(ii)} gives us that
      $\Ell, \ell: A \subseteq \Ell_0$.
      (3) says that $\ell \notin \dom(\Ell)$.
      \item Therefore, we must have $\ell: A \in \Ell_0 \setminus \Ell$,
      and clearly $\ell: A \notin \Ell_0 \setminus (\Ell, \ell: A)$.
      \item Then by premise \emph{(iv)} and definition,
      since $\ell \mapsto x_\ell \in \delta'$,
      it suffices to show that $x_\ell: \mf{Ref}\ \bsq{A} \in \Gamma''$.
      This is given by definition of $\Gamma''$.
    \end{itemize}
    \item We weaken premise \emph{(ii)} on $\Gamma$
    to get (9) that $\Gamma, x: A, x_\mf{set}: A \xrightarrow{\mf{Mut}} \tone
    \semi \Pi \vdash D_0: \Ell_0$.
    \item Now by inductive hypothesis on (2), (5), (8), and (9),
    we have (10) that\\
    $\bsq{\Gamma, x: A, x_\mf{set}: A \xrightarrow{\mf{Mut}} \tone}, \Gamma''
    \semi \bsq{\Pi}\R \vdash
    \itransp{D}{D_0}{p}{\R}{\delta'} : A_\mf{comp}$.
    \item By definition, (10) is equivalent to (11) that\\
    $\bsq{\Gamma}, \Gamma', x: \bsq{A}, x_\mf{set}: \bsq{A} \rightarrow \tone,
    x_\ell: \mf{Ref}\ \bsq{A} \semi \bsq{\Pi}\R \vdash
    \itransp{D}{D_0}{p}{\R}{\delta'} : A_\mf{comp}$.
    \item Finally, we apply Lemma \ref{thm:trans-tpflag} (flag typing)
    to (6) and premise \emph{(iv)} to get (12) that\\
    $\Gamma' \semi \bsq{\Pi}\R \vdash \delta(p, \R): \mf{Ref\ Bool}$.
    \item We derive the conclusion by way of subderivations (13)-(17)
    as follows, weakening our previous subconclusions on $\Gamma$
    as appropriate:
    \begin{mathpar}
      \inferrule*[right=TC-Set] {
        \inferrule*[right=TV-Var] { } {
          \Gamma_{13} \semi \bsq{\Pi}\R \vdash x_\ell: \mf{Ref}\ \bsq{A}
        }
        \and
        \inferrule*[right=TV-Var] { } {
          \Gamma_{13} \semi \bsq{\Pi}\R \vdash y: \bsq{A}
        }
      } {
        (13) \quad
        \Gamma_{13} \semi \bsq{\Pi}\R \vdash x_\ell := y: \tone
      }
      \and
      \inferrule*[right=TC-Set] {
        \inferrule*[right=(12)] { } {
          \Gamma_{14} \semi \bsq{\Pi}\R \vdash \delta(p, \R): \mf{Ref\ Bool}
        }
        \and
        \inferrule*[right=TV-True] { } {
          \Gamma_{14} \semi \bsq{\Pi}\R \vdash \mf{true}: \mf{Bool}
        }
      } {
        (14) \quad
        \Gamma_{14} \semi \bsq{\Pi}\R \vdash
        \delta(p, \R) := \mf{true}: \tone
      }
      \and
      \inferrule*[right=TC-Val] {
        \inferrule*[right=TV-Lam] {
          \inferrule*[right=TC-Let] {
            [(13),\ \Gamma_{13} = \Gamma_{15}, y: \bsq{A}]
            \and
            [(14),\ \Gamma_{14} = \Gamma_{15}, y: \bsq{A}]
          } {
            \Gamma_{15}, y: \bsq{A} \semi \bsq{\Pi}\R \vdash
            x_\ell := y \semi \delta(p, \R) := \mf{true}: \tone
          }
        } {
          \Gamma_{15} \semi \bsq{\Pi}\R \vdash
          \lambda y. (x_\ell := y \semi \delta(p, \R) := \mf{true}):
          \bsq{A} \rightarrow \tone
        }
      } {
        (15) \quad
        \Gamma_{15} \semi \bsq{\Pi}\R \vdash
        \mf{val}\ \lambda y.
        (x_\ell := y \semi \delta(p, \R) := \mf{true}):
        \bsq{A} \rightarrow \tone
      }
    \end{mathpar}
    \begin{mathpar}
      \inferrule*[right=TC-Let] {
        [(15),\ \Gamma_{15} = \Gamma_{16}] \quad
        \inferrule*[Right=(11)] { } {
          \Gamma_{16}, x_\mf{set}: \bsq{A} \rightarrow \tone \semi \bsq{\Pi}\R
          \vdash \itransp{D}{D_0}{p}{\R}{\delta'}
          : A_\mf{comp}
        }
      } {
        (16) \quad
        \Gamma_{16} \semi \bsq{\Pi}\R \vdash
        \elet{x_\mf{set}}{\cdots}{
          \itransp{D}{D_0}{p}{\R}{\delta'}
        }: A_\mf{comp}
      }
      \and
      \inferrule*[right=TC-Let] {
        \inferrule*[right=TC-Get] {
          \inferrule*[right=TV-Var] { } {
            \Gamma_{17} \semi \bsq{\Pi}\R \vdash x_\ell: \mf{Ref}\ \bsq{A}
          }
        } {
          \Gamma_{17} \semi \bsq{\Pi}\R \vdash
          \bang x_\ell:\bsq{A}
        }
        \and [(16),\ \Gamma_{16} = \Gamma_{17}, x: \bsq{A}]
      } {
        (17) \quad
        \Gamma_{17} \semi \bsq{\Pi}\R \vdash
        \elet{x}{\bang x_\ell}{\cdots}: A_\mf{comp}
      }
      \and
      \inferrule*[right=TC-Let] {
        \inferrule*[right=TC-Ref] {
          \inferrule*[right=(7)] { } {
            \bsq{\Gamma}, \Delta \semi \bsq{\Pi}\R \vdash \bsq{V}\R:
            \bsq{A}
          }
        } {
          \bsq{\Gamma}, \Delta \semi \bsq{\Pi}\R \vdash
          \mf{ref}(\bsq{V}\R): \mf{Ref}\ \bsq{A}
        }
        \ [(17),\ \Gamma_{17} = \bsq{\Gamma}, \Delta, x_\ell: \mf{Ref}\ \bsq{A}]
      } {
        \bsq{\Gamma}, \Delta \semi \bsq{\Pi}\R \vdash
        \elet{x_\ell}{\mf{ref}(\bsq{V}\R)}{\cdots}: A_\mf{comp}
      }
    \end{mathpar}
  \end{itemize}
\end{proof}

\paragraph{Heap/Store Preservation}

We need one final lemma regarding the appending of well-formed stores
in order to prove type preservation for heaps:

\begin{lemma}[Store appending]
\label{thm:trans-append}
If
\begin{enumerate}[label=(\roman*)]
\item $\Gamma \semi \Sigma_0 \vDash \sigma: \Sigma$,
\item $\Gamma \semi \Sigma_0 \vDash \sigma': \Sigma'$, and
\item $\dom(\sigma) \cap \dom(\sigma') = \emptyset$,
\end{enumerate}
then $\Gamma \semi \Sigma_0 \semi \Phi \vDash \sigma \uplus \sigma':
\Sigma \uplus \Sigma'$.

\begin{proof}
  By induction on the derivation of typing premise \emph{(ii)}.
  At each step, well-formedness is preserved by premise \emph{(iii)}.
\end{proof}
\end{lemma}

\begin{restatable}[Type preservation, state stores]{lemma}{transprezs}
\label{thm:trans-prezs}
If
\begin{enumerate}[label=(\roman*)]
\item $\Gamma \semi \Pi \semi \Phi \vDash \sigma: \Ell$,
\item $\Pi \semi \mf{ids}(\sigma) \vDash \R$, and
\item $\Ell \subseteq \Pi[p]$ (for some $p$),
\end{enumerate}
then $\bsq{\Gamma} \semi \bsq{\Pi}\R \vDash \bsq{\sigma}\R: \bsq{\Ell}$.

\begin{proof}
  By induction on the derivation of the typing premise \emph{(i)}.

  \pcase{TS-Nil}: We have
  \begin{mathpar}
    \inferrule*[right=TS-Nil] {\mf{Succ}(\Phi)} {
      \Gamma \semi \Pi \semi \Phi \vDash \emptyset: \emptyset
    }
  \end{mathpar}
  \begin{itemize}
    \item By definition, $\bsq{\Ell}\R = \emptyset$
    and $\bsq{\sigma}\R=\emptyset$.
    \item We immediately rederive the conclusion via
    the corresponding target-language rule:
    \begin{mathpar}
      \inferrule*[right=TS-Nil] { } {
        \bsq{\Gamma} \semi \bsq{\Pi}\R \vDash \emptyset: \emptyset
      }
    \end{mathpar}
  \end{itemize}

  \pcase{TS-Cons}: We have
  \[\bsq{\sigma, \ell \mapsto V}\R = \bsq{\sigma}\R,
  \R(p, \ell) \mapsto \bsq{V}\R
  \quad
  \bsq{\Ell, \ell: A}\R = \bsq{\Ell}\R, \R(p, \ell): \bsq{A}\]
  \begin{mathpar}
    \inferrule*[right=TS-Cons] {
      \inferrule* {(1)} {\Gamma \semi \Pi \semi \Phi_1 \vdash V: A}
      \and
      \inferrule* {(2)} {\Gamma \semi \Pi \semi \Phi_2 \vDash \sigma: \Ell}
    } {
      \Gamma \semi \Pi \semi \Phi_1 \Join \Phi_2
      \vDash \sigma, \ell \mapsto V: \Ell, \ell: A
    }
  \end{mathpar}
  \begin{itemize}
    \item By premise \emph{(ii)} and definition of $\mf{ids}$,
    we have (3) that $\Pi \semi \mf{ids}(V) \vDash \R$
    and (4) that\\$\Pi \semi \mf{ids}(\sigma) \vDash \R$.
    \item By Lemma \ref{thm:trans-prezv} (value type preservation)
    on (1) and (3), we get (5) that\\$\bsq{\Gamma} \semi \bsq{\Pi}\R
    \vdash \bsq{V}\R: \bsq{A}$.
    \item By inductive hypothesis on (2) and (4), we get (6) that
    $\bsq{\Gamma} \semi \bsq{\Pi}\R \vdash \bsq{\sigma}\R: \bsq{\Ell}\R$.
    \item By premises \emph{(ii)} and \emph{(iii)}, we get that
    $\R(p, \ell)$ is well-defined. By injectivity of $\R$
    and definition of translation, we further get that
    $\R(p, \ell) \notin \dom \bsq{\sigma}\R$
    and $\R(p, \ell) \notin \dom \bsq{\Ell}\R$,
    i.e. $\bsq{\sigma}, \R(p, \ell) \mapsto \bsq{V}\R$
    and $\bsq{\Ell}, \R(p, \ell): \bsq{A}$ are well-formed.
    \item We derive the conclusion as follows:
    \begin{mathpar}
      \inferrule*[right=TS-Cons] {
        \inferrule*[right=(5)] { } {
          \bsq{\Gamma} \semi \bsq{\Pi}\R \vdash \bsq{V}\R: \bsq{A}
        }
        \and
        \inferrule*[right=(6)] { } {
          \bsq{\Gamma} \semi \bsq{\Pi}\R \vdash \bsq{\sigma}\R: \bsq{\Ell}
        }
      } {
        \bsq{\Gamma} \semi \bsq{\Pi}\R \vDash
        \bsq{\sigma}, \R(p, \ell) \mapsto \bsq{V}\R:
        \bsq{\Ell}, (\R(p, \ell): \bsq{A})
      }
    \end{mathpar}
  \end{itemize}
\end{proof}
\end{restatable}

\begin{restatable}[Type preservation, heaps, intermediate]{lemma}{transprezhi}
\label{thm:trans-prezhi}
If
\begin{enumerate}[label=(\roman*)]
\item $\Gamma \semi \Pi_0 \semi \Phi \vDash \heap: \Pi$,
\item $\Pi_0 \semi \mf{ids}(\heap) \vDash \R$, and
\item $\Pi \subseteq \Pi_0$,
\end{enumerate}
then $\bsq{\Gamma} \semi \bsq{\Pi_0}\R \vDash \bsq{\heap}\R: \bsq{\Pi}\R$.
\end{restatable}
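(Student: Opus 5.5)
We proceed by induction on the derivation of the heap typing premise \emph{(i)}, following the structure of \textsc{TH-Nil} and \textsc{TH-Cons}. The generalisation to an arbitrary sub-context $\Pi \subseteq \Pi_0$ (rather than $\Pi = \Pi_0$) is what makes the induction go through: each \textsc{TH-Cons} step peels off one pointer from $\Pi$ while the ambient context $\Pi_0$ used for typing stored values stays fixed, so premise \emph{(ii)} remains available unchanged throughout.

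In case \textsc{TH-Nil}, both $\heap$ and $\Pi$ are empty, so $\bsq{\heap}\R$ and $\bsq{\Pi}\R$ are empty by definition, and the empty target store is trivially well-typed against the empty store type. In case \textsc{TH-Cons}, we have $\heap = \heap_1, p \mapsto \pi$ and $\Pi = \Pi_1, p : \Ell$ with subderivations $\Gamma \semi \Pi_0 \semi \Phi_1 \vDash \heap_1 : \Pi_1$ and $\Gamma \semi \Pi_0 \semi \Phi_2 \vDash \pi.\mf{st} : \Ell$. From premise \emph{(ii)} and the definition of $\mf{ids}$ we obtain $\Pi_0 \semi \mf{ids}(\heap_1) \vDash \R$ and $\Pi_0 \semi \mf{ids}(\pi.\mf{st}) \vDash \R$, and premise \emph{(iii)} yields both $\Pi_1 \subseteq \Pi_0$ and $\Ell = \Pi_0[p]$ (in particular $\Ell \subseteq \Pi_0[p]$). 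The inductive hypothesis then types $\bsq{\heap_1}\R$ against $\bsq{\Pi_1}\R$, and Lemma~\ref{thm:trans-prezs} types the translated state store against the cells $\{\R(p,\ell) : \bsq{A}\}_{\ell : A \in \Ell}$. It remains to add the rerender flag: since $p \in \dom(\Pi_0)$, sufficiency gives that $\R(p)$ is defined, and the translated flag $\trans{\pi.\mf{rerender}}$ is a boolean literal, hence typed $\mf{Bool}$ by \textsc{TV-True} or \textsc{TV-False}. This contributes a singleton store $\R(p) \mapsto \trans{\pi.\mf{rerender}}$ of type $\R(p) : \mf{Bool}$.

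We then combine the three pieces using Lemma~\ref{thm:trans-append} twice. The union of the resulting store types is exactly $\bsq{\Pi}\R$ by the definition of the heap-context translation, and the union of the stores is exactly $\bsq{\heap}\R$. The main obstacle is discharging the disjointness side-condition \emph{(iii)} of Lemma~\ref{thm:trans-append}. This requires that $\R(p)$ and each $\R(p,\ell)$ be distinct from one another and from every location produced for pointers in $\heap_1$. We will obtain this from the injectivity of $\R$, which is part of $\Pi_0 \semi \mathbb{I} \vDash \R$, together with the fact that $p \notin \dom(\heap_1)$ (heaps are maps, so $\heap_1, p \mapsto \pi$ is well-formed only when $p$ is fresh in $\heap_1$). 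Injectivity must be read across all three components of $\R$ jointly, so that a flag location never coincides with a cell location; we take this as the intended meaning of ``$\R$ is injective''. With disjointness in hand, the two appends are valid and yield $\bsq{\Gamma} \semi \bsq{\Pi_0}\R \vDash \bsq{\heap}\R : \bsq{\Pi}\R$, which is the conclusion. The phase context $\Phi$ plays no role on the target side and can simply be dropped.
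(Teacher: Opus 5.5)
Your proof is correct and follows the paper's argument. You induct on the heap typing derivation, use Lemma~\ref{thm:trans-prezs} for the state store, the inductive hypothesis for the remaining heap, and Lemma~\ref{thm:trans-append} with disjointness from injectivity of $\R$, and you type the rerender flag as a boolean literal. The only differences are cosmetic: the paper attaches the flag with a final target-language \textsc{TS-Cons} rather than a second append, and it leaves implicit what you spell out, namely that injectivity must hold jointly across the components of $\R$ and that $p \notin \dom(\heap_1)$.
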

\begin{proof}
  By induction on the derivation of the typing premise \emph{(i)}.

  \pcase{TH-Nil}: We have
  \begin{mathpar}
    \inferrule*[right=TH-Nil] {\mf{Succ}(\Phi)} {
      \Gamma \semi \Pi_0 \semi \Phi \vDash \emptyset: \emptyset
    }
  \end{mathpar}
  \begin{itemize}
    \item By definition, $\bsq{\Pi}\R = \emptyset$
    and $\bsq{\heap}\R=\emptyset$.
    \item We immediately rederive the conclusion via
    the corresponding target-language rule:
    \begin{mathpar}
      \inferrule*[right=TH-Nil] { } {
        \bsq{\Gamma} \semi \bsq{\Pi}\R \vDash \emptyset: \emptyset
      }
    \end{mathpar}
  \end{itemize}

  \pcase{TH-Cons}: We have
  \begin{gather*}
    \bsq{\heap, p \mapsto \pi}\R = \bsq{\heap}\R \cup \bsq{\pi.\mf{st}}\R,
    \R(p) \mapsto \bsq{\pi.\mf{rerender}}\\
    \bsq{\Pi, p: \Ell}\R = \bsq{\Pi}\R \cup \bsq{\Ell}\R,
    \R(p): \mf{Bool}
  \end{gather*}
  \begin{mathpar}
    \inferrule*[right=TH-Cons] {
      \inferrule* {(1)} {
        \Gamma \semi \Pi_0 \semi \Phi_1 \vDash \pi.\mf{st}: \Ell
      }
      \and
      \inferrule* {(2)} {
        \Gamma \semi \Pi_0 \semi \Phi_2 \vDash \heap: \Pi
      }
    } {
      \Gamma \semi \Pi_0 \semi \Phi_1 \Join \Phi_2
      \vDash \heap, p \mapsto \pi: \Pi, p: \Ell
    }
  \end{mathpar}
  \begin{itemize}
    \item By premise \emph{(ii)} and definition of $\mf{ids}$,
    we have (3) that $\Pi_0 \semi \mf{ids}(\pi.\mf{st}) \vDash \R$
    and (4) that $\Pi_0 \semi \mf{ids}(\heap) \vDash \R$.
    \item By Lemma \ref{thm:trans-prezs} (state store type preservation)
    on (1) and (3), noting that $\Ell=\Pi[p]=\Pi_0[p]$ by premise \emph{(iii)},
    we have (5) $\bsq{\Gamma} \semi \bsq{\Pi_0}\R \vDash
    \bsq{\pi.\mf{st}}\R: \bsq{\Ell}\R$.
    \item By inductive hypothesis on (2) and (4),
    noting that transitivity of $\sqsubseteq$ satisfies the third premise,
    we get (6) that $\bsq{\Gamma} \semi \bsq{\Pi_0}\R \vDash
    \bsq{\heap}\R: \bsq{\Pi}\R$.
    \item By injectivity of $\R$ and translation definitions,
    we get (7)
    $\dom(\bsq{\heap}\R) \cap \dom(\bsq{\pi.\mf{st}}\R) = \emptyset$.
    \item By Lemma \ref{thm:trans-append} (store appending) on (5)-(7),
    we get (8) that\\$\bsq{\Gamma} \semi \bsq{\Pi_0}\R \vDash
    \bsq{\heap}\R \uplus \bsq{\pi.\mf{st}}\R:
    \bsq{\Pi}\R \uplus \bsq{\Ell}\R$.
    \item By definition, either $\bsq{\pi.\mf{rerender}} = \mf{true}$
    or $\bsq{\pi.\mf{rerender}} = \mf{false}$. In either case, (9)
    $\bsq{\Gamma} \semi \bsq{\Pi_0}\R \vdash \bsq{\mf{rr}}: \mf{Bool}$.
    \item By premise \emph{(ii)}, we have $\R(p)$ is well-defined.
    \item We derive the conclusion as follows, again getting well-formedness by
    injectivity of $\R$, and abbreviating
    $\mf{rr} \triangleq \pi.\mf{rerender}$:
  \end{itemize}
  \begin{mathpar}
    \inferrule*[right=TS-Cons] {
      \inferrule*[right=(9)] { } {
        \bsq{\Gamma} \semi \bsq{\Pi_0}\R \vdash \bsq{\mf{rr}}:
        \mf{Bool}
      }
      \
      \inferrule*[right=(8)] { } {
        \bsq{\Gamma} \semi \bsq{\Pi_0}\R \vDash
        \bsq{\heap}\R \uplus \bsq{\pi.\mf{st}}\R:
        \bsq{\Pi}\R \uplus \bsq{\Ell}\R
      }
    } {
      \bsq{\Gamma} \semi \bsq{\Pi_0}\R \vDash
      \bsq{\heap}\R \uplus \bsq{\pi.\mf{st}}\R,
      \R(p) \mapsto \bsq{\mf{rr}}:
      \bsq{\Pi}\R \uplus \bsq{\Ell}\R, (\R(p): \mf{Bool})
    }
  \end{mathpar}
\end{proof}

\begin{restatable}[Type preservation, heaps]{theorem}{transprezh}
\label{thm:trans-prezh}
If
\begin{enumerate}[label=(\roman*)]
\item $\Gamma \semi \Pi \semi \Phi \vDash \heap$ and
\item $\Pi \semi \mf{ids}(\heap) \vDash \R$,
\end{enumerate}
then $\bsq{\Gamma} \semi \bsq{\Pi}\R \vDash \bsq{\heap}\R$.
\end{restatable}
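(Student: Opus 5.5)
This follows directly from Lemma~\ref{thm:trans-prezhi} (type preservation for heaps, intermediate form), instantiated with $\Pi_0 = \Pi$. Recall that premise \emph{(i)}, $\Gamma \semi \Pi \semi \Phi \vDash \heap$, abbreviates $\Gamma \semi \Pi \semi \Phi \vDash \heap : \Pi$. Likewise, the target-language store typing $\bsq{\Gamma} \semi \bsq{\Pi}\R \vDash \bsq{\heap}\R$ is the judgment $\bsq{\Gamma} \semi \bsq{\Pi}\R \vDash \bsq{\heap}\R : \bsq{\Pi}\R$. Both sides are thus the diagonal instance of the intermediate judgment, whose outer (typing) heap context is separated from the inner one being described.

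The plan has three steps. First, unfold premise \emph{(i)} to $\Gamma \semi \Pi \semi \Phi \vDash \heap : \Pi$. Second, take $\Pi_0 := \Pi$ in Lemma~\ref{thm:trans-prezhi}. Its premise \emph{(i)} is exactly the unfolded judgment, its premise \emph{(ii)} is our premise \emph{(ii)} verbatim, and its premise \emph{(iii)}, $\Pi \subseteq \Pi$, holds by reflexivity of inclusion on heap contexts. Third, the lemma yields $\bsq{\Gamma} \semi \bsq{\Pi}\R \vDash \bsq{\heap}\R : \bsq{\Pi}\R$, which is the conclusion once the abbreviation is folded back.

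The only point requiring care is that the target-language store judgment used in Lemma~\ref{thm:trans-prezhi} coincides with the one defined for $\lamref$. That definition requires $\dom(\bsq{\Pi}\R) = \dom(\bsq{\heap}\R)$, together with each stored value having its declared type. The typing condition is supplied by the lemma. For domain equality, the source judgment gives $\dom(\Pi) = \dom(\heap)$ and $\dom(\Pi[p]) = \dom(\heap[p].\mf{st})$ pointwise. Since both translations range over the same pointers $p$ and labels $\ell$, mapping them through the same injective $\R(p)$ and $\R(p,\ell)$, the translated domains agree. I expect no real obstacle beyond this bookkeeping.
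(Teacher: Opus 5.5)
Your proposal is correct and matches the paper's proof: the paper also obtains the theorem as a corollary of Lemma~\ref{thm:trans-prezhi} with $\Pi_0 = \Pi$, discharging the third premise by reflexivity (which it writes as $\Pi \sqsubseteq_\emptyset \Pi$). Your extra check that the translated domains coincide is harmless bookkeeping that the lemma's inductive construction of the target store already covers.
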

\begin{proof}
  Corollary of Lemma \ref{thm:trans-prezhi}, choosing $\Pi_0=\Pi$
  and noting that $\Pi \sqsubseteq_\emptyset \Pi$.
\end{proof}

\subsection{Simulation}
\label{appdx:transsim}

We begin with an assortment of ``equivalence lemmas'',
which help us assert the equivalence of terms under
various translation-related operations.

\subsubsection{Substitution}

We prove that substitution commutes with all five term translation functions.
The lemma statements and proofs for values and computations are fairly standard.
Those for bodies additionally require that the target variable
is not in the range of $\delta$: we treat those variables
separately via a series of \emph{$\delta$-elimination} lemmas,
to be presented later.

By the Barendregt variable convention, we are free to assume
in our lemmas that the target variable is not in the bound variables
of the term: we guarantee this by appropriately $\alpha$-converting any terms
which violate this assumption.

We additionally remark that, in the translation
$\itransp{D}{D_0}{p}{\R}{\delta}$,
the body $D_0$ is
\emph{in scope of all bindings of} $D$.
We thus stipulate by convention that
$\mf{bv}(D) \cap \mf{fv}(D_0) = \emptyset$;
this can likewise be achieved by $\alpha$-conversion on $D$.
(This stipulation excludes those variables introduced by $\delta$
in the translation: these $\delta$-variables are the only way to
``pass information'' between the translation phases.)

\begin{lemma}[Substitution, values]
\label{thm:trans-substv}
If \begin{enumerate}[label=(\roman*)]
  \item $\bsq{V}\R$ is well-defined and
  \item $\bsq{W}\R$ is well-defined,
\end{enumerate}
then $\bsq{W[V/x]}\R = \bsq{W}\R[\bsq{V}\R/x]$.

\begin{proof}
  By mutual induction on the structure of $W$
  with Lemmas \ref{thm:trans-substc}, \ref{thm:trans-substbs},
  and \ref{thm:trans-substbi}.
  (Premise \emph{(i)} allows us to safely expand $\bsq{W}\R$.)

  \pcase{$W=x'$}: We subcase on whether $x' = x$.
  \begin{itemize}
    \item If $x' = x$, then we show the conclusion as follows:
    \begin{align*}
      \bsq{x}\R[\bsq{V}\R/x]
      &= x[\bsq{V}\R/x] &&\text{def. of translation}\\
      &= \bsq{V}\R &&\text{def. of substitution}\\
      &= \bsq{x[V/x]}\R &&\text{def. of substitution}
    \end{align*}
  \end{itemize}
  \begin{itemize}
    \item If $x' \neq x$, then we show the conclusion as follows:
    \begin{align*}
      \bsq{x'}\R[\bsq{V}\R/x]
      &= x'[\bsq{V}\R/x] &&\text{def. of translation}\\
      &= x' &&\text{def. of substitution}\\
      &= \bsq{x'}\R &&\text{def. of translation}\\
      &= \bsq{x'[V/x]}\R &&\text{def. of substitution}
    \end{align*}
  \end{itemize}

  \textbf{Cases} $W=()$, $W=\mf{true}$, $W=\mf{false}$, $W=s$, $W=\nil$,
  and $W=\setter^\ell_p$ are all shown analogously to the latter subcase above.

  \pcase{$W=\mf{tagInst}_{id}(W_1, W_2)$}:
  \begin{itemize}
    \item By premise \emph{(i)},
    we have (1) that $\bsq{W_1}\R$ is well-defined and
    (2) that $\bsq{W_2}\R$ is well-defined.
    \item By inductive hypothesis on each of (1) and (2)
    alongside premise \emph{(ii)},
    we get (3) that $\bsq{W_1[V/x]}\R = \bsq{W_1}\R[\bsq{V}\R/x]$
    and (4) that $\bsq{W_2[V/x]}\R = \bsq{W_2}\R[\bsq{V}\R/x]$,
    respectively.
    \item We show the conclusion as follows:
    \begin{align*}
      &\bsq{\mf{tagInst}_{id}(W_1, W_2)}\R[\bsq{V}\R/x]\\
      &= \mf{tagInst}_{id}(\bsq{W_1}\R, \bsq{W_2}\R)[\bsq{V}\R/x]
      &&\text{def. of translation}\\
      &= \mf{tagInst}_{\R({id})}(\bsq{W_1}\R[\bsq{V}\R/x],
        \bsq{W_2}\R[\bsq{V}\R/x])
      &&\text{def. of substitution}\\
      &= \mf{tagInst}_{\R({id})}(\bsq{W_1[V/x]}\R, \bsq{W_2[V/x]}\R)
      &&\text{by (3) and (4)}\\
      &= \bsq{\mf{tagInst}_{id}(W_1[V/x], W_2[V/x])}\R
      &&\text{def. of translation}\\
      &= \bsq{\mf{tagInst}_{id}(W_1, W_2)[V/x]}\R
      &&\text{def. of substitution}
    \end{align*}
  \end{itemize}

  \textbf{Cases} $W=(W_1, W_2)$, $W=W_1 :: W_2$, $W=\mf{text}\ W_1$,
  $W=\mf{tag}(W_1, W_2)$, $W=\mf{attr}(W_1, W_2)$, $W=\mf{onClick}\ W_1$,
  and $W=\mf{textInst}\ W_1$ are all shown analogously to the above case.

  \pcase{$W=\lambda y.M$}:
  \begin{itemize}
    \item By premise \emph{(i)}, we have (1) that $\bsq{M}\R$ is well-defined.
    \item By mutual inductive hypothesis (Lemma \ref{thm:trans-substc})
    on (1) and premise \emph{(ii)}, we get (2) that
    $\bsq{M[V/x]}\R = \bsq{M}\R[\bsq{V}\R/x]$.
    \item By convention, $x \neq x'$, so we show the conclusion as follows:
    \begin{align*}
      \bsq{\lambda y.M}\R[\bsq{V}\R/x]
      &= (\lambda y.\bsq{M}\R)[\bsq{V}\R/x]
      &&\text{def. of translation}\\
      &= \lambda y.(\bsq{M}\R[\bsq{V}\R/x])
      &&\text{def. of substitution}\\
      &= \lambda y.(\bsq{M[V/x]}\R) &&\text{by (4)}\\
      &= \bsq{\lambda y.M[V/x]}\R &&\text{def. of translation}\\
      &= \bsq{(\lambda y.M)[V/x]}\R &&\text{def. of substitution}
    \end{align*}
  \end{itemize}

  \pcase{$W = \mf{component}\ \{D\}$}:
  \begin{itemize}
    \item By premise \emph{(i)}, we have (1) that
    $\itransp{D}{D}{\bot}{\R}{\emptyset}$ is well-defined.
    \item By mutual inductive hypotheisis (Lemma \ref{thm:trans-substbi})
    on (1) and premise \emph{(ii)}, we get (2) that
    $\itransp{D[V/x]}{D[V/x]}{\bot}{\R}{\emptyset} =
    \itransp{D[V/x]}{D[V/x]}{\bot}{\R}{\emptyset}[\bsq{V}\R/x]$.
    \item We show the conclusion as follows:
    \begin{align*}
      &\bsq{\mf{component}\ \{D\}}\R[\bsq{V}\R/x]\\
      &= (\mf{component}\ \lambda().\elet{y}{\mf{ref}(\mf{false})}{\\
        &\quad \itransp{D[V/x]}{D[V/x]}{\bot}{\R}{\emptyset}})[\bsq{V}\R/x]
      &&\text{def. of translation}\\
      &= \mf{component}\ \lambda().\elet{y}{\mf{ref}(\mf{false})}{\\
        &\quad \itransp{D[V/x]}{D[V/x]}{\bot}{\R}{\emptyset}}[\bsq{V}\R/x]
      &&\text{def. of substitution}\\
      &= \mf{component}\ \lambda().\elet{y}{\mf{ref}(\mf{false})}{\\
        &\quad \itransp{D[V/x]}{D[V/x]}{\bot}{\R}{\emptyset}}
      &&\text{by (2)}\\
      &= \bsq{\mf{component}\ \{D[V/x]\}}\R
      &&\text{def. of translation}\\
      &= \bsq{(\mf{component}\ \{D\})[V/x]}\R
      &&\text{def. of substitution}
    \end{align*}
  \end{itemize}

  \pcase{$W=\compInst{p}{D}{W_1}$}:
  \begin{itemize}
    \item By premise \emph{(i)}, we have (1) that
    $\stransp{D}{\bot}{\R}{\emptyset}$ is well-defined and (2) that
    $\bsq{W_1}\R$ is well-defined.
    \item By mutual inductive hypotheisis (Lemma \ref{thm:trans-substbs})
    on (1) and premise \emph{(ii)}, we get (3) that
    $\stransp{D[V/x]}{\bot}{\R}{\emptyset} =
    \stransp{D}{\bot}{\R}{\emptyset}[\bsq{V}\R/x]$.
    \item By inductive hypotheisis on (2) and premise \emph{(ii)},
    we get (4) that\\$\bsq{W_1[V/x]}\R = \bsq{W_1}\R[\bsq{V}\R/x]$.
    \item We show the conclusion as follows:
    \begin{align*}
      &\bsq{\compInst{p}{D}{W_1}}\R[\bsq{V}\R/x]\\
      &= \mf{compInst}(\lambda().\stransp{D}{\bot}{\R}{\emptyset},
        \R(p), \bsq{W_1}\R)[\bsq{V}\R/x]
      &&\text{def. of translation}\\
      &= \mf{compInst}(\lambda().\stransp{D}{\bot}{\R}{\emptyset}
        [\bsq{V}\R/x], \R(p), \bsq{W_1}\R[\bsq{V}\R/x])
      &&\text{def. of substitution}\\
      &= \mf{compInst}(\lambda().\stransp{D[V/x]}{\bot}{\R}{\emptyset},
        \R(p), \bsq{W_1}\R[\bsq{V}\R/x])
      &&\text{by (3)}\\
      &= \mf{compInst}(\lambda().\stransp{D[V/x]}{\bot}{\R}{\emptyset},
        \R(p), \bsq{W_1[V/x]}\R)
      &&\text{by (4)}\\
      &= \bsq{\compInst{p}{D[V/x]}{W_1[V/x]}}\R
      &&\text{def. of translation}\\
      &= \bsq{(\compInst{p}{D}{W_1})[V/x]}\R
      &&\text{def. of substitution}
    \end{align*}
  \end{itemize}
\end{proof}
\end{lemma}

\begin{lemma}[Substitution, computations]
\label{thm:trans-substc}
If \begin{enumerate}[label=(\roman*)]
  \item $\bsq{M}\R$ is well-defined and
  \item $\bsq{V}\R$ is well-defined,
\end{enumerate}
then $\bsq{M[V/x]}\R = \bsq{M}\R[\bsq{V}\R/x]$.

\begin{proof}
  By mutual induction on the structure of $M$
  with Lemmas \ref{thm:trans-substv}, \ref{thm:trans-substbs},
  and \ref{thm:trans-substbi}.

  \pcase{$M = \mf{val}\ W$}:
  \begin{itemize}
    \item By premise \emph{(i)}, (1) $\bsq{W}\R$ is well-defined.
    \item By mutual inductive hypothesis (Lemma \ref{thm:trans-substv})
    on (1) and premise \emph{(ii)}, we get (2)
    that $\bsq{W[V/x]}\R = \bsq{W}\R[\bsq{V}\R/x]$.
    \item We show the conclusion as follows:
    \begin{align*}
      &\bsq{\mf{val}\ W}\R[\bsq{V}\R/x]\\
      &= (\mf{val}\ \bsq{W}\R)[\bsq{V}\R/x]
      &&\text{def. of translation}\\
      &= \mf{val}\ \bsq{W}\R[\bsq{V}\R/x]
      &&\text{def. of substitution}\\
      &= (\mf{val}\ \bsq{W[V/x]}\R)
      &&\text{by (2)}\\
      &= \bsq{\mf{val}\ W[V/x]}\R
      &&\text{def. of translation}\\
      &= \bsq{(\mf{val}\ W)[V/x]}\R
      &&\text{def. of substitution}
    \end{align*}
  \end{itemize}

  \textbf{Cases}
  $M=\mf{fst}\ W$,
  $M=\mf{snd}\ W$,
  $M=\mf{init}\ W$,
  $M=\mf{click}_{id}\ W$,
  $M=\mf{runClick}\ W$, and
  $M=\mf{update}\ W$
  can all be shown analogously to the above case.

  \pcase{$M=\elet{y}{M_1}{M_2}$}:
  \begin{itemize}
    \item By premise \emph{(i)}, we have (1) that $\bsq{M_1}\R$ is well-defined
    and (2) that $\bsq{M_2}\R$ is well-defined.
    \item By inductive hypothesis on each of (1) and (2)
    alongside premise \emph{(ii)}, we get (3) that
    $\bsq{M_1[V/x]}\R = \bsq{M_1}\R[\bsq{V}\R/x]$
    and (4) that $\bsq{M_2[V/x]}\R = \bsq{M_2}\R[\bsq{V}\R/x]$,
    respectively.
    \item By convention, $x \neq y$, so we show the conclusion as follows:
    \begin{align*}
      &\bsq{\elet{y}{M_1}{M_2}}\R[\bsq{V}\R/x]\\
      &= (\elet{y}{\bsq{M_1}\R}{\bsq{M_2}\R})[\bsq{V}\R/x]
      &&\text{def. of translation}\\
      &= \elet{y}{\bsq{M_1}\R[\bsq{V}\R/x]}{\bsq{M_2}\R[\bsq{V}\R/x]}
      &&\text{def. of substitution}\\
      &= \elet{y}{\bsq{M_1[V/x]}\R}{\bsq{M_2[V/x]}\R}
      &&\text{by (3) and (4)}\\
      &= \bsq{\elet{y}{M_1[V/x]}{M_2[V/x]}}\R &&\text{def. of translation}\\
      &= \bsq{(\elet{y}{M_1}{M_2})[V/x]}\R
      &&\text{def. of substitution}
    \end{align*}
  \end{itemize}

  \pcase{$M=V_1\ V_2$}:
  \begin{itemize}
    \item By premise \emph{(i)}, (1) $\bsq{V_1}\R$ is well-defined
    and (2) $\bsq{V_2}\R$ is well-defined.
    \item By mutual inductive hypothesis (Lemma \ref{thm:trans-substv})
    on each of (1) and (2) alongside premise \emph{(ii)}, we get (3) that
    $\bsq{V_1[V/x]}\R = \bsq{V_1}\R[\bsq{V}\R/x]$
    and (4) that $\bsq{V_2[V/x]}\R = \bsq{V_2}\R[\bsq{V}\R/x]$,
    respectively.
    \item We show the conclusion as follows:
    \begin{align*}
      &\bsq{V_1\ V_2}\R[\bsq{V}\R/x]\\
      &= (\bsq{V_1}\R\ \bsq{V_2}\R)[\bsq{V}\R/x]
      &&\text{def. of translation}\\
      &= \bsq{V_1}\R[\bsq{V}\R/x]\ \bsq{V_2}\R[\bsq{V}\R/x]
      &&\text{def. of substitution}\\
      &= \bsq{V_1[V/x]}\R\ \bsq{V_2[V/x]}\R
      &&\text{by (3) and (4)}\\
      &= \bsq{V_1[V/x]\ V_2[V/x]}\R
      &&\text{def. of translation}\\
      &= \bsq{(V_1\ V_2)[V/x]}\R
      &&\text{def. of substitution}
    \end{align*}
  \end{itemize}

  \pcase{$M=\mf{fold}\ V_1\ V_2\ V_3$} can be shown analogously
  to the above case.

  \pcase{$M = \eif{W}{M_1}{M_2}$}:
  \begin{itemize}
    \item By premise \emph{(i)}, we have
    (1) that $\bsq{W}\R$ is well-defined,
    (2) that $\bsq{M_1}\R$ is well-defined, and
    (3) that $\bsq{M_2}\R$ is well-defined.
    \item By mutual inductive hypothesis (Lemma \ref{thm:trans-substv})
    on (1) and premise \emph{(ii)}, we get (4) that
    $\bsq{W[V/x]}\R = \bsq{W}\R[\bsq{V}\R/x]$.
    \item By inductive hypothesis on each of (2) and (3) alongside
    premise \emph{(ii)},
    we get (5) that $\bsq{M_1[V/x]}\R = \bsq{M_1}\R[\bsq{V}\R/x]$
    and (6) that $\bsq{M_2[V/x]}\R = \bsq{M_2}\R[\bsq{V}\R/x]$,
    respectively.
    \item We show the conclusion as follows:
    \begin{align*}
      &\bsq{\eif{W}{M_1}{M_2}}\R[\bsq{V}\R/x]\\
      &= (\eif{\bsq{W}\R}{\bsq{M_1}\R}{\bsq{M_2}\R})[\bsq{V}\R/x]
      &&\text{def. of translation}\\
      &= \eif{\bsq{W}\R[\bsq{V}\R/x]}{\bsq{M_1}\R[\bsq{V}\R/x]}
        {\bsq{M_2}\R[\bsq{V}\R/x]}
      &&\text{def. of substitution}\\
      &= \eif{\bsq{W[V/x]}\R}{\bsq{M_1}\R[\bsq{V}\R/x]}
        {\bsq{M_2}\R[\bsq{V}\R/x]}
      &&\text{by (4)}\\
      &= \eif{\bsq{W[V/x]}\R}{\bsq{M_1[V/x]}\R}{\bsq{M_2[V/x]}\R}
      &&\text{by (5) and (6)}\\
      &= \bsq{\eif{W[V/x]}{M_1[V/x]}{M_2[V/x]}}\R
      &&\text{def. of translation}\\
      &= \bsq{(\eif{W}{M_1}{M_2})[V/x]}\R
      &&\text{def. of substitution}
    \end{align*}
  \end{itemize}

  \pcase{TC-EvalI}:
  \begin{itemize}
    \item By premise \emph{(i)}, we have (1) that
    $\itransp{D}{D_0}{p}{\R}{\emptyset}$ is well-defined.
    \item By mutual inductive hypothesis (Lemma \ref{thm:trans-substbi})
    on (1) and premise \emph{(ii)}, we get (2) that
    $\itransp{D[V/x]}{D_0[V/x]}{p}{\R}{\emptyset} =
    \itransp{D}{D_0}{\bot}{\R}{\emptyset}[\bsq{V}\R/x]$.
    \item We show the conclusion as follows, noting that
    all variables introduced in the translation are fresh (and so $\neq x$):
    \begin{align*}
      &\bsq{\eval{D_0}{\mf{Init}}{p}{D}}\R[\bsq{V}\R/x]\\
      &= (\elet{rr, f, x'}
        {\itransp{D}{D_0}{p}{\R}{\emptyset}}
        {\mf{compInst}(rr, f, \mf{init}\ x')})[\bsq{V}\R/x]
      &&\text{def. of translation}\\
      &= (\elet{rr, f, x'}
        {\itransp{D}{D_0}{p}{\R}{\emptyset}[\bsq{V}\R/x]}
        {\mf{compInst}(rr, f, \mf{init}\ x')})
      &&\text{def. of substitution}\\
      &= (\elet{rr, f, x'}
        {\itransp{D[V/x]}{D_0[V/x]}{p}{\R}{\emptyset}}
        {\mf{compInst}(rr, f, \mf{init}\ x')})
      &&\text{by (2)}\\
      &= \bsq{\eval{D_0[V/x]}{\mf{Init}}{p}{D[V/x]}}\R
      &&\text{def. of translation}\\
      &= \bsq{(\eval{D_0}{\mf{Init}}{p}{D})[V/x]}\R
      &&\text{def. of substitution}
    \end{align*}
  \end{itemize}

  \pcase{TC-EvalS}:
  \begin{itemize}
    \item By premise \emph{(i)}, we have (1) that
    $\stransp{D_0}{p}{\R}{\emptyset}$ is well-defined
    and (2) that $\stransp{D}{p}{\R}{\emptyset}$ is well-defined.
    \item By mutual inductive hypothesis on each of (1) and (2)
    alongside premise \emph{(ii)}, we get (3) that
    $\stransp{D_0[V/x]}{p}{\R}{\emptyset}
    = \stransp{D_0}{p}{\R}{\emptyset}[\bsq{V}\R/x]$
    and (4) that\\$\stransp{D[V/x]}{p}{\R}{\emptyset}
    = \stransp{D}{p}{\R}{\emptyset}[\bsq{V}\R/x]$, respectively.
    \item We show the conclusion as follows:
    \begin{align*}
      &\bsq{\eval{D_0}{\mf{Succ}}{p}{D}}\R[\bsq{V}\R/x]\\
      &= \mf{compInst}(\lambda().\stransp{D_0}{p}{\R}{\emptyset},
        \R(p), \mf{init}\ \stransp{D}{p}{\R}{\emptyset})
        [\bsq{V}\R/x]
      &&\text{def. of translation}\\
      &= \mf{compInst}(\lambda().\stransp{D_0}{p}{\R}{\emptyset}
        [\bsq{V}\R/x],\\
      &\quad \R(p), \mf{init}\ \stransp{D}{p}{\R}{\emptyset}
        [\bsq{V}\R/x])
      &&\text{def. of substitution}\\
      &= \mf{compInst}(\lambda().\stransp{D_0[V/x]}{p}{\R}{\emptyset},
        \R(p), \mf{init}\ \stransp{D[V/x]}{p}{\R}{\emptyset})
      &&\text{by (3) and (4)}\\
      &= \bsq{\eval{D_0[V/x]}{\mf{Succ}}{p}{D[V/x]}}\R
      &&\text{def. of translation}\\
      &= \bsq{(\eval{D_0}{\mf{Succ}}{p}{D})[V/x]}\R
      &&\text{def. of substitution}
    \end{align*}
  \end{itemize}

\end{proof}
\end{lemma}

\begin{lemma}[Substitution, bodies, successor]
\label{thm:trans-substbs}
If \begin{enumerate}[label=(\roman*)]
  \item $\stransp{D}{p}{\R}{\delta}$ is well-defined,
  \item $\bsq{V}\R$ is well-defined, and
  \item $x \notin \mf{img}(\delta)$,
\end{enumerate}
then $\stransp{D[V/x]}{p}{\R}{\delta}
= \stransp{D}{p}{\R}{\delta}[\bsq{V}\R/x]$.

\begin{proof}
  By mutual induction on the structure of $D$
  with Lemmas \ref{thm:trans-substv}, \ref{thm:trans-substc},
  and \ref{thm:trans-substbi}.

  \pcase{$D=\mf{val}\ W$}:
  \begin{itemize}
    \item By premise \emph{(i)}, (1) $\bsq{W}\R$ is well-defined.
    \item By mutual inductive hypothesis (Lemma \ref{thm:trans-substv})
    on (1) and premise \emph{(ii)}, we get (2) that
    $\bsq{W[V/x]}\R = \bsq{W}\R[\bsq{V}\R/x]$.
    \item We show the conclusion as follows:
    \begin{align*}
      &\stransp{\mf{val}\ W}{p}{\R}{\delta}[\bsq{V}\R/x]\\
      &= (\mf{val}\ \bsq{W}\R)[\bsq{V}\R/x] &&\text{def. of translation}\\
      &= \mf{val}\ \bsq{W}\R[\bsq{V}\R/x] &&\text{def. of substitution}\\
      &= \mf{val}\ \bsq{W[V/x]}\R &&\text{by (2)}\\
      &= \stransp{\mf{val}\ W[V/x]}{p}{\R}{\delta}
      &&\text{def. of translation}\\
      &= \stransp{(\mf{val}\ W)[V/x]}{p}{\R}{\delta}
      &&\text{def. of substitution}
    \end{align*}
  \end{itemize}

  \pcase{$D = \elet{y}{M}{D'}$}:
  \begin{itemize}
    \item By premise \emph{(i)}, we have (1) that $\bsq{M}\R$ is well-defined
    and (2) that $\stransp{D'}{p}{\R}{\delta}$ is well-defined.
    \item By mutual inductive hypothesis (Lemma \ref{thm:trans-substc})
    on (1) and premise \emph{(ii)}, we get
    (3) that $\bsq{M[V/x]}\R = \bsq{M}\R[\bsq{V}\R/x]$.
    \item By inductive hypothesis on (2)
    and premises \emph{(ii)} and \emph{(iii)}, we get (4) that\\
    $\stransp{D'[V/x]}{p}{\R}{\delta}
    = \stransp{D'}{p}{\R}{\delta}[\bsq{V}\R/x]$.
    \item By convention, $y \neq x$, so we show the conclusion as follows:
    \begin{align*}
      &\stransp{\elet{y}{M}{D'}}{p}{\R}{\delta}[\bsq{V}\R/x]\\
      &= (\elet{y}{\bsq{M}\R}{\stransp{D'}{p}{\R}{\delta}})
        [\bsq{V}\R/x]
      &&\text{def. of translation}\\
      &= \elet{y}{\bsq{M}\R[\bsq{V}\R/x]}
        {\stransp{D'}{p}{\R}{\delta}[\bsq{V}\R/x]}
      &&\text{def. of substitution}\\
      &= \elet{y}{\bsq{M[V/x]}\R}
        {\stransp{D'}{p}{\R}{\delta}[\bsq{V}\R/x]}
      &&\text{by (3)}\\
      &= \elet{y}{\bsq{M[V/x]}\R}
        {\stransp{D'[V/x]}{p}{\R}{\delta}}
      &&\text{by (4)}\\
      &= \stransp{\elet{y}{M[V/x]}{D'[V/x]}}{p}{\R}{\delta}
      &&\text{def. of translation}\\
      &= \stransp{(\elet{y}{M}{D'})[V/x]}{p}{\R}{\delta}
      &&\text{def. of substitution}
    \end{align*}
  \end{itemize}

  \pcase{$D = \estate{y}{\ell}{W}{D'}$}:
  \begin{itemize}
    \item By premise \emph{(i)}, we have (1) that
    $\stransp{D'}{p}{\R}{\delta}$ is well-defined.
    \item By inductive hypothesis on (1) and premises \emph{(ii)}
    and \emph{(iii)}, we get (2) that\\
    $\stransp{D'[V/x]}{p}{\R}{\delta} =
    \stransp{D'}{p}{\R}{\delta}[\bsq{V}\R/x]$.
    \item Note that by premise \emph{(iii)} we must have (3)
    $\delta(p, \ell, \R)[V/x] = \delta(p, \ell, \R)$, since
    if $\delta(p, \ell, \R) = x'$,
    premise \emph{(iii)} guarantees $x' \neq x$. Analogously,
    (4) $\delta(p, \R)[V/x] = \delta(p, \R)$.
    \item By convention, $x \notin \{y, y_\mf{set}\}$, so
    we show the conclusion as follows:
    \begin{align*}
      &\stransp{\estate{y}{\ell}{W}{D'}}{p}{\R}{\delta}
        [\bsq{V}\R/x]\\
      &= (\mf{let}\ y = \bang \delta(p, \ell, \R)\ \mf{in}\\
      &\quad\ \mf{let}\ y_\mf{set} = \lambda z.(
        \delta(p, \ell, \R) := z \semi \delta(p, \R) := \mf{true})\ \mf{in}
      &&\text{def. of translation}\\
      &\quad\ \stransp{D'}{p}{\R}{\delta}
      )[\bsq{V}\R/x]\\
      &= \mf{let}\ y = \bang \delta(p, \ell, \R)[\bsq{V}\R/x]\ \mf{in}\\
      &\quad\ \mf{let}\ y_\mf{set} = \lambda z.(
        \delta(p, \ell, \R)[\bsq{V}\R/x] := z
        \semi \delta(p, \R)[\bsq{V}\R/x] := \mf{true})\ \mf{in}
      &&\text{def. of substitution}\\
      &\quad\ \stransp{D'}{p}{\R}{\delta}[\bsq{V}\R/x]\\
      &= \mf{let}\ y = \bang \delta(p, \ell, \R)\ \mf{in}\\
      &\quad\ \mf{let}\ y_\mf{set} = \lambda z.(
        \delta(p, \ell, \R) := z
        \semi \delta(p, \R) := \mf{true})\ \mf{in}
      &&\text{by (3) and (4)}\\
      &\quad\ \stransp{D'}{p}{\R}{\delta}[\bsq{V}\R/x]\\
      &= \mf{let}\ y = \bang \delta(p, \ell, \R)\ \mf{in}\\
      &\quad\ \mf{let}\ y_\mf{set} = \lambda z.(
        \delta(p, \ell, \R) := z
        \semi \delta(p, \R) := \mf{true})\ \mf{in}
      &&\text{by (2)}\\
      &\quad\ \stransp{D'[V/x]}{p}{\R}{\delta}\\
      &= \stransp{\estate{y}{\ell}{W[V/x]}{D'[V/x]}}{p}{\R}{\delta}
      &&\text{def. of translation}\\
      &= \stransp{(\estate{y}{\ell}{W}{D'})[V/x]}{p}{\R}{\delta}
      &&\text{def. of substitution}
    \end{align*}
  \end{itemize}

\end{proof}
\end{lemma}

\begin{lemma}[Substitution, bodies, initial]
\label{thm:trans-substbi}
If \begin{enumerate}[label=(\roman*)]
  \item $\itransp{D}{D_0}{p}{\R}{\delta}$ is well-defined,
  \item $\bsq{V}\R$ is well-defined, and
  \item $x \notin \mf{img}(\delta)$,
\end{enumerate}
then $\itransp{D[V/x]}{D_0[V/x]}{p}{\R}{\delta}
= \itransp{D}{D_0}{p}{\R}{\delta}[\bsq{V}\R/x]$.

\begin{proof}
  By mutual induction on the structure of $D$
  with Lemmas \ref{thm:trans-substv}, \ref{thm:trans-substc},
  and \ref{thm:trans-substbs}

  \pcase{$D = \mf{val}\ W$}:
  \begin{itemize}
    \item By premise \emph{(i)}, we have (1) that $\bsq{W}\R$ is well-defined
    and (2) that $\stransp{D_0}{p}{\R}{\delta}$ is well-defined.
    \item By mutual inductive hypothesis (Lemma \ref{thm:trans-substv})
    on (1) and premise \emph{(ii)}, we get
    (3) that $\bsq{W[V/x]}\R = \bsq{W}\R[\bsq{V}\R/x]$.
    \item By mutual inductive hypothesis (Lemma \ref{thm:trans-substbs})
    on (2) and premises \emph{(ii)} and \emph{(iii)}, we get
    (4) that $\stransp{D_0[V/x]}{p}{\R}{\delta} =
    \stransp{D_0}{p}{\R}{\delta}[\bsq{V}\R/x]$.
    \item Note that by premise \emph{(iii)}, we must have (5)
    $\delta(p, \R)[V/x] = \delta(p, \R)$, since if $\delta(p, \R) = x'$,
    premise \emph{(iii)} guarantees $x' \neq x$.
    \item We show the conclusion as follows:
  \end{itemize}
  \begin{align*}
    &\itransp{\mf{val}\ W}{D_0}{p}{\R}{\delta}[\bsq{V}\R/x]\\
    &= \mf{val}\ (\lambda ().\stransp{D_0}{p}{\R}{\delta},
      \delta(p, \R), \bsq{W}\R)[\bsq{V}\R/x]
    &&\text{def. of translation}\\
    &= \mf{val}\ (
      \lambda ().\stransp{D_0}{p}{\R}{\delta}[\bsq{V}\R/x],
      \delta(p, \R)[\bsq{V}\R/x], \bsq{W}\R[\bsq{V}\R/x])
    &&\text{def. of substitution}\\
    &= \mf{val}\ (\lambda ().\stransp{D_0[V/x]}{p}{\R}{\delta},
      \delta(p, \R), \bsq{W[V/x]}\R)
    &&\text{by (5), (6), and (7)}\\
    &= \itransp{\mf{val}\ W[V/x]}{D_0[V/x]}{p}{\R}{\delta}
    &&\text{def. of translation}\\
    &= \itransp{(\mf{val}\ W)[V/x]}{D_0[V/x]}{p}{\R}{\delta}
    &&\text{def. of substitution}
  \end{align*}

  \pcase{$D = \elet{y}{M}{D'}$} can be shown analogously to
  the same case in Lemma \ref{thm:trans-substbs}
  ($\mf{Succ}$-phase substitution).

  \pcase{$D=\estate{y}{\ell}{V}{D'}$}:
  \begin{itemize}
    \item By premise \emph{(i)}, we have (1) that $\bsq{V}\R$ is well-defined
    and (2) that $\itransp{D}{D_0}{p}{\R}{\delta,\ell \mapsto y_\ell}$
    is well-defined.
    \item By mutual inductive hypothesis (Lemma \ref{thm:trans-substv})
    on (1) and premise \emph{(ii)}, we have (3) that
    $\bsq{W[V/x]}\R = \bsq{W}\R[\bsq{V}\R/x]$.
    \item Note that whatever $y_\ell$ we generate in our translation
    will be fresh, and so premise \emph{(iii)} gives us (4) that
    $x \notin \mf{img}(\delta, \ell \mapsto y_\ell)$.
    \item By inductive hypothesis on (2), (4), and premise
    \emph{(ii)}, we get (5) that\\
    $\itransp{D[V/x]}{D_0[V/x]}{p}{\R}{\delta} =
    \itransp{D}{D_0}{p}{\R}{\delta}[\bsq{V}\R/x]$.
    \item Note that by premise \emph{(iii)} we must have
    (6) $\delta(p, \R)[V/x] = \delta(p, \R)$, since if $\delta(p, \R) = x'$,
    premise \emph{(iii)} guarantees $x' \neq x$.
    \item By convention, $x \notin \{y, y_\mf{set}\}$, and also $x \neq y_\ell$
    since $y_\ell$ is translation-fresh. We show the conclusion as follows:
    \begin{align*}
      &\itransp{\estate{y}{\ell}{W}{D'}}{D_0}{p}{\R}{\delta}
        [\bsq{V}\R/x]\\
      &= (\mf{let}\ y_\ell = \mf{ref}(\bsq{W}\R)\ \mf{in}\\
      &\quad\ \mf{let}\ y = \bang y_\ell\ \mf{in}
      &&\text{def. of translation}\\
      &\quad\ \mf{let}\ y_\mf{set} = \lambda z.(
        y_\ell := z \semi \delta(p, \R) := \mf{true})\ \mf{in}\\
      &\quad\ \itransp{D'}{D_0}{p}{\R}{\delta, \ell \mapsto y_\ell}
      )[\bsq{V}\R/x]\\
      &= \mf{let}\ y_\ell = \mf{ref}(\bsq{W}\R[\bsq{V}\R/x])\ \mf{in}\\
      &\quad\ \mf{let}\ y = \bang y_\ell\ \mf{in}
      &&\text{def. of substitution}\\
      &\quad\ \mf{let}\ y_\mf{set} = \lambda z.(
        y_\ell := z \semi \delta(p, \R)[\bsq{V}\R/x] := \mf{true})\ \mf{in}\\
      &\quad\ \itransp{D'}{D_0}{p}{\R}{\delta, \ell \mapsto y_\ell}
        [\bsq{V}\R/x]\\
      &= \mf{let}\ y_\ell = \mf{ref}(\bsq{W[V/x]}\R)\ \mf{in}\\
      &\quad\ \mf{let}\ y = \bang y_\ell\ \mf{in}
      &&\text{by (3), (5), and (6)}\\
      &\quad\ \mf{let}\ y_\mf{set} = \lambda z.(
        y_\ell := z \semi \delta(p, \R) := \mf{true})\ \mf{in}\\
      &\quad\ \itransp{D'[V/x]}{D_0[V/x]}{p}{\R}{\delta, \ell \mapsto y_\ell}\\
      &= \itransp{\estate{y}{\ell}{W[V/x]}{D'[V/x]}}{D_0[V/x]}{p}{\R}{\delta}
      &&\text{def. of translation}\\
      &= \itransp{(\estate{y}{\ell}{W}
        {D'})[V/x]}{(D_0[V/x]}{p}{\R}{\delta}
      &&\text{def. of substitution}
    \end{align*}
  \end{itemize}

\end{proof}
\end{lemma}

\subsubsection[Rho-Extension]{$\R$-Extension}

The following set of ``$\R$-extension lemmas'' state that
a well-defined translation under $\R$ is equivalent to
that translation under extensions of $\R$.
They are fairly easy to prove, but nonetheless important.

\begin{lemma}[$\R$-extension, values]
\label{thm:trans-rextv}
If \begin{enumerate}[label=(\roman*)]
  \item $\bsq{V}\R$ is well-defined and
  \item $\R \subseteq \R'$,
\end{enumerate}
then $\bsq{V}\R = \bsq{V}{\R'}$.
\begin{proof} This can be most formally shown by mutual induction
  over the structure of $V$, but we here provide a direct argument instead.

  By premise \emph{(i)} and definition of well-definedness,
  we have that all instances of $\R({id})$, $\R(p)$, and $\R(p, \ell)$
  which occur in the translation result are well-defined.

  Without loss of generality, take $\R(p)$ as an example, and suppose $\R(p)$
  occurs in the translation result. Then by premise \emph{(i)},
  $\R(p)$ is well-defined.
  Now by premise \emph{(ii)}, since $\R(p)$ is well-defined
  and $\R \subseteq \R'$,
  we must have $\R'(p) = \R(p)$. By analogy, $\R'(p, \ell) = \R(p, \ell)$
  and $\R'({id}) = \R({id})$ for any relevant instances of $\R(p, \ell)$
  and $\R({id})$.

  Choice of $\R$ only affects the translation result by occurrences of
  $\R({id})$, $\R(p)$, and $\R(p, \ell)$, and we have shown that these are
  equivalent to $\R'({id})$, $\R'(p)$, and $\R'(p, \ell)$ respectively.
  Therefore, $\bsq{V}\R = \bsq{V}{\R'}$ and we are done.
\end{proof}
\end{lemma}

The above argument applies to all other $\R$-extension lemmas,
whose proofs we therefore elide.

\begin{lemma}[$\R$-extension, computations]
\label{thm:trans-rextc}
If \begin{enumerate}[label=(\roman*)]
  \item $\bsq{M}\R$ is well-defined and
  \item $\R \subseteq \R'$,
\end{enumerate}
then $\bsq{M}\R = \bsq{M}{\R'}$.
\end{lemma}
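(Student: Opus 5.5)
The plan is to reuse the direct argument given for Lemma~\ref{thm:trans-rextv} rather than a full structural induction, since the computation translation $\bsq{M}\R$ depends on $\R$ only through the lookups it makes. I would first observe, by inspecting the translation equations, that $\R$ enters the result $\bsq{M}\R$ only at occurrences of $\R({id})$ (tag instances), $\R(p)$ (setters, instances, the $\mf{Succ}$ evaluation case, and $\delta(p,\R)$ when $\mf{flag}\notin\dom(\delta)$), and $\R(p,\ell)$ (setters and $\delta(p,\ell,\R)$ when $\ell\notin\dom(\delta)$). Everything else is homomorphic, or depends only on $\delta$ and freshly chosen variables, and is independent of $\R$.

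By premise \emph{(i)} and the definition of well-definedness, every such lookup that actually occurs in $\bsq{M}\R$ is $\neq\bot$. Because $\R\subseteq\R'$ and the functions $\R_{id},\R_p,\R_\ell$ are read off the inductive structure, a defined lookup in $\R$ returns the same value in $\R'$. So $\R({id})=\R'({id})$, $\R(p)=\R'(p)$ and $\R(p,\ell)=\R'(p,\ell)$ at every relevant argument. For the $\delta$-mediated lookups the same holds: if the variable branch fires it does so identically under both environments, and otherwise the call reduces to a defined $\R$-lookup, which is preserved.

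The only step needing care is the case where $M$ contains $\eval{D_0}{\phi}{p}{D}$ or component values. There the translation unfolds into the body translations $\strans{\cdot}$ and $\itrans{\cdot}$, which pass $\R$ unchanged through their recursive calls and also thread a $\delta$. I must check that the set of lookups performed is determined by the syntax together with $\delta$, not by $\R$. In particular, the $\itrans{\cdot}$ base case performs extra lookups through the duplicated $D_0$ under $\strans{\cdot}$. Those lookups still appear in the result, so they are covered by well-definedness.

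If a more formal proof is wanted, I would state the analogous extension lemmas for $\strans{\cdot}$ and $\itrans{\cdot}$, for arbitrary fixed $\delta$ and $p$. I would then do a mutual structural induction with Lemma~\ref{thm:trans-rextv}, where each case is an equational rewrite of the lookups as above.
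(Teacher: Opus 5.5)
Your proposal is correct and is essentially the paper's own proof. The paper elides this lemma by pointing back to the direct argument for Lemma~\ref{thm:trans-rextv}: $\R$ affects $\bsq{M}\R$ only through the lookups $\R({id})$, $\R(p)$ and $\R(p,\ell)$, each of which is defined by premise (i) and therefore unchanged in $\R'$ by premise (ii); the mutual structural induction is mentioned only as the more formal alternative, as you also note. Your extra care with the $\delta$-mediated lookups inside $\strans{\cdot}$ and $\itrans{\cdot}$ is a faithful elaboration of that argument. One small omission is that $\R({id})$ also arises from $\mf{click}_{id}$ computations, not only tag instances, but your general reasoning already covers it.
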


\begin{lemma}[$\R$-extension, bodies, successor]
\label{thm:trans-rextbs}
If \begin{enumerate}[label=(\roman*)]
  \item $\stransp{D}{p}{\R}{\delta}$ is well-defined and
  \item $\R \subseteq \R'$,
\end{enumerate}
then
$\stransp{D}{p}{\R}{\delta}
= \stransp{D}{p}{\R'}{\delta}$.
\end{lemma}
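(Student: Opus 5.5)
The plan is to transfer the direct argument used for Lemma~\ref{thm:trans-rextv} to the successor-phase body translation. Before doing so, we check that the argument really does apply: the parameter $\delta$ is the same on both sides, and the pointer $p$ is untouched. We then observe that $\stransp{D}{p}{\R}{\delta}$ depends on $\R$ only in two ways. The first is through subterm translations $\trans{V}\R$ and $\trans{M}\R$ of the values and computations embedded in $D$. The second is through the $\delta$-functions $\delta(p, \ell, \R)$ and $\delta(p, \R)$ produced by the $\mf{useState}$ clause, and these fall back to $\R(p, \ell)$ and $\R(p)$ exactly when the corresponding key is absent from $\delta$.

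Formally, we proceed by induction on the structure of $D$, mutually with Lemmas~\ref{thm:trans-rextv} and~\ref{thm:trans-rextc}, and with the analogous initial-phase statement if needed.

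In the case $D = \mf{val}\ V$, well-definedness of the translation gives well-definedness of $\trans{V}\R$. We then conclude by Lemma~\ref{thm:trans-rextv}.

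In the case $D = \elet{x}{M}{D'}$, well-definedness splits across the two parts: $\trans{M}\R$ and $\stransp{D'}{p}{\R}{\delta}$ are each well-defined. We apply Lemma~\ref{thm:trans-rextc} to the first and the induction hypothesis to the second, then reassemble.

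In the case $D = \estate{x}{\ell}{V}{D'}$, the translation contains $\delta(p, \ell, \R)$ twice and $\delta(p, \R)$ once, together with $\stransp{D'}{p}{\R}{\delta}$. We split on whether $\ell \in \dom(\delta)$.
\begin{itemize}
\item If $\ell \in \dom(\delta)$, then $\delta(p,\ell,\R) = x_\ell = \delta(p,\ell,\R')$ independently of $\R$.
\item Otherwise $\delta(p,\ell,\R) = \R(p,\ell)$. This is defined because the whole translation is well-defined, and since $\R \subseteq \R'$ the extension agrees there, so $\R'(p,\ell) = \R(p,\ell)$.
\end{itemize}
The flag $\delta(p,\R)$ is handled identically, and the induction hypothesis covers $D'$.

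The only point needing any care is the one just handled: $\R \subseteq \R'$ must mean that $\R'$ agrees with $\R$ wherever $\R$ is defined. The inductive structure of $\R$ gives this provided $\R'$ remains a function, and sufficiency of $\R'$ requires injectivity, which ensures exactly that. Given this, the proof is routine, so in keeping with the paper we would state it briefly and appeal to the argument of Lemma~\ref{thm:trans-rextv} rather than spelling out every case.
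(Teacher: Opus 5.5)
Your proposal is correct and is essentially the paper's argument. The paper elides this lemma by appeal to the direct argument for Lemma~\ref{thm:trans-rextv}: every occurrence of $\R(p)$, $\R(p,\ell)$ or $\R(id)$ in a well-defined result is unchanged when $\R \subseteq \R'$. It also remarks that structural induction is the fully formal version, which is what you carry out case by case.

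One small correction to your closing remark. $\R'(p,\ell) = \R(p,\ell)$ follows from reading $\subseteq$ as extension of partial functions, as the paper does, not from injectivity. Injectivity is not a hypothesis of the lemma, and in any case it says distinct keys get distinct images rather than that each key has a single image.
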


\begin{lemma}[$\R$-extension, bodies, initial]
\label{thm:trans-rextbi}
If \begin{enumerate}[label=(\roman*)]
  \item $\itransp{D}{D_0}{p}{\R}{\delta}$ is well-defined and
  \item $\R \subseteq \R'$,
\end{enumerate}
then $\itransp{D}{D_0}{p}{\R}{\delta}
= \itransp{D}{D_0}{p}{\R'}{\delta}$.
\end{lemma}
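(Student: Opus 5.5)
The plan is to argue directly, exactly as for Lemma~\ref{thm:trans-rextv}, rather than by a full mutual structural induction. The key observation is that the translation $\itransp{D}{D_0}{p}{\R}{\delta}$ depends on the runtime environment $\R$ in only three ways: through occurrences of $\R({id})$, $\R(p)$ and $\R(p,\ell)$. These arise either directly in the $\mf{tagInst}$ and setter cases of the value translation, or indirectly through the $\delta$-functions $\delta(p,\R)$ and $\delta(p,\ell,\R)$, which fall back to $\R(p)$ and $\R(p,\ell)$ when $\delta$ has no entry.

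The first step is to unfold the definition of $\itrans{\cdot}$ and note where $\R$ appears. It is threaded unchanged into four places:
\begin{itemize}
\item the value translations $\trans{V}\R$;
\item the computation translations $\trans{M}\R$;
\item the recursive call $\itransp{D}{D_0}{p}{\R}{\delta,\ell\mapsto x_\ell}$;
\item at the base case, the successor-phase translation $\stransp{D_0}{p}{\R}{\delta}$ and the flag lookup $\delta(p,\R)$.
\end{itemize}
Nothing else in the output, such as the generated variables $x_\ell$ or the $\mf{ref}$, $\bang$ and $:=$ scaffolding, mentions $\R$. So the choice of $\R$ affects the result only through those lookup occurrences.

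The second step uses premise~(i): well-definedness means every lookup occurring in the result is $\neq\bot$. For such an occurrence, say $\R(p,\ell)=\ell'$, we have $(p,\ell)\mapsto\ell'\in\R\subseteq\R'$. Since $\R'$ is a partial function, this gives $\R'(p,\ell)=\ell'$, and identically for $\R(p)$ and $\R({id})$. For the $\delta$-lookups, either $\delta$ supplies a variable, which is independent of $\R$, or the lookup reduces to one of the $\R$-lookups just handled. Hence every $\R$-dependent subterm is unchanged under $\R'$, and the two translations coincide syntactically.

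If a more formal argument is wanted, I would instead run a mutual induction on the structure of $D$, alongside Lemmas~\ref{thm:trans-rextv}, \ref{thm:trans-rextc} and~\ref{thm:trans-rextbs}. The key cases would be:
\begin{itemize}
\item the $\mf{val}\;V$ case, which appeals to Lemma~\ref{thm:trans-rextbs} for $D_0$ and to the flag-lookup argument above;
\item the $\mf{useState}$ case, which applies the induction hypothesis to the extended $\delta$.
\end{itemize}
The only mild subtlety is that well-definedness of the whole result must yield well-definedness of each sub-translation. This holds because each sub-translation appears verbatim as a subterm. I expect no real obstacle beyond this bookkeeping.
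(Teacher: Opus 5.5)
Your proposal is correct and takes the same approach as the paper. The paper elides this proof and appeals to the direct argument given for Lemma~\ref{thm:trans-rextv}: the result depends on $\R$ only through occurrences of $\R({id})$, $\R(p)$ and $\R(p,\ell)$, which well-definedness makes $\neq\bot$, so they are unchanged in $\R'\supseteq\R$. One small slip: your list of places where $\R$ enters omits the flag lookup $\delta(p,\R)$ in the setter generated by the $\mf{useState}$ case, but your generic treatment of $\delta$-lookups already covers it.
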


\begin{lemma}[$\R$-extension, heaps]
\label{thm:trans-rextm}
If \begin{enumerate}[label=(\roman*)]
  \item $\bsq{\heap}\R$ is well-defined and
  \item $\R \subseteq \R'$,
\end{enumerate}
then $\bsq{\heap}\R = \bsq{\heap}{\R'}$.
\end{lemma}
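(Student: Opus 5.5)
The plan is to apply the same direct argument used for Lemma~\ref{thm:trans-rextv}, now to the heap translation $\trans{\heap}\R$ as defined in Section~\ref{sec:transcorrect}. Unfolding the definition, $\trans{\heap}\R$ is a finite union, over $p \in \heap$, of the singleton $\{\R(p) \mapsto \trans{\heap[p].\mf{rerender}}\}$ together with the entries $\{\R(p,\ell) \mapsto \trans{\heap[p].\mf{st}[\ell]}\R\}$ for $\ell \in \heap[p].\mf{st}$. The parameter $\R$ enters this expression in exactly two ways:
\begin{enumerate}[label=(\alph*)]
\item through the location keys $\R(p)$ and $\R(p,\ell)$;
\item through the value translations $\trans{\heap[p].\mf{st}[\ell]}\R$.
\end{enumerate}
The translation of the meta-boolean $\heap[p].\mf{rerender}$ does not depend on $\R$.

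Premise \emph{(i)} says $\trans{\heap}\R$ is well-defined. So every key $\R(p)$ and $\R(p,\ell)$ occurring in it is $\neq \bot$, and every value translation $\trans{\heap[p].\mf{st}[\ell]}\R$ is well-defined. For the keys, $\R \subseteq \R'$ means $\R'$ contains every mapping of $\R$, so $\R'(p) = \R(p)$ and $\R'(p,\ell) = \R(p,\ell)$ at all these arguments. This step also uses the injectivity/functionality implicit in $\R$ being a partial function, so that extending it cannot change an existing binding. For the values, Lemma~\ref{thm:trans-rextv} applied to each $\heap[p].\mf{st}[\ell]$ gives $\trans{\heap[p].\mf{st}[\ell]}\R = \trans{\heap[p].\mf{st}[\ell]}{\R'}$.

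Substituting these equalities termwise into the union yields $\trans{\heap}\R = \trans{\heap}{\R'}$. If more formality is wanted, the same argument can be phrased as an induction on the structure of $\heap$ (empty heap, then $\heap, p \mapsto \pi$), with an inner induction on the store $\pi.\mf{st}$; each step uses only the two facts above.

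The only point needing care is the claim that $\R \subseteq \R'$ preserves defined lookups, that is, that $\R'$ does not rebind $p$ or $(p,\ell)$ to a different location. Since $\R$ is represented as an inductively built structure read as a partial function, I would take $\subseteq$ to mean inclusion of mappings with $\R'$ still functional. Under that reading, agreement on $\dom(\R)$ is immediate.
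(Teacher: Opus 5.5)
Your proposal is correct and follows the paper's argument. The paper proves this lemma by the same direct reasoning it gives for values: every lookup $\R(p)$, $\R(p,\ell)$, $\R({id})$ that occurs in a well-defined result is unchanged when $\R \subseteq \R'$. Your split into location keys plus an appeal to Lemma~\ref{thm:trans-rextv} for the stored values is a slightly more explicit version of that argument. One small correction: only functionality of $\R'$ is needed for agreement on $\dom(\R)$, not injectivity.
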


\subsubsection{Specialization}

The ``specialization'' lemmas allow us to move
from static bodies (parameterized by the nonexistent pointer $\bot$)
to dynamic bodies (parameterized by some concrete pointer $p$).

\begin{lemma}[Specialization, successor]
\label{thm:trans-specs}
If $\stransp{D}{\bot}{\R}{\delta}$ is well-defined,
then for any $p$, we have $\stransp{D}{\bot}{\R}{\delta}
= \stransp{D}{p}{\R}{\delta}$.

\begin{proof} This can be most formally shown by induction over the structure
  of $D$, but we here provide a direct argument instead.

  By premise \emph{(i)}, any instances of $\delta(\bot, \ell, \R)$ or
  $\delta(\bot, \R)$ which appear in the translation are well-defined.

  Consider first an arbitrary instance of $\delta(\bot, \ell, \R)$:
  by definition, either $\ell \mapsto x_\ell \in \delta$
  and $\delta(\bot, \ell, \R) = x_\ell$,
  or $\ell \notin \dom(\delta)$ and $\delta(\bot, \ell, \R)=\R(\bot, \ell)$.
  However, $\R(\bot, \ell) = \bot$ by definition of $\R$, and so
  we therefore must have $\ell \mapsto x_\ell \in \delta$.
  Now, by definition, $\delta(p, \ell, \R) = x_\ell$ for any $p$. Thus,
  we get that $\delta(\bot, \ell, \R) = \delta(p, \ell)$ for any $p$.
  By analogy, we also have $\delta(\bot, \R) = \delta(p, \R)$ for any $p$.

  Choice of $\delta$ only affects the translation result by occurrences of
  $\delta(p, \ell, \R)$ and $\delta(p, \R)$, and we
  have shown that these are equivalent for any $p$ given the premise.
  Therefore, $\stransp{D}{\bot}{\R}{\delta}
  = \stransp{D}{p}{\R}{\delta}$ for any $p$ and we are done.

\end{proof}
\end{lemma}

\begin{lemma}[Specialization, initial]
\label{thm:trans-speci}
If $\itransp{D}{D_0}{\bot}{\R}{\delta}$ is well-defined,
then for any $p$, we have $\itransp{D}{D_0}{\bot}{\R}{\delta}
= \itransp{D}{D_0}{p}{\R}{\delta}$.

\begin{proof}
  Follows from an analogous argument to that presented for
  Lemma \ref{thm:trans-speci} (successor-phase specialization).
\end{proof}
\end{lemma}

\subsubsection[Delta-Elimination]{$\delta$-Elimination}

Our final set of lemmas concerns substitution with respect to $\delta$,
which we neglected before in our substitution lemmas.
Since the $\delta$ function always has two possible results,
one of which is a variable,
we can show that \emph{substituting the other result for that variable}
lets us simply remove the corresponding entry from $\delta$.

\begin{lemma}[Flag elimination, successor]
\label{thm:trans-felims}
If $\stransp{D}{p}{\R}{\delta, \mf{flag} \mapsto x}$
is well-defined, then we have that
$\stransp{D}{p}{\R}{\delta, \mf{flag} \mapsto x}[\R(p)/x]
= \stransp{D}{p}{\R}{\delta}[\R(p)/x]$.

\begin{proof} Once again, this can be most formally shown by induction over
  the structure of $D$, but we here provide a direct argument instead.

  Let $\delta' = \delta, \mf{flag} \mapsto x$, and consider
  an arbitrary instance of $\delta'(p, \R)$ in the translation result.
  By our premise, $\delta'(p, \R)$ is well-defined, and then by definition,
  $\delta'(p, \R) = x$. Therefore, $\delta'(p, \R)[\R(p)/x] = \R(p)$.
  Now, by well-formedness of $\delta'$, we must have
  $\mf{flag} \notin \dom(\delta)$, i.e. $\delta(p, \R) = \R(p)$.
  Thus, $\delta'(p, \R)[\R(p)/x] = \delta(p, \R)$.

  Choice of $\delta$ versus $\delta'$ only affects the translation result
  by occurrences of $\delta(p, \R)$, and we have shown that
  $\delta(p, \R)[\R(p)/x]$ is equivalent to $\delta'(p, \R)$.
  Therefore, we conclude that\\
  $\stransp{D}{p}{\R}{\delta'}[\R(p)/x]
  = \stransp{D}{p}{\R}{\delta}[\R(p)/x]$ and we are done.

\end{proof}
\end{lemma}

The above argument applies to all other $\delta$-elimination lemmas,
whose proofs we therefore elide.

\begin{lemma}[Flag elimination, initial]
\label{thm:trans-felimi}
If $\itransp{D}{D_0}{p}{\R}{\delta, \mf{flag} \mapsto x}$ is well-defined,
then we have that $\itransp{D}{D_0}{p}{\R}{\delta, \mf{flag} \mapsto x}[\R(p)/x]
= \itransp{D}{D_0}{p}{\R}{\delta}[\R(p)/x]$.
\end{lemma}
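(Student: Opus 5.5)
The plan is to follow the argument given for Lemma~\ref{thm:trans-felims} (flag elimination, successor), with the initial-phase translation replacing the successor-phase one. A fully formal proof would be by induction on the structure of $D$, carrying $D_0$ along unchanged. The key observation is that the choice between $\delta' = \delta, \mf{flag} \mapsto x$ and $\delta$ affects $\itransp{D}{D_0}{p}{\R}{\cdot}$ only through the flag lookups $\delta'(p, \R)$, and these occur in exactly three places.
\begin{enumerate}[label=(\alph*)]
\item The returned flag component in the $\mf{val}\ V$ case.
\item The setter bodies generated in the $\mf{useState}$ case.
\item The nested successor-phase translation $\stransp{D_0}{p}{\R}{\cdot}$ inside the returned thunk.
\end{enumerate}

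For occurrences of kinds (a) and (b), I will argue as in Lemma~\ref{thm:trans-felims}. Well-definedness gives that $\delta'(p,\R) = x$, so after substitution this lookup becomes $\R(p)$. Since $\delta'$ is a well-formed environment, $\mf{flag} \notin \dom(\delta)$, and hence $\delta(p,\R) = \R(p)$. Moreover, $\R(p)$ is a location, so substituting for $x$ leaves it unchanged. Both sides therefore agree at these positions.

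Occurrences of kind (c) are the only genuinely new point, and I expect them to be the main obstacle. In the $\mf{val}$ case, the environment passed to $\stransp{D_0}{p}{\R}{\cdot}$ is the \emph{extended} environment accumulated along $D$. This environment still contains the flag entry, together with label entries $\ell \mapsto x_\ell$ added during the recursion. I will therefore state the induction hypothesis for arbitrary environments of the form $\delta_1, \mf{flag}\mapsto x, \delta_2$, where $\delta_2$ has only label entries; since $\delta$ is identified up to reordering, this is the general case.

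At the base case I then invoke Lemma~\ref{thm:trans-felims} on $D_0$. Its well-definedness follows from that of the whole initial translation. This handles the thunk, and we also need that $[\R(p)/x]$ commutes with $\lambda().$ and pairing. For that, I will note that $x$ is translation-fresh and hence not rebound inside the term, so substitution pushes through the binders by the Barendregt convention. The same freshness argument handles the $\Let$ and $\mf{useState}$ cases, where we also need $x \neq x_\ell$ and $x \notin \{y, y_\mf{set}\}$.
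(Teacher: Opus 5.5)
Your proposal is correct and follows essentially the same route as the paper, which reuses the direct argument from the successor-phase flag-elimination lemma. Under $\delta, \mf{flag}\mapsto x$ every flag lookup is $x$, the substitution turns it into $\R(p) = \delta(p,\R)$, and nothing else in the translation depends on the flag entry. Your explicit induction over environments $\delta_1, \mf{flag}\mapsto x, \delta_2$, with an appeal to the successor lemma for the nested $\stransp{D_0}{p}{\R}{\cdot}$, is a more careful spelling-out of that same occurrence-by-occurrence argument.
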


\begin{lemma}[Label elimination, successor]
\label{thm:trans-lelims}
If $\stransp{D}{p}{\R}{\delta, \ell \mapsto x}$
is well-defined, then we have that
$\stransp{D}{p}{\R}{\delta, \ell \mapsto x}[\R(p, \ell)/x]
= \stransp{D}{p}{\R}{\delta}[\R(p, \ell)/x]$.
\end{lemma}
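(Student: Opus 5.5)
We prove the label-elimination lemma for the successor phase by the same direct argument as Lemma~\ref{thm:trans-felims} (flag elimination, successor). A structural induction over $D$ would be equally valid. Write $\delta' = \delta, \ell \mapsto x$.

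The key observation is that the choice between $\delta$ and $\delta'$ can affect $\strans{D}$ only through occurrences of $\delta(p, \ell', \R)$ and $\delta(p, \R)$. These arise solely in the $\mf{useState}$ case of the successor translation. Every other clause either ignores the environment (the $\mf{val}$ case) or passes it unchanged to its subbody (the $\mf{let}$ and $\mf{useState}$ continuations). Computation and value translations inside $D$ do not consult $\delta$ at all.

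We now compare each such occurrence under the two environments.
\begin{itemize}
\item For the flag, $\delta'(p, \R) = \delta(p, \R)$, since the added entry concerns only the label $\ell$.
\item For a label $\ell' \neq \ell$, we likewise have $\delta'(p, \ell', \R) = \delta(p, \ell', \R)$.
\item For the label $\ell$ itself, $\delta'(p, \ell, \R) = x$ by definition, so after substitution it becomes $x[\R(p,\ell)/x] = \R(p, \ell)$. Well-formedness of the extension $\delta, \ell \mapsto x$ gives $\ell \notin \dom(\delta)$, hence $\delta(p, \ell, \R) = \R(p, \ell)$.
\end{itemize}
In every case the two sides agree once $[\R(p,\ell)/x]$ is applied. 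For the label $\ell$ this also needs $\R(p, \ell)$ to be defined. If $\ell$ occurs in $D$, then $\delta(p,\ell,\R)$ appears in the translation, and the well-definedness premise (transferred to $\delta$) guarantees it.

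The remaining parts of the two translations are syntactically identical. Since substitution distributes over the term structure, we conclude $\stransp{D}{p}{\R}{\delta'}[\R(p,\ell)/x] = \stransp{D}{p}{\R}{\delta}[\R(p,\ell)/x]$.

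The main point needing care is capture. We rely on the convention already stipulated in the paper: the $\delta$-variable $x$ is translation-fresh, so it is distinct from every variable bound inside $\strans{D}$, such as the $x$, $x_\mf{set}$ and $y$ introduced in the $\mf{useState}$ clause. Substitution therefore reaches every occurrence of $x$. Also, $\R(p, \ell)$ is a closed location, so no capture can occur when it is substituted.

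If we write the proof as an induction instead, the $\mf{useState}$ case on a label $\ell' = \ell$ is precisely the computation above. Both the dereference $\bang \delta'(p,\ell,\R)$ and the assignment inside the setter become $\R(p,\ell)$ after substitution, and the continuation follows by the inductive hypothesis.
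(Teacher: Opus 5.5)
Your proof is correct and follows the paper's own argument. It is the direct argument given for flag elimination (Lemma~\ref{thm:trans-felims}), transposed from $\delta(p,\R)$ to $\delta(p,\ell,\R)$, with $\ell\notin\dom(\delta)$ yielding $\delta(p,\ell,\R)=\R(p,\ell)$.

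One small correction: the well-definedness premise on the $\delta,\ell\mapsto x$ translation cannot guarantee $\R(p,\ell)\neq\bot$, because there $\ell$ resolves to the variable $x$. Definedness of $\R(p,\ell)$ is instead presupposed by the substitution $[\R(p,\ell)/x]$ in the statement itself; otherwise both sides are equally undefined. It does hold where the lemma is ultimately applied, namely the \textsc{EB-StateI} case of Theorem~\ref{thm:trans-simbi}, with $\R'=\R,(p,\ell)\mapsto\ell'$.
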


\begin{lemma}[Label elimination, initial]
\label{thm:trans-lelimi}
If $\itransp{D}{D_0}{p}{\R}{\delta, \ell\mapsto x}$ is well-defined,
then we have that $\itransp{D}{D_0}{p}{\R}{\delta, \ell \mapsto x}
  [\R(p, \ell)/x]
=\itransp{D}{D_0}{p}{\R}{\delta}[\R(p, \ell)/x]$.
\end{lemma}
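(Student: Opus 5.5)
The plan is to argue directly, following the template of the flag-elimination lemma rather than inducting on $D$. Set $\delta' = \delta, \ell \mapsto x$. The two translations $\itransp{D}{D_0}{p}{\R}{\delta'}$ and $\itransp{D}{D_0}{p}{\R}{\delta}$ are generated by the same clauses. They can differ only at occurrences of $\delta'(p, \ell', \R)$ versus $\delta(p, \ell', \R)$. Flag lookups $\delta(p,\R)$ are unaffected, since $\delta'$ and $\delta$ agree on $\mf{flag}$. This holds both in the $\mf{Init}$-phase clauses and in the thunked $\mf{Succ}$-phase translation of $D_0$ produced at the $\mf{val}$ base case. That base case passes the accumulated environment along, so I will invoke Lemma~\ref{thm:trans-lelims} for that subterm.

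First I dispose of the trivial positions. For $\ell' \neq \ell$ we have $\delta'(p,\ell',\R) = \delta(p,\ell',\R)$ by definition. That value is either some variable $x_{\ell'} \neq x$ (well-formedness of $\delta'$ makes its image injective) or the location $\R(p,\ell')$. In either case substituting $[\R(p,\ell)/x]$ leaves it unchanged on both sides.

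For $\ell' = \ell$, well-definedness of the left-hand side together with the definition gives $\delta'(p,\ell,\R) = x$, so after substitution it becomes $\R(p,\ell)$. Since $\delta'$ is well-formed, $\ell \notin \dom(\delta)$, hence $\delta(p,\ell,\R) = \R(p,\ell)$ on the right, which substitution leaves alone. I then note that $\R(p,\ell)$ must itself be defined. This matters because the substituted term on the left mentions it. It will be supplied wherever the lemma is used, namely in the simulation step for \textsc{EB-StateI}, where the new $\R'$ is extended with $(p,\ell)$.

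The step I expect to be the main obstacle is the $\mf{useState}$ clause of $\itrans{\cdot}$, which extends the environment to $\delta', \ell'' \mapsto x_{\ell''}$. Here I need two facts. First, the fresh $x_{\ell''}$ is distinct from $x$, so the substitution commutes past its binder. Second, the labels within a body are distinct, so $\ell'' \neq \ell$ whenever $\ell$ has already been bound. Typing does not give the second fact for free, since the lemma is stated without typing premises. So I will appeal to the elaboration invariant that labels are unique, and to the standing convention that $\mf{bv}(D)$ avoids $x$. With these in hand, the extension by $\ell''$ commutes with removing $\ell$. All remaining binders ($x$, $x_\mf{set}$, $y$) are fresh by the variable convention, so the substitution distributes through the let-structure homomorphically. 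The two sides are then syntactically equal.
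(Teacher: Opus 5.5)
Your proposal is correct and follows the paper's own route. The paper elides this proof with the remark that the direct, non-inductive argument for flag elimination carries over: the two translations differ only at the lookups $\delta(p,\ell,\R)$, and since $\ell \notin \dom(\delta)$ both sides become $\R(p,\ell)$ after the substitution. Your extra care over the $\mf{useState}$ clause (freshness of $x_{\ell''}$, $\ell'' \neq \ell$), the binder convention, and the definedness of $\R(p,\ell)$ fills in details the paper leaves implicit. One small simplification: you do not need injectivity of $\delta$'s image for the $\ell' \neq \ell$ lookups, because those positions are identical on both sides and stay identical under any substitution.
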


\subsubsection{Simulation}

We state one final lemma about the target language:

\begin{lemma}[Let-transitivity]
\label{thm:trans-let}
If $\sigma \semi M \stept\ M' \semi \sigma'$, then
$\sigma \semi \elet{x}{M}{N} \stept\ \elet{x}{M'}{N} \semi \sigma'$.

\begin{proof}
  By induction on the length of the reduction sequence.
  For each intermediate step in the premise,
  we derive a corresponding step via \textsc{EC-Let}.
\end{proof}
\end{lemma}

As mentioned in Section \ref{sec:transcorrect}, the proofs of simulation
mirror those of $\lamhook$ type preservation.
The novel bits are \blue{typeset in blue}.

\begin{restatable}[Simulation, pure computations]{theorem}{transsimcp}
\label{thm:trans-simcp}
If
\begin{enumerate}[label=(\roman*)]
  \item $\Gamma \semi \Pi \semi \blue{\Phi_H \semi} \Phi_M
    \vdash \blue{\heap \semi} M: A \bang \mf{Pure}$,
  \item \blue{$\Pi \semi \mf{ids}(M) \cup \mf{ids}(\heap) \vDash \R$}, and
  \item $\heap \semi M \step N \semi \heap'$,
\end{enumerate}
then
\begin{enumerate}[label=(\roman*)]
  \item $\Gamma \semi \Pi \semi \Phi_M \vdash N: A \bang \mf{Pure}$,
  \item $\heap' = \heap$,
  \item \blue{$\Pi \semi \mf{ids}(N) \cup \mf{ids}(\heap) \vDash \R$}, and
  \item \blue{$\bsq{\heap}\R \semi \bsq{M}\R
  \stept\ \bsq{N}\R \semi \bsq{\heap}\R$}.
\end{enumerate}
\end{restatable}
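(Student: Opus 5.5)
I will prove this by induction on the derivation of the step premise \emph{(iii)}, reusing the case structure of Theorem~\ref{thm:lamhook-prezcp} (pure preservation). That theorem immediately supplies conclusions \emph{(i)} and \emph{(ii)}: the reduct is typed with $\mf{Pure}$ under the same $\Pi$ and $\Phi_M$, and $\heap' = \heap$. Since the heap is unchanged and $\R$ is not extended, the new work is only in the blue conclusions \emph{(iii)} and \emph{(iv)}.

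For \emph{(iii)}, in every case I will check that $\mf{ids}(N) \subseteq \mf{ids}(M)$, because none of the pure rules creates identifiers. The rules typeable at $\mf{Pure}$ are \textsc{EC-Let}, \textsc{EC-Val}, \textsc{EC-App}, \textsc{EC-IfT}/\textsc{EC-IfF}, \textsc{EC-Fst}/\textsc{EC-Snd}, \textsc{EC-FdN} and \textsc{EC-FdC}. \textsc{EC-InitTag}, the only rule that generates a fresh $id$, is excluded because $\mf{init}$ is $\mf{Mut}$. For substitution, $\mf{ids}(M[V/x]) \subseteq \mf{ids}(M) \cup \mf{ids}(V)$ by a routine structural induction. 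Sufficiency of $\R$ is monotone in the identifier set, so \emph{(iii)} follows from \emph{(ii)} of the statement.

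For \emph{(iv)} I will show a single target step $\bsq{\heap}\R \semi \bsq{M}\R \step \bsq{N}\R \semi \bsq{\heap}\R$, case by case.
\begin{itemize}
\item \textsc{EC-Let}: apply the inductive hypothesis to the subcomputation, after splitting the typing as in Theorem~\ref{thm:lamhook-prezcp}. Then lift the resulting target steps with Lemma~\ref{thm:trans-let} (let-transitivity), using that the translation of $\mf{let}$ is homomorphic.
\item \textsc{EC-Val}: the translation is $\elet{x}{\mf{val}\ \bsq{V}\R}{\bsq{M'}\R}$, which steps by target \textsc{EC-Val} to $\bsq{M'}\R[\bsq{V}\R/x]$. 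This equals $\bsq{M'[V/x]}\R$ by Lemma~\ref{thm:trans-substc} (substitution, computations). Well-definedness of both translations follows from Theorem~\ref{thm:trans-prezc} and Theorem~\ref{thm:trans-prezv} together with sufficiency of $\R$.
\item \textsc{EC-App}: handled the same way as \textsc{EC-Val}.
\item \textsc{EC-If}, \textsc{EC-Fst}/\textsc{EC-Snd}, \textsc{EC-FdN}, \textsc{EC-FdC}: the translation is a homomorphism on these constructs, so the same target rule fires, and the reduct coincides syntactically with $\bsq{N}\R$ (for \textsc{EC-FdC}, modulo the same let-desugaring on both sides).
\end{itemize}

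The main obstacle is the \textsc{EC-App} case. In a $\mf{Pure}$ application the function value could, by canonical forms, be $\setter^\ell_p$, whose translation is a $\lambda$ that performs assignments. I will rule this out by inverting \textsc{TC-App}: a setter has type $A \xrightarrow{\mf{Mut}} \tone$, and the effect on the arrow must match the $\mf{Pure}$ effect of the application, so \textsc{TV-Setter} cannot be used. Hence the function is $\lambda x.M'$, its translation is $\lambda x.\bsq{M'}\R$, and target \textsc{EC-App} applies. I also need the Barendregt convention so that the bound variable avoids the translation-fresh variables when Lemma~\ref{thm:trans-substc} is invoked.
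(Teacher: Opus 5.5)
Your proposal is correct and follows essentially the paper's proof. Both argue by induction on the step derivation over the rules whose redex is typeable at $\mf{Pure}$: let-transitivity (Lemma~\ref{thm:trans-let}) handles \textsc{EC-Let}, translation/substitution commutation (Lemma~\ref{thm:trans-substc}) handles \textsc{EC-Val} and \textsc{EC-App}, the translation is a homomorphism in the remaining cases, and the $\mf{Mut}$-only rules hold vacuously; the only cosmetic difference is that you cite Theorem~\ref{thm:lamhook-prezcp} for conclusions (i)--(ii) where the paper re-derives them inline.

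One small reframing: because you case on the step rule, the \textsc{EC-App} redex is already syntactically $(\lambda x.M)\ V$. Your effect argument about $\setter^\ell_p$ is therefore not an obstacle inside that case; it is exactly what discharges the separate \textsc{EC-AppSet} case, since a setter application can only be typed at $\mf{Mut}$.
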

\begin{proof}
  By definition of premise \emph{(i)}, we have subpremises
  \begin{enumerate}[label=(i-\alph*)]
  \item $\Gamma \semi \Pi \semi \Phi_M \vdash M: A \bang \mf{Pure}$ and
  \item $\Gamma \semi \Pi \semi \Phi_H \vDash \heap$.
  \end{enumerate}

  \blue{We begin with a few general observations:}
  \blue{\begin{itemize}
    \item By premise \emph{(ii)}, we have
    (ii-a) that $\Pi \semi \mf{ids}(M) \vDash \R$
    and (ii-b) that $\Pi \semi \mf{ids}(\heap) \vDash \R$.
    \item To show conclusion \emph{(iv)} in any case, it is necessary to
    show (iv-a) that $\bsq{\heap}\R$ is well-defined
    and (iv-b) that $\bsq{M}\R$ is well-defined:
    \begin{itemize}
      \item (iv-a) is given by Lemma \ref{thm:trans-prezh}
      (heap preservation)
      on premises (i-b) and (ii-b).
      \item (iv-b) is given by Lemma \ref{thm:trans-prezc}
      (computation preservation)
      on premises (i-a) and (ii-a).
    \end{itemize}
  \end{itemize}}

  We proceed by induction on the derivation
  of the computation step judgment \emph{(iii)}.

  \pcase{EC-Let}: We have
  \begin{mathpar}
    \inferrule*[right=EC-Let] {
      \inferrule* {(1)} {\heap \semi M \step M' \semi \heap'}
    } {
      \heap \semi \elet{x}{M}{N} \step \elet{x}{M'}{N} \semi \heap'
    }

    \inferrule*[right=TC-Let] {
      \inferrule* {(2)} {
        \Gamma \semi \Pi \semi \Phi_1 \vdash M: A \bang \mf{Pure}
      }
      \and
      \inferrule* {(3)} {
        \Gamma, x: A \semi \Pi \semi \Phi_2 \vdash N: B \bang \mf{Pure}
      }
    } {
      \Gamma \semi \Pi \semi \Phi_1 \Join \Phi_2 \vdash \elet{x}{M}{N}:
      B \bang \mf{Pure}
    }
  \end{mathpar}
  \begin{itemize}
    \blue{\item By premise \emph{(ii)} and definition,
    we have (5) that $\Pi \semi \mf{ids}(M) \cup \mf{ids}(\heap) \vDash \R$
    and (6) that $\Pi \semi \mf{ids}(N) \cup \mf{ids}(\heap) \vDash \R$.}
    \item By inductive hypothesis on (1), (2), and premise (i-b), we get
    \begin{enumerate}
      \setcounter{enumi}{6}
      \item $\Gamma \semi \Pi \semi \Phi_1 \vdash M': A \bang \mf{Pure}$,
      \item $\heap' = \heap$, satisfying conclusion \emph{(ii)},
      \blue{
        \item $\Pi \semi \mf{ids}(M') \cup \mf{ids}(\heap') \vDash \R$, and
        \item $\bsq{\heap}\R \semi \bsq{M}\R
          \stept\ \bsq{M'}\R \semi \bsq{\heap}\R$.
      }
    \end{enumerate}
    \blue{
      \item By (5) and (8), we get
      $\Pi \semi \mf{ids}(M') \cup \mf{ids}(N)
      \cup \mf{ids}(\heap) \vDash \R$,
      satisfying conclusion \emph{(iii)}.}
    \item We derive conclusion \emph{(i)} as follows:
    \begin{mathpar}
      \inferrule*[right=TC-Let] {
        \inferrule* {(7)} {
          \Gamma \semi \Pi \semi \Phi_1 \vdash M': A \bang \mf{Pure}
        }
        \and
        \inferrule* {(3)} {
          \Gamma, x: A \semi \Pi \semi \Phi_2 \vdash N: B \bang \mf{Pure}
        }
      } {
        \Gamma \semi \Pi \semi \Phi_1 \Join \Phi_2 \vdash \elet{x}{M'}{N}:
        B \bang \mf{Pure}
      }
    \end{mathpar}
    {\color{blue}
    \item We show conclusion \emph{(iv)} as follows:
      \begin{align*}
        &\bsq{\heap}\R \semi \bsq{\elet{x}{M}{N}}\R\\
        \text{(def. of translation)} \quad
        &= \elet{x}{\bsq{M}\R}{\bsq{N}\R} \semi \bsq{\heap}\R\\
        \text{(Lemma \ref{thm:trans-let}, let-transitivity on (10))} \quad
        &= \elet{x}{\bsq{M'}\R}{\bsq{N}\R} \semi \bsq{\heap}\R\\
        \text{(def. of translation)} \quad
        &= \bsq{\elet{x}{M'}{N}}\R \semi \bsq{\heap}\R
      \end{align*}}
  \end{itemize}

  \pcase{EC-Val}: We have
  $\heap \semi \elet{x}{\mf{val}\ V}{M} \step M[V/x] \semi \heap$
  (from which conclusion \emph{(ii)} is immediate) and
  \begin{mathpar}
    \inferrule*[right=EC-Let] {
      \inferrule*[right=EC-Val] {
        \inferrule* {(1)} {
          \Gamma \semi \Pi \semi \Phi_1 \vdash V: A
        }
      } {
        \Gamma \semi \Pi \semi \Phi_1 \vdash \mf{val}\ V: A \bang \mf{Pure}
      }
      \and
      \inferrule* {(2)} {
        \Gamma, x: A \semi \Pi \semi \Phi_2 \vdash M: B \bang \mf{Pure}
      }
    } {
      \Gamma \semi \Pi \semi \Phi_1 \Join \Phi_2 \vdash
      \elet{x}{\mf{val}\ V}{M}: B \bang \mf{Pure}
    }
  \end{mathpar}
  \begin{itemize}
    \item Conclusion \emph{(i)} is given by Lemma \ref{thm:lamhook-substc}
    (type substitution) on (1) and (2).
    \blue{
      \item Conclusion \emph{(iii)} is given by premise \emph{(ii)}
      and the fact that
      $\mf{ids}(M[V/x]) \subseteq \mf{ids}(M) \cup \mf{ids}(V)$.
      \item We show conclusion \emph{(iv)} as follows:
      \begin{align*}
        &\bsq{\heap}\R \semi \bsq{\elet{x}{\mf{val}\ V}{M}}\R\\
        \text{(def. of translation)} \quad
        &= \elet{x}{\mf{val}\ \bsq{V}\R}{\bsq{M}\R} \semi \bsq{\heap}\R\\
        \textsc{(EC-App)}\quad
        &\step \bsq{M}\R[\bsq{V}\R/x] \semi \bsq{\heap}\R\\
        \text{(Lemma \ref{thm:trans-substc}, substitution)}\quad
        &= \bsq{M[V/x]}\R \semi \bsq{\heap}\R
      \end{align*}}
  \end{itemize}

  \pcase{EC-App}: We have
  $\heap \semi (\lambda x.M)\ V \step M[V/x] \semi \heap$
  (from which conclusion \emph{(ii)} is immediate) and
  \begin{mathpar}
    \inferrule*[right=TC-App] {
      \inferrule*[right=TV-Lam] {
        \inferrule* {(1)} {
          \Gamma, x: A \semi \Pi \semi \Phi_1 \vdash M: B \bang \mf{Pure}
        }
        \and
        \mf{Succ}(\Phi_1)
      } {
        \Gamma \semi \Pi \semi \Phi_1 \vdash \lambda x.M:
        A \xrightarrow{\mf{Pure}} B
      }
      \and
      \inferrule* {(2)} {
        \Gamma \semi \Pi \semi \Phi_2 \vdash V: A
      }
    } {
      \Gamma \semi \Pi \semi \Phi_1 \Join \Phi_2 \vdash (\lambda x.M)\ V:
      B \bang \mf{Pure}
    }
  \end{mathpar}
  \begin{itemize}
    \item Conclusion \emph{(i)} is given by Lemma \ref{thm:lamhook-substc}
    (type substitution) on (1) and (2), noting $\Phi \Join \emptyset = \Phi$.
    \blue{
      \item Conclusion \emph{(iii)} is given by premise \emph{(ii)}
      and the fact that
      $\mf{ids}(M[V/x]) \subseteq \mf{ids}(M) \cup \mf{ids}(V)$.
      \item We show conclusion \emph{(iv)} as follows:
      \begin{align*}
        &\bsq{\heap}\R \semi \bsq{(\lambda x.M)\ \bsq{V}}\R\\
        \text{(def. of translation)}\quad
        &= (\lambda x.\bsq{M}\R)\ \bsq{V}\R \semi \bsq{\heap}\R\\
        \textsc{(EC-App)}\quad
        &\step \bsq{M}\R[\bsq{V}\R/x] \semi \bsq{\heap}\R\\
        \text{(Lemma \ref{thm:trans-substc}, substitution)}\quad
        &= \bsq{M[V/x]}\R \semi \bsq{\heap}\R
      \end{align*}}
  \end{itemize}

  \pcase{EC-IfT}: We have
  $\heap \semi \eif{\mf{true}}{M}{N} \step M \semi \heap$
  (from which conclusion \emph{(ii)} is immediate) and
  \begin{mathpar}
    \inferrule*[right=TC-If] {
      \inferrule*[right=TV-True] {
        \inferrule* {(1)} {\mf{Succ}(\Phi_1)}
      } {
        \Gamma \semi \Pi \semi \Phi_1 \vdash \mf{true}: \mf{Bool}
      }
      \
      \inferrule* {(2)} {
        \Gamma \semi \Pi \semi \Phi_2 \vdash M: A \bang \mf{Pure}
      }
      \quad
      \Gamma \semi \Pi \semi \Phi_2 \vdash N: A \bang \mf{Pure}
    } {
      \Gamma \semi \Pi \semi \Phi
      \vdash \eif{\mf{true}}{M}{N}: A \bang \mf{Pure}
    }
  \end{mathpar}
  \begin{itemize}
    \item Conclusion \emph{(i)} is given by Lemma \ref{thm:lamhook-phiwknc}
    ($\Phi$-weakening) on (2) using (1).
    \blue{
      \item By premise \emph{(ii)}, we have
      $\Pi \semi \mf{ids}(M) \cup \mf{ids}(\heap) \vDash \R$,
      which satisfies conclusion \emph{(iii)}.
      \item We show conclusion \emph{(iv)} as follows:
      \begin{align*}
        &\bsq{\heap}\R \semi \bsq{\eif{\mf{true}}{M}{N}}\R\\
        (\text{def. of translation})
        &= \eif{\mf{true}}{\bsq{M}\R}{\bsq{N}\R} \semi \bsq{\heap}\R\\
        \textsc{(EC-IfT)}\quad
        &\step \bsq{M}\R \semi \bsq{\heap}\R
      \end{align*}}
  \end{itemize}

  \pcase{EC-IfF}: Analogous to the above case.

  \pcase{EC-Fst}: We have
  $\heap \semi \mf{fst}\ (V, W) \step \mf{val}\ V \semi \heap$
  (from which conclusion \emph{(ii)} is immediate) and
  \begin{mathpar}
    \inferrule*[right=TC-Fst] {
      \inferrule*[right=TV-Pair] {
        \inferrule* {(1)} {
          \Gamma \semi \Pi \semi \Phi_1 \vdash V: A
        }
        \and
        \inferrule* {(2)} {
          \Gamma \semi \Pi \semi \Phi_2 \vdash W: B
        }
      } {
        \Gamma \semi \Pi \semi \Phi_1 \Join \Phi_2 \vdash (V, W):
        A \times B
      }
    } {
      \Gamma \semi \Pi \semi \Phi_1 \Join \Phi_2
      \vdash \mf{fst}\ (V, W): A \bang \mf{Pure}
    }
  \end{mathpar}
  \begin{itemize}
    \item By Lemma \ref{thm:lamhook-unrestv} (unrestriction) on (2),
    we get (3) that $\mf{Succ}(\Phi_2)$.
    \item By Lemma \ref{thm:lamhook-phiwknv} ($\Phi$-weakening) on (1) and (3),
    we get (4) that $\Gamma \semi \Pi \semi \Phi_1 \Join \Phi_2
    \vdash V: A$.
    \item We derive conclusion \emph{(i)} as follows:
    \begin{mathpar}
      \inferrule*[right=TC-Val] {
        \inferrule*[right=(4)] { } {
          \emptyset \semi \Pi \semi \Phi_1 \Join \Phi_2 \vdash V: A
        }
      } {
        \emptyset \semi \Pi \semi \Phi_1 \Join \Phi_2 \vdash \mf{val}\ V:
        A \bang \mf{Pure}
      }
    \end{mathpar}
    \blue{
      \item By premise \emph{(ii)}, we have
      $\Pi \semi \mf{ids}(V) \cup \mf{ids}(\heap) \vDash \R$,
      which satisfies conclusion \emph{(iii)}.
      \item We show conclusion \emph{(iv)} as follows:
      \begin{align*}
        &\bsq{\heap}\R \semi \bsq{\mf{fst}\ (V, W)}\R\\
        (\text{def. of translation}) \quad
        &= \mf{fst}\ (\bsq{V}\R, \bsq{W}\R) \semi \bsq{\heap}\R\\
        \textsc{(EC-Fst)}\quad
        &\step \mf{val}\ \bsq{V}\R \semi \bsq{\heap}\R\\
        (\text{def. of translation}) \quad
        &= \bsq{\mf{val}\ V}\R \semi \bsq{\heap}\R
      \end{align*}}
  \end{itemize}

  \pcase{EC-Snd}: Analogous to the above case.

  \pcase{EC-FdN}: We have
  $\heap \semi \mf{fold}\ V_1\ V_2\ [] \step \mf{val}\ V_2 \semi \heap$
  (from which conclusion \emph{(ii)} is immediate) and
  \begin{mathpar}
    \inferrule*[right=TC-Fold] {
      \inferrule* {(1)} {
        \Gamma \semi \Pi \semi \Phi_1 \vdash V_1:
        A \xrightarrow{\mf{Pure}} B \xrightarrow{\mf{Pure}} B
      }
      \and
      \inferrule* {(2)} {
        \Gamma \semi \Pi \semi \Phi_2 \vdash V_2: B
      }
      \and
      \inferrule* {(3)} {
        \Gamma \semi \Pi \semi \Phi_3 \vdash []: \mf{List}\ A
      }
    } {
      \Gamma \semi \Pi \semi \Phi_1 \Join \Phi_2 \Join \Phi_3 \vdash
      \mf{fold}\ V_1\ V_2\ []: B \bang \mf{Pure}
    }
  \end{mathpar}
  \begin{itemize}
    \item By Lemma \ref{thm:lamhook-unrestv} (unrestriction) on (1) and (3),
      we get (4) that $\mf{Succ}(\Phi_1)$ and (5) that $\mf{Succ}(\Phi_3)$,
      respectively.
    \item By Lemma \ref{thm:lamhook-phiwknv} ($\Phi$-weakening) on (2)
      using (4) and (5),
      we get (6) that $\Gamma \semi \Pi \semi \Phi_1 \Join \Phi_2 \Join \Phi_3
      \vdash V_2: B$.
    \item We derive conclusion \emph{(i)} as follows:
    \begin{mathpar}
      \inferrule*[right=TC-Val] {
        \inferrule*[right=(6)] { } {
          \Gamma \semi \Pi \semi \Phi_1 \Join \Phi_2 \Join \Phi_3 \vdash
          V_2: B
        }
      } {
        \Gamma \semi \Pi \semi \Phi_1 \Join \Phi_2 \Join \Phi_3 \vdash
        \mf{val}\ V_2: B \bang \mf{Pure}
      }
    \end{mathpar}
    \blue{
      \item By premise \emph{(ii)}, we have
      $\Pi \semi \mf{ids}(V_2) \cup \mf{ids}(\heap) \vDash \R$,
      which satisfies conclusion \emph{(iii)}.
      \item We show conclusion \emph{(iv)} as follows:
      \begin{align*}
        &\bsq{\heap}\R \semi \bsq{\mf{fold}\ V_1\ V_2\ []}\R\\
        (\text{def. of translation}) \quad
        &= \mf{fold}\ \bsq{V_1}\R\ \bsq{V_2}\R\ []
          \semi \bsq{\heap}\R\\
        \textsc{(EC-FdN)}\quad
        &\step \mf{val}\ \bsq{V_2}\R \semi \bsq{\heap}\R\\
        (\text{def. of translation}) \quad
        &= \bsq{\mf{val}\ V_2}\R \semi \bsq{\heap}\R
      \end{align*}}
  \end{itemize}

  \pcase{EC-FdC}: We have
  $\heap \semi \mf{fold}\ V_1\ V_2\ (V_h :: V_t)
  \step V_1\ V_h\ (\mf{fold}\ V_1\ V_2\ V_t) \semi \heap$,
  whose right-hand term desguars to
  $\elet{x}{V_1\ V_h}{\elet{y}{\mf{fold}\ V_1\ V_2\ V_t}{x\ y}}$.
  Conclusion \emph{(ii)} is immediate, and we also have:
  \begin{mathpar}
    \inferrule*[right=TC-Fold] {
      (1) \and (2) \and
      \inferrule* {
        \inferrule* {(3)} {
          \Gamma \semi \Pi \semi \Phi_3 \vdash V_h: A
        }
        \and
        \inferrule* {(4)} {
          \Gamma \semi \Pi \semi \Phi_4 \vdash V_t: \mf{List}\ A
        }
      } {
        \Gamma \semi \Pi \semi \Phi_3 \Join \Phi_4
        \vdash V_h :: V_t: \mf{List}\ A
      }
    } {
      \Gamma \semi \Pi \semi \Phi_1 \Join \Phi_2 \Join \Phi_3 \Join \Phi_4
      \vdash
      \mf{fold}\ V_1\ V_2\ (V_h :: V_t): B \bang \mf{Pure}
    }
  \end{mathpar}
  where (1) and (2) are identical to the previous case.
  \begin{itemize}
    \item By Lemma \ref{thm:lamhook-unrestv} (unrestriction) on (1)-(4),
      we get that $\mf{Succ}(\Phi_i)$ for $i \in \{1, 2, 3, 4\}$.
      It follows (5) that $\mf{Succ}(\Phi)$, where $\Phi=\Join_{i=1}^4 \Phi_i$.
    \item We derive conclusion \emph{(i)} by way of
      intermediate derivations (6)-(8) as follows,
      using $\Phi$ intuitionistically by (5) and weakening (1)-(4)
      appropriately:
  \end{itemize}
  \begin{mathpar}
    \inferrule*[right=TC-App] {
      \inferrule*[right=TV-Var] { } {
        \Gamma_6 \semi \Pi \semi \Phi
        \vdash x: B \xrightarrow{\mf{Pure}} B
      }
      \and
      \inferrule*[right=TV-Var] { } {
        \Gamma_6 \semi \Pi \semi \Phi
        \vdash y: B
      }
    } {
      (6) \quad
      \Gamma_6 \semi \Pi \semi \Phi \vdash
      x\ y: B \bang \mf{Pure}
    }

    \inferrule*[right=TC-Fold] {
      \inferrule*[right=(1)] { } {
        \Gamma_7 \semi \Pi \semi \Phi \vdash V_1:
        A \xrightarrow{\mf{Pure}} B \xrightarrow{\mf{Pure}} B
      }
      \
      \inferrule*[right=(2)] { } {
        \Gamma_7 \semi \Pi \semi \Phi \vdash V_2: B
      }
      \
      \inferrule*[right=(4)] { } {
        \Gamma_7 \semi \Pi \semi \Phi \vdash V_t: \mf{List}\ A
      }
    } {
      (7) \quad
      \Gamma_7 \semi \Pi \semi \Phi \vdash
      \mf{fold}\ V_1\ V_2\ V_t: B \bang \mf{Pure}
    }

    \inferrule*[right=TC-Let] {
      [(7),\ \Gamma_7=\Gamma_8]
      \and
      [(6),\ \Gamma_6=\Gamma_8, y: B]
    } {
      (8) \quad
      \Gamma_8 \semi \Pi \semi \Phi \vdash
      \elet{y}{\mf{fold}\ V_1\ V_2\ V_t}{\cdots}: B \bang \mf{Pure}
    }

    \inferrule*[right=TC-Let] {
      \inferrule*[Right=TC-App] {
        \inferrule*[right=(1)] { } {
          \Gamma \semi \Pi \semi \Phi \vdash V_1:
          A \xrightarrow{\mf{Pure}} B \xrightarrow{\mf{Pure}} E
        }
        \
        \inferrule*[right=(3)] { } {
          \Gamma \semi \Pi \semi \Phi \vdash V_h: A
        }
      } {
        \Gamma \semi \Pi \semi \Phi \vdash
        V_1\ V_h: B \xrightarrow{\mf{Pure}} B \bang \mf{Pure}
      }
      \
      [(8),\ \Gamma_8 = \Gamma, x: B \xrightarrow{\mf{Pure}} B]
    } {
      \Gamma \semi \Pi \semi \Phi
      \vdash
      \elet{x}{V_1\ V_h}{\cdots}: B \bang \mf{Pure}
    }
  \end{mathpar}
  \begin{itemize}
    \blue{
      \item By premise \emph{(ii)}, we have
      $\Pi \semi \mf{ids}(V_1) \cup \mf{ids}(V_2) \cup \mf{ids}(V_h)
        \cup \mf{ids}(V_t) \cup \mf{ids}(\heap) \vDash \R$,
      which satisfies conclusion \emph{(iii)}.
      \item We show conclusion \emph{(iv)} as follows:
      \begin{align*}
        &\bsq{\heap}\R \semi \bsq{\mf{fold}\ V_1\ V_2\ (V_h :: V_t)}\R\\
        (\text{def. of translation}) \quad
        &= \mf{fold}\ \bsq{V_1}\R\ \bsq{V_2}\R\ (\bsq{V_h}\R :: \bsq{V_t}\R)
          \semi \bsq{\heap}\R\\
        \textsc{(EC-FdC)}\quad
        &\step \bsq{V_1}\R\ \bsq{V_h}\R\
          (\mf{fold}\ \bsq{V_1}\R\ \bsq{V_2}\R\ \bsq{V_t}\R)
          \semi \bsq{\heap}\R\\
        (\text{def. of translation}) \quad
        &= \bsq{V_1\ V_h\ (\mf{fold}\ V_1\ V_2\ V_t)}\R
          \semi \bsq{\heap}\R\\
      \end{align*}}
  \end{itemize}

  \textbf{All other cases} are untypeable with the $\mf{Pure}$ effect,
  and so the theorem holds vacuously.
\end{proof}

\begin{restatable}[Simulation, bodies, successor]{theorem}{transsimbs}
\label{thm:trans-simbs}
If
\begin{enumerate}[label=(\roman*)]
\item $\Gamma \semi \Pi \semi \Phi_H \semi \Phi_D \vdash \heap \semi D: \Ell$,
\item $\Ell \subseteq \Pi[p]$,
\item \blue{$\Pi \semi \mf{ids}(D) \cup \mf{ids}(\heap) \vDash \R$}, and
\item $\heap \semi D \step^\mf{Succ}_p D' \semi \heap'$,
\end{enumerate}
then for some $\Phi_D'$, we have:
\begin{enumerate}[label=(\roman*)]
\item $\Gamma \semi \Pi \semi \Phi_D' \vdash D': \Ell'$,
\item $\Phi_H \Join \Phi_D' = \Phi_H \Join \Phi_D$,
\item $\heap' = \heap$,
\item $\Ell' \subseteq \Ell$,
\item \blue{$\Pi \semi \mf{ids}(D') \cup \mf{ids}(\heap) \vDash \R$}, and
\item \blue{$\bsq{\heap}\R \semi \stransp{D}{p}{\R}{\emptyset}
\stept\
\stransp{D'}{p}{\R}{\emptyset} \semi \bsq{\heap}{\R}$}.
\end{enumerate}
\end{restatable}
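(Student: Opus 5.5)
We proceed by induction on the derivation of the body step judgment \emph{(iv)}, reusing the proof of Theorem~\ref{thm:lamhook-prezbs} (preservation for $\step^\mf{Succ}_p$) verbatim for the non-blue conclusions \emph{(i)}--\emph{(iv)}. For the new conclusions, we first note, as in Theorem~\ref{thm:trans-simcp}, that premise \emph{(iii)} splits into $\Pi \semi \mf{ids}(D) \vDash \R$ and $\Pi \semi \mf{ids}(\heap) \vDash \R$. By Theorems~\ref{thm:trans-prezh} and \ref{thm:trans-prezbs}, these make $\bsq{\heap}\R$ and $\stransp{D}{p}{\R}{\emptyset}$ well-defined. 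The latter uses $\emptyset \semi \Pi \vDash_p \emptyset : \Ell$, which follows from premise \emph{(ii)} and $p \in \dom(\Pi)$. Since $\heap' = \heap$ and $\R$ is unchanged, conclusion \emph{(v)} follows in every case from $\mf{ids}(D') \subseteq \mf{ids}(D) \cup \mf{ids}(\heap)$. The only new identifiers come from substituting $\heap[p].\mf{st}[\ell]$ in \textsc{EB-StateS}, and those already lie in $\mf{ids}(\heap)$.

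\pcase{EB-Let}: By inversion, the subcomputation $M$ is $\mf{Pure}$. Theorem~\ref{thm:trans-simcp} then gives $\bsq{\heap}\R \semi \bsq{M}\R \stept\ \bsq{N}\R \semi \bsq{\heap}\R$. Because $\stransp{\elet{x}{M}{D}}{p}{\R}{\emptyset} = \elet{x}{\bsq{M}\R}{\stransp{D}{p}{\R}{\emptyset}}$, Lemma~\ref{thm:trans-let} lifts this to conclusion \emph{(vi)}.

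\pcase{EB-Val}: The translation is $\elet{x}{\mf{val}\ \bsq{V}\R}{\stransp{D}{p}{\R}{\emptyset}}$, which steps by \textsc{EC-Val} to $\stransp{D}{p}{\R}{\emptyset}[\bsq{V}\R/x]$. Lemma~\ref{thm:trans-substbs} identifies this with $\stransp{D[V/x]}{p}{\R}{\emptyset}$; its side condition $x \notin \mf{img}(\emptyset)$ is trivial.

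\pcase{EB-StateS}: This is the main case. Since $\delta = \emptyset$, we have $\delta(p,\ell,\R) = \R(p,\ell)$ and $\delta(p,\R) = \R(p)$, so the translation is
\[\elet{x}{\bang \R(p,\ell)}{\elet{x_\mf{set}}{\mf{val}\ \lambda y.(\R(p,\ell) := y \semi \R(p) := \mf{true})}{\stransp{D}{p}{\R}{\emptyset}}}.\]
First, by \textsc{EC-Let} and \textsc{EC-Get}, $\bang \R(p,\ell)$ steps to $\mf{val}\ \bsq{\heap}\R[\R(p,\ell)]$. By the definition of the heap translation this value is $\bsq{W}\R$ with $W = \heap[p].\mf{st}[\ell]$; it is well-defined because $\ell \in \Pi[p]$ by premise \emph{(ii)} and sufficiency of $\R$. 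Next, two \textsc{EC-Val} steps substitute $\bsq{W}\R$ for $x$ and then the setter lambda for $x_\mf{set}$. That lambda is exactly $\bsq{\setter^\ell_p}\R$. Two applications of Lemma~\ref{thm:trans-substbs} then rewrite the result as $\stransp{D[W/x,\setter^\ell_p/x_\mf{set}]}{p}{\R}{\emptyset}$. The only obstacle here is ensuring that $\bsq{W}\R$ and $\bsq{\setter^\ell_p}\R$ are well-defined, which they are by Theorem~\ref{thm:trans-prezv} and sufficiency. In total this case takes three target steps, so $\stept$ is satisfied.

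\pcase{EB-StateI}: Impossible, since the step is at phase $\mf{Succ}$.
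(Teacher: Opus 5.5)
Your proposal is correct and follows the paper's proof essentially step for step. It uses the same well-definedness observations via Theorems~\ref{thm:trans-prezh} and~\ref{thm:trans-prezbs}, pure simulation (Theorem~\ref{thm:trans-simcp}) plus Lemma~\ref{thm:trans-let} for \textsc{EB-Let}, \textsc{EC-Val} plus Lemma~\ref{thm:trans-substbs} for \textsc{EB-Val}, and the three-step \textsc{EC-Get}/\textsc{EC-Val}/\textsc{EC-Val} reduction using $\bsq{\heap}\R[\R(p,\ell)] = \bsq{W}\R$ for \textsc{EB-StateS}. One point to tighten: in the \textsc{EB-Let} case, take sufficiency of $\R$ for $\mf{ids}(N) \cup \mf{ids}(\heap)$ directly from conclusion \emph{(iii)} of Theorem~\ref{thm:trans-simcp}, rather than from your informal remark about where new identifiers can arise.
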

\begin{proof}

By definition of premise \emph{(i)}, we have subpremises
\begin{enumerate}[label=(i-\alph*)]
\item $\Gamma \semi \Pi \semi \Phi_M \vdash M: A \bang \mf{Pure}$ and
\item $\Gamma \semi \Pi \semi \Phi_H \vDash \heap$.
\end{enumerate}

\blue{We begin with a few general observations:}
  \blue{\begin{itemize}
    \item By premise \emph{(iii)}, we have
    (iii-a) that $\Pi \semi \mf{ids}(D) \vDash \R$
    and (iii-b) that $\Pi \semi \mf{ids}(\heap) \vDash \R$.
    \item To show conclusion \emph{(vi)} in any case, it is necessary to
    show (vi-a) that $\bsq{\heap}\R$ is well-defined
    and (vi-b) that $\stransp{D}{p}{\R}{\emptyset}$ is well-defined:
    \begin{itemize}
      \item (vi-a) is given by Lemma \ref{thm:trans-prezh}
      (heap preservation)
      on premises (iii-b) and (i-b).
      \item (vi-b) is given by Lemma \ref{thm:trans-prezbs}
      (successor body preservation)
      on premises (i-a), \emph{(ii)} (which gives us that
      $\emptyset \semi \Pi \vDash_p \emptyset: \Ell$), and (iii-a).
    \end{itemize}
  \end{itemize}}

  We proceed by induction on the derivation
  of the body step judgment \emph{(iv)}.

  \pcase{EB-Let}: We have
  \begin{mathpar}
    \inferrule*[right=EB-Let] {
      \inferrule* {(1)} {\heap \semi M \step N \semi \heap'}
    } {
      \heap \semi \elet{x}{M}{D}
      \step^\mf{Init}_p \elet{x}{N}{D} \semi \heap'
    }

    \inferrule*[right=TB-Let] {
      \inferrule* {(2)} {
        \Gamma \semi \Pi \semi \Phi_1 \vdash M: A \bang \mf{Pure}
      }
      \and
      \inferrule* {(3)} {\Gamma, x: A \semi \Pi \semi \Phi_2 \vdash D: \Ell}
    } {
      \Gamma \semi \Pi \semi \Phi_1 \Join \Phi_2 \vdash \elet{x}{M}{D}: \Ell
    }
  \end{mathpar}
  \begin{itemize}
    \blue{\item By premise \emph{(iii)} and definition,
    we have (4) that $\Pi \semi \mf{ids}(M) \cup \mf{ids}(\heap) \vDash \R$
    and (5) that $\Pi \semi \mf{ids}(D) \cup \mf{ids}(\heap) \vDash \R$.}
    \item By Theorem \ref{thm:trans-simcp} (pure simulation) on (1), (2),
    \blue{(4)}, and premise (i-b), we get
    \begin{enumerate}
      \setcounter{enumi}{5}
      \item $\Gamma \semi \Pi \semi \Phi_1 \vdash N: A \bang \mf{Pure}$,
      \item $\heap' = \heap$, satisfying conclusion \emph{(iii)},
      \blue{
        \item $\Pi \semi \mf{ids}(N) \cup \mf{ids}(\heap) \vDash \R$, and
        \item $\bsq{\heap}\R \semi \bsq{M}\R
        \stept\ \bsq{N}\R \semi \bsq{\heap}\R$.}
    \end{enumerate}
    \item We choose $\Phi_D' = \Phi_D$, trivializing conclusion \emph{(ii)}.
      Conclusion \emph{(iv)} is obvious since $\Ell' = \Ell$.
    \blue{\item By (5) and (8), we get $\Pi \semi \mf{ids}(N)
    \cup \mf{ids}(D) \cup \mf{ids}(\heap) \vDash \R$, satisfying
    conclusion \emph{(v)}}.
    \item We derive conclusion \emph{(i)} as follows:
    \begin{mathpar}
      \inferrule*[right=TB-Let] {
        \inferrule*[right=(6)] { } {
          \Gamma \semi \Pi \semi \Phi_1 \vdash N: A \bang \mf{Pure}
        }
        \and
        \inferrule*[right=(3)] { } {
          \Gamma, x: A \semi \Pi \semi \Phi_2 \vdash D: \Ell
        }
      } {
        \Gamma \semi \Pi \semi \Phi_1 \Join \Phi_2 \vdash \elet{x}{N}{D}: \Ell
      }
    \end{mathpar}
    \blue{\item We show conclusion \emph{(vi)} as follows:
      \begin{align*}
        &\bsq{\heap}\R \semi \stransp{\elet{x}{M}{D}}{p}{\R}{\emptyset}\\
        (\text{def. of translation}) \quad
        &= \elet{x}{\bsq{M}\R}{\stransp{D}{p}{\R}{\emptyset}}
          \semi \bsq{\heap}\R\\
        \text{(Lemma \ref{thm:trans-let}, let-transitivity on (9))} \quad
        &\stept\ \elet{x}{\bsq{N}\R}{\stransp{D}{p}{\R}{\emptyset}}
        \semi \bsq{\heap}\R\\
        (\text{def. of translation}) \quad
        &= \stransp{\elet{x}{N}{D}}{p}{\R}{\emptyset} \semi \bsq{\heap}\R
      \end{align*}}
  \end{itemize}

  \pcase{EB-Val}: We have
  $\heap \semi \elet{x}{\mf{val}\ V}{D}
  \step^\mf{Succ}_p D[V/x] \semi \heap$
  (from which conclusion \emph{(iii)} is immediate) and
  \begin{mathpar}
    \inferrule*[right=TB-Let] {
      \inferrule*[right=TC-Val] {
        \inferrule* {(1)} {\Gamma \semi \Pi \semi \Phi_1 \vdash V: A}
      } {
        \Gamma \semi \Pi \semi \Phi_1 \vdash \mf{val}\ V: A \bang \mf{Pure}
      }
      \and
      \inferrule* {(2)} {\Gamma, x: A \semi \Pi \semi \Phi_2 \vdash D: \Ell}
    } {
      \Gamma \semi \Pi \semi \Phi_1 \Join \Phi_2
      \vdash \elet{x}{\mf{val}\ V}{D}: \Ell
    }
  \end{mathpar}
  \begin{itemize}
    \item Choose $\Phi_D' = \Phi_D$, trivializing conclusion \emph{(ii)}.
    \item Conclusion \emph{(i)} is given by Lemma \ref{thm:lamhook-substb}
    (type substitution) on (1) and (2).
    \item Conclusion \emph{(iv)} is obvious since $\Ell' = \Ell$.
    \blue{
      \item Conclusion \emph{(v)} is given by premise \emph{(iii)}
      and the fact that
      $\mf{ids}(D[V/x]) \subseteq \mf{ids}(D) \cup \mf{ids}(V)$.
      \item We show conclusion \emph{(vi)} as follows:
      \begin{align*}
        &\bsq{\heap}\R
          \semi \stransp{\elet{x}{\mf{val}\ V}{D}}{p}{\R}{\emptyset}\\
        (\text{def. of translation}) \quad
        &= \elet{x}{\mf{val}\ \bsq{V}\R}{\stransp{D}{p}{\R}{\emptyset}}
          \semi \bsq{\heap}\R\\
        \textsc{(EC-Val)}\quad
        &\step \stransp{D}{p}{\R}{\emptyset}
          [\bsq{V}\R/x] \semi \bsq{\heap}\R\\
        \text{(Lemma \ref{thm:trans-substbs}, substitution)}\quad
        &= \stransp{D[V/x]}{p}{\R}{\emptyset} \semi \bsq{\heap}\R
      \end{align*}}
  \end{itemize}

  \pcase{EB-StateI}: Impossible,
  since the step rule requires that $\phi=\mf{Init}$.

  \pcase{EB-StateS}: We have
  $\heap \semi \estate{x}{\ell}{V}{D}
  \step^\mf{Succ}_p D[W/x, \setter^\ell_p/x_\mf{set}]
  \semi \heap$ (from which conclusion \emph{(iii)} is immediate),
  where $W = \heap[p].\mf{st}[\ell]$, and
  \begin{mathpar}
  \inferrule*[right=TB-State] {
    \inferrule* {(1)} {
      \Gamma \semi \Pi \semi \Phi_1 \vdash V: A
    }
    \and
    \inferrule* {(2)} {
      \Gamma, x: A, x_\mf{set}: A \xrightarrow{\mf{Mut}} \tone \semi \Pi
      \semi \Phi_2 \vdash D: \Ell
    }
    \and
    \ell \notin \dom(\Ell)
  } {
    \Gamma \semi \Pi \semi \Phi_1 \Join \Phi_2 \vdash
    \estate{x}{\ell}{V}{D}: \Ell, \ell: A
  }
  \end{mathpar}

  \pcase{EB-StateS}: We have
  $\heap \semi \estate{x}{\ell}{V}{D}
  \step^\mf{Succ}_p D[W/x, \setter^\ell_p/x_\mf{set}]
  \semi \heap$ (from which conclusion \emph{(iii)} is immediate),
  where $W = \heap[p].\mf{st}[\ell]$, and
  \begin{mathpar}
  \inferrule*[right=TB-State] {
    \inferrule* {(1)} {
      \Gamma \semi \Pi \semi \Phi_1 \vdash V: A
    }
    \and
    \inferrule* {(2)} {
      \Gamma, x: A, x_\mf{set}: A \xrightarrow{\mf{Mut}} \tone \semi \Pi
      \semi \Phi_2 \vdash D: \Ell
    }
    \and
    \ell \notin \dom(\Ell)
  } {
    \Gamma \semi \Pi \semi \Phi_1 \Join \Phi_2 \vdash
    \estate{x}{\ell}{V}{D}: \Ell, \ell: A
  }
  \end{mathpar}
  \begin{itemize}
    \item By premise \emph{(ii)},
    we know $\Ell, \ell: A \subseteq \Pi[p]$,
    i.e. (3) $\Pi[p][\ell]=A$.
    \item We derive the setter type as follows
    (noting $\mf{Succ}(\emptyset)$ holds trivially):
    \begin{mathpar}
      \inferrule*[right=TV-Setter] {
        \inferrule*[right=(3)] { } {\ell: A \in \Pi[p]}
        \and
        \mf{Succ}(\emptyset)
      } {
        (4) \quad \Gamma \semi \Pi \semi \emptyset \vdash \setter^\ell_p:
        A \xrightarrow{\mf{Mut}} \tone
      }
    \end{mathpar}
    \item By premise (i-b), we have
    that $\Gamma \semi \Pi \semi \Phi_W \vdash W: \Pi[p][\ell]$
    for some $\Phi_3$ with $\Phi_H = \Phi_W \Join \Phi_H'$.
    By (3), this is equivalent to (5) that
    $\Gamma \semi \Pi \semi \Phi_W \vdash W: A$.
    \item Choose $\Phi_D' = \Phi_D \Join \Phi_W$.
      We are given that $\Phi_D \Join (\Phi_W \Join \Phi_H')$
      is defined, so $\Phi_D \Join \Phi_W$ is also defined.
    \item By \ref{thm:lamhook-unrestv} (unrestriction) on (5),
      we have $\mf{Succ}(\Phi_W)$. It follows that
      $(\Phi_D \Join \Phi_W) \Join (\Phi_W \Join \Phi_H')
      = \Phi_D \Join (\Phi_W \Join \Phi_H')$, satisfying conclusion \emph{(ii)}.
    \item Lemma \ref{thm:lamhook-unrestv} (unrestriction) on (1)
      gives that $\mf{Succ}(\Phi_1)$.
      We then apply Lemma \ref{thm:lamhook-phiwknb} ($\Phi$-weakening)
      to (2), getting (6)
      $\Gamma, x: A, x_\mf{set}: A \xrightarrow{\mf{Mut}} \tone
        \semi \Pi \semi \Phi_1 \Join \Phi_2
        \vdash D[W/x,\setter^\ell_p/x_\mf{set}]: \Ell$.
    \item We apply Lemma \ref{thm:lamhook-substb} (substitution) twice to (6),
    using (4) and (5), to obtain that\\
    $\Gamma \semi \Pi \semi \Phi_D'
      \vdash D[W/x,\setter^\ell_p/x_\mf{set}]: \Ell$,
      satisfying conclusion \emph{(i)}.
    \item Conclusion \emph{(iv)} is obvious,
      since $\Ell \subseteq \Ell, \ell: A$.
    \blue{\item Conclusion \emph{(v)} is shown as follows:
      \begin{itemize}
        \item By premise \emph{(iii)} and definition, it suffices to show that\\
        $\mf{ids}(\heap) \cup \mf{ids}(D[W/x, \setter^\ell_p/x_\mf{set}])
        \subseteq \mf{ids}(W) \cup \mf{ids}(D) \cup \mf{ids}(\heap)$.
        \item In general, $\mf{ids}(D[V/x]) \subseteq \mf{ids}(D)
        \cup \mf{ids}(V)$. Therefore, it suffices to show that\\
        $\mf{ids}(\heap) \cup \mf{ids}(D) \cup \mf{ids}(W)
        \cup \mf{ids}(\setter^\ell_p/x_\mf{set})
        \subseteq \mf{ids}(V) \cup \mf{ids}(D) \cup \mf{ids}(\heap)$.
        \item By definition, $\mf{ids}(\setter^\ell_p) = \emptyset$
        and $\mf{ids}(W) \subseteq \mf{ids}(\heap)$. We thus get that\\
        $\mf{ids}(\heap) \cup \mf{ids}(D)
        \subseteq \mf{ids}(V) \cup \mf{ids}(D) \cup \mf{ids}(\heap)$,
        which clearly holds, and so we have shown conclusion \emph{(v)}.
      \end{itemize}
    }
    \blue{
      \item By definition of $\bsq{\heap}\R$, we know
      $(\bsq{\heap}\rho)[\R(p, \ell)] = \bsq{W}\R$.
      \item We show conclusion \emph{(vi)} as follows:
    \begin{align*}
      &\bsq{\heap}\R \semi \stransp{\estate{x}{\ell}{V}{D}}{p}{\R}{\emptyset}\\
      (\text{def. of translation}) \quad =\
      &\mf{let}\ x = \bang \R(p, \ell)\ \mf{in}\\
      &\mf{let}\ x_\mf{set} = \mf{val}\ \lambda y. (
        \R(p, \ell) := y \semi \R(p) := \mf{true})\ \mf{in}\\
      &\stransp{D}{p}{\R}{\emptyset} \semi \bsq{\heap}\R\\
      \textsc{(EC-Get, EC-Val)} \quad \stept\
      &(\mf{let}\ x_\mf{set} = \mf{val}\ \lambda y. (
        \R(p, \ell) := y \semi \R(p) := \mf{true})\ \mf{in}\\
      &\stransp{D}{p}{\R}{\emptyset})[\bsq{W}\R/x]
      \semi \bsq{\heap}\R\\
      (\textsc{EC-Val}\text{, def. of translation}) \quad \step\
      &\stransp{D}{p}{\R}{\emptyset}[\bsq{W}\R/x,
      \bsq{\setter^\ell_p}\R/x_\mf{set}]
      \semi \bsq{\heap}\R\\
      (\text{Lemma \ref{thm:trans-substbs}, substitution}) \quad =\
      &\stransp{D[W/x, \setter^\ell_p/x_\mf{set}]}{p}{\R}{\emptyset}
        \semi \bsq{\heap}\R
    \end{align*}
    }
  \end{itemize}
\end{proof}

\begin{restatable}[Simulation, bodies, initial]{theorem}{transsimbi}
\label{thm:trans-simbi}
If
\begin{enumerate}[label=(\roman*)]
\item $\Gamma \semi \Pi \semi \Phi_H \semi \Phi_D \vdash \heap \semi D: \Ell$,
\item $\Gamma \semi \Pi \semi \Phi_0 \vdash D_0: \Ell \uplus \Pi[p]$,
\item $p \notin \dom(\Phi_H \Join \Phi_D \Join \Phi_0)$,
\item \blue{$\Pi \semi \mf{ids}(D) \cup \mf{ids}(D_0)
  \cup \mf{ids}(\heap) \vDash \R$}, and
\item $\heap \semi D \step^\mf{Init}_p D' \semi \heap'$,
\end{enumerate}
then for some $\Phi_H'$, $\Pi'$, and $\R' \supseteq \R$, we have:
\begin{enumerate}[label=(\roman*)]
\item $\Gamma \semi \Pi' \semi \Phi_H' \semi \Phi_D
  \vdash \heap \semi D: \Ell'$,
\item $\Gamma \semi \Pi' \semi \Phi_0 \vdash D_0: \Ell' \uplus \Pi'[p]$,
\item $\Phi_H' \Join \Phi_D = \Phi_H \Join \Phi_D$,
\item $\Pi \sqsubseteq_{p: \mf{Init}} \Pi'$,
\item \blue{$\Pi' \semi \mf{ids}(D') \cup \mf{ids}(D_0)
  \cup \mf{ids}(\heap') \vDash \R'$}, and
\item \blue{$\bsq{\heap}\R \semi
  \itransp{D}{D_0}{p}{\R}{\emptyset}
  \stept\
  \itransp{D'}{D_0}{p}{\R'}{\emptyset}
  \semi \bsq{\heap'}{\R'}$}.
\end{enumerate}
\begin{proof}

  By definition of premise \emph{(i)}, we have subpremises
  \begin{enumerate}[label=(i-\alph*)]
    \item $\Gamma \semi \Pi \semi \Phi_D \vdash D: \Ell$ and
    \item $\Gamma \semi \Pi \semi \Phi_H \vDash \heap$.
  \end{enumerate}

  Likewise, to show conclusion \emph{(i)} it suffices to pick some
  $\Pi' \sqsupseteq_{p: \mf{Init}} \Pi$ and
  $\Phi_H'$ such that conclusion \emph{(iii)} holds and:
  \begin{enumerate}[label=(i-\alph*)]
    \item $\Gamma \semi \Pi' \semi \Phi_D \vdash D': \Ell'$ and
    \item $\Gamma \semi \Pi' \semi \Phi_H' \vDash \heap'$.
  \end{enumerate}

  \blue{We begin with a few general observations:}
  \blue{\begin{itemize}
    \item By premise \emph{(iv)}, we have
    (iv-a) that $\Pi \semi \mf{ids}(D) \cup \mf{ids}(D_0) \vDash \R$
    and (iv-b) that $\Pi \semi \mf{ids}(\heap) \vDash \R$.
    \item To show conclusion \emph{(vi)} in any case, it is necessary to
    show (vi-a) that $\bsq{\heap}\R$ is well-defined
    and (vi-b) that $\itransp{D}{D_0}{p}{\R}{\emptyset}$
    is well-defined:
    \begin{itemize}
      \item (vi-a) is given by Lemma \ref{thm:trans-prezh}
      (heap preservation)
      on premises (i-b) and (iv-b).
      \item (vi-b) is given by Lemma \ref{thm:trans-prezbi}
      (initial body preservation)
      on premises (i-a), \emph{(ii)}, and (iv-a),
      noting that $\emptyset \semi \Pi \vDash_p \emptyset:
      (\Ell \uplus \Pi[p]) \setminus \Ell$ by definition.
    \end{itemize}
  \end{itemize}}

  Unless otherwise mentioned, we choose $\Phi_H' = \Phi_H$, $\Pi' = \Pi$,
  \blue{and $\R' = \R$},
  such that:
  \begin{itemize}
    \item conclusion (i-b) is given by premise \emph{(i-b)}
      when $\heap' = \heap$,
    \item conclusion \emph{(ii)} is given by premise \emph{(ii)}
      when $\Ell' = \Ell$, and
    \item conclusions \emph{(iii)} and \emph{(iv)} are trivial.
  \end{itemize}

  We proceed by induction on the derivation of the
  body step judgment \emph{(iv)}.

  \pcase{EB-Let}: We have
  \begin{mathpar}
    \inferrule*[right=EB-Let] {
      \inferrule* {(1)} {\heap \semi M \step N \semi \heap'}
    } {
      \heap \semi \elet{x}{M}{D}
      \step^\mf{Init}_p \elet{x}{N}{D} \semi \heap'
    }

    \inferrule*[right=TB-Let] {
      \inferrule* {(2)} {
        \Gamma \semi \Pi \semi \Phi_1 \vdash M: A \bang \mf{Pure}
      }
      \and
      \inferrule* {(3)} {\Gamma, x: A \semi \Pi \semi \Phi_2 \vdash D: \Ell}
    } {
      \Gamma \semi \Pi \semi \Phi_1 \Join \Phi_2 \vdash \elet{x}{M}{D}: \Ell
    }
  \end{mathpar}
  \begin{itemize}
    \blue{\item By premise \emph{(iv)} and definition,
    we have (4) that $\Pi \semi \mf{ids}(M) \cup \mf{ids}(\heap) \vDash \R$
    and (5) that $\Pi \semi \mf{ids}(D) \cup \mf{ids}(\heap) \vDash \R$.}
    \item By Theorem \ref{thm:trans-simcp} (pure simulation) on (1), (2),
    \blue{(4)}, and premise (i-b), we get
    \begin{enumerate}
      \setcounter{enumi}{5}
      \item $\Gamma \semi \Pi \semi \Phi_1 \vdash N: A \bang \mf{Pure}$,
      \item $\heap' = \heap$, satisfying conclusion (i-b),
      \blue{
        \item $\Pi \semi \mf{ids}(N) \cup \mf{ids}(\heap) \vDash \R$, and
        \item $\bsq{\heap}\R \semi \bsq{M}\R
        \stept\ \bsq{N}\R \semi \bsq{\heap}\R$.}
    \end{enumerate}
    \blue{\item By (5) and (8), we get $\Pi \semi \mf{ids}(N)
    \cup \mf{ids}(D) \cup \mf{ids}(\heap) \vDash \R$, satisfying
    conclusion \emph{(v)}}.
    \item We derive conclusion (i-a) as follows:
    \begin{mathpar}
      \inferrule*[right=TB-Let] {
        \inferrule*[right=(4)] { } {
          \Gamma \semi \Pi \semi \Phi_1 \vdash N: A \bang \mf{Pure}
        }
        \and
        \inferrule* {(3)} {\Gamma, x: A \semi \Pi \semi \Phi_2 \vdash D: \Ell}
      } {
        \Gamma \semi \Pi \semi \Phi_1 \Join \Phi_2 \vdash \elet{x}{N}{D}: \Ell
      }
    \end{mathpar}
    \blue{\item We show conclusion \emph{(vi)} as follows:
      \begin{align*}
        &\bsq{\heap}\R \semi \itransp{\elet{x}{M}{D}}{D_0}{p}{\R}{\emptyset}\\
        (\text{def. of translation}) \quad
        &= \elet{x}{\bsq{M}\R}{\itransp{D}{D_0}{p}{\R}{\emptyset}}
          \semi \bsq{\heap}\R\\
        \text{(Lemma \ref{thm:trans-let}, let-transitivity on (9))} \quad
        &\stept\ \elet{x}{\bsq{N}\R}{
          \itransp{D}{D_0}{p}{\R}{\emptyset}} \semi \bsq{\heap}\R\\
        (\text{def. of translation}) \quad
        &= \itransp{\elet{x}{N}{D}}{D_0}{p}{\R}{\emptyset}
          \semi \bsq{\heap}\R
      \end{align*}}
  \end{itemize}

  \pcase{EB-Val}: We have
  $\heap \semi \elet{x}{\mf{val}\ V}{D}
  \step^\mf{Init}_p D[V/x] \semi \heap$
  (from which conclusion (i-b) is immediate) and
  \begin{mathpar}
  \inferrule*[right=TB-Let] {
    \inferrule*[right=TC-Val] {
      \inferrule* {(1)} {\Gamma \semi \Pi \semi \Phi_1 \vdash V: A}
    } {
      \Gamma \semi \Pi \semi \Phi_1 \vdash \mf{val}\ V: A \bang \mf{Pure}
    }
    \and
    \inferrule* {(2)} {\Gamma, x: A \semi \Pi \semi \Phi_2 \vdash D: \Ell}
  } {
    \Gamma \semi \Pi \semi \Phi_1 \Join \Phi_2 \vdash
    \elet{x}{\mf{val}\ V}{D}: \Ell
  }
  \end{mathpar}
  \begin{itemize}
    \item Conclusion (i-a) is given by Lemma \ref{thm:lamhook-substb}
    (type substitution) on (1) and (2).
    \blue{
      \item Conclusion \emph{(v)} is given by premise \emph{(iv)}
      and the fact that
      $\mf{ids}(D[V/x]) \subseteq \mf{ids}(D) \cup \mf{ids}(V)$.
      \item We show conclusion \emph{(vi)} as follows:
      \begin{align*}
        \bsq{\heap}\R \semi \itransp{\elet{x}{\mf{val}\ V}{D}}
          {D_0}{p}{\R}{\emptyset}
        &= \elet{x}{\mf{val}\ \bsq{V}\R}{\itransp{D}{D_0}{p}{\R}{\emptyset}}
          \semi \bsq{\heap}\R\\
        \textsc{(EC-Val)}\quad
        &\step \itransp{D}{D_0}{p}{\R}{\emptyset}
          [\bsq{V}\R/x] \semi \bsq{\heap}\R\\
        \text{(Lemma \ref{thm:trans-substbi}, substitution)}\quad
        &= \itransp{D[V/x]}{D_0[V/x]}{p}{\R}{\emptyset}
          \semi \bsq{\heap}\R\\
        (\text{Convention: } x \notin \mf{fv}(D_0))\quad
        &= \itransp{D[V/x]}{D_0}{p}{\R}{\emptyset}
          \semi \bsq{\heap}\R
      \end{align*}}
  \end{itemize}

  \pcase{EB-StateI}: We have,
  for $\heap' = \heap \mid \heap[p].\mf{st}[\ell] \mapsto V$,
  \begin{gather*}
    \heap \semi \estate{x}{\ell}{V}{D}
    \step^\mf{Init}_p D[V/x, \setter^\ell_p/x_\mf{set}]
    \semi \heap'
  \end{gather*}
  \begin{mathpar}
  \inferrule*[right=TB-State] {
    \inferrule* {(1)} {\Gamma \semi \Pi \semi \emptyset \vdash V: A}
    \and
    \inferrule* {(2)} {
      \Gamma, x: A, x_\mf{set}: A \xrightarrow{\mf{Mut}} \tone \semi \Pi
      \vdash D: \Ell
    }
    \and
    \inferrule* {(3)} {\ell \notin \dom(\Ell)}
  } {
    \Gamma \semi \Pi \vdash \estate{x}{\ell}{V}{D}: \Ell, \ell: A
  }
  \end{mathpar}
  \begin{itemize}
    \item Choose $\Phi_H' = \Phi_H \Join \Phi_1$,
    $\Pi' = \Pi \mid \Pi[p][\ell] \mapsto A$,
    \blue{and $\R' = \R, (p, \ell) \mapsto \ell'$,
    where $\ell'$ is the location generated fresh by the target language.}
    \item By definition, $\Pi \sqsubseteq_{p: \mf{Init}} \Pi'$,
    satisfying conclusion \emph{(iv)}.
    \item We show conclusion \emph{(iii)} as follows:
    \begin{itemize}
      \item Since $(\Phi_1 \Join \Phi_2) \Join \Phi_H$ is given as defined,
      we know $\Phi_1 \Join \Phi_H$ is defined.
      \item To show conclusion \emph{(iii)},
      we want to show $(\Phi_1 \Join \Phi_2) \Join (\Phi_H \Join \Phi_1)
      = (\Phi_1 \Join \Phi_2) \Join \Phi_H$.
      \item These are equal by associativity/commutativity of $\Join$
      and by Lemma \ref{thm:lamhook-unrestv} (unrestriction)
      on (1), meaning $\Phi_1 \Join \Phi_1 = \Phi_1$.
    \end{itemize}
    \item We show conclusion (i-a) as follows:
    \begin{itemize}
      \item By premise \emph{(iii)}, we have (5)
        that $p: \mf{Init} \Join \Phi_1$ is defined
        and (6) that $p: \mf{Init} \Join \Phi_2$ is defined.
      \item By Lemma \ref{thm:lamhook-piwknv} ($\Pi$-weakening)
      on (1), (4), and (5),
      we get (7) that $\Gamma \semi \Pi' \semi \Phi_1 \vdash V: A$.
      \item By Lemma \ref{thm:lamhook-piwknb} ($\Pi$-weakening)
      on (2), (4), and (6), we get (8) that\\$\Gamma, x: A, x_\mf{set}:
      A \xrightarrow{\mf{Mut}} \tone \semi \Pi' \semi \Phi_2 \vdash D: \Ell$.
      \item Since $\ell: A \in \Pi'[p]$ by definition and $\mf{Succ}(\emptyset)$
      trivially, we derive the setter type as follows:
      \begin{mathpar}
      \inferrule*[right=TV-Setter] {
        \ell: A \in \Pi'[p]
        \and
        \mf{Succ}(\emptyset)
      } {
        (9) \quad \Gamma \semi \Pi' \semi \emptyset \vdash \setter^\ell_p:
        A \xrightarrow{\mf{Mut}} \tone
      }
      \end{mathpar}
      \item We apply Lemma \ref{thm:lamhook-substb} (substitution) twice to (8),
      using (7) and (9), to get that\\
      $\Gamma \semi \Pi' \semi \Phi_1 \Join \Phi_2 \vdash
      D[V/x, \setter^\ell_p/x_\mf{set}]: \Ell$,
      i.e. conclusion (i-a) holds.
    \end{itemize}
    \item We show conclusion (i-b) as follows:
    \begin{itemize}
      \item We first want (10) that $\dom(\Pi') = \dom(\heap')$.
      \begin{itemize}
        \item By definition of each, it suffices to show that
        $\dom(\Pi) \cup \{p\} = \dom(\heap) \cup \{p\}$.
        \item Premise (i-b) says that $\dom(\Pi) = \dom(\heap)$,
        so (10) holds.
      \end{itemize}
      \item We also want (11) that
      $\dom(\Pi'[p]) = \dom(\heap'[p].\mf{st})$.
      \begin{itemize}
        \item By definition of each, it suffices to show
        $\dom(\Pi[p]) \cup \{\ell\} = \dom(\heap[p].\mf{st}) \cup \{\ell\}$.
        \item Premise (i-b) says that
        $\dom(\Pi[p]) = \dom(\heap[p].\mf{st})$,
        so (11) also holds.
      \end{itemize}
      \item Finally, we want (12) that
      $\Gamma \semi \Pi' \semi \Phi_1
      \vdash \heap'[p].\mf{st}[\ell]: \Pi'[p][\ell]$. This is given by (7).
      \item By premise (i-b) and by definition,
      (10)-(12) are sufficient to show that\\
      $\Gamma \semi \Pi' \semi \Phi_H \Join \Phi_1 \vDash \heap'$,
      i.e. conclusion (i-b) holds.
    \end{itemize}
    \item We show conclusion \emph{(ii)} as follows:
    \begin{itemize}
      \item By Lemma \ref{thm:lamhook-piwknb} on premise \emph{(ii)},
        using the fact that $p \notin \dom(\Phi_0)$ from premise \emph{(iii)},
        we get that $\Gamma \semi \Pi' \semi \Phi_0 \vdash D_0:
          \Pi[p] \uplus \Ell, \ell: A$.
      \item We want to show that
      $\Pi'[p] \uplus \Ell = \Pi[p] \uplus \Ell, \ell: A$.
      \item By definition, it suffices to show that
      $\Pi[p], \ell: A \uplus \Ell = \Pi[p] \uplus \Ell, \ell: A$.
      \item It further suffices to show $\Pi[p], \ell: A \uplus \Ell$
      is well-defined, i.e $\Pi[p], \ell: A \cap \Ell = \emptyset$.
      This follows from (3) and premise \emph{(ii)}.
    \end{itemize}
    \blue{\item We show conclusion \emph{(v)} as follows:
    \begin{itemize}
      \item Given premise \emph{(iv)} and definition of $\Pi'$ and $\R'$,
      we first need to show that $\R'(p, \ell)$ is well-defined;
      this is given by definition of $\R'$.
      \item Then, it suffices to show that
      $\mf{ids}(D[V/x, \setter^\ell_p/x_\mf{set}]) \cup \mf{ids}(\heap')
      \subseteq
      \mf{ids}(D) \cup \mf{ids}(V) \cup \mf{ids}(\heap)$.
      \item By definition of $\heap'$, we have
      $\mf{ids}(\heap') = \mf{ids}(\heap) \cup \mf{ids}(V)$,
      and by definition we have $\mf{ids}(\setter^\ell_p) = \emptyset$.
      The conclusion follows from the fact that
      $\mf{ids}(D[V/x]) \subseteq \mf{ids}(D) \cup \mf{ids}(V)$.
    \end{itemize}}
    \blue{\item Finally, we show conclusion \emph{(vi)} as follows:
    \begin{itemize}
      \item By Lemmas \ref{thm:trans-rextm}/\ref{thm:trans-rextv}
      ($\R$-extension) and definition of translation, we have (13) that\\
      $\bsq{\heap}\R, \ell' \mapsto \bsq{V}\R =
      \bsq{\heap}{\R'}, \ell' \mapsto \bsq{V}{\R'}$.
      \item Since $\R'(p, \ell) = \ell'$, by definition of translation
      we have that\\$\bsq{\heap}{\R'}, \ell' \mapsto \bsq{V}{\R'} =
      \bsq{\heap \mid \heap[p].\mf{st}[\ell] \mapsto V}{\R'}$.
      Then from (13) we get (14) that
      $\bsq{\heap}\R, \ell' \mapsto \bsq{V}\R = \bsq{\heap'}{\R'}$.
      \item By Lemma \ref{thm:trans-rextv} ($\R$-extension) and definition
      of $\R'$ and translation, we have (15)\\
      $\lambda y.(\ell' := y \semi \R(p) := \mf{true})
      = \bsq{\setter^\ell_p}{\R'}$.
      \item By Lemma \ref{thm:trans-rextbi} ($\R$-extension),
      we have (16) $\itransp{D}{D_0}{p}{\R}{\delta} =
      \itransp{D}{D_0}{p}{\R'}{\delta}$.
      \item By Lemma \ref{thm:trans-lelimi} ($\delta$-elimination), we get that
      \\
      $\itransp{D}{D_0}{p}{\R'}{\ell \mapsto x_\ell}[\R'(p, \ell)/x_\ell] =
      \itransp{D}{D_0}{p}{\R}{\emptyset}
        [\R'(p, \ell)/x_\ell]$.
      Since $x_\ell$ is translation-fresh, the above reduces to (17)\\
      $\itransp{D}{D_0}{p}{\R}{\ell \mapsto x_\ell}[\R'(p, \ell)/x_\ell] =
      \itransp{D}{D_0}{p}{\R'}{\emptyset}$.
      \item By Lemma \ref{thm:trans-substbi} (substitution), we have\\
      $\itransp{D}{D_0}{p}{\R'}{\emptyset}[\bsq{V}{\R'}/x]
      = \itransp{D[V/x]}{D_0[V/x]}{p}{\R'}{\emptyset}$.
      By convention, $x \notin \mf{fv}(D_0)$,
      so $D_0[V/x] = D_0$.
      Applying this principle twice, we get (18)\\
      $\itransp{D}{D_0}{p}{\R'}{\emptyset}
        [\bsq{V}{\R'}/x, \bsq{\setter^\ell_p}{\R'}/x_\mf{set}]
      = \itransp{D[V/x, \setter^\ell/p]}{D_0}{p}{\R'}{\emptyset}$.
      \item We show the conclusion as follows:
      \begin{align*}
        &\bsq{\heap}\R \semi \itransp{\mf{let}\ x, x_\mf{set} =\
          \mf{useState}^\ell\ V\ \mf{in}\ D}
          {D_0}{p}{\R}{\emptyset}\\
        (\text{def. of translation}) \quad
        =\ &\mf{let}\ x_\ell = \mf{ref}(\bsq{V}\R)\ \mf{in}\\
        &\mf{let}\ x = \bang x_\ell\ \mf{in}\\
        &\mf{let}\ x_\mf{set} = \mf{val}\ \lambda y.
          (x_\ell := y \semi \R(p) := \mf{true})\ \mf{in}\\
        &\itransp{D}{D_0}{p}{\R}{\ell \mapsto x_\ell}
        \semi \bsq{\heap}\R\\
        (\textsc{EC-Ref, EC-Val}) \quad \stept\
        &\mf{let}\ x = \bang \ell'\ \mf{in}\\
        (\ell'\ \text{fresh}) \qquad\ \ \
        &\mf{let}\ x_\mf{set} = \mf{val}\ \lambda y.
          (\ell' := y \semi \R(p) := \mf{true})\ \mf{in}\\
        &\itransp{D}{D_0}{p}{\R}{\ell \mapsto x_\ell}
        [\ell'/x_\ell]
        \semi \bsq{\heap}\R, \ell' \mapsto \bsq{V}\R\\
        (\textsc{EC-Get, EC-Val}) \quad \stept\
        &\itransp{D}{D_0}{p}{\R}{\ell \mapsto x_\ell}
        [\ell'/x_\ell, \bsq{V}\R/x, \\
        &\quad\lambda y. (\ell' := y \semi \R(p) := \mf{true})/x_\mf{set}]
        \semi \bsq{\heap}\R, \ell' \mapsto \bsq{V}\R\\
        \text{(14), (15), and (16)} \quad =\
        &\itransp{D}{D_0}{p}{\R'}{\ell \mapsto x_\ell}
        [\R'(p, \ell)/x_\ell, \bsq{V}{\R'}/x,
        \bsq{\setter^\ell_p}{\R'}/x_\mf{set}] \semi \bsq{\heap'}{\R'}\\
        (17) \quad =\
        &\itransp{D}{D_0}{p}{\R'}{\emptyset}
        [\bsq{V}{\R'}/x, \bsq{\setter^\ell_p}{\R'}/x_\mf{set}]
        \semi \bsq{\heap'}{\R'}\\
        (18) \quad =\
        &\itransp{D[V/x, \setter^\ell_p/x_\mf{set}]}{D_0}{p}{\R'}{\emptyset}
        \semi \bsq{\heap'}{\R'}\\
      \end{align*}
    \end{itemize}}
  \end{itemize}

  \pcase{EB-StateS}: Impossible,
  since the step rule requires that $\phi=\mf{Succ}$.

\end{proof}
\end{restatable}

\transsimcm*

\begin{proof}
  By definition of premise \emph{(i)}, we have subpremises
  \begin{enumerate}[label=(i-\alph*)]
    \item $\Gamma \semi \Pi \semi \Phi_M \vdash M: A \bang \mf{Mut}$ and
    \item $\Gamma \semi \Pi \semi \Phi_H \vDash \heap$.
  \end{enumerate}
  Likewise, to show conclusion \emph{(iii)}, it suffices to show
  for our chosen $\Phi_N$, $\Phi_H'$, and $\Pi'$ that
  \begin{enumerate}[label=(iii-\alph*)]
    \item $\Gamma \semi \Pi' \semi \Phi_N \vdash N: A \bang \mf{Mut}$ and
    \item $\Gamma \semi \Pi' \semi \Phi_H' \vDash \heap'$.
  \end{enumerate}

  Unless otherwise mentioned, we choose $\Phi_N = \Phi_M$, $\Phi_H' = \Phi_H$,
  and $\Pi' = \Pi$, such that
  are trivial, and
  \begin{itemize}
    \item conclusions \emph{(i)} and \emph{(ii)} are trivial,
    \item conclusion (iii-b) is trivial by premise (i-b) when $\heap' = \heap$,
      and
    \blue{\item to show conclusion \emph{(iv)} given premise \emph{(ii)},
      it suffices to show $\mf{ids}(N) \cup \mf{ids}(\heap')
      \subseteq \mf{ids}(M) \cup \mf{ids}(\heap)$.}
  \end{itemize}

  \blue{We begin by establishing translation well-definedness:}
  \blue{\begin{itemize}
    \item By premise \emph{(ii)}, we have
    (ii-a) that $\Pi \semi \mf{ids}(M) \vDash \R$
    and (ii-b) that $\Pi \semi \mf{ids}(\heap) \vDash \R$.
    \item To show conclusion \emph{(iv)} in any case, it is necessary to
    show (iv-a) that $\bsq{\heap}\R$ is well-defined
    and (iv-b) that $\bsq{M}\R$ is well-defined:
    \begin{itemize}
      \item (iv-a) is given by Lemma \ref{thm:trans-prezh}
      (heap preservation)
      on premises (i-b) and (ii-b).
      \item (iv-b) is given by Lemma \ref{thm:trans-prezc}
      (computation preservation)
      on premises (i-a) and (ii-a).
    \end{itemize}
  \end{itemize}}

  We proceed by induction on the derivation of the
  computation step judgment \emph{(iii)}.

  \pcase{EC-Let}: We have
  \begin{mathpar}
    \inferrule*[right=EC-Let] {
      \inferrule* {(1)} {\heap \semi M \step M' \semi \heap'}
    } {
      \heap \semi \elet{x}{M}{N} \step \elet{x}{M'}{N} \semi \heap'
    }

    \inferrule*[right=TC-Let] {
      \inferrule* {(2)} {
        \Gamma \semi \Pi \semi \Phi_1 \vdash M: A \bang \mf{Mut}
      }
      \and
      \inferrule* {(3)} {
        \Gamma, x: A \semi \Pi \semi \Phi_2 \vdash N: B \bang \mf{Mut}
      }
    } {
      \Gamma \semi \Pi \semi \Phi_1 \Join \Phi_2 \vdash \elet{x}{M}{N}:
      B \bang \mf{Mut}
    }
  \end{mathpar}
  \begin{itemize}
    \blue{\item By premise \emph{(iii)} and definition,
    we have (4) that $\Pi \semi \mf{ids}(M) \cup \mf{ids}(\heap) \vDash \R$,
    and (5) that $\Pi \semi \mf{ids}(N) \cup \mf{ids}(\heap) \vDash \R$.}
    \item By inductive hypothesis on (1), (2), \blue{(4)}, and premise (i-b),
    we get some $\Phi_1'$, $\Phi_H'$, and $\Pi'$ such that:
    \begin{enumerate}
      \setcounter{enumi}{5}
      \item $\Phi_1 \Join \Phi_H \sqsubseteq_\mf{fresh} \Phi_1' \Join \Phi_H'$,
      \item $\Pi \sqsubseteq_{\Phi_1' \Join \Phi_H'} \Pi'$,
      \item $\Gamma \semi \Pi' \semi \Phi_1' \vdash M': A \bang \mf{Mut}$,
      \item $\Gamma \semi \Pi' \semi \Phi_H' \vDash \heap'$,
      \blue{
        \item $\Pi' \semi \mf{ids}(M') \cup \mf{ids}(\heap') \vDash \R'$, and
        \item $\bsq{\heap}\R \semi \bsq{M}\R \stept\
        \bsq{M'}{\R'} \semi \bsq{\heap'}{\R'}$.
      }
    \end{enumerate}
    \item Choose $\Phi_N = \Phi_1' \Join \Phi_2$ and this $\Phi_H'$ as our own.
    \item By Lemma \ref{thm:lamhook-freshphi} (fresh $\Phi$-extension) using (6)
    and extending by $\Phi_2$, we get that $\Phi_N \Join \Phi_H'$
    is defined and
    $\Phi_M \Join \Phi_H \sqsubseteq_{\mf{fresh}} \Phi_N \Join \Phi_H'$,
    satisfying conclusion \emph{(i)}.
    \item Choose this $\Pi'$ as our own.
    By Lemma \ref{thm:lamhook-piprez} ($\Pi$-order preservation) using (7)
    and extending by $\Phi_2$,
    we get that $\Pi \sqsubseteq_{\Phi_N \Join \Phi_H} \Pi'$,
    satisfying conclusion \emph{(ii)}.
    \blue{\item Choose this $\R'$ as our own. By (10) and (5), we have
    $\Pi' \semi \mf{ids}(\elet{x}{M'}{N}) \cup \mf{ids}(\heap') \vDash \R'$,
    satisfying conclusion \emph{(iv)}.}
    \item (9) gives us conclusion (iii-b).
    \item By Lemma \ref{thm:lamhook-piwknc} ($\Pi$-weakening)
    on (3) using (7) and the fact that $\Phi_N \Join \Phi_H$ is defined,
    we get (12) that $\Gamma, x: A \semi \Pi' \semi \Phi_2 \vdash N:
    B \bang \mf{Pure}$.
    \item We derive conclusion (iii-a) as follows:
    \begin{mathpar}
      \inferrule*[right=TC-Let] {
        \inferrule*[right=(6)] { } {
          \Gamma \semi \Pi' \semi \Phi_1' \vdash M': A \bang \mf{Pure}
        }
        \and
        \inferrule*[right=(8)] { } {
          \Gamma, x: A \semi \Pi' \semi \Phi_2 \vdash N: B \bang \mf{Pure}
        }
      } {
        \Gamma \semi \Pi \semi \Phi_1' \Join \Phi_2 \vdash \elet{x}{M'}{N}:
        B \bang \mf{Pure}
      }
    \end{mathpar}
    \blue{\item We show conclusion \emph{(v)} as follows:
      \begin{align*}
        \bsq{\heap}\R \semi &\bsq{\elet{x}{M}{N}}\R\\
        (\text{def. of translation}) \quad
        =\ &\elet{x}{\bsq{M}\R}{\bsq{N}\R} \semi \bsq{\heap}\R\\
        (\text{Lemma \ref{thm:trans-let}, let-transitivity on (11)}) \quad
        \stept\ &\elet{x}{\bsq{M'}{\R'}}{\bsq{N}\R}
        \semi \bsq{\heap'}{\R'}\\
        (\text{Lemma \ref{thm:trans-rextc}, $\R$-extension}) \quad
        =\ &\elet{x}{\bsq{M'}{\R'}}{\bsq{N}{\R'}} \semi \bsq{\heap'}{\R'}\\
        (\text{def. of translation}) \quad
        =\ &\bsq{\elet{x}{M'}{N}}{\R'} \semi \bsq{\heap'}{\R'}
      \end{align*}
    }
  \end{itemize}

  \pcase{EC-Val}: We have
  $\heap \semi \elet{x}{\mf{val}\ V}{M} \step M[V/x] \semi \heap$ and
  \begin{mathpar}
    \inferrule*[right=TC-Let] {
      \inferrule*[right=TC-Sub] {
        \inferrule*[right=TC-Val] {
          \inferrule* {(1)} {
            \Gamma \semi \Pi \semi \Phi_1 \vdash V: A
          }
        } {
          \Gamma \semi \Pi \semi \Phi_1 \vdash \mf{val}\ V: A \bang \mf{Pure}
        }
      } {
        \Gamma \semi \Pi \semi \Phi_1 \vdash \mf{val}\ V: A \bang \mf{Mut}
      }
      \and
      \inferrule* {(2)} {
        \Gamma \semi \Pi \semi \Phi_2 \vdash M: B \bang \mf{Mut}
      }
    } {
      \Gamma \semi \Pi \semi \Phi_1 \Join \Phi_2 \vdash
      \elet{x}{\mf{val}\ V}{M}: B \bang \mf{Mut}
    }
  \end{mathpar}
  \begin{itemize}
    \blue{\item Conclusion \emph{(iv)} follows from premise \emph{(ii)} and
    the fact that $\mf{ids}(M[V/x]) \subseteq \mf{ids}(M) \cup \mf{ids}(V)$.}
    \item Conclusion (iii-a) follows from Lemma \ref{thm:lamhook-substc}
    (type substitution) on (1) and (2).
    \blue{\item We show conclusion \emph{(v)} as follows:
    \begin{align*}
      \bsq{\heap}\R \semi &\bsq{\elet{x}{\mf{val}\ V}{M}}\R\\
      (\text{def. of translation}) \quad =\
      &\elet{x}{\mf{val}\ \bsq{V}\R}{\bsq{M}\R} \semi \bsq{\heap}\R\\
      (\textsc{EC-Val})\quad \step\
      &\bsq{M}\R[\bsq{V}\R/x] \semi \bsq{\heap}\R\\
      (\text{Lemma \ref{thm:trans-substc}, substitution})\quad =\
      &\bsq{M[V/x]}\R \semi \bsq{\heap}\R
    \end{align*}}
  \end{itemize}

  \pcase{EC-App}: We have
  $\heap \semi (\lambda x.M)\ V \step M[V/x] \semi \heap$
  and one of either of the following derivations by premise \emph{(i)}:
  \begin{mathpar}
    \inferrule*[right=TC-App] {
      \inferrule*[right=TV-Lam] {
        \inferrule* {\text{(1a)}} {
          \Gamma, x: A \semi \Pi \semi \Phi_1 \vdash M: B \bang \mf{Mut}
        }
        \and
        \mf{Succ}(\Phi_1)
      } {
        \Gamma \semi \Pi \semi \Phi_1 \vdash \lambda x.M:
        A \xrightarrow{\mf{Mut}} B
      }
      \and
      \inferrule* {(2)} {
        \Gamma \semi \Pi \semi \Phi_2 \vdash V: A
      }
    } {
      \Gamma \semi \Pi \semi \Phi_1 \Join \Phi_2
      \vdash (\lambda x.M)\ V: B \bang \mf{Mut}
    }

    \inferrule*[right=TC-Sub] {
      \inferrule*[right=TC-App] {
        \inferrule*[right=TV-Lam] {
          \inferrule* {\text{(1b)}} {
            \Gamma, x: A \semi \Pi \semi \Phi_1 \vdash M: B \bang \mf{Pure}
          }
          \and
          \mf{Succ}(\Phi_1)
        } {
          \Gamma \semi \Pi \semi \Phi_1 \vdash \lambda x.M:
          A \xrightarrow{\mf{Pure}} B
        }
        \and
        \inferrule* {(2)} {
          \Gamma \semi \Pi \semi \Phi_2 \vdash v: A
        }
      } {
        \Gamma \semi \Pi \semi \Phi_1 \Join \Phi_2
        \vdash (\lambda x.M)\ V: B \bang \mf{Pure}
      }
    } {
      \Gamma \semi \Pi \semi \Phi_1 \Join \Phi_2
      \vdash (\lambda x.M)\ V: B \bang \mf{Mut}
    }
  \end{mathpar}
  \begin{itemize}
    \item \textsc{TC-Sub} on (1b) gives us (1a), so it suffices
    to show the conclusions given (1a).
    \blue{\item Conclusion \emph{(iv)} follows from premise \emph{(ii)} and
    the fact that $\mf{ids}(M[V/x]) \subseteq \mf{ids}(M) \cup \mf{ids}(V)$.}
    \item Conclusion (iii-a) follows from Lemma \ref{thm:lamhook-substc}
    (type substitution) on (1a) and (2).
    \blue{\item We show conclusion \emph{(v)} as follows:
    \begin{align*}
      \bsq{\heap}\R \semi &\bsq{(\lambda x.M)\ V}\R\\
      (\text{def. of translation}) \quad =\
      &(\lambda x.\bsq{M}\R)\ \bsq{V}\R \semi \bsq{\heap}\R\\
      (\textsc{EC-App})\quad \step\
      &\bsq{M}\R[\bsq{V}\R/x] \semi \bsq{\heap}\R\\
      (\text{Lemma \ref{thm:trans-substc}, substitution})\quad =\
      &\bsq{M[V/x]}\R \semi \bsq{\heap}\R
    \end{align*}}
  \end{itemize}

  \pcase{EC-IfT}: We have
  $\heap \semi \eif{\mf{true}}{M}{N} \step M \semi \heap$
  and one of either of the following derivations by premise \emph{(i)}:
  \begin{mathpar}
    \inferrule*[right=TC-If] {
      \cdots \and
      \inferrule* {\text{(1a)}} {
        \Gamma \semi \Pi \semi \Phi_2 \vdash M: A \bang \mf{Mut}
      }
    } {
      \Gamma \semi \Pi \semi \Phi_1 \Join \Phi_2 \vdash \eif{\mf{true}}{M}{N}:
      A \bang \mf{Mut}
    }

    \inferrule*[right=TC-Sub] {
      \inferrule*[right=TC-If] {
        \cdots \and
        \inferrule* {\text{(1b)}} {
          \Gamma \semi \Pi \semi \Phi_2 \vdash M: A \bang \mf{Pure}
        }
      } {
        \Gamma \semi \Pi \semi \Phi_1 \Join \Phi_2 \vdash \eif{\mf{true}}{M}{N}:
        A \bang \mf{Pure}
      }
    } {
      \Gamma \semi \Pi \semi \Phi_1 \Join \Phi_2 \vdash \eif{\mf{true}}{M}{N}:
      A \bang \mf{Mut}
    }
  \end{mathpar}
  \begin{itemize}
    \item \textsc{TC-Sub} on (1a) gives us (1b), so it suffices
    to show the conclusions given (1b).
    \item Conclusion (iii-a) is given by Lemma \ref{thm:lamhook-phiwknc}
    ($\Phi$-weakening) on (1b).
    \blue{
      \item Conclusion \emph{(iv)} follows from premise \emph{(ii)}
      and the fact that $\mf{ids}(M) \subseteq
      \mf{ids}(M) \cup \mf{ids}(N)$.
      \item We show conclusion \emph{(v)} as follows:
      \begin{align*}
        \bsq{\heap}\R \semi &\bsq{\eif{\mf{true}}{M}{N}}\R \\
        (\text{def. of translation}) \quad =\
        &\eif{\mf{true}}{\bsq{M}\R}{\bsq{N}\R} \semi \bsq{\heap}\R\\
        (\textsc{EC-IfT})\quad \step\
        &\bsq{M}\R \semi \bsq{\heap}\R
      \end{align*}
    }
  \end{itemize}

  \pcase{EC-IfF} is shown analogously to the above case.

  \pcase{EC-Fst}: We have
  $\heap \semi \mf{fst}\ (V, W) \step \mf{val}\ V \semi \heap$ and
  \begin{mathpar}
    \inferrule*[right=TC-Sub] {
      \inferrule*[right=TC-Fst] {
        \inferrule*[right=TV-Pair] {
          \inferrule* {(1)} {
            \Gamma \semi \Pi \semi \Phi_1 \vdash V: A
          }
          \and
          \inferrule* {(2)} {
            \Gamma \semi \Pi \semi \Phi_2 \vdash W: B
          }
        } {
          \Gamma \semi \Pi \semi \Phi_1 \Join \Phi_2 \vdash
          (V, W): A \times B
        }
      } {
        \Gamma \semi \Pi \semi \Phi_1 \Join \Phi_2 \vdash
        \mf{fst}\ (V, W): A \bang \mf{Pure}
      }
    } {
      \Gamma \semi \Pi \semi \Phi_1 \Join \Phi_2 \vdash
      \mf{fst}\ (V, W): A \bang \mf{Mut}
    }
  \end{mathpar}
  \begin{itemize}
    \item By Lemma \ref{thm:lamhook-unrestv} (unrestriction) on (2),
    we get (3) that $\mf{Succ}(\Phi_2)$.
    \item By Lemma \ref{thm:lamhook-phiwknv} ($\Phi$-weakening) on (1) and (3),
    we get (4) that
    $\Gamma \semi \Pi \semi \Phi_1 \Join \Phi_2 \vdash V: A$.
    \item We derive conclusion (iii-a) as follows:
    \begin{mathpar}
      \inferrule*[right=TC-Sub] {
        \inferrule*[right=TC-Val] {
          \inferrule*[right=(4)] { } {
            \Gamma \semi \Pi \semi \Phi_1 \Join \Phi_2 \vdash V: A
          }
        } {
          \Gamma \semi \Pi \semi \Phi_1 \Join \Phi_2 \vdash \mf{val}\ V:
          A \bang \mf{Pure}
        }
      } {
        \Gamma \semi \Pi \semi \Phi_1 \Join \Phi_2 \vdash \mf{val}\ V:
        A \bang \mf{Mut}
      }
    \end{mathpar}
    \blue{
      \item Conclusion \emph{(iv)} follows from premise \emph{(ii)}
      and the fact that $\mf{ids}(V) \subseteq
      \mf{ids}(V) \cup \mf{ids}(W)$.
      \item We show conclusion \emph{(v)} as follows:
      \begin{align*}
        \bsq{\heap}\R \semi &\bsq{\mf{fst}\ (V, W)}\R\\
        (\text{def. of translation}) \quad =\
        &\mf{fst}\ (\bsq{V}\R, \bsq{W}\R) \semi \bsq{\heap}\R\\
        (\textsc{EC-Fst})\quad \step\
        &\mf{val}\ \bsq{V}\R \semi \bsq{\heap}\R\\
        (\text{def. of translation})\quad =\
        &\bsq{\mf{val}\ V}\R \semi \bsq{\heap}\R
      \end{align*}
    }
  \end{itemize}

  \pcase{EC-Snd} is shown analogously to the above case.

  \pcase{EC-FdN}: We have
  $\heap \semi \mf{fold}\ V_1\ V_2\ [] \step \mf{val}\ V_2 \semi \heap$ and
  \begin{mathpar}
    \inferrule*[right=TC-Fold] {
      \inferrule* {(1a)} {
        \Gamma \semi \Pi \semi \Phi_1 \vdash V_1:
        A \xrightarrow{\mf{Mut}} B \xrightarrow{\mf{Mut}} B
      }
      \and
      \inferrule* {(2)} {
        \Gamma \semi \Pi \semi \Phi_2 \vdash V_2: B
      }
      \and
      \inferrule* {(3)} {
        \Gamma \semi \Pi \semi \Phi_3 \vdash []: \mf{List}\ A
      }
    } {
      \Gamma \semi \Pi \semi \Phi_1 \Join \Phi_2 \Join \Phi_3 \vdash
      \mf{fold}\ V_1\ V_2\ []: B \bang \mf{Pure}
    }

    \inferrule*[right=TC-Sub] {
      \inferrule*[Right=TC-Fold] {
        \inferrule* {(1b)} {
          \Gamma \semi \Pi \semi \Phi_1 \vdash V_1:
          A \xrightarrow{\mf{Pure}} B \xrightarrow{\mf{Pure}} B
        }
        \and
        \inferrule* {(2)} {
          \Gamma \semi \Pi \semi \Phi_2 \vdash V_2: B
        }
        \and
        \inferrule* {(3)} {
          \Gamma \semi \Pi \semi \Phi_3 \vdash []: \mf{List}\ A
        }
      } {
        \Gamma \semi \Pi \semi \Phi_1 \Join \Phi_2 \Join \Phi_3 \vdash
        \mf{fold}\ V_1\ V_2\ []: B \bang \mf{Pure}
      }
    } {
      \Gamma \semi \Pi \semi \Phi_1 \Join \Phi_2 \Join \Phi_3 \vdash
      \mf{fold}\ V_1\ V_2\ []: B \bang \mf{Mut}
    }
  \end{mathpar}
  \begin{itemize}
    \item By Lemma \ref{thm:lamhook-unrestv} (unrestriction) on
      either of (1a) or (1b) as well as (3),
      we get (4) that $\mf{Succ}(\Phi_1)$ and (5) that $\mf{Succ}(\Phi_3)$,
      respectively.
    \item By Lemma \ref{thm:lamhook-phiwknv} ($\Phi$-weakening) on (2)
      using (4) and (5),
      we get (6) that $\Gamma \semi \Pi \semi \Phi_1 \Join \Phi_2 \Join \Phi_3
      \vdash V_2: B$.
    \item We derive conclusion (iii-a) as follows:
    \begin{mathpar}
      \inferrule*[right=TC-Val] {
        \inferrule*[right=(6)] { } {
          \Gamma \semi \Pi \semi \Phi_1 \Join \Phi_2 \Join \Phi_3 \vdash
          V_2: B
        }
      } {
        \Gamma \semi \Pi \semi \Phi_1 \Join \Phi_2 \Join \Phi_3 \vdash
        \mf{val}\ V_2: B \bang \mf{Pure}
      }
    \end{mathpar}
    \blue{
      \item By premise \emph{(ii)}, we have
      $\Pi \semi \mf{ids}(V_2) \cup \mf{ids}(\heap) \vDash \R$,
      which satisfies conclusion \emph{(iv)}.
      \item We show conclusion \emph{(v)} as follows:
      \begin{align*}
        &\bsq{\heap}\R \semi \bsq{\mf{fold}\ V_1\ V_2\ []}\R\\
        (\text{def. of translation}) \quad
        &= \mf{fold}\ \bsq{V_1}\R\ \bsq{V_2}\R\ []
          \semi \bsq{\heap}\R\\
        \textsc{(EC-FdN)}\quad
        &\step \mf{val}\ \bsq{V_2}\R \semi \bsq{\heap}\R\\
        (\text{def. of translation}) \quad
        &= \bsq{\mf{val}\ V_2}\R \semi \bsq{\heap}\R
      \end{align*}}
  \end{itemize}

  \pcase{EC-FdC}: We have
  $\heap \semi \mf{fold}\ V_1\ V_2\ (V_h :: V_t)
  \step V_1\ V_h\ (\mf{fold}\ V_1\ V_2\ V_t) \semi \heap$,
  whose right-hand term desguars to
  $\elet{x}{V_1\ V_h}{\elet{y}{\mf{fold}\ V_1\ V_2\ V_t}{x\ y}}$.
  Conclusion \emph{(ii)} is immediate, and we also have either:
  \begin{mathpar}
    \inferrule*[right=TC-Fold] {
      (1a) \and (2) \and
      \inferrule* {
        \inferrule* {(3)} {
          \Gamma \semi \Pi \semi \Phi_3 \vdash V_h: A
        }
        \and
        \inferrule* {(4)} {
          \Gamma \semi \Pi \semi \Phi_4 \vdash V_t: \mf{List}\ A
        }
      } {
        \Gamma \semi \Pi \semi \Phi_3 \Join \Phi_4
        \vdash V_h :: V_t: \mf{List}\ A
      }
    } {
      \Gamma \semi \Pi \semi \Phi_1 \Join \Phi_2 \Join \Phi_3 \Join \Phi_4
      \vdash
      \mf{fold}\ V_1\ V_2\ (V_h :: V_t): B \bang \mf{Mut}
    }
  \end{mathpar}
  \begin{mathpar}
    \inferrule*[right=TC-Sub] {
      \inferrule*[Right=TC-Fold] {
        (1b) \and (2) \and
        \inferrule* {
          \inferrule* {(3)} {
            \Gamma \semi \Pi \semi \Phi_3 \vdash V_h: A
          }
          \and
          \inferrule* {(4)} {
            \Gamma \semi \Pi \semi \Phi_4 \vdash V_t: \mf{List}\ A
          }
        } {
          \Gamma \semi \Pi \semi \Phi_3 \Join \Phi_4
          \vdash V_h :: V_t: \mf{List}\ A
        }
      } {
        \Gamma \semi \Pi \semi \Phi_1 \Join \Phi_2 \Join \Phi_3 \Join \Phi_4
        \vdash
        \mf{fold}\ V_1\ V_2\ (V_h :: V_t): B \bang \mf{Pure}
      }
    } {
      \Gamma \semi \Pi \semi \Phi_1 \Join \Phi_2 \Join \Phi_3 \Join \Phi_4
      \vdash
      \mf{fold}\ V_1\ V_2\ (V_h :: V_t): B \bang \mf{Mut}
    }
  \end{mathpar}
  where (1a), (1b), and (2) are identical to the previous case.
  \begin{itemize}
    \item By Lemma \ref{thm:lamhook-unrestv} (unrestriction) on (1)-(4),
      choosing either of (1a) or (1b) for (1),
      we get that $\mf{Succ}(\Phi_i)$ for $i \in \{1, 2, 3, 4\}$.
      It follows (5) that $\mf{Succ}(\Phi)$, where $\Phi=\Join_{i=1}^4 \Phi_i$.
    \item In case (1a), we derive conclusion (iii-a) by way of
      intermediate derivations (6)-(8) as follows,
      using $\Phi$ intuitionistically by (5) and weakening (1)-(4)
      appropriately:
  \end{itemize}
  \begin{mathpar}
    \inferrule*[right=TC-App] {
      \inferrule*[right=TV-Var] { } {
        \Gamma_6 \semi \Pi \semi \Phi
        \vdash x: B \xrightarrow{\mf{Mut}} B
      }
      \and
      \inferrule*[right=TV-Var] { } {
        \Gamma_6 \semi \Pi \semi \Phi
        \vdash y: B
      }
    } {
      (6) \quad
      \Gamma_6 \semi \Pi \semi \Phi \vdash
      x\ y: B \bang \mf{Mut}
    }

    \inferrule*[right=TC-Fold] {
      \inferrule*[right=(1)] { } {
        \Gamma_7 \semi \Pi \semi \Phi \vdash V_1:
        A \xrightarrow{\mf{Mut}} B \xrightarrow{\mf{Mut}} B
      }
      \
      \inferrule*[right=(2)] { } {
        \Gamma_7 \semi \Pi \semi \Phi \vdash V_2: B
      }
      \
      \inferrule*[right=(4)] { } {
        \Gamma_7 \semi \Pi \semi \Phi \vdash V_t: \mf{List}\ A
      }
    } {
      (7) \quad
      \Gamma_7 \semi \Pi \semi \Phi \vdash
      \mf{fold}\ V_1\ V_2\ V_t: B \bang \mf{Mut}
    }

    \inferrule*[right=TC-Let] {
      [(7),\ \Gamma_7=\Gamma_8]
      \and
      [(6),\ \Gamma_6=\Gamma_8, y: B]
    } {
      (8) \quad
      \Gamma_8 \semi \Pi \semi \Phi \vdash
      \elet{y}{\mf{fold}\ V_1\ V_2\ V_t}{\cdots}: B \bang \mf{Mut}
    }

    \inferrule*[right=TC-Let] {
      \inferrule*[Right=TC-App] {
        \inferrule*[right=(1)] { } {
          \Gamma \semi \Pi \semi \Phi \vdash V_1:
          A \xrightarrow{\mf{Mut}} B \xrightarrow{\mf{Mut}} E
        }
        \
        \inferrule*[right=(3)] { } {
          \Gamma \semi \Pi \semi \Phi \vdash V_h: A
        }
      } {
        \Gamma \semi \Pi \semi \Phi \vdash
        V_1\ V_h: B \xrightarrow{\mf{Mut}} B \bang \mf{Mut}
      }
      \
      [(8),\ \Gamma_8 = \Gamma, x: B \xrightarrow{\mf{Mut}} B]
    } {
      \Gamma \semi \Pi \semi \Phi
      \vdash
      \elet{x}{V_1\ V_h}{\cdots}: B \bang \mf{Mut}
    }
  \end{mathpar}
  \begin{itemize}
    \item In case (1b), we derive conclusion (iii-a) by way of
      intermediate derivations (9)-(11) as follows,
      using $\Phi$ intuitionistically by (5) and weakening (1)-(4)
      appropriately:
  \end{itemize}
  \begin{mathpar}
    \inferrule*[right=TC-App] {
      \inferrule*[right=TV-Var] { } {
        \Gamma_9 \semi \Pi \semi \Phi
        \vdash x: B \xrightarrow{\mf{Pure}} B
      }
      \and
      \inferrule*[right=TV-Var] { } {
        \Gamma_9 \semi \Pi \semi \Phi
        \vdash y: B
      }
    } {
      (9) \quad
      \Gamma_9 \semi \Pi \semi \Phi \vdash
      x\ y: B \bang \mf{Pure}
    }

    \inferrule*[right=TC-Fold] {
      \inferrule*[right=(1)] { } {
        \Gamma_{10} \semi \Pi \semi \Phi \vdash V_1:
        A \xrightarrow{\mf{Pure}} B \xrightarrow{\mf{Pure}} B
      }
      \
      \inferrule*[right=(2)] { } {
        \Gamma_{10} \semi \Pi \semi \Phi \vdash V_2: B
      }
      \
      \inferrule*[right=(4)] { } {
        \Gamma_{10} \semi \Pi \semi \Phi \vdash V_t: \mf{List}\ A
      }
    } {
      (10) \quad
      \Gamma_{10} \semi \Pi \semi \Phi \vdash
      \mf{fold}\ V_1\ V_2\ V_t: B \bang \mf{Pure}
    }

    \inferrule*[right=TC-Let] {
      [(10),\ \Gamma_{10}=\Gamma_{11}]
      \and
      [(9),\ \Gamma_9=\Gamma_{11}, y: B]
    } {
      (11) \quad
      \Gamma_{11} \semi \Pi \semi \Phi \vdash
      \elet{y}{\mf{fold}\ V_1\ V_2\ V_t}{\cdots}: B \bang \mf{Pure}
    }

    \inferrule*[right=TC-Sub] {
      \inferrule*[Right=TC-Let] {
        \inferrule*[Right=TC-App] {
          \inferrule*[right=(1)] { } {
            \Gamma \semi \Pi \semi \Phi \vdash V_1:
            A \xrightarrow{\mf{Pure}} B \xrightarrow{\mf{Pure}} E
          }
          \
          \inferrule*[right=(3)] { } {
            \Gamma \semi \Pi \semi \Phi \vdash V_h: A
          }
        } {
          \Gamma \semi \Pi \semi \Phi \vdash
          V_1\ V_h: B \xrightarrow{\mf{Pure}} B \bang \mf{Pure}
        }
        \
        [(11),\ \Gamma_{11} = \Gamma, x: B \xrightarrow{\mf{Pure}} B]
      } {
        \Gamma \semi \Pi \semi \Phi
        \vdash
        \elet{x}{V_1\ V_h}{\cdots}: B \bang \mf{Pure}
      }
    } {
      \Gamma \semi \Pi \semi \Phi
      \vdash
      \elet{x}{V_1\ V_h}{\cdots}: B \bang \mf{Mut}
    }
  \end{mathpar}
  \begin{itemize}
    \blue{
      \item By premise \emph{(ii)}, we have
      $\Pi \semi \mf{ids}(V_1) \cup \mf{ids}(V_2) \cup \mf{ids}(V_h)
        \cup \mf{ids}(V_t) \cup \mf{ids}(\heap) \vDash \R$,
      which satisfies conclusion \emph{(iv)}.
      \item We show conclusion \emph{(v)} as follows:
      \begin{align*}
        &\bsq{\heap}\R \semi \bsq{\mf{fold}\ V_1\ V_2\ (V_h :: V_t)}\R\\
        (\text{def. of translation}) \quad
        &= \mf{fold}\ \bsq{V_1}\R\ \bsq{V_2}\R\ (\bsq{V_h}\R :: \bsq{V_t}\R)
          \semi \bsq{\heap}\R\\
        \textsc{(EC-FdC)}\quad
        &\step \bsq{V_1}\R\ \bsq{V_h}\R\
          (\mf{fold}\ \bsq{V_1}\R\ \bsq{V_2}\R\ \bsq{V_t}\R)
          \semi \bsq{\heap}\R\\
        (\text{def. of translation}) \quad
        &= \bsq{V_1\ V_h\ (\mf{fold}\ V_1\ V_2\ V_t)}\R
          \semi \bsq{\heap}\R\\
      \end{align*}}
  \end{itemize}

  \pcase{EC-AppSet}: We have
  $\heap \semi \setter^\ell_p\ V \step \mf{val}\ () \semi \heap'$ for
  $\heap' = \heap \mid \heap[p].\mf{st}[\ell] \mapsto V$
  and $\heap[p].\mf{rerender} \mapsto \mf{tt}$, and
  \begin{mathpar}
    \inferrule*[right=TC-App] {
      \inferrule*[right=TV-Setter] {
        \inferrule* {(1)} {\ell: A \in \Pi[p]}
        \and
        \inferrule* {(2)} {\mf{Succ}(\Phi_1)}
      } {
        \Gamma \semi \Pi \semi \Phi_1 \vdash \setter^\ell_p:
        A \xrightarrow{\mf{Mut}} \tone
      }
      \and
      \inferrule* {(3)} {
        \Gamma \semi \Pi \semi \Phi_2 \vdash V: A
      }
    } {
      \Gamma \semi \Pi \semi \Phi_1 \Join \Phi_2 \vdash
      \setter^\ell_p\ V: \tone \bang \mf{Mut}
    }
  \end{mathpar}
  \begin{itemize}
    \item (1) and premise (i-b) give us (4) that
    $\Gamma \semi \Pi \semi \Phi_3 \vdash \heap[p].\mf{st}[\ell]: A$
    for $\Phi_H = \Phi_3 \Join \Phi_4$.
    \item By Lemma \ref{thm:lamhook-unrestv} (unrestriction) on (3),
      we know (5) that $\mf{Succ}(\Phi_2)$.
    \item Choose $\Phi_N = \Phi_M$, $\Phi_H' = \Phi_H \Join \Phi_2$,
      and $\Pi' = \Pi$, trivializing conclusion \emph{(ii)}.
      By (5) and Lemma \ref{thm:lamhook-unresth} (unrestriction) on $\Phi_H$,
      $\Phi_H'$ is defined.
    \item By (5),
      $(\Phi_1 \Join \Phi_2) \Join (\Phi_H \Join \Phi_2)
      = (\Phi_1 \Join \Phi_2) \Join \Phi_H$,
      satisfying conclusion \emph{(i)}.
    \item Clearly $\mf{Succ}(\Phi_3)$ by Lemma \ref{thm:lamhook-unresth},
      so we weaken (3) by $\Phi_3$ (Lemma \ref{thm:lamhook-phiwknv})
      to get that\\$\Gamma \semi \Pi \semi \Phi_2 \Join \Phi_3 \vdash V: A$.
      It then follows from premise (i-b) that
      $\Gamma \semi \Pi \semi (\Phi_2 \Join \Phi_3) \Join \Phi_4 \vDash \heap$,
      i.e. conclusion (iii-b) holds.
    \item We derive conclusion (iii-a) as follows:
    \begin{mathpar}
    \inferrule*[right=TC-Sub] {
      \inferrule*[right=TC-Val] {
        \inferrule*[right=TV-Unit] {
          \inferrule*[right=(2) \textnormal{and} (5)] { } {
            \mf{Succ}(\Phi_1 \Join \Phi_2)
          }
        } {
          \Gamma \semi \Pi \semi \Phi_1 \Join \Phi_2 \vdash (): \tone
        }
      } {
        \Gamma \semi \Pi \semi \Phi_1 \Join \Phi_2 \vdash \mf{val}\ ():
        \tone \bang \mf{Pure}
      }
    } {
      \Gamma \semi \Pi \semi \Phi_1 \Join \Phi_2 \vdash \mf{val}\ ():
      \tone \bang \mf{Mut}
    }
    \end{mathpar}
    \blue{
      \item We have $\mf{ids}(\mf{val}\ ())=\emptyset$, and clearly
      $\mf{ids}(\heap') \subseteq \mf{ids}(\heap) \cup \mf{ids}(V)$,
      so conclusion \emph{(iv)} is given by premise \emph{(ii)}.
      \item We show conclusion \emph{(v)} as follows:
      \begin{align*}
        \bsq{\heap}\R \semi &\bsq{\setter^\ell_p\ V}\R\\
        (\text{def. of translation}) \quad =\
        &(\lambda x.\R(p, \ell) := x \semi \R(p) := \mf{true})\ \bsq{V}\R
        \semi \bsq{\heap}\R\\
        (\textsc{EC-App})\quad \step\
        &\R(p, \ell) := \bsq{V}\R \semi \R(p) := \mf{true}
        \semi \bsq{\heap}\R\\
        (\textsc{EC-Set, EC-Val})\quad \stept\
        &\R(p) := \mf{true} \semi
        \bsq{\heap}\R[\R(p, \ell) \mapsto \bsq{V}\R]\\
        (\textsc{EC-Set})\quad \step\
        &\mf{val}\ () \semi \bsq{\heap}\R
        [\R(p, \ell) \mapsto \bsq{V}\R,
        \R(p) \mapsto \mf{true}]\\
        (\text{def. of translation})\quad =\
        &\bsq{\mf{val}\ ()}\R \semi \bsq{\heap'}\R
      \end{align*}
    }
  \end{itemize}

  \pcase{EC-Eval}, $\phi=\mf{Init}$: We have

  \begin{mathpar}
    \inferrule*[right=EC-Eval] {
      \inferrule* {(1)} {\heap \semi D \step^\mf{Init}_p D' \semi \heap'}
    } {
      \heap \semi \eval{D_0}{\mf{Init}}{p}{D}
      \step
      \eval{D_0}{\mf{Init}}{p}{D'} \semi \heap'
    }

    \inferrule*[right=TC-EvalI] {
      \inferrule* {(2)} {
        \Gamma \semi \Pi \semi \Phi_1 \vdash D_0: \Ell \uplus \Pi[p]
      }
      \and
      \inferrule* {(3)} {\Gamma \semi \Pi \semi \Phi_2 \vdash D: \Ell}
    } {
      \Gamma \semi \Pi \semi \Phi_1 \Join \Phi_2,p: \mf{Init} \vdash
      \eval{D_0}{\mf{init}}{p}{D}: \mf{ElemInst} \bang \mf{Mut}
    }
  \end{mathpar}
  \begin{itemize}
    \item We are given that $\Phi_H \Join \Phi_1 \Join \Phi_2, p: \mf{Init}$
      is defined, which entails
      (4) $p \notin \dom(\Phi_H \Join \Phi_1 \Join \Phi_2)$.
    \item By Theorem \ref{thm:trans-simbi} ($\step^\mf{Init}_p$-simulation)
    on (1)-(4) and premises (i-b) and \emph{(ii)},
    we have some $\Phi_H'$, $\Pi'$, \blue{and $\R' \supseteq \R$} with:
    \begin{enumerate}
      \setcounter{enumi}{4}
      \item $\Phi_H \Join \Phi_2 = \Phi_H' \Join \Phi_2$,
      \item $\Pi \sqsubseteq_{p: \mf{Init}} \Pi'$,
      \item $\Gamma \semi \Pi' \semi \Phi_2 \vdash D': \Ell'$,
      \item $\Gamma \semi \Pi' \semi \Phi_1 \vdash D_0: \Ell' \uplus \Pi'[p]$,
      \item $\Gamma \semi \Pi' \semi \Phi_H' \vDash \heap'$,
      \blue{
        \item $\Pi'\ \semi \mf{ids}(D') \cup \mf{ids}(D_0)
        \cup \mf{ids}(\heap') \vDash \R'$, and
        \item $\bsq{\heap}\R \semi \itransp{D}{D_0}{p}{\R}{\emptyset}
          \stept\ \itransp{D'}{D_0}{p}{\R'}{\emptyset}
          \semi \bsq{\heap'}{\R'}$.
      }
    \end{enumerate}
    \item We choose this $\Pi'$, $\Phi_H'$, \blue{and $\R'$},
      and retain $\Phi_N = \Phi_M = \Phi_1 \Join \Phi_2, p: \mf{Init}$.
    \item We started with $\Phi_1 \Join \Phi_2 \Join \Phi_H, p: \mf{Init}$
      as defined. By (5), we know (12) that\\
      $\Phi_1 \Join \Phi_2 \Join \Phi_H', p: \mf{Init}$ is also defined.
    \item Conclusion \emph{(i)} follows from
      Lemma \ref{thm:lamhook-freshphi} (fresh $\Phi$-extension) using (5)
      and extending by $\Phi_1, p: \mf{Init}$, which is valid by (12).
    \item (12) also lets us apply Lemma
      \ref{thm:lamhook-piprez} ($\Pi$-order preservation) on (6),
      extending by\\$\Phi_1 \Join \Phi_2 \Join \Phi_H'$,
      to get that $\Pi \sqsubseteq_{\Phi_M \Join \Phi_H'} \Pi'$,
      satisfying conclusion \emph{(ii)}.
    \item By choice of $\Pi$ and $\Phi$, (9) gives conclusion (iii-b).
    \item Lemma \ref{thm:lamhook-piwknb} ($\Pi$-weakening) on (2) using (6),
      which is valid by (12), gives us (13) that
      $\Gamma \semi \Pi' \semi \Phi_1 \vdash D_0: \Ell \uplus \Pi[p]$.
    \item By (10), (13) is equivalent to (14) that
      $\Gamma \semi \Pi' \semi \Phi_1 \vdash D_0: \Ell' \uplus \Pi'[p]$.
    \item We derive conclusion (iii-a) as follows:
    \begin{mathpar}
      \inferrule*[right=TC-EvalI] {
        \inferrule*[right=(14)] { } {
          \Gamma \semi \Pi' \semi \Phi_1 \vdash D_0: \Ell' \uplus \Pi'[p]
        }
        \and
        \inferrule*[right=(6)] { } {
          \Gamma \semi \Pi' \semi \Phi_2 \vdash D': \Ell'
        }
      } {
        \Gamma \semi \Pi' \semi \Phi_1 \Join \Phi_2, p: \mf{Init} \vdash
        \eval{D_0}{\mf{Init}}{p}{D'}: \mf{ElemInst} \bang \mf{Mut}
      }
    \end{mathpar}
    \blue{\item Conclusion \emph{(iv)} is satisfied by (10).
    \item We show conclusion \emph{(v)} as follows:
    \begin{align*}
      \bsq{\heap}\R \semi &\bsq{\eval{D_0}{\mf{Init}}{p}{D}}\R\\
      (\text{def. of translation}) \quad =\
      &\mf{let}\ rr, f, x = \itransp{D}{D_0}{p}{\R}{\emptyset}\\
      &\mf{in}\ \mf{compInst}(rr, f, \mf{init}\ x) \semi \bsq{\heap}\R\\
      (\text{Lemma \ref{thm:trans-let}, let-trans. on (11)})
      \ \stept\
      &\mf{let}\ rr, f, x = \itransp{D'}{D_0}{p}{\R'}{\emptyset} \\
      &\mf{in}\ \mf{compInst}(rr, f, \mf{init}\ x) \semi \bsq{\heap'}{\R'}\\
      (\text{def. of translation}) \quad =\
      & \bsq{\eval{D_0}{\mf{Init}}{p}{D'}}{\R'}
      \semi \bsq{\heap'}{\R'}
    \end{align*}}
  \end{itemize}


  \pcase{EC-Eval}, $\phi=\mf{Succ}$: We have
  \begin{mathpar}
    \inferrule*[right=EC-Eval] {
      \inferrule* {(1)} {\heap \semi D \step^\mf{Succ}_p D' \semi \heap'}
    } {
      \heap \semi \eval{D_0}{\mf{Succ}}{p}{D}
      \step
      \eval{D_0}{\mf{Succ}}{p}{D'} \semi \heap'
    }

    \inferrule*[right=TC-EvalS] {
      \inferrule* {(2)} {\Gamma \semi \Pi \semi \Phi_1 \vdash D_0: \Pi[p]}
      \
      \inferrule* {(3)} {\Gamma \semi \Pi \semi \Phi_2 \vdash D: \Ell}
      \
      \inferrule* {(4)} {\Ell \subseteq \Pi[p]}
      \
      \inferrule* {(5)} {(\Phi_1 \Join \Phi_2)[p] = \mf{Succ}}
    } {
      \Gamma \semi \Pi \semi \Phi_1 \Join \Phi_2 \vdash
      \eval{D_0}{\mf{Succ}}{p}{D}: \mf{ElemInst} \bang \mf{Mut}
    }
  \end{mathpar}
  \begin{itemize}
    \blue{\item By premise \emph{(iii)}, we have (6) that
    $\Pi \semi \mf{ids}(D) \cup \mf{ids}(\heap) \vDash \R$.}
    \item By Theorem \ref{thm:trans-simbs}
    ($\step^\mf{Succ}_p$-simulation)
    on (1), (3), (4), \blue{(6)} and premise (i-b), we have some $\Phi_2'$
    such that
    \begin{enumerate}
      \setcounter{enumi}{5}
      \item $\Phi_2' \Join \Phi_H = \Phi_2 \Join \Phi_H$,
      \item $\Gamma \semi \Pi \semi \Phi_2' \vdash D': \Ell'$,
      \item $\heap' = \heap$,
      \item $\Ell' \subseteq \Ell$,
      \blue{
        \item $\Pi \semi \mf{ids}(D') \cup \mf{ids}(\heap) \vDash \R$, and
        \item $\bsq{\heap}\R \semi \stransp{D}{p}{\R}{\emptyset}
        \stept\ \stransp{D'}{p}{\R}{\emptyset}  \semi \bsq{\heap}\R$.
      }
    \end{enumerate}
    \item Choose $\Phi_N = \Phi_1 \Join \Phi_2'$ and $\Phi_H' = \Phi_H$.
      By Lemma \ref{thm:lamhook-unrestb} (unrestriction) on (2) and (7),
      both $\mf{Succ}(\Phi_1)$ and $\mf{Succ}(\Phi_2')$, so $\Phi_N$
      is defined. Lemma \ref{thm:lamhook-freshphi} (fresh $\Phi$-extension)
      on (6) then gives us conclusion (i).
    \item Choose $\Pi' = \Pi$. Conclusion (ii) is trivial,
      and (8) and premise (i-b) give us conclusion (iii-b).
    \item By (4) and (9), we have (12) that
      $\Ell' \subseteq \Pi[p]$.
    \item We want to show (13) that $(\Phi_1 \Join \Phi_2')[p] = \mf{Succ}$
      by case-analysis on (5):
    \begin{itemize}
      \item If $p: \mf{Succ} \in \Phi_1$,
        then clearly $p: \mf{Succ} \in (\Phi_1 \Join \Phi_2')$.
      \item If not, then we must have $p: \mf{Succ} \in \Phi_2$.
        By definition, $p: \mf{Succ} \in \Phi_2 \Join \Phi_H$.
        By (6), $p: \mf{Succ} \in \Phi_2' \Join \Phi_H$.
        Again by definition, $p: \mf{Succ} \in \Phi_2'$.
        Finally, we get\\$p: \mf{Succ} \in (\Phi_1 \Join \Phi_2')$.
    \end{itemize}
    \item We derive conclusion (iii-b) by \textsc{TC-EvalS} as follows:
  \end{itemize}
  \begin{mathpar}
    \inferrule {
      \inferrule*[right=(2)] { } {
        \Gamma \semi \Pi \semi \Phi_1 \vdash D_0: \Pi[p]
      }
      \
      \inferrule*[right=(7)] { } {
        \Gamma \semi \Pi \semi \Phi_2' \vdash D': \Ell'
      }
      \
      \inferrule*[right=(12)] { } {\Ell' \subseteq \Pi[p]}
      \
      \inferrule*[right=(13)] { } {(\Phi_1 \Join \Phi_2')[p] = \mf{Succ}}
    } {
      \Gamma \semi \Pi \semi \Phi_1 \Join \Phi_2' \vdash
      \eval{D_0}{\mf{Succ}}{p}{D'}: \mf{ElemInst} \bang \mf{Mut}
    }
  \end{mathpar}
  \begin{itemize}
    \blue{\item Conclusion \emph{(iv)}
      is given by (10) and premise \emph{(ii)}.
    \item We show conclusion \emph{(v)} as follows:
    \begin{align*}
      \bsq{\heap}\R \semi &\bsq{\eval{D_0}{\mf{Succ}}{p}{D}}\R\\
      (\text{def. of translation}) \quad =\
      &\elet{x}{\stransp{D}{p}{\R}{\emptyset}}{\\&\mf{compInst}(
        \stransp{D_0}{p}{\R}{\emptyset} ,
        \R(p),
        \mf{init}\ x
      )} \semi \bsq{\heap}\R\\
      (\text{Lemma \ref{thm:trans-let}, let-trans. on (11)})
      \ \stept\
      &\elet{x}{\stransp{D'}{p}{\R}{\emptyset}}{\\&\mf{compInst}(
        \stransp{D_0}{p}{\R}{\emptyset} ,
        \R(p),
        \mf{init}\ x
      )} \semi \bsq{\heap}\R\\
      (\text{def. of translation}) \quad =\
      & \bsq{\eval{D_0}{\mf{Succ}}{p}{D'}}{\R}
      \semi \bsq{\heap}{\R}\\
    \end{align*}}
  \end{itemize}

  \pcase{EC-EvalDone}, $\phi=\mf{Init}$: We have
  $$\heap \semi \eval{D}{\mf{Init}}{p}{\mf{val}\ V}
  \step \elet{x}{\mf{init}\ V}{\mf{val}\ \compInst{p}{D}{x}} \semi \heap$$
  \begin{mathpar}
    \inferrule*[right=TC-EvalI] {
      \inferrule* {(1)} {
        \Gamma \semi \Pi \semi \Phi_1 \vdash D_0: \emptyset \uplus \Pi[p]
      }
      \and
      \inferrule*[right=TB-Val] {
        \inferrule* {(2)} {
          \Gamma \semi \Pi \semi \Phi_2 \vdash V: \mf{Element}
        }
      } {
        \Gamma \semi \Pi \semi \Phi_2 \vdash \mf{val}\ V: \emptyset
      }
    } {
      \Gamma \semi \Pi \semi \Phi_1 \Join \Phi_2,p: \mf{Init} \vdash
      \eval{D_0}{\mf{Init}}{p}{\mf{val}\ V}: \mf{ElemInst} \bang \mf{Mut}
    }
  \end{mathpar}
  \begin{itemize}
    \item We choose $\Phi_H' = \Phi_H$ and
    $\Phi_N = \Phi_1 \Join \Phi_2, p: \mf{Succ}$.
    Clearly $\Phi_M \Join \Phi_H \sqsubseteq_\mf{fresh} \Phi_N \Join \Phi_H'$,
    satisfying conclusion \emph{(i)}.
    \item We retain $\Pi' = \Pi$, trivializing conclusion \emph{(ii)},
      and also letting conclusion (iii-b) follow easily from premise (i-b).
    \blue{\item We also retain $\R' = \R$. Clearly
    $\mf{ids}(M) = \mf{ids}(N)$, so premise \emph{(ii)} gives
    conclusion \emph{(iv)}.}
    \item By Lemmas \ref{thm:lamhook-unrestb} and \ref{thm:lamhook-unrestv}
      on (1) and (2), we get $\mf{Succ}(\Phi_1)$ and $\mf{Succ}(\Phi_2)$,
      respectively. Clearly then (3) $\mf{Succ}(\Phi_N)$.
    \item By Lemma \ref{thm:lamhook-phiwknb} ($\Phi$-weakening)
      on (1) using (3),
      we get (4) that $\Gamma \semi \Pi \semi \Phi_N \vdash D: \Pi[p]$.
    \item Similarly, by Lemma \ref{thm:lamhook-phiwknv} ($\Phi$-weakening)
      on (2) using (3), we get (5) that\\
      $\Gamma \semi \Pi \semi \Phi_N \vdash V: \mf{Element}$.
    \item We derive conclusion (iii-a) as follows
    by way of subderivation (6),
    using $\Phi_N$ intuitionistically by (3):
    \begin{mathpar}
      \inferrule*[right=TC-Sub] {
        \inferrule*[Right=TC-Val] {
          \inferrule*[Right=TV-CompI] {
            \inferrule*[right=(4)] { } {
              \Gamma_6 \semi \Pi \semi \Phi_N \vdash D: \Pi[p]
            }
            \and
            \inferrule* { } {\Phi_N[p] = \mf{Succ}}
            \and
            \inferrule*[Right=TV-Var] {
              \inferrule*[right=(3)] { } {\mf{Succ}(\Phi_N)}
            } {
              \Gamma_6 \semi \Pi \semi \Phi_N \vdash
              x: \mf{ElemInst}
            }
          } {
            \Gamma_6 \semi \Pi \semi \Phi_N \vdash
            \compInst{p}{D}{x}: \mf{ElemInst}
          }
        } {
          \Gamma_6 \semi \Pi \semi \Phi_N \vdash
          \mf{val}\ \compInst{p}{D}{x}: \mf{ElemInst} \bang \mf{Pure}
        }
      } {
        (6) \quad
        \Gamma_6 \semi \Pi \semi \Phi_N \vdash
        \mf{val}\ \compInst{p}{D}{x}: \mf{ElemInst} \bang \mf{Mut}
      }
      \and
      \inferrule*[right=TC-Let] {
        \inferrule*[right=TC-Init] {
          \inferrule*[right=(5)] { } {
            \Gamma \semi \Pi \semi \Phi_N \vdash V: \mf{Element}
          }
        } {
          \Gamma \semi \Pi \semi \Phi_N \vdash \mf{init}\ V:
          \mf{ElemInst} \bang \mf{Mut}
        }
        \and
        [(6),\ \Gamma_6 = \Gamma, x: \mf{ElemInst}]
      } {
        \Gamma \semi \Pi \semi \Phi_N \vdash \elet{x}{\mf{init}\ V}{
          \mf{val}\ \compInst{p}{D}{x}
        }: \mf{ElemInst} \bang \mf{Mut}
      }
    \end{mathpar}
    \blue{\item We show conclusion \emph{(v)} as follows:
    \begin{align*}
      \bsq{\heap}\R \semi &\bsq{\eval{D}{\mf{Init}}{p}{\mf{val}\ V}}\R\\
      \text{(def. of translation)} \quad =\
      &\mf{let}\ rr, f, x =
        (\lambda().\stransp{D}{p}{\R}{\emptyset}, \R(p), \bsq{V}\R)\\
      &\mf{in}\ \mf{compInst}(rr, f, \mf{init}\ x) \semi \bsq{\heap}\R\\
      (\textsc{EC-Val/Fst/Snd}) \quad \stept\
      &\mf{compInst}(\lambda().\stransp{D}{p}{\R}{\emptyset} ,
      \R(p), \mf{init}\ \bsq{V}\R) \semi \bsq{\heap}\R\\
      (\text{desugar}) \quad =\
      &\elet{x}{\mf{init}\ \bsq{V}\R}{
        \mf{val\ compInst}(\lambda().\stransp{D}{p}{\R}{\emptyset} ,
        \R(p), x)} \semi \bsq{\heap}\R\\
      (\text{def. of translation}) \quad =\
      &\bsq{\elet{x}{\mf{init}\ V}{\mf{val}\ \compInst{p}{D}{x}}}\R
      \semi \bsq{\heap}\R
    \end{align*}}
  \end{itemize}

  \pcase{EC-EvalDone}, $\phi=\mf{Succ}$:
  \[\heap \semi \eval{D}{\mf{Succ}}{p}{\mf{val}\ V}
  \step \elet{x}{\mf{init}\ V}{\mf{val}\ \compInst{p}{D}{x}} \semi \heap\]
  \begin{mathpar}
    \inferrule*[right=TC-EvalS] {
      \inferrule* {(1)} {\Gamma \semi \Pi \semi \Phi_1 \vdash D: \Pi[p]}
      \
      \inferrule*[right=TB-Val] {
        \inferrule* {(2)} {
          \Gamma \semi \Pi \semi \Phi_2 \vdash V: \mf{Element}
        }
      } {
        \Gamma \semi \Pi \semi \Phi_2 \vdash \mf{val}\ V: \emptyset
      }
      \
      \inferrule* {(3)} {(\Phi_1 \Join \Phi_2)[p] = \mf{Succ}}
      \ \cdots
    } {
      \Gamma \semi \Pi \semi \Phi_1 \Join \Phi_2 \vdash
      \eval{D_0}{\mf{Succ}}{p}{D}: \mf{ElemInst} \bang \mf{Mut}
    }
  \end{mathpar}
  \begin{itemize}
    \item We retain $\Phi_N = \Phi_1 \Join \Phi_2$.
      By Lemmas \ref{thm:lamhook-unrestb} and \ref{thm:lamhook-unrestv}
      on (1) and (2), we get $\mf{Succ}(\Phi_1)$ and $\mf{Succ}(\Phi_2)$,
      respectively. Clearly then (4) $\mf{Succ}(\Phi_N)$.
    \item By Lemma \ref{thm:lamhook-phiwknb} ($\Phi$-weakening) on (1)
      using (4), we get (5) that
      $\Gamma \semi \Pi \semi \Phi_N \vdash D: \Pi[p]$.
    \item By Lemma \ref{thm:lamhook-phiwknv} ($\Phi$-weakening) on (2),
    we get (6) $\Gamma \semi \Pi \semi \Phi_N \vdash V: \mf{Element}$.
    \item We derive conclusion (iii-a) as follows
    by way of subderivation (7),
    using $\Phi_N$ intuitionistically by (4):
    \begin{mathpar}
      \inferrule*[right=TC-Sub] {
        \inferrule*[Right=TC-Val] {
          \inferrule*[Right=TV-CompI] {
            \inferrule*[right=(5)] { } {
              \Gamma_7 \semi \Pi \semi \Phi_N \vdash D: \Pi[p]
            }
            \and
            \inferrule* { } {\Phi_N[p] = \mf{Succ}}
            \and
            \inferrule*[Right=TV-Var] {
              \inferrule*[right=(4)] { } {\mf{Succ}(\Phi_D')}
            } {
              \Gamma_7 \semi \Pi \semi \Phi_N \vdash
              x: \mf{ElemInst}
            }
          } {
            \Gamma_7 \semi \Pi \semi \Phi_N \vdash
            \compInst{p}{D}{x}: \mf{ElemInst}
          }
        } {
          \Gamma_7 \semi \Pi \semi \Phi_N \vdash
          \mf{val}\ \compInst{p}{D}{x}: \mf{ElemInst} \bang \mf{Pure}
        }
      } {
        (7) \quad
        \Gamma_7 \semi \Pi \semi \Phi_N \vdash
        \mf{val}\ \compInst{p}{D}{x}: \mf{ElemInst} \bang \mf{Mut}
      }
      \and
      \inferrule*[right=TC-Let] {
        \inferrule*[right=TC-Init] {
          \inferrule*[right=(5)] { } {
            \Gamma \semi \Pi \semi \Phi_N \vdash V: \mf{Element}
          }
        } {
          \Gamma \semi \Pi \semi \Phi_N \vdash \mf{init}\ V:
          \mf{ElemInst} \bang \mf{Mut}
        }
        \and
        [(7),\ \Gamma_7 = \Gamma, x: \mf{ElemInst}]
      } {
        \Gamma \semi \Pi \semi \Phi_N \vdash \elet{x}{\mf{init}\ V}{
          \mf{val}\ \compInst{p}{D}{x}
        }: \mf{ElemInst} \bang \mf{Mut}
      }
    \end{mathpar}
    \blue{\item We show conclusion \emph{(v)} as follows:
    \begin{align*}
      \bsq{\heap}\R \semi &\bsq{\eval{D}{\mf{Succ}}{p}{\mf{val}\ V}}\R\\
      (\text{def. of translation}) \quad =\
      &\elet{y}{\mf{val}\ \bsq{V}\R}{\mf{compInst}(
        \lambda().\stransp{D}{p}{\R}{\emptyset} ,
        \R(p), \mf{init}\ y)} \semi \bsq{\heap}\R\\
      (\textsc{EC-Val}, \text{desugar}) \quad \stept\
      &\elet{x}{\mf{init}\ \bsq{V}\R}{\mf{val\ compInst}(
        \lambda().\stransp{D}{p}{\R}{\emptyset} , \R(p), x)}
        \semi \bsq{\heap}\R\\
      (\text{def. of translation}) \quad =\
      &\bsq{\elet{x}{\mf{init}\ V}{\mf{val}\ \compInst{p}{D}{x}}}\R
      \semi \bsq{\heap}\R
    \end{align*}}
  \end{itemize}

  \pcase{EC-InitText}: We have
  $\heap \semi \mf{init\ text}\ V \step \mf{val\ textInst}\ V \semi \heap$ and
  \begin{mathpar}
    \inferrule*[right=TC-Init] {
      \inferrule*[right=TV-Text] {
        \inferrule* {(1)} {
          \Gamma \semi \Pi \semi \Phi_M \vdash V: \mf{Str}
        }
      } {
        \Gamma \semi \Pi \semi \Phi_M \vdash \mf{text}\ V: \mf{Element}
      }
    } {
      \Gamma \semi \Pi \semi \Phi_M \vdash \mf{init\ text}\ V:
      \mf{ElemInst} \bang \mf{Mut}
    }
  \end{mathpar}
  \begin{itemize}
    \item We derive conclusion (iii-a) as follows:
    \begin{mathpar}
      \inferrule*[right=TC-Sub] {
        \inferrule*[right=TC-Val] {
          \inferrule*[right=TV-TextI] {
            \inferrule*[right=(1)] { } {
              \Gamma \semi \Pi \semi \Phi_M \vdash V: \mf{Str}
            }
          } {
            \Gamma \semi \Pi \semi \Phi_M \vdash \mf{textInst}\ V: \mf{ElemInst}
          }
        } {
          \Gamma \semi \Pi \semi \Phi_M \vdash \mf{val\ textInst}\ V:
          \mf{ElemInst} \bang \mf{Pure}
        }
      } {
        \Gamma \semi \Pi \semi \Phi_M \vdash \mf{val\ textInst}\ V:
        \mf{ElemInst} \bang \mf{Mut}
      }
    \end{mathpar}
    \blue{\item We show conclusion \emph{(v)} as follows:
    \begin{align*}
      \bsq{\heap}\R \semi &\bsq{\mf{init\ text}\ V}\R\\
      (\text{def. of translation}) \quad =\
      &\mf{init\ text}\ \bsq{V}\R \semi \bsq{\heap}\R\\
      (\textsc{EC-InitText}) \quad \step\
      &\mf{val\ textInst}\ \bsq{V}\R \semi \bsq{\heap}\R\\
      (\text{def. of translation}) \quad =\
      &\bsq{\mf{val\ textInst}\ V}\R \semi \bsq{\heap}\R\\
    \end{align*}}
  \end{itemize}

  \pcase{EC-InitTag}: We have
  $\heap \semi \mf{init}\ \mf{tag}(V, W)
  \step
  \elet{x}{\mf{inits}\ W}{\mf{val}\ \mf{tagInst}_{id}(V, x)} \semi \heap$
  and
  \begin{mathpar}
  \inferrule*[right=TC-Init] {
    \inferrule*[right=TV-Tag] {
      \inferrule* {(1)} {
        \Gamma \semi \Pi \semi \Phi_1 \vdash V: \mf{List\ Attr}
      }
      \and
      \inferrule* {(2)} {
        \Gamma \semi \Pi \semi \Phi_2 \vdash W: \mf{List\ Element}
      }
    } {
      \Gamma \semi \Pi \semi \Phi_1 \Join \Phi_2 \vdash \mf{tag}(V, W):
      \mf{Element}
    }
  } {
    \Gamma \semi \Pi \semi \Phi_1 \Join \Phi_2 \vdash \mf{init\ tag}(V, W):
    \mf{ElemInst} \bang \mf{Mut}
  }
  \end{mathpar}
  \begin{itemize}
    \blue{
      \item We choose $\R' = \R, {id} \mapsto {id}'$,
      where ${id}'$ is the identifier
      generated by the target language semantics.
      \item It follows from premise \emph{(ii)} that
      $\Pi \semi \mf{ids}(\elet{x}{\mf{inits}\ W}
      {\mf{val}\ \mf{tagInst}_{id}(V, x)}) \cup \mf{ids}(\heap) \vDash \R'$:
      in particular, we have ${id} \in \dom(\R')$.
      Therefore, conclusion \emph{(iv)} holds.
    }
    \item We derive conclusion (iii-a) via subderivation (3)
    as follows, weakening (1) appropriately on $\Gamma$.
    \begin{mathpar}
      \inferrule*[right=TC-Sub] {
        \inferrule*[right=TC-Val] {
          \inferrule*[Right=TV-TagI] {
            \inferrule*[right=(1)] { } {
              \Gamma_3 \semi \Pi \semi \Phi_1
              \vdash V: \mf{List\ Attr}
            }
            \
            \inferrule*[Right=TV-Var] { } {
              \Gamma_3 \semi \Pi \semi \emptyset
              \vdash x: \mf{List\ ElemInst}
            }
          } {
            \Gamma_3 \semi \Pi \semi \Phi_1
            \vdash \mf{tagInst}_{id}(V, x): \mf{ElemInst}
          }
        } {
          \Gamma_3 \semi \Pi \semi \Phi_1
          \vdash \mf{val\ tagInst}_{id}(V, x): \mf{ElemInst} \bang \mf{Pure}
        }
      } {
        (3) \quad \Gamma_3 \semi \Pi \semi \Phi_1
        \vdash \mf{val\ tagInst}_{id}(V, x): \mf{ElemInst} \bang \mf{Mut}
      }
      \and
      \inferrule*[right=TC-Let] {
        \inferrule*[right=TC-Inits] {
          \inferrule*[Right=(2)] { } {
            \Gamma \semi \Pi \semi \Phi_2 \vdash W: \mf{List\ Element}
          }
        } {
          \Gamma \semi \Pi \semi \Phi_2 \vdash \mf{inits}\ W:
          \mf{List\ ElemInst} \bang \mf{Mut}
        }
        [(3),\ \Gamma_3 = \Gamma, x: \mf{List\ ElemInst}]
      } {
        \Gamma \semi \Pi \semi \Phi_1 \Join \Phi_2 \vdash
        \elet{x}{\mf{inits}\ W}{\mf{val}\ \mf{tagInst}_{id}(V, x)}:
        \mf{ElemInst} \bang \mf{Mut}
      }
    \end{mathpar}
    \blue{\item We show conclusion \emph{(v)} as follows:
    \begin{align*}
      \bsq{\heap}\R \semi &\bsq{\mf{init\ tag}(V, W)}\R\\
      (\text{def. of translation}) \quad =\
      &\mf{init\ tag}(\bsq{V}\R, \bsq{W}\R) \semi \bsq{\heap}\R\\
      (\textsc{EC-InitTag}) \quad \step\
      &\elet{x}{\mf{inits}\ \bsq{W}\R}{
        \mf{val\ tagInst}_{{id}'}(\bsq{V}\R, x)
      } \semi \bsq{\heap}\R\\
      (\text{Lemmas \ref{thm:trans-rextv}/\ref{thm:trans-rextm},
      $\R$-ext.}) \quad =\
      &\elet{x}{\mf{inits}\ \bsq{W}{\R'}}{
        \mf{val\ tagInst}_{\R'({id})}(\bsq{V}{\R'}, x)
      } \semi \bsq{\heap}{\R'}\\
      (\text{def. of translation}) \quad =\
      &\bsq{\elet{x}{\mf{inits}\ W}{\mf{val\ tagInst}_{id}(V, x)}}{\R'}
      \semi \bsq{\heap}{\R'}
    \end{align*}}
  \end{itemize}

  \pcase{EC-InitComp}: We have
  $\heap \semi \mf{init}\ \mf{component}\ \{D\}
  \step \eval{D}{\mf{Init}}{p}{D}
  \semi \heap'$, where $p$ is fresh,\\
  $\heap' = \heap, p \mapsto
  \{\mf{st}: \emptyset, \mf{rerender}: \mf{ff}\}$,
  and
  \begin{mathpar}
  \inferrule*[right=TC-Init] {
    \inferrule*[right=TV-Comp] {
      \inferrule* {(1)} {\Gamma \semi \Pi \semi \Phi_M \vdash D: \Ell}
    } {
      \Gamma \semi \Pi \semi \Phi_M \vdash \mf{component}\ \{D\}: \mf{Element}
    }
  } {
    \Gamma \semi \Pi \semi \Phi_M \vdash \mf{init\ component}\ \{D\}:
    \mf{ElemInst} \bang \mf{Mut}
  }
  \end{mathpar}
  \begin{itemize}
    \item Choose $\Phi_N = \Phi_M, p: \mf{Init}$ and $\Phi_H' = \Phi_H$.
      Since $p$ is fresh, $\Phi_N \Join \Phi_H$ is defined and we have
      $\Phi_M \Join \Phi_H \sqsubseteq_\mf{fresh} \Phi_N \Join \Phi_H$,
      satisfying conclusion (i).
    \item Choose $\Pi'=\Pi,p \mapsto \emptyset$. By definition,
      we have $\Pi \sqsubseteq_\mf{\Phi_N \Join \Phi_H} \Pi'$,
      satisfying conclusion (ii).
    \blue{\item Choose $\R' = \R, p \mapsto \ell$, where $\ell$
      is the location generated by the target-language semantics.
      Since $\Pi'[p] = \emptyset$, $\R'$ satisfies $\Pi'$,
      and since $\mf{ids}(M) = \mf{ids}(D) = \mf{ids}(N)$,
      conclusion \emph{(iv)} holds.}
    \item By premise (i-b)
      and definition of each of $\Pi'$ and $\heap'$,
      we have that $\Gamma \semi \Pi' \semi \Phi_H \vDash \heap'$,
      satisfying conclusion (iii-b).
    \item By Lemma \ref{thm:lamhook-piwknb} ($\Pi$-weakening) on (1),
    using the fact that $p$ is fresh and so $p \notin \dom(\Phi_M)$,
    we get (3) that $\Gamma \semi \Pi' \semi \Phi_M \vdash D: \Ell$.
    \item Since $\Pi'[p]=\emptyset$,
    (3) is equivalent to (4) that
    $\Gamma \semi \Pi' \semi \Phi_M \vdash D: \Ell \uplus \Pi'[p]$.
    \item By Lemma \ref{thm:lamhook-unrestb} on (1),
      we know $\mf{Succ}(\Phi_M)$.
      It follows that (5) $\Phi_M \Join \Phi_M = \Phi_M$.
    \item We derive conclusion (iii-a) as follows, implicitly using (5):
    \begin{mathpar}
      \inferrule*[right=TC-EvalI] {
        \inferrule*[right=(4)] { } {
          \Gamma \semi \Pi' \semi \Phi_M \vdash D: \Ell \uplus \Pi'[p]
        }
        \and
        \inferrule*[right=(3)] { } {
          \Gamma \semi \Pi' \semi \Phi_M \vdash D: \Ell
        }
      } {
        \Gamma \semi \Pi' \semi \Phi_M, p: \mf{Init} \vdash
        \eval{D}{\mf{Init}}{p}{D}: \mf{ElemInst} \bang \mf{Mut}
      }
    \end{mathpar}
    \blue{
      \item By Lemma \ref{thm:trans-felimi} ($\delta$-elimination),
      we have\\
      $\itransp{D}{D}{p}{\R'}{\mf{flag} \mapsto x}[\R'(p)/x]
      = \itransp{D}{D}{p}{\R'}{\emptyset}[\R'(p)/x]$.
      Given that $x \notin \mf{fv}(D)$,
      we have (6) that
      $\itransp{D}{D}{p}{\R'}{\mf{flag} \mapsto x}[\R'(p)/x]
      = \itransp{D}{D}{p}{\R'}{\emptyset}$.
      \item We show conclusion \emph{(v)} as follows:
    \begin{align*}
      \bsq{\heap}\R \semi &\bsq{\mf{init}\ \mf{component}\ \{D\}}\R\\
      \text{(def. of translation)} \quad =\
      &\mf{init}\ \mf{component}\ \lambda().
      \elet{x}{\mf{ref}(\mf{false})}{\\
        &\ \itransp{D}{D}{p}{\R}{\mf{flag} \mapsto x}
      } \semi \bsq{\heap}\R\\
      (\textsc{EC-InitComp, EC-App}) \quad \stept\
      &\mf{let}\ rr, f, y = (\elet{x}{\mf{ref}(\mf{false})}{
        \itransp{D}{D}{p}{\R}{\mf{flag} \mapsto x}
      })\\
      &\mf{in}\ \mf{inst}(rr, f, \mf{init}\ y) \semi \bsq{\heap}\R\\
      (\textsc{EC-Ref, EC-Val}) \quad \stept\
      &\mf{let}\ rr, f, y =
        \itransp{D}{D}{p}{\R}{\mf{flag} \mapsto x}
        [\R'(p)/x]\\
      &\mf{in}\ \mf{inst}(rr, f, \mf{init}\ y)
      \semi \bsq{\heap}\R[\R'(p) \mapsto \mf{false}]\\
      (\text{Lemmas \ref{thm:trans-rextbi}/\ref{thm:trans-rextm}, $\R$-ext.})
      \quad =\
      &\mf{let}\ rr, f, y =
        \itransp{D}{D}{p}{\R'}{\mf{flag} \mapsto x}
        [\R'(p)/x]\\
      &\mf{in}\ \mf{inst}(rr, f, \mf{init}\ y)
      \semi \bsq{\heap}{\R'}[\R'(p) \mapsto \mf{false}]\\
      (\text{by (6)})
      \quad =\
      &\mf{let}\ rr, f, y =
        \itransp{D}{D}{p}{\R'}{\emptyset}\
        \mf{in}\ \mf{inst}(rr, f, \mf{init}\ y) \semi\\
      &\bsq{\heap}{\R'}[\R'(p) \mapsto \mf{false}]\\
      (\text{def. of translation})
      \quad =\
      &\bsq{\eval{D}{\mf{Init}}{p}{D}}{\R'}
      \semi \bsq{\heap'}{\R'}\\
    \end{align*}
    }
  \end{itemize}

  \pcase{EC-ClickText}: We have
  $\heap \semi \mf{click}_{id}\
  \mf{textInst}\ V \step \mf{val}\ () \semi \heap$ and
  \begin{mathpar}
    \inferrule*[right=TC-Click] {
      \inferrule* {(1)} {
        \Gamma \semi \Pi \semi \Phi_M \vdash \mf{textInst}\ v: \mf{ElemInst}
      }
    } {
      \Gamma \semi \Pi \semi \Phi_M \vdash \mf{click}_{id}\ \mf{textInst}\ V:
      \tone \bang \mf{Mut}
    }
  \end{mathpar}
  \begin{itemize}
    \item By Lemma \ref{thm:lamhook-unrestv} (unrestriction) on (1),
    we get (2) that $\mf{Succ}(\Phi_M)$.
    \item We derive conclusion (iii-a) as follows:
    \begin{mathpar}
      \inferrule*[right=TC-Sub] {
        \inferrule*[right=TC-Val] {
          \inferrule*[right=TV-Unit] {
            \inferrule*[right=(2)] { } {\mf{Succ}(\Phi_M)}
          } {
            \Gamma \semi \Pi \semi \Phi_M \vdash (): \tone
          }
        } {
          \Gamma \semi \Pi \semi \Phi_M \vdash \mf{val}\ (): \tone \bang \mf{Pure}
        }
      } {
        \Gamma \semi \Pi \semi \Phi_M \vdash \mf{val}\ (): \tone \bang \mf{Mut}
      }
    \end{mathpar}
    \blue{\item We show conclusion \emph{(v)} as follows:
    \begin{align*}
      \bsq{\heap}\R \semi &\bsq{\mf{click}_{id}\ \mf{textInst}\ V}\R\\
      (\text{def. of translation}) \quad =\
      &\mf{click}_{\R({id})}\ \mf{textInst}\ \bsq{V}\R
      \semi \bsq{\heap}\R\\
      (\textsc{EC-ClickText}) \quad \step\
      &\mf{val}\ () \semi \bsq{\heap}\R\\
      (\text{def. of translation}) \quad =\
      &\bsq{\mf{val}\ ()}\R \semi \bsq{\heap}\R
    \end{align*}}
  \end{itemize}

  \pcase{EC-ClickTagT}: We have
  $\heap \semi \mf{click}_{id}\ \mf{tagInst}_{id}(V, W)
  \step \mf{runClicks}\ V \semi \mf{clicks}_{id}\ W \semi \heap$ and
  \begin{mathpar}
    \inferrule*[right=TC-Click] {
      \inferrule*[right=TV-TagI] {
        \inferrule* {(1)} {
          \Gamma \semi \Pi \semi \Phi_1 \vdash V: \mf{List\ Attr}
        }
        \and
        \inferrule* {(2)} {
          \Gamma \semi \Pi \semi \Phi_2 \vdash W: \mf{List\ ElemInst}
        }
      } {
        \Gamma \semi \Pi \semi \Phi_1 \Join \Phi_2 \vdash
        \mf{tagInst}_{id}(V, W): \mf{ElemInst}
      }
    } {
      \Gamma \semi \Pi \semi \Phi_1 \Join \Phi_2 \vdash
      \mf{click}_{id}\ \mf{tagInst}_{id}(V, W): \tone \bang \mf{Mut}
    }
  \end{mathpar}
  \begin{itemize}
    \item We derive conclusion (iii-a) as follows:
  \end{itemize}
  \begin{mathpar}
    \inferrule*[right=TC-Let] {
      \inferrule*[right=TC-RunClicks] {
        \inferrule*[right=(1)] { } {
          \Gamma \semi \Pi \semi \Phi_1 \vdash V: \mf{List\ Attr}
        }
      } {
        \Gamma \semi \Pi \semi \Phi_1 \vdash \mf{runClicks}\ V:
        \tone \bang \mf{Mut}
      }
      \
      \inferrule*[Right=TC-Clicks] {
        \inferrule*[right=(2)] { } {
          \Gamma \semi \Pi \semi \Phi_2 \vdash V: \mf{List\ ElemInst}
        }
      } {
        \Gamma \semi \Pi \semi \Phi_2 \vdash \mf{clicks}_{id}\ W:
        \tone \bang \mf{Mut}
      }
    } {
      \Gamma \semi \Pi \semi \Phi_1 \Join \Phi_2 \vdash
      \mf{runClicks}\ V \semi \mf{clicks}_{id}\ W: \tone \bang \mf{Mut}
    }
  \end{mathpar}
  \begin{itemize}
    \blue{\item We show conclusion \emph{(v)} as follows:
    \begin{align*}
      \bsq{\heap}\R \semi &\bsq{\mf{click}_{id}\
      \mf{tagInst}_{id}(V, W)}\R\\
      (\text{def. of translation}) \quad =\
      &\mf{click}_{\R({id})}\ \mf{tagInst}_{\R({id})}
        (\bsq{V}\R, \bsq{W}\R) \semi \bsq{\heap}\R\\
      (\textsc{EC-ClickTagT}) \quad \step\
      &\mf{runClicks}\ \bsq{V}\R \semi
        \mf{clicks}_{\R({id})}\ \bsq{W}\R \semi \bsq{\heap}\R\\
      (\text{def. of translation}) \quad =\
      &\bsq{\mf{runClicks}\ V \semi \mf{clicks}_{id}\ W}\R
      \semi \bsq{\heap}\R
    \end{align*}}
  \end{itemize}

  \pcase{EC-ClickTagF}: We have
  $\heap \semi \mf{click}_{id}\ \mf{tagInst}_{{id}'}(V, W)
  \step \mf{clicks}_{id}\ W \semi \heap$ and
  \begin{mathpar}
    \inferrule*[right=TC-Click] {
      \inferrule*[right=TV-TagI] {
        \inferrule* {(1)} {
          \Gamma \semi \Pi \semi \Phi_1 \vdash V: \mf{List\ Attr}
        }
        \and
        \inferrule* {(2)} {
          \Gamma \semi \Pi \semi \Phi_2 \vdash W: \mf{List\ ElemInst}
        }
      } {
        \Gamma \semi \Pi \semi \Phi_1 \Join \Phi_2 \vdash
        \mf{tagInst}_{{id}'}(V, W): \mf{ElemInst}
      }
    } {
      \Gamma \semi \Pi \semi \Phi_1 \Join \Phi_2 \vdash
      \mf{click}_{id}\ \mf{tagInst}_{{id}'}(V, W): \tone \bang \mf{Mut}
    }
  \end{mathpar}
  \begin{itemize}
    \item By Lemma \ref{thm:lamhook-unrestv} (unrestriction) on (1),
    we get (3) that $\mf{Succ}(\Phi_1)$.
    \item By Lemma \ref{thm:lamhook-phiwknv} ($\Phi$-weakening) on (2)
    using (3), we get (4) that\\
    $\Gamma \semi \Pi \semi \Phi_1 \Join \Phi_2 \vdash W: \mf{List\ ElemInst}$.
    \item We derive conclusion (iii-a) as follows:
    \begin{mathpar}
      \inferrule*[right=TC-Clicks] {
        \inferrule*[right=(1)] { } {
          \Gamma \semi \Pi \semi \Phi_1 \Join \Phi_2 \vdash W:
          \mf{List\ ElemInst}
        }
      } {
        \Gamma \semi \Pi \semi \Phi_1 \Join \Phi_2 \vdash
        \mf{clicks}_{id}\ W: \tone \bang \mf{Mut}
      }
    \end{mathpar}
    \blue{
      \item By injectivity of $\R$, we have (5) $\R({id}) \neq \R({id}')$.
      \item We show conclusion \emph{(v)} as follows:
    \begin{align*}
      \bsq{\heap}\R \semi &\bsq{\mf{click}_{id}\
      \mf{tagInst}_{{id}'}(V, W)}\R\\
      (\text{def. of translation}) \quad =\
      &\mf{click}_{\R({id})}\ \mf{tagInst}_{\R({id}')}
        (\bsq{V}\R, \bsq{W}\R) \semi \bsq{\heap}\R\\
      (\textsc{EC-ClickTagF}\text{, by (5)}) \quad \step\
      &\mf{clicks}_{\R({id})}\ \bsq{W}\R \semi \bsq{\heap}\R\\
      (\text{def. of translation}) \quad =\
      &\bsq{\mf{clicks}_{id}\ W}\R \semi \bsq{\heap}\R
    \end{align*}}
  \end{itemize}

  \pcase{EC-ClickComp}: We have
  $\heap \semi \mf{click}_{id}\ \compInst{p}{D}{V}
  \step \mf{click}_{id}\ V \semi \heap$ and
  \begin{mathpar}
    \inferrule*[right=TC-Click] {
      \inferrule*[Right=TV-CompI] {
        \inferrule* {(1)} {\Gamma \semi \Pi \semi \Phi_1 \vdash D: \Pi[p]}
        \and
        (\Phi_1 \Join \Phi_2)[p] = \mf{Succ}
        \and
        \inferrule* {(1)} {
          \Gamma \semi \Pi \semi \Phi_2 \vdash V: \mf{ElemInst}
        }
      } {
        \Gamma \semi \Pi \semi \Phi_1 \Join \Phi_2
        \vdash \compInst{p}{D}{V}: \mf{ElemInst}
      }
    } {
      \Gamma \semi \Pi \semi \Phi_1 \Join \Phi_2 \vdash
      \mf{click}_{id}\ \compInst{p}{D}{V}: \tone \bang \mf{Mut}
    }
  \end{mathpar}
  \begin{itemize}
    \item By Lemma \ref{thm:lamhook-unrestb} (unrestriction) on (1),
    we get (3) that $\mf{Succ}(\Phi_1)$.
    \item By Lemma \ref{thm:lamhook-phiwknv} ($\Phi$-weakening) on (2)
    using (3), we get (4) that
    $\Gamma \semi \Pi \semi \Phi_1 \Join \Phi_2
    \vdash V: \mf{ElemInst}$.
    \item We derive conclusion (iii-a) as follows:
    \begin{mathpar}
      \inferrule*[right=TC-Click] {
        \inferrule*[right=(4)] { } {
          \Gamma \semi \Pi \semi \Phi_1 \Join \Phi_2 \vdash V: \mf{ElemInst}
        }
      } {
        \Gamma \semi \Pi \semi \Phi_1 \Join \Phi_2 \vdash
        \mf{click}_{id}\ V: \tone \bang \mf{Mut}
      }
    \end{mathpar}
    \blue{\item We show conclusion \emph{(v)} as follows:
    \begin{align*}
      \bsq{\heap}\R \semi &\bsq{\mf{click}_{id}\ \compInst{p}{D}{V}}\R\\
      (\text{def. of translation}) \quad =\
      &\mf{click}_{\R({id})}\ \mf{inst}(
        \lambda ().\stransp{D}{p}{\R}{\emptyset},
        \R(p), \bsq{V}\R) \semi \bsq{\heap}\R\\
      (\textsc{EC-ClickComp}) \quad \step\
      &\mf{click}_{\R({id})}\ \bsq{V}\R \semi \bsq{\heap}\R\\
      (\text{def. of translation}) \quad =\
      &\bsq{\mf{click}_{id}\ V}\R \semi \bsq{\heap}\R
    \end{align*}}
  \end{itemize}

  \pcase{EC-RunAttr}: Analogous to \textsc{EC-ClickText}.

  \pcase{EC-RunClick}: We have
  $\heap \semi \mf{runClick\ onClick}\ V \step V\ () \semi \heap$ and
  \begin{mathpar}
    \inferrule*[right=TC-RunClick] {
      \inferrule*[right=TV-OnClick] {
        \inferrule* {(1)} {
          \Gamma \semi \Pi \semi \Phi_M \vdash V:
          \tone \xrightarrow{\mf{Mut}} \tone
        }
      } {
        \Gamma \semi \Pi \semi \Phi_M \vdash \mf{onClick}\ V: \mf{Attr}
      }
    } {
      \Gamma \semi \Pi \semi \Phi_M \vdash \mf{runClick\ onClick}\ V:
      \tone \bang \mf{Mut}
    }
  \end{mathpar}
  \begin{itemize}
    \item We derive conclusion (iii-a) as follows:
    \begin{mathpar}
      \inferrule*[right=TC-App] {
        \inferrule*[right=(1)] { } {
          \Gamma \semi \Pi \semi \Phi_M \vdash V:
          \tone \xrightarrow{\mf{Mut}} \tone
        }
        \and
        \inferrule*[right=TV-Unit] { } {
          \Gamma \semi \Pi \semi \emptyset \vdash (): \tone
        }
      } {
        \Gamma \semi \Pi \semi \Phi_M \vdash V\ (): \tone \bang \mf{Mut}
      }
    \end{mathpar}
    \blue{\item We show conclusion \emph{(v)} as follows:
    \begin{align*}
      \bsq{\heap}\R \semi &\bsq{\mf{runClick\ onClick}\ V}\R\\
      (\text{def. of translation}) \quad =\
      &\mf{runClick\ onClick}\ \bsq{V}\R \semi \bsq{\heap}\R\\
      (\textsc{EC-RunClick}) \quad \step\
      &\bsq{V}\R\ () \semi \bsq{\heap}\R\\
      (\text{def. of translation}) \quad =\
      &\bsq{V\ ()}\R \semi \bsq{\heap}\R
    \end{align*}}
  \end{itemize}

  \textbf{Case} \textsc{EC-UpdateText} and \pcase{EC-UpdateTag} are
  analogous to \textsc{EC-InitText} and \textsc{EC-InitTag} respectively.

  \pcase{EC-UpdateCompT}: We have,
  for $\heap' = \heap \mid \heap[p].\mf{rerender} \mapsto \mf{ff}$,
  \begin{mathpar}
    \inferrule*[right=EC-UpdateCompT] {
      \inferrule* {(1)} {\heap[p].\mf{rerender} = \mf{tt}}
    } {
      \heap \semi \mf{update}\ \compInst{p}{D}{v}
      \step \eval{D}{\mf{Succ}}{p}{D} \semi \heap'
    }

    \inferrule*[right=TC-Update] {
      \inferrule*[Right=TV-CompI] {
        \inferrule* {(2)} {\Gamma \semi \Pi \semi \Phi_1 \vdash D: \Pi[p]}
        \and
        \inferrule* {(3)} {(\Phi_1 \Join \Phi_2)[p] = \mf{Succ}}
        \and
        \inferrule* {(4)} {
          \Gamma \semi \Pi \semi \Phi_2 \vdash V: \mf{ElemInst}
        }
      } {
        \Gamma \semi \Pi \semi \Phi_1 \Join \Phi_2
        \vdash \compInst{p}{D}{V}: \mf{ElemInst}
      }
    } {
      \Gamma \semi \Pi \semi \Phi_1 \Join \Phi_2
      \vdash \mf{update}\ \compInst{p}{D}{V}: \mf{ElemInst} \bang \mf{Mut}
    }
  \end{mathpar}
  \begin{itemize}
    \item We retain $\Phi_N = \Phi_M$, $\Phi_H' = \Phi_H$, $\Pi' = \Pi$,
      \blue{and $\R' = \R$}.
      Conclusions (i) and (ii) are trivial.
      By premise (i-b), we have
      $\Gamma \semi \Pi \semi \Phi_H \vDash \heap'$,
      satisfying conclusion (iii-b).
    \blue{\item We have
      $\mf{ids}(\eval{D}{\mf{Succ}}{p}{D}) = \mf{ids}(D)
      \subseteq \mf{ids}(D) \cup \mf{ids}(V)
      = \mf{ids}(\mf{update}\ \compInst{p}{D}{V})$,
      so conclusion \emph{(iv)} is given by premise \emph{(ii)}.}
    \item By Lemma \ref{thm:lamhook-unrestv} (unrestriction) on (3),
      we must have (5) $\mf{Succ}(\Phi_2)$.
    \item By Lemma \ref{thm:lamhook-phiwknb} ($\Phi$-weakening) on (1)
      using (5), we get (6) that
      $\Gamma \semi \Pi \semi \Phi_M \vdash D: \Pi[p]$.
    \item We derive conclusion (i) as follows, noting that
       $\Phi_M \Join \Phi_M = \Phi_M$ by (5):
    \begin{mathpar}
      \inferrule*[right=TC-EvalS] {
        \inferrule*[right=(6) (x2)] { } {
          \Gamma \semi \Pi \semi \Phi_M \vdash D: \Pi[p]
        }
        \and
        \inferrule* { } {\Pi[p] \subseteq \Pi[p]}
        \and
        \inferrule*[right=(2)] { } {\Phi_M[p] = \mf{Succ}}
      } {
        \Gamma \semi \Pi \semi \Phi_M \vdash
        \eval{D}{\mf{Succ}}{p}{D}: \mf{Element} \bang \mf{Mut}
      }
    \end{mathpar}
    \blue{
      \item By (1) and definition of translation, we have (7)
      that $\bsq{\heap}\R[\R(p)] = \mf{true}$.
      \item We show conclusion \emph{(v)} as follows:
      \begin{align*}
      \bsq{\heap}\R \semi &\bsq{\mf{update}\ \compInst{p}{D}{V}}\R\\
      \text{(def. of translation)} =\
      &\mf{update}\ \mf{compInst}(
        \stransp{D}{p}{\R}{\emptyset}
        \R(p), \bsq{V}\R
      )\\
      (\textsc{EC-UpdateComp}) \step\
      &\mf{let}\ x = \mf{if} \bang \R(p)\\
      &\ \ \mf{then}\ \R(p) := \mf{false}
        \semi \mf{init}\
        ((\lambda ().\stransp{D}{p}{\R}{\emptyset})\ ())\\
      &\ \ \mf{else}\ \mf{update}\ \bsq{V}\R\\
      & \mf{in\ compInst}(\lambda ().\stransp{D}{p}{\R}{\emptyset},
        \R(p), x) \semi \bsq{\heap}\R\\
      (\textsc{EC-Get}, (7), \textsc{EC-IfT}) \stept\
      &\mf{let}\ x = \R(p) := \mf{false}
        \semi \mf{init}\
        ((\lambda ().\stransp{D}{p}{\R}{\emptyset})\ ())\\
      & \mf{in\ compInst}(\lambda ().\stransp{D}{p}{\R}{\emptyset},
        \R(p), x) \semi \bsq{\heap}\R\\
      (\textsc{EC-Set}, \textsc{EC-Val}) \stept\
      &\mf{let}\ x = \mf{init}\
        (\lambda ().\stransp{D}{p}{\R}{\emptyset}\ ())\\
      & \mf{in\ inst}(\lambda ().\stransp{D}{p}{\R}{\emptyset},
        \R(p), x) \semi \bsq{\heap}\R[\R(p) \mapsto \mf{false}]\\
      (\textsc{EC-App}) \step\
      &\mf{let}\ x = \mf{init}\
        (\stransp{D}{p}{\R}{\emptyset})\
        \mf{in\ compInst}(\lambda ().\stransp{D}{p}{\R}{\emptyset},
        \R(p), x) \semi\\
      &\bsq{\heap}\R[\R(p) \mapsto \mf{false}]\\
      (\text{def. of translation}) =\
      &\bsq{\eval{D}{\mf{Succ}}{p}{D}}\R
      \semi \bsq{\heap'}\R
    \end{align*}}
  \end{itemize}

  \pcase{EC-UpdateCompF}: Analogous to \textsc{EC-ClickTagF}.
\end{proof}


\end{document}